\keywords{Failure detectors, \tlap{}, counter automata, \fast{}, and IVy.}
\newcommand*{\coloneq}{\mathrel{\rlap{%
			\raisebox{0.3ex}{$\m@th\cdot$}}%
		\raisebox{-0.3ex}{$\m@th\cdot$}}%
	=}
\newcommand{\tlap}[0]{$\textsc{TLA}^+$}
\newcommand{\tlaps}[0]{\textsc{TLAPS}}
\newcommand{\apalache}[0]{\textsc{APALACHE}}%
\newcommand{\fast}[0]{\textsc{FAST}}
\DeclareMathOperator{\F}{\mathbf{F}}
\DeclareMathOperator{\G}{\mathbf{G}}
\DeclareMathOperator{\All}{\mathbf{A}}
\DeclareMathOperator{\X}{\mathbf{X}}
\DeclareMathOperator{\U}{\mathbf{U}}
\newcommand{\nid}{1..N}
\newcommand{\fdsuspected}[2]{\mathit{suspected}\left[ #1, #2 \right]}
\newcommand{\fdtimeout}[2]{\mathop{\mbox{$\mathit{timeout}$}}\left[ #1 , #2 \right]}
\newcommand{\gsys}[1]{\mathcal{G}_{#1}}
\newcommand{\existsMsgOfAge}{\textsf{existsMsgOfAge}}
\newcommand{\tnumber}{\textsf{num}}
\newcommand{\tmpExistsMsgOfAge}{\textsf{tmpExistsMsgOfAge}}
\newcommand{\shiftLeft}[1]{\textsf{shiftLeft}}
\newcommand{\msgBufAfterDelivery}[1]{\textsf{msgBufAfterDelivery}}
\newcommand{\incMsgAge}[1]{\textsf{incMsgAge}}
\newcommand{\sTimer}[1]{\textsf{sTimer}}
\newcommand{\rTimer}[1]{\textsf{rTimer}}
\newcommand{\timeout}{\textsf{timeout}}
\newcommand{\howLongFromSCrash}{\textsf{hLFSC}}
\newcommand{\ltlx}{LTL\textbackslash{}X}
\newcommand{\gconf}[1]{\kappa_{#1}}
\newcommand{\gnextconf}[1]{\kappa^{\prime}_{#1}}
\newcommand{\comp}{\mathit{comp}}
\newcommand{\crash}{\mathit{crash}}
\newcommand{\loc}{\mathit{Loc}}
\newcommand{\locsnd}{\mathit{Loc}_{\mathit{snd}}}
\newcommand{\locrcv}{\mathit{Loc}_{\mathit{rcv}}}
\newcommand{\loccomp}{\mathit{Loc}_{\mathit{comp}}}
\newcommand{\loccrash}{\ell_{\mathit{crash}}}
\newcommand{\tempconfs}[1]{Q_{#1}}
\newcommand{\temptr}[1]{\mathit{Tr}_{#1}}
\newcommand{\temprel}[1]{\mathit{Rel}_{#1}}
\newcommand{\tempinitconf}[1]{\mathit{q}_{#1}^0}
\newcommand{\temptrans}[1]{\xrightarrow{\mathit{tr}_{#1}}}
\newcommand{\tempsnd}[2]{\xrightarrow{\mathit{csnd}_{#1}(#2)}}
\newcommand{\temprcv}[3]{\xrightarrow{\mathit{crcv}_{#1}(#2, \ldots, #3)}}
\newcommand{\tempcomp}[1]{\xrightarrow{\mathit{comp}_{#1}}}
\newcommand{\tempcrash}[1]{\xrightarrow{\mathit{crash}_{#1}}}
\newcommand{\tempstutter}{\xrightarrow{\stutter}}
\newcommand{\tempdatatype}{\mathcal{D}}
\newcommand{\tempconf}[1]{\rho_{#1}}
\newcommand{\tempnextconf}[1]{\rho^{\prime}_{#1}}
\newcommand{\pcFncName}{\mathit{pc}}
\newcommand{\getPC}[1]{\pcFncName(#1)}
\newcommand{\boxFncName}{\mathit{rcvd}}
\newcommand{\getBox}[2]{\boxFncName(#1, #2)}
\newcommand{\datumFncName}{\mathit{lvar}}
\newcommand{\getDatum}[2]{\datumFncName(#1, #2)}
\newcommand{\procFncName}{\mathit{lstate}}
\newcommand{\getProc}[2]{\procFncName(#1, #2)}
\newcommand{\bufFncName}{\mathit{buf}}
\newcommand{\getBuf}[3]{\bufFncName(#1, #2, #3)}
\newcommand{\activeFncName}{\mathit{active}}
\newcommand{\getActive}[2]{\activeFncName(#1, #2)}
\newcommand{\stutter}{\mathit{stutter}}
\newcommand{\Boolean}{\texttt{Bool}}
\newcommand{\False}{\bot}
\newcommand{\True}{\top}
\newcommand{\classname}{symmetric point--to--point}
\newcommand{\tid}{1..2}
\newcommand{\recalltheorem}[2]{%
	{\medskip\noindent\bfseries Theorem~\ref{#1}.~}{\itshape #2}
}
\newcommand{\recalllemma}[2]{%
	{\medskip\noindent\bfseries Lemma~\ref{#1}.~}{\itshape #2}
}
\newcommand{\qdot}{\mathbin{.}}
\newtheorem{construction}[thm]{Construction}
\newcommand{\msg}{\texttt{Msg}}
\newcommand{\set}{\texttt{Set}}
\newcommand{\csnd}{\mathit{csnd}}
\newcommand{\crcv}{\mathit{crcv}}
\newcommand{\nextLoc}{\mathit{nextLoc}}
\newcommand{\genMsg}{\mathit{genMsg}}
\newcommand{\nextVar}{\mathit{nextVal}}
\newcommand{\gconfs}[1]{\mathcal{C}_{#1}}
\newcommand{\ginitconf}[1]{\mathit{g}_{#1}^0}
\newcommand{\gtr}[1]{Tr_{#1}}
\newcommand{\grel}[1]{R_{#1}}
\newcommand{\transsched}{\xrightarrow{\texttt{Sched}}}
\newcommand{\transsnd}{\xrightarrow{\texttt{Snd}}}
\newcommand{\transcomp}{\xrightarrow{\texttt{Comp}}}
\newcommand{\transrcv}{\xrightarrow{\texttt{Rcv}}}
\newcommand{\transstutter}{\xrightarrow{\stutter}}
\newcommand{\CanRun}{\mathit{Enabled}}
\newcommand{\SFrozen}{\mathit{Frozen}_S}
\newcommand{\RFrozen}{\mathit{Frozen}_R}
\newcommand{\transposition}{\alpha}
\newcommand{\predsnd}{\mathit{Sending}}
\newcommand{\predrcv}{\mathit{Receiving}}
\newcommand{\Proc}[1]{\Pi_{#1}}
\newcommand{\switchTempConf}[2]{#2_Q(#1)}
\newcommand{\switchGlobalConf}[2]{#2_{C}(#1)}
\newcommand{\switchIndex}[2]{#1(#2)}
\newcommand{\switchRcv}[3]{\xrightarrow{#3_R(\crcv(#1, \ldots, #2))}}
\newcommand{\switchGlobalSys}[2]{#1_{G}(#2)}
\newcommand{\switchTemp}[2]{#1_{U}(#2)}
\newcommand{\swapij}[2]{(#1 \leftrightarrow #2)}
\newcolumntype{M}{>{\hspace{0.35cm}$}c<{$\hspace{0.35cm}}}
\mathchardef\mhyphen="2D 
\newcommand{\mytemp}[1]{\mathcal{U}_{#1}}
\newcommand{\tladot}[1]{\ensuremath{\mbox{}#1\mbox{}}}
\begin{document}
%
\title{A case study on parametric verification \linebreak of failure detectors}
\author[T.H.~Tran]{Thanh-Hai Tran\lmcsorcid{0000-0002-2260-6877}}[a, b]
\author[I.~Konnov]{Igor Konnov\lmcsorcid{0000-0001-6629-3377}}[c]
\author[J.~Widder]{Josef Widder\lmcsorcid{0000-0003-2795-611X}}[c]
  
\address{TU Wien, Austria}
\email{thanh.tran@tuwien.ac.at}

\address{ConsenSys, Australia}	
\email{thanh-hai.tran@consensys.net}  

\address{Informal Systems, Austria}	
\email{igor@informal.systems, josef@informal.systems}  

\thanks{This work was supported by Interchain Foundation (Switzerland) and the Austrian Science Fund (FWF) via the Doctoral College LogiCS W1255.
This work was done when the first author was at TU Wien. 
This paper is an extended version of papers~\cite{tran2020netys} and~\cite{tran2021case} that adds the formalization of the model of computation under partial synchrony and detailed proofs.
}

\begin{abstract}
  
  Partial synchrony is a model of computation in many distributed algorithms and modern blockchains. 
These algorithms are typically parameterized in the number of participants, and
their correctness requires the existence of bounds on message delays and on the relative speed of processes after reaching Global Stabilization Time (GST). 
These characteristics make partially synchronous algorithms both parameterized and 
parametric, which render automated verification of partially synchronous algorithms challenging.
In this paper, we present a case study on formal verification of both safety and liveness of the Chandra and Toueg failure detector that is based on partial synchrony. 
To this end, we first introduce and formalize 
the class of symmetric point-to-point algorithms that contains the failure detector.
Second, we show that these symmetric point-to-point algorithms have a cutoff, and the cutoff results hold in three models of computation: synchrony, asynchrony, and partial synchrony.
As a result, one can verify them by model checking small instances, but the verification problem stays parametric in time.
Next, we specify the failure detector and the partial synchrony assumptions in three frameworks:~\tlap{}, IVy, and counter automata.
Importantly, we tune our modeling to use the strength of each method: (1) We are using counters to encode message buffers with counter automata, (2)~we are using first-order relations to encode message buffers in IVy, and (3) we are using both approaches in~\tlap{}. 
By running the tools for~\tlap{} (TLC and \apalache{}) and counter automata (\fast{}), we demonstrate safety for fixed time bounds.  
This helped us to find the inductive invariants for fixed parameters, which we used as a starting point for the proofs with IVy. 
By running IVy, we prove safety for arbitrary time bounds. 
Moreover, we show how to verify liveness of the failure detector by reducing the verification problem to safety verification.  
Thus, both properties are verified by developing inductive invariants with IVy. 
We conjecture that correctness of other partially synchronous algorithms may be proven by following the presented methodology.

\end{abstract}

\maketitle              

\setcounter{footnote}{0}

\section{Introduction} \label{sec:intro}

Distributed algorithms play a crucial role in modern infrastructure, but they are notoriously difficult to understand and to get right.
Network topologies, message delays, faulty processes, the relative
speed of processes, and fairness conditions might 
lead to behaviors that were neither intended nor anticipated by  algorithm designers. 
To be able to make meaningful statements about correctness, many specification and verification techniques for distributed algorithms~\cite{lamport2002specifying,lynch1988introduction,mcmillan2020ivy,druagoi2020programming} have been developed.

Verification techniques for distributed algorithms usually focus on two models of computation: synchrony~\cite{stoilkovska2019verifying} 
and asynchrony~\cite{KLVW17:FMSD,KLVW17:POPL}.
Synchrony is hard to implement in real systems, while many basic 
problems in fault-tolerant distributed computing are unsolvable in asynchrony.

Partial synchrony lies between synchrony and asynchrony, and escapes their shortcomings.
To guarantee liveness properties, proof-of-stake 
blockchains~\cite{buchman2018latest,yin2019hotstuff} and distributed algorithms~\cite{CT96,bravo2020making} assume time constraints under partial synchrony. 
That is the existence of bounds $\Delta$ on message delay, and $\Phi$ on the relative speed of processes after some time point. 
This combination makes partially synchronous algorithms
parametric in time bounds.
While partial synchrony is important for system designers, it is challenging for verification.

We thus investigate verification of distributed algorithms
under partial synchrony, and start with the specific class of failure detectors: the Chandra and Toueg failure detector~\cite{CT96}.
This is a well-known algorithm under partial synchrony that provides a service to solve many problems in fault-tolerant distributed computing.

\paragraph{Contributions.}
In this paper, we do formal verification of both safety and liveness  of the Chandra and Toueg failure detector in case of unknown bounds $ \Delta$ and $\Phi$.
In this case, both $\Delta$ and $\Phi$ are arbitrary, and the constraints on message delay and the relative speeds hold in every execution from the start.
\begin{enumerate}
\item We introduce and formalize
the class of symmetric point-to-point algorithms that contains the failure detector.

\item We prove that the symmetric point-to-point algorithms have a cutoff, and the cutoff properties hold in three models of computation: synchrony, asyncrony, and partial synchrony.
In a nutshell, a cutoff for a parameterized algorithm $\mathcal{A}$ and a property $\phi$ is a number $k$ such that $\phi$ holds for every instance of $\mathcal{A}$ if and only if $\phi$ holds for instances with $k$ processes~\cite{emerson1995reasoning,2015Bloem}.
Our cutoff results with $k = 2$ were presented in~\cite{tran2020netys,tran2021case}.
Hence, we verify the Chandra and Toueg failure detector under partial synchrony by checking instances with two processes.

\item We introduce encoding techniques to efficiently specify the failure detector based on our cutoff results.
These techniques can tune our modeling to use the strength of the tools: \fast{}~\cite{BardinFLP08}, Ivy~\cite{mcmillan2020ivy}, and model checkers for~\tlap{}~\cite{yu1999model,konnov2019tla}.

\item We demonstrate how to reduce the liveness properties Eventually Strong Accuracy, and Strong Completeness to safety properties.

\item We check the safety property Strong Accuracy, and the mentioned liveness properties
on instances with fixed parameters by using~\fast{}, and model checkers for \tlap{}.

\item To verify cases of arbitrary bounds $\Delta$ and $\Phi$, we find and prove inductive invariants of the failure detector with the interactive theorem prover Ivy. 
We reduce the liveness properties to safety properties by applying the mentioned techniques.
While our specifications are not in the decidable theories that Ivy supports, Ivy requires no additional user assistance to prove most of our inductive invariants.

\end{enumerate}

\paragraph{Structure.} In Section~\ref{sec:preliminaries}, we discusses challenges in verification of the Chandra and Toueg failure detector.
In Section~\ref{sec:netys20-comp-model}, we introduce
the class of symmetric point-to-point algorithms, and present how to formalize this class. 
Our cutoff results in the asynchronous model are presented in Section~\ref{sec:netys20-cutoff-res}, and the detailed proofs are provided in Section~\ref{sec:netys20-proofs}.
We extend the cutoff results for partial synchrony in Section~\ref{sec:forte21-cutoff-res}.
Our encoding technique to efficiently specify the failure detector is presented in Section~\ref{sec:forte21-encoding}.
In Section~\ref{sec:forte21-reduction}, we present how to reduce the mentioned liveness properties to safety ones.
Experiments for small $\Delta$ and $\Phi$ are described in  Section~\ref{sec:forte21-experiments-fixed-paras}.
Ivy proofs for parametric $\Delta$ and $\Phi$ are discussed in
Section~\ref{sec:forte21-experiments-unknown-paras}.
Finally, we discuss related work in Section~\ref{sec:relate-work}.

\section{Challenges in verification of failure detectors} \label{sec:preliminaries}

\begin{algorithm}[tb] 
  \caption{The eventually perfect failure detector algorithm in \cite{CT96}} \label{algo:detector1}
  \begin{algorithmic}[1] 
    \State \emph{Every process $p \in \nid$ executes the following}:
    \For{$\textbf{all } q \in \nid $}    \Comment{Initalization step}
    \State $\fdtimeout{p}{q} \coloneq \text{default-value}$ \label{algo:initStep1}
    \State $\fdsuspected{p}{q} \coloneq \False$ \             \label{algo:initStep2}
    \EndFor
    \State Send ``alive'' to all $q \in  \nid$ \label{algo:task1}
    \Comment{Task 1: repeat periodically}
    \For{\textbf{all} $q  \in \nid$}  \label{algo:task2}    \Comment{Task 2: repeat periodically}
    \If{$\fdsuspected{p}{q} = \False \textbf{ and not hear } q \text{ during last } \fdtimeout{p}{q} \text{ ticks}$} 
    \State $\fdsuspected{p}{q} \coloneq \True$    
    \EndIf    
    \EndFor
    \If{$\fdsuspected{p}{q}$} \label{algo:task3} 
    \Comment{Task 3: when receive ``alive'' from $q$}
    \State $\fdtimeout{p}{q} \coloneq \fdtimeout{p}{q} + 1$ 
    \State $\fdsuspected{p}{q} \coloneq \False$ 
    \EndIf            
  \end{algorithmic}
\end{algorithm} 

The Chandra and Toueg failure detector~\cite{CT96} can be seen as an oracle to get information about crash failures in the distributed system. 
The failure detector usually guarantees some of the following properties~\cite{CT96} (numbers $\nid$ denote the process identifiers):
\begin{itemize} \label{properties}
  \item Strong Accuracy: No process is suspected before it crashes.	
  
  $
  \begin{aligned}
  \qquad \G (\A p, q \in \nid \colon (\mathit{Correct}(p) \land \mathit{Correct}(q)) \Rightarrow \lnot \mathit{Suspected}(p, q))
  \end{aligned}
  $
  
  \item Eventual Strong Accuracy: There is a time after which correct processes are not suspected by any correct process.
  
  $
  \begin{aligned}
  \qquad \F \G (\A p, q \in \nid \colon (\mathit{Correct}(p) \land \mathit{Correct}(q)) \Rightarrow \lnot \mathit{Suspected}(p, q))
  \end{aligned}
  $
  
  \item Strong Completeness: Eventually every crashed process is  permanently suspected by every correct process.
  
  $
  \begin{aligned}
  \qquad \F \G (\A p, q \in \nid \colon (\mathit{Correct}(p) \land \lnot\mathit{Correct}(q)) \Rightarrow \mathit{Suspected}(p, q))
  \end{aligned}
  $
\end{itemize}
where $\F$ and $\G$ are temporal operators in linear temporal logic (LTL)~\cite{pnueli1977temporal}~\footnote{A brief introduction of LTL is provided in Appendix~\ref{sec:ltl}.}, predicate $\mathit{Suspect}(p, q)$ refers to whether process $p$ suspects process $q$ to have crashed, and predicate $\mathit{Correct}(p)$ refers to whether process $p$ is correct.
Given an execution trace, process $p$ is correct if $\mathit{Correct}(p)$ is true for every time point~\footnote{
  We deviate from the original definition in~\cite{CT96}, as it allows us to describe global states at specific times, without reasoning about potential crashes that happen in the future. Actually, our modeling captures more closely the failure patterns from~\cite{CT96}. Regarding the failure detector properties, the strong accuracy property is equivalent to the one in~\cite{CT96}. The other two properties have the form $\F \G( \ldots )$, where we may consider satisfaction only at times after the last process has crashed and thus our crashed predicates coincide with the ones in~\cite{CT96}.
  
}.
However, a process might crash later (and not recover).
Given an execution trace, if process $q$ crashes at time $t$, predicate $ \mathit{Correct}(q)$ is evaluated to false from time $t$.
Predicate $\mathit{Suspected}(p, q)$ corresponds to the variable \textit{suspected} in Algorithm~\ref{algo:detector1}, and depends on the variable $\textit{timeout}[p, q]$ and the waiting time of process $p$ for process $q$.

Algorithm~\ref{algo:detector1} presents the pseudo code of the failure detector of~\cite{CT96}.
A system instance has $N$ processes that  communicate with each other by sending-to-all and receiving messages through unbounded $N^2$ point-to-point communication channels.  
A process performs local computation based on received messages (we assume that a
process also receives the messages that it sends to itself).
In one system step, all processes may take up to one step.
Locally in each step, a process can only make a step in at most one of the locally concurrent tasks.
Some processes may crash, i.e., stop operating.
Correct processes follow Algorithm~\ref{algo:detector1} to detect crashes in the system.
Initially, every correct process sets a default value for a timeout of each other (Line~\ref{algo:initStep1}), i.e., how long it should wait for others, 
and assumes that no processes have crashed (Line~\ref{algo:initStep2}). 
Symbols $\bot$ and $\top$ refer to truth values false and true, respectively.
Every correct process $p$ has three tasks: (i) repeatedly sends an ``alive'' message to all processes (Line~\ref{algo:task1}), 
and (ii) repeatedly produces predictions about crashes of other processes based on timeouts (Line~\ref{algo:task2}),
and (iii) increases a timeout for process $q$ if $p$ has learned that its suspicion on $q$ is wrong (Line~\ref{algo:task3}).
Notice that process $p$ raises suspicion on the operation of process $q$ (Line~\ref{algo:task2}) by considering only information related to $q$: $\fdtimeout{p}{q}, \fdsuspected{p}{q}$, and messages that $p$ has received from $q$ recently.

Algorithm~\ref{algo:detector1} does not satisfy  Eventually Strong Accuracy under asynchrony since there exists no bound on message delay, and messages sent by correct processes might always arrive after the timeout expired. 
Liveness of the failure detector is based on the existence of bounds $\Delta$  on the message delay, and $\Phi$ on the relative speed of processes after reaching the Global Stabilization Time (GST) at some time point $T_0$~\cite{CT96}.\footnote{A time $T_0$ is called the Global Stabilization Time (GST) if $\Delta$ or $\Phi$ holds in $[T_0, \infty]$. }
There are many models of partial synchrony~\cite{DLS88,CT96}.
In this paper, we focus only on the case of unknown bounds $\Delta$ and $\Phi$ because other models might call for abstractions which are out of scope of this paper.
In our case, $T_0 = 1$, and both parameters $\Delta$ and $\Phi$ are arbitrary.
Moreover, the following constraints hold in every execution:
\begin{enumerate}[label=(TC\arabic*),leftmargin=1.3cm]
  \item If message $m$ is placed in the message buffer from process $q$ to process $p$ by some operation $Send(m, p)$ at a time $s_1 \geq 1$, and if process $p$ executes an operation $Receive(p)$ at a time $s_2$ with $s_2 \geq s_1 + \Delta$, then message $m$ must be delivered to $p$ at time $s_2$ or earlier.
  \label{TC1}
  
  \item In every contiguous time interval $\left[ t, t + \Phi \right]$ with $t \geq 1$, every correct process must take at least one step. 
  \label{TC2}  
\end{enumerate}
These constraints make the failure detector parametric in $\Delta$ and $\Phi$.

Moreover, Algorithm~\ref{algo:detector1} is parameterized by the number of processes and by the initial value of the timeout. 
If a default value of the timeout is too small, there exists a case in which sent messages are delivered after the timeout expired.
This behavior violates Strong Accuracy.

As a result, verification of the failure detector faces the following challenges:
\begin{enumerate}[nosep]
  \item Its model of computation lies between synchrony and asynchrony since multiple processes can take a step 
  in a global step. 
  \label{challenge:formalization}

  \item The failure detector is parameterized by the number of processes. Hence, we need to verify infinitely many instances of algorithms.
  \label{challenge:parameterization1}  
  
  \item The initial value of the timeout is an additional parameter in Algorithm~\ref{algo:detector1}.
  \label{challenge:timeout}  

  \item The failure detector relies on a global clock and local clocks.
  A straightforward encoding of a clock with an integer would produce
  an infinite state space.
  \label{challenge:clock}  
  
  \item The algorithm is parametric by time bounds $\Delta$ and $\Phi$.
  \label{challenge:bounds}  
  
  \item Eventually Strong Accuracy and Strong Completeness are liveness properties.
  \label{challenge:reduction} 
\end{enumerate}

\section{Model of Computation} \label{sec:netys20-comp-model}

In this section, we introduce
the class of symmetric point-to-point algorithms, and present how to formalize such algorithms as transition systems.
Since every process follows the same algorithm, we first define a process template that captures the process behavior in Section~\ref{sec:temp-model}.
Every process is an instance of the process template.

In Section~\ref{sec:global-model}, we present the formalization of the global system. 
This formalization is adapted with the time constraints under partial synchrony in Section~\ref{sec:time-constraints}, and our analysis is for the model under partial synchrony.

Intuitively, a global system is a composition of $N$ processes, $N^2$ point-to-point
outgoing message buffers,
and $N$ control components that capture what processes can take a step.
Every process is identified with a unique index in $\nid$, and follows the same deterministic algorithm.
Moreover, a global system allows: (i) multiple processes to take (at most) one step 
in one global step, 
and (ii) some processes to crash.
Every process may execute three kinds of transitions: \emph{internal}, \emph{round}, and stuttering.
Notice that in one global step, 
some processes may send a message to all, and some may receive messages and do computation.
Hence, we need to decide which processes move, and what happens to the message buffers.
We introduce four sub-rounds: \textit{Schedule}, \textit{Send}, \textit{Receive}, and \textit{Computation}.
The transitions for these sub-rounds are called internal ones. 
A global round transition is a composition of four internal transitions.
We formalize sub-rounds and global steps later.
As a result of modeling, there exists an arbitrary sequence of global configurations which is not accepted in asynchrony.
So, we define so-called \emph{admissible} sequences of global configurations under asynchrony.

Recall that the network topology of algorithms in the symmetric point-to-point class contains $N^2$ \emph{point-to-point} message buffers.
Every transposition on a set of process indexes preserves the network topology.
Importantly, every transposition on process indexes also preserves the
structures of both the process template and the global transition system.
It implies that both the process template and the global transition system are \emph{symmetric}.

\subsection{The Process Template} \label{sec:temp-model} 

We fix a set of process indexes as $\nid$.
Moreover, we assume that the message content does not have indexes of its receiver and sender. 
We let $\msg$ denote a set of potential messages, and $\set(\msg)$ denote the set of sets of messages.

We model a process template as a transition system $\mytemp{N} = (\tempconfs{N}, \temptr{N}, \temprel{N}, \tempinitconf{N})$ 
where 
\[
\tempconfs{N} = \loc \times 
\underbrace{\set(\msg) \times \ldots \times  \set(\msg)}_{N \text{ times}} \times 
\underbrace{\tempdatatype \times \ldots \times \tempdatatype}_{N \text{ times}} 
\] 
is a set of template states~\footnote{We denote $S_1 \times \ldots \times S_m$ by a set $\{ (s_1, \ldots, s_m) \mid \bigwedge_{1 \leq i \leq m} s_i \in S_i \}$ of tuples. 
The elements of the set $\tempconfs{N}$ are tuples with $2N + 1$ elements.}, 
$\temptr{N}$ is a set of template transitions, 
$\temprel{N} \subseteq \tempconfs{N} \times \temptr{N} \times \tempconfs{N}$ is a template transition relation, 
and $\tempinitconf{N} \in \tempconfs{N}$ is an initial state. 
These components of $\mytemp{N}$ are defined as follows.

\paragraph{\textbf{States.}}
A \emph{template state} $\tempconf{}$ is a tuple $\left(\ell, S_1, \ldots, S_N, d_1, \ldots, d_N \right)$ 
where:
\begin{itemize}
\item $\ell \in \loc$ refers to the value of a program counter that ranges over a set  $\loc$ of locations.
We assume that $\loc  = \locsnd \cup \locrcv \cup \loccomp \cup \{\loccrash\}$, and three sets  $\locsnd$, $\locrcv$, $\loccomp$ are disjoint, and $\loccrash$ is a special location of crashes.
To access the program counter, we use a function $\pcFncName \colon \tempconfs{N} \rightarrow \loc$ 
that takes a template state at its input, and produces its program counter as the
output. Let $\tempconf{}(k)$ denote the $k^{th}$ component in a template state  $\tempconf{}$. 
For every $\tempconf{} \in \tempconfs{N}$, we have $\getPC{\tempconf{}} = \tempconf{}(1)$ .

\item $S_i \in \set(\msg)$ refers to a set of messages. 
It is to store the messages received from a process $p_i$ for every $i \in \nid$.
To access a set of received messages from a particular process whose index is in $\nid$, 
we use a function $\boxFncName \colon \tempconfs{N} \times \nid \rightarrow \set(\msg)$ 
that takes a template state $\tempconf{}$ and a process index $i$ at its input, 
and produces the $(i + 1)^{th}$ component of $\tempconf{}$ at the output, 
i.e. for every $\tempconf{} \in \tempconfs{N}$, we have $\getBox{\tempconf{}}{i} = \tempconf{}(1 + i)$ .

\item $d_i \in \tempdatatype$ refers to a local variable related to a process $p_i$ for every $i \in \nid$.
To access a local variable related to a particular process whose index in $\nid$, 
we use a function $\datumFncName \colon \tempconfs{N} \times \nid \rightarrow \tempdatatype $ 
that takes a template state $\tempconf{}$ and a process index $i$ at its input,
and produces the $(1 + N + i)^{th}$ component of $\tempconf{}$ as the output, i.e. $\getDatum{\tempconf{}}{i} = \tempconf{}(1 + N + i)$ for every $\tempconf{} \in \tempconfs{N}$.
For example, for every process $p$ in Algorithm~\ref{algo:detector1}, variable $d_i$ of process $p$ refers to a tuple of the two variables $\textit{timeout}[p, i]$ and $\textit{suspect}[p, i]$.
\end{itemize}

\paragraph{Initial state.}
The initial state $\tempinitconf{N}$ is a tuple 
$\tempinitconf{N} = (\ell_0, \emptyset, \ldots, \emptyset, d_0, \ldots, d_0)
$ where $\ell_0$ is a location, every box for received messages is empty, and every local variable is assigned a constant $d_0 \in \tempdatatype$.

\paragraph{Transitions.}
We define $\temptr{N} = \mathit{CSnd} \cup \mathit{CRcv} \cup \{ \comp, \crash, \stutter \}$ where
\begin{itemize}
\item $\mathit{CSnd}$ is a set of transitions.
Every transition in $\mathit{CSnd}$ refers to a task that does some internal computation, and sends a message to all.
For example, in task 1 in Algorithm~\ref{algo:detector1}, a process increases its local clock, and performs an instruction to send ``alive'' to all.
We let $\csnd(m)$ denote a transition referring to a task with an action to send a message $m \in \msg$ to all.

\item $\mathit{CRcv}$ is a set of transitions.
Every transition in $\mathit{CRcv}$ refers to a task that receives $N$ sets of messages, and does some internal computation.
For example, in task 2 in Algorithm~\ref{algo:detector1}, a process increases its local clock, receives messages, and removes false-negative predictions.
We let $\crcv(S_1, \ldots, S_N)$ denote a transition referring to a task with an action to receive sets $S_1, \ldots, S_N$ of messages. 
These sets $S_1, \ldots, S_N$ are delivered by the global system.

\item $\comp$ is a transition which refers to a task with purely local computation. 
In other words, this task has neither send actions nor receive actions.

\item $\crash$ is a transition for crashes.

\item $\stutter$ is a transition for stuttering steps.
\end{itemize}

\paragraph{Transition relation.}

For two states $\tempconf{}, \tempnextconf{} \in \tempconfs{N}$ and a transition $\mathit{tr} \in \temptr{N}$, instead of $(\tempconf{}, \textit{tr}, \tempnextconf{})$, we write $\tempconf{} \temptrans{} \tempnextconf{}$.
In the model of~\cite{DLS88,CT96}, each process follows the same deterministic algorithm. 
Hence, we assume that 
for every $\tempconf{0} \temptrans{0} \tempnextconf{0}$ and $\tempconf{1} \temptrans{1} \tempnextconf{1}$, if $\tempconf{0} = \tempconf{1}$ and $\mathit{tr}_0 = \mathit{tr}_1$, 
then it follows that $\tempnextconf{0} = \tempnextconf{1}$. 
Moreover, we assume that there exist the following functions which are used to define constraints on the template transition relation:
\begin{itemize}
\item A function $\nextLoc \colon \loc \rightarrow \loc$ takes a location at its input and produces the next location as the output.

\item A function $\genMsg \colon \loc \rightarrow \texttt{Set}(\msg)$ takes a location at its input, and produces a singleton set that contains the message that is sent to all processes in the current task.
The output can be an empty set.
For example, if a process is performing a Receive task, the output of $\genMsg$ is an empty set.

\item A function $\nextVar \colon \loc \times \set(\msg) \times \tempdatatype \rightarrow \tempdatatype$ takes a location, a set of messages, and a local variable's value, and produces a new value of a local variable as the output.
\end{itemize} 
Let us fix functions $\nextLoc, \genMsg$ and $\nextVar$. 
We define the template transitions as follows.

\begin{enumerate}
\item For every message $m \in \msg$, for every pair of states $\tempconf{} , \tempnextconf{} \in \tempconfs{N}$, we have $\tempconf{} \tempsnd{}{m} \tempnextconf{}$ if and only if
\begin{enumerate}
\item $\getPC{\tempconf{}} \in \locsnd \land \getPC{\tempnextconf{}} = \nextLoc(\getPC{\tempconf{}}) \land \{m\} = \genMsg(\getPC{\tempconf{}})$
\item $\forall i \in \nid \colon \getBox{\tempconf{}}{i} = \getBox{\tempnextconf{}}{i}$
\item $\forall i \in \nid \colon \getDatum{\tempnextconf{}}{i} =  
\nextVar \big( \getPC{\tempconf{}},  \emptyset, \getDatum{\tempconf{}}{i} \big) $
\end{enumerate}

Constraint (a) implies that the update of a program counter and the construction of a sent message $m$ depend on only the current value of a program counter, and a process sends only $m$ to all in this step.
For example, process $p$ in Algorithm~\ref{algo:detector1} sends only message ``alive'' in Task 1.
Constraint (b) refers to that no message was delivered.
Constraint (c) implies that the value of $\getDatum{\tempnextconf{}}{i}$  depends only on the current location
and the value of $\getDatum{\tempconf{}}{i}$.
The empty set in Constraint (c) means that no messages have been delivered.

\item For arbitrary sets of messages $S_1, \ldots, S_N \subseteq \msg$,  for every pair of states $\tempconf{} , \tempnextconf{} \in \tempconfs{N}$, we have $\tempconf{} \temprcv{}{S_1}{S_N} \tempnextconf{}$ if and only if the following constraints hold:
\begin{enumerate}
\item $\getPC{\tempconf{}} \in \locrcv \land \getPC{\tempnextconf{}} = \nextLoc(\getPC{\tempconf{}}) \land \emptyset = \genMsg(\getPC{\tempconf{}})$ 

\item $\forall i \in \nid \colon \getBox{\tempnextconf{}}{i} = \getBox{\tempconf{}}{i} \cup S_i$

\item $\forall i \in \nid \colon \getDatum{\tempnextconf{}}{i} =   \nextVar \big( \getPC{\tempconf{}}),  
S_i,
\getDatum{\tempconf{}}{i}\big)$
\end{enumerate}
Constraint (a) in $\crcv$ is similar to constraint (a) in $\csnd$, except that no message is sent in this sub-round.
Constraint (b) refers that messages in a set $S_i$ are from a process indexed $i$, and have been delivered in this step.
For example, in Algorithm~\ref{algo:detector1} Constraint (b) implies that $\getBox{\tempconf{}}{i} \subseteq \{ \text{``alive''} \}$ for every template state $\tempconf{}$ and every index $1 \leq i \leq N$.
After the first ``alive'' message was received, the value of $\getBox{\tempconf{}}{i}$ is unchanged.
This does not raise any issues in our analysis as Line~7 in Algorithm~\ref{algo:detector1} considers only how long process $p$ has waited for a new message from process $q$.
Constraint (c) in $\crcv$ implies that the value of $\getDatum{\tempnextconf{}}{i}$  depends on only the current location, 
the set $S_i$ of messages that have been delivered, 
and the value of $\getDatum{\tempconf{}}{i}$.

\item For every pair of states $\tempconf{} , \tempnextconf{} \in \tempconfs{N}$, we have $\tempconf{} \tempcomp{} \tempnextconf{}$ if and only if the following constraints hold:
\begin{enumerate}
\item $\getPC{\tempconf{}} \in \loccomp \land \getPC{\tempnextconf{}} = \nextLoc(\getPC{\tempconf{}}) \land \emptyset = \genMsg(\getPC{\tempconf{}})$
\item $\forall i \in \nid \colon \getBox{\tempnextconf{}}{i} = \getBox{\tempconf{}}{i}$
\item $\forall i \in \nid \colon  \getDatum{\tempnextconf{}}{i} = \nextVar \big( \getPC{\tempconf{}},  
\emptyset,
\getDatum{\tempconf{}}{i} \big)$
\end{enumerate}
Hence, this step has only local computation. No message is sent or delivered.

\item For every pair of states $\tempconf{} , \tempnextconf{} \in \tempconfs{N}$, we have
$\tempconf{} \tempcrash{}  \tempnextconf{} $ if and only if the following constraints hold:
\begin{enumerate}
\item $\getPC{\tempconf{}} \neq \loccrash \land \getPC{\tempnextconf{}} = \loccrash$
\item $\forall i \in \nid \colon \getBox{\tempconf{}}{i} = \getBox{\tempnextconf{}}{i} \land \getDatum{\tempconf{}}{i} = \getDatum{\tempnextconf{}}{i}$
\end{enumerate}
Only the program counter is updated by switching to $\loccrash$.

\item For every pair of states $\tempconf{} , \tempnextconf{} \in \tempconfs{N}$, we have $\tempconf{} \tempstutter{} \tempnextconf{}$ if and only if $\tempconf{} = \tempnextconf{}$.
\end{enumerate} 

\subsection{Modeling the Global Distributed Systems} \label{sec:global-model} 

We now present the formalization of the global system.
In this model, multiple processes might take a step in a global step.
This characteristic allows us to extend this model with partial synchrony constraints that are formalized in Section~\ref{sec:time-constraints}.
To capture the semantics of asynchrony, we simply need a constraint that only one process can take a step in a global step~\cite{hagit2004}.
This constraint is formalized in the end of this subsection.

Given $N$ processes which are instantiated from the same process template $\mytemp{N} = (\tempconfs{N}, \temptr{N}, \temprel{N}, \tempinitconf{N})$, the global system is a composition of (i) these processes, and (ii) $N^2$ point-to-point buffers for in-transit messages, and (iii) $N$ control components that capture what processes can take a step.
We formalize the global system as a transition system $\gsys{N} = \left( \gconfs{N}, \gtr{N}, \grel{N}, \ginitconf{N} \right)$
where

\begin{itemize}
  \item $\gconfs{N} = (\tempconfs{N})^N \times \set(\msg)^{N \cdot N} \times \Boolean^N$ is a set of global configurations, 
  
  \item $\gtr{N}$ is a set of global \emph{internal}, \emph{round}, and stuttering transitions,
  
  \item  $\grel{N} \subseteq \gconfs{N} \times \gtr{N} \times \gconfs{N}$ is a global transition relation, and

  \item $\ginitconf{N}$ is an initial configuration. 

\end{itemize}
These components are defined as follows.

\paragraph{Configurations.}

A \emph{global configuration} $\gconf{}$ is defined as a following tuple 
\[
\gconf{} = \left(q_1, \ldots, q_N, S^1_1, S^2_1 \ldots, S^r_s, \ldots S^N_N, act_1, \ldots, act_N \right)
\]
where:
\begin{itemize}
\item $q_i \in \tempconfs{N}$: This component is a state of a process $p_i$ for every $i \in \nid$.
To access a local state of a particular process, we use a function $\procFncName \colon \gconfs{N} \times \nid \rightarrow \tempconfs{N} $
that takes input as a global configuration $\gconf{}$ and a process index $i$, 
and produces output as the $i^{th}$ component of $\gconf{}$ which is a state of a process $p_i$.
Let $\gconf{}(i)$ denote the $i^{th}$ component of a global configuration $\gconf{}$. 
For every $i \in \nid$, we have $\getProc{\gconf{}}{i} = \gconf{}(i) = q_i$.

\item $S^r_s \in \set(\msg)$: This component  is a set of in-transit messages from a process $p_s$ to a process $p_r$ for every $s, r \in \nid$.
To access a set of in-transit messages between two processes, we use a function $\bufFncName \colon \gconfs{N} \times \nid \times \nid \rightarrow \set(\msg)$ that takes input as a global configuration $\gconf{}$, and two process indexes $s, r$, and produces output as the $(s \cdot N  + r)^{th}$ component of $\gconf{}$ which is a message buffer from a process $p_s$ (sender) to a process $p_r$ (receiver).
Formally, we have $\getBuf{\gconf{}}{s}{r} = \gconf{}(s \cdot N  + r) = S^r_s$ for every $s, r \in \nid$.

\item $act_i \in \Boolean$: This component says whether a process $p_i$ can take one step in a global step for every $i \in \nid$.
To access a control component, we use a function $\activeFncName \colon \gconfs{N} \times \nid \rightarrow \Boolean$
that takes input as a configuration $\gconf{}$ and a process index $i$, and produces output as the $((N + 1) \cdot N + i)^{th}$ component of $\gconf{}$ which refers to whether a process $p_i$ can take a step.
Formally, we have $\getActive{\gconf{}}{i} = \gconf{}((N + 1) \cdot N + i)$ for every $i \in \nid$.
The environment sets the values of $act_1, \ldots, act_N$ in the sub-round Schedule defined later.
\end{itemize}
We will write $\gconf{} \in (\tempconfs{N})^N \times \set(\msg)^{N \cdot N} \times \Boolean^N$ or $\gconf{} \in \gconfs{N}$.
\paragraph{Initial configuration.}
The global system $\gsys{N}$ has one initial configuration $\ginitconf{N}$, and it 
must satisfy the following constraints:
\begin{enumerate}
\item $\forall i \in \nid \colon \lnot \getActive{\ginitconf{N}}{i} \land \getProc{\ginitconf{N}}{i} = \tempinitconf{N}$
\item $\forall s, r \in \nid \colon \getBuf{\ginitconf{N}}{s}{r}  = \emptyset$
\end{enumerate}

\paragraph{Global stuttering transition.}
We extend the relation $\leadsto$ with stuttering: for every configuration $\gconf{}$, we allow $\gconf{} \leadsto \gconf{}$.
The stuttering transition is necessary in the proof of Lemma~\ref{lem:TraceEquivTwo} presented in Section~\ref{sec:netys20-cutoff-res}.

\paragraph{Global internal transitions.}
In the model of~\cite{DLS88,CT96}, many processes can take a step in a global step.
We assume that a computation of the distributed system is organized in rounds, i.e. global ticks, and every round is organized as four sub-rounds called \textit{Schedule, Send, Receive}, and \textit{Computation}.
To model that as a transition system, for every sub-round we define a corresponding transition:  $\transsched$ for the sub-round \textit{Schedule}, $\transsnd$ for the sub-round \textit{Send}, $\transrcv$ for the sub-round \textit{Receive}, $\transcomp$ for the sub-round \text{Comp}. 
These transitions are called global \emph{internal} transitions.
We define the semantics of these sub-rounds as follows.

\begin{enumerate}
\item Sub-round \textit{Schedule}. The environment starts with a global configuration where every process is inactive, and move to another  by non-deterministically deciding what processes become crashed, and 
what processes take a step in the current global step. 
Every correct process takes a stuttering step, 
and every faulty process is inactive.
If a process $p$ is crashed in this sub-round, every incoming message buffer to $p$ is set to the empty set.
Formally, for  $ \gconf{}, \gnextconf{} \in \gconfs{N}$, we have $\gconf{} \transsched \gnextconf{}$ if the following constraints hold:
\begin{enumerate}
\item $\forall i \in \nid \colon \lnot \getActive{\gconf{}}{i}$
\item $\forall i \in \nid \colon \getProc{\gconf{}}{i} \transstutter  \getProc{\gnextconf{}}{i}   \lor \getProc{\gconf{}}{i} \tempcrash{} \getProc{\gnextconf{}}{i}$
\item $\forall i \in \nid \colon \getPC{\getProc{\gnextconf{}}{i}} = \loccrash \Rightarrow \lnot \getActive{\gnextconf{}}{i}$
\item $\forall s, r \in \nid \colon \getPC{ \getProc{\gnextconf{}}{r} } \neq \loccrash \Rightarrow \getBuf{\gconf{}}{s}{r} = \getBuf{\gnextconf{}}{s}{r} $
\item $\forall r \in \nid \colon \getPC{ \getProc{\gnextconf{}}{r} } = \loccrash \Rightarrow (\forall s \in \nid \colon \getBuf{\gnextconf{}}{s}{r} = \emptyset)$
\end{enumerate}

We let predicate $\CanRun(\gconf{}, i, L)$ denote whether process $i$ whose location at the configuration $\gconf{}$ is in $L$ takes a step from $\gconf{}$. Formally, we have
\[
\CanRun(\gconf{}, i, L)   \triangleq \; \getActive{\gconf{}}{i} \land  \getPC{\getProc{\gconf{}}{i}} \in L
\]
Predicate $\CanRun$ is used in the definitions of other sub-rounds.
\item Sub-round \emph{Send}. Only processes that perform send actions can take a step in this sub-round.
Such processes become inactive at the end of this sub-round.
Fresh sent messages are added to corresponding message buffers. 
To define the semantics of the sub-round Send, we use the following predicates:

\begin{align*}
  \qquad \SFrozen(\gconf{}, \gnextconf{}, i)   \triangleq \; & \;\;\;\; \getProc{\gconf{}}{i} \transstutter \getProc{\gnextconf{}}{i}  \\
  & \land  \getActive{\gconf{}}{i} = \getActive{\gnextconf{}}{i}   \\
  & \land \forall r \in \nid \colon \getBuf{\gconf{}}{i}{r} = \getBuf{\gnextconf{}}{i}{r}   \\
  \qquad \predsnd(\gconf{}, \gnextconf{}, i, m)   \triangleq \; & \;\;\;\; \forall r \in \nid \colon m \notin \getBuf{\gconf{}}{i}{r}   \\
  & \land \forall r \in \nid \colon \getBuf{\gnextconf{}}{i}{r} = \{m\} \cup \getBuf{\gconf{}}{i}{r} \,  \\
  & \land \getProc{\gconf{}}{i} \tempsnd{}{m} \getProc{\gnextconf{}}{i}
\end{align*}

Formally, for  $ \gconf{}, \gnextconf{} \in \gconfs{N}$, we have $\gconf{} \transsnd \gnextconf{}$ if the following constraints hold:
\begin{enumerate}
\item $\forall i \in \nid \colon \lnot \CanRun(\gconf{}, i, \locsnd) \Leftrightarrow \SFrozen(\gconf{}, \gnextconf{}, i)$
\item $\forall i \in \nid \colon  \CanRun(\gconf{}, i, \locsnd) \Leftrightarrow \exists m \in \msg \colon \predsnd(\gconf{}, \gnextconf{}, i, m)$
\item $\forall i \in \nid \colon \CanRun(\gconf{}, i, \locsnd) \Rightarrow \lnot \getActive{\gnextconf{}}{i}$
\end{enumerate}

The semantics of the Send sub-round forces that the Send primitive is atomic.

\item Sub-round \textit{Receive}. Only processes that perform receive actions can take a step in this sub-round.
Such processes become inactive at the end of this sub-round.
Sets of delivered messages that may be empty are removed from corresponding message buffers. 
To define the semantics of this sub-round, we use the following predicates:
\begin{align*}
\qquad \RFrozen(\gconf{}, \gnextconf{}, i)   \triangleq \; & \;\;\;\; \getProc{\gconf{}}{i} \transstutter \getProc{\gnextconf{}}{i}  \\
& \land  \getActive{\gconf{}}{i} = \getActive{\gnextconf{}}{i}   \\
& \land \forall s \in \nid \colon \getBuf{\gconf{}}{s}{i} = \getBuf{\gnextconf{}}{s}{i}  \\
\predrcv(\gconf{}, \gnextconf{}, i, S_1, \ldots, S_N)   \triangleq \; & 
\;\;\;\; \forall s \in \nid \colon S_s \cap \getBuf{\gnextconf{}}{s}{i} = \emptyset \\
& \land \forall s \in \nid \colon \getBuf{\gnextconf{}}{s}{i} \cup S_s = \getBuf{\gconf{}}{s}{i} \\
& \land \getProc{\gconf{}}{i} \temprcv{}{S_1}{S_N} \getProc{\gnextconf{}}{i} 
\end{align*}
Formally, for  $ \gconf{}, \gnextconf{} \in \gconfs{N}$, we have $\gconf{} \transrcv \gnextconf{}$ if the following constraints hold:
\begin{enumerate}
\item $\forall i \in \nid \colon \lnot \CanRun(\gconf{}, i, \locrcv) \Leftrightarrow \RFrozen(\gconf{}, \gnextconf{}, i)$

\item $\begin{array}[t]{l l}
\forall i \in \nid \colon & \CanRun(\gconf{}, i, \locrcv) \\
& \qquad \Leftrightarrow \exists S_1, \ldots, S_N \subseteq \msg \colon \predrcv(\gconf{}, \gnextconf{}, i, S_1, \ldots, S_N)
\end{array}$

\item $\forall i \in \nid \colon \CanRun(\gconf{}, i, \locrcv) \Rightarrow \lnot \getActive{\gnextconf{}}{i}$
\end{enumerate}

\item Sub-round \textit{Computation}. Only processes that perform internal computation actions can take a step in this sub-round.
Such processes become inactive at the end of this sub-round.
Every message buffer is unchanged. 
Formally, for  $ \gconf{}, \gnextconf{} \in \gconfs{N}$, we have $\gconf{} \transcomp \gnextconf{}$ if the following constraints hold:
\begin{enumerate}
\item $\forall i \in \nid \colon  \CanRun(\gconf{}, i, \loccomp) \Leftrightarrow \getProc{\gconf{}}{i} \tempcomp{} \getProc{\gnextconf{}}{i}$
\item $\forall i \in \nid \colon \lnot  \CanRun(\gconf{}, i, \loccomp) \Leftrightarrow \getProc{\gconf{}}{i} \transstutter \getProc{\gnextconf{}}{i}$
\item $\forall s, r \in \nid \colon \getBuf{\gconf{}}{s}{r} = \getBuf{\gnextconf{}}{s}{r}$
\item $\forall i \in \nid \colon \CanRun(\gconf{}, i, \loccomp) \Rightarrow \lnot \getActive{\gnextconf{}}{i}$
\end{enumerate}

\end{enumerate}

\begin{rem}
Predicate $\predsnd$ refers that at most one ``alive'' message in Algorithm~\ref{algo:detector1} is in every message buffer.
In Section~\ref{sec:time-constraints}, we extend our formalization by introducing the notion of time and by modeling time constraints under partial synchrony.
In that formalization, every ``alive'' message is tagged with its age, and therefore, the message buffers can have multiple messages.
\end{rem}

\begin{rem}
  Observe that the definitions of $\gconf{} \transsched \gnextconf{}$, and $\gconf{} \transsnd \gnextconf{}$, and $\gconf{} \transrcv \gnextconf{}$, and $\gconf{} \transcomp \gnextconf{}$ allow $\gconf{} = \gnextconf{}$, that is stuttering.
  This captures, e.g.,
global steps in~\cite{DLS88,CT96} where no process sends a message.
\end{rem}

\paragraph{Global round transitions.}
Intuitively, every global \emph{round} transition is induced by a sequence of four transitions: a $\transsched$ transition, a $\transsnd$ transition, a $\transrcv$ transition, and a $\transcomp$ transition.
We let $\leadsto$ denote global round transitions. 
For every pair of global configurations $ \gconf{0}, \gconf{4} \in \gconfs{N}$, we say $\gconf{0} \leadsto \gconf{4}$ if there exist three global configurations $\gconf{1}, \gconf{2}, \gconf{3} \in \gconfs{N}$ such that
$
\gconf{0} \transsched \gconf{1} \transsnd \gconf{2} \transrcv \gconf{3} \transcomp \gconf{4}
$.
Moreover, global round transitions allow some processes to crash only in the sub-round Schedule.
We call these faults \emph{clean crashes}.
Notice that correct process $i$ can make at most one global internal transition in every global round transition since the component $act_i$ is false after process $i$ makes a transition.

\paragraph{Admissible sequences.}
An infinite sequence $\pi = \gconf{0} \gconf{1} \ldots$ of global configurations in $\gsys{N}$ is \emph{admissible} if the following constraints hold:
\begin{enumerate}
  \item $\gconf{0}$ is the initial state, i.e. $\gconf{0} = \ginitconf{N}$, and
  \item $\pi$ is stuttering equivalent with an infinite sequence $\pi^{\prime} = \gnextconf{0} \gnextconf{1} \ldots$ such that 
  $
  \gnextconf{4k} \transsched \gnextconf{4k+1} \transsnd \gnextconf{4k+2} \transrcv \gnextconf{4k+3} \transcomp \gnextconf{4k+4}
  $ for every $k \geq 0$. 
  
\end{enumerate} 
Notice that it immediately follows by this definition that if $\pi = \gconf{0} \gconf{1} \ldots$ is an admissible sequence of configurations in $\gsys{N}$, then $\gnextconf{4k} \leadsto \gnextconf{4k+4}$ for every $k \geq 0$.
From now on, we only consider admissible sequences of global configurations.

\paragraph{Admissible sequences under synchrony.} Let $\pi = \gconf{0} \gconf{1} \ldots $ be an admissible sequence of global configurations in $\mathcal{G}_N$. 
As every correct process makes a transition in every global step under synchrony\cite{hagit2004}, we say that $\pi$ is under synchrony if every correct process is active after a sub-round Schedule. 
Formally, for every transition $\gconf{} \transsched \gnextconf{}$ in $\pi$, the following constraint holds:
$
\forall i \in \nid \colon \getPC{\getProc{\gnextconf{}}{i}} \neq \loccrash \Rightarrow \getActive{\gnextconf{}}{i} 
$.

\paragraph{Admissible sequences under asynchrony.} Let $\pi = \gconf{0} \gconf{1} \ldots $ be an admissible sequence of global configurations in $\mathcal{G}_N$. 
As at most one process can make a transition in every global step under asynchrony\cite{hagit2004}, we say that $\pi$ is under asynchrony if at most one process is active after a sub-round Schedule. 
Formally, for every transition $\gconf{} \transsched \gnextconf{}$ in $\pi$, the following constraint holds:
$
\forall i, j \in \nid \colon \getActive{\gnextconf{}}{i} \land \getActive{\gnextconf{}}{j} \Rightarrow i = j
$.

\subsection{Modeling Time Constraints under Partial Synchrony} \label{sec:time-constraints}

Time parameters in partial synchrony only reduce the execution space compared to asynchrony.
Hence, we can formalize the system behaviors under partialy synchrony by extending the above formalization of the system behaviors with the notion of time, message ages, time constraints, and admissible sequences of configurations under partial synchrony.
They are defined as follows.

\paragraph{Time.} Time is progressing with global round transitions. 
Formally, let $\pi = \gconf{0} \gconf{1} \ldots $ be an admissible sequence of global configurations in $\mathcal{G}_N$.
We say that the configuration $\gconf{0}$ is at time 0, and that four configurations $\gconf{4k - 3}, \ldots, \gconf{4k}$ are at time $k$ for every $k > 0$. 

Recall that in Section~\ref{sec:global-model}, a global round transition 
is induced of
 a sequence of four sub-rounds: Schedule, Send, Receive, and Computation.
In an admissible sequence $\pi = \gconf{0} \gconf{1} \ldots $ of global configurations in $\mathcal{G}_N$, for every $k > 0$, every sub-sequence of four configurations $\gconf{4k-3}, \ldots, \gconf{4k}$ presents one global round transition.
Configuration $\gconf{4k-3}$ is in sub-round Schedule, and configuration $\gconf{4k}$ is in sub-round Computation for every $k > 0$.
So, the notion of time says that the global round transition $\gconf{4k-3} \leadsto \gconf{4k}$ happens at time $k$.

\paragraph{Message ages.} 
Now we discuss the formalization of message ages. 
For every sent message $m$, the global system tags it with its current age, i.e., $(m, \mathit{age}_m)$. 
Message ages require that the type of message buffers needs to be changed to
$\bufFncName \colon \gconfs{N} \times \nid \times \nid \rightarrow \set(\msg \times \mathbb{N})$.

In our formalization, when message $m$ was added to the message buffer in sub-round Send, its age is~0. Instead of predicate $\predsnd$, our formalization now uses the following predicate $\predsnd'$.
\begin{align*}
\qquad \predsnd'(\gconf{}, \gnextconf{}, i, m)   \triangleq \; & \;\;\;\; \forall r \in \nid \colon (m, 0) \notin \getBuf{\gconf{}}{i}{r}   \\
& \land \forall r \in \nid \colon \getBuf{\gnextconf{}}{i}{r} = \{(m, 0)\} \cup \getBuf{\gconf{}}{i}{r} \,  \\
& \land \getProc{\gconf{}}{i} \tempsnd{}{m} \getProc{\gnextconf{}}{i}
\end{align*}

Message ages are increased by 1 when the global system takes a $\transsched$ transition.
Formally, for every time $k \geq 0$, for every process $s, r \in \nid$, the following constraints hold:
\begin{enumerate}[label=(\roman*)]
	\item For every message $(m, \textit{age}_m)$ in $\getBuf{\gconf{4k}}{s}{r}$, there exists a message $(m^{\prime}, \textit{age}_{m^{\prime}})$ in $\getBuf{\gconf{4k+1}}{s}{r}$
	such that $m = m^{\prime}$ and $\textit{age}_{m^{\prime}} = \textit{age}_m + 1$. \label{MsgAge1}

	\item For every message $(m^{\prime}, \textit{age}_{m^{\prime}})$ in $\getBuf{\gconf{4k+1}}{s}{r}$, there exists a message $(m, \textit{age}_m)$ in $\getBuf{\gconf{4k}}{s}{r}$ 
	such that $m = m^{\prime}$ and $\textit{age}_{m^{\prime}} = \textit{age}_m + 1$. \label{MsgAge2}  
\end{enumerate}
Constraint \ref{MsgAge1} ensures that every in-transit message age will be added by one time-unit in the sub-round Schedule.
Constraint \ref{MsgAge2} sensures that no new messages will be added in $\getBuf{\gconf{4k+1}}{s}{r}$.
These two constraints are used to replace Constraint (1d) about unchanged message buffers in the definition of sub-round Schedule in Section~\ref{sec:global-model}.

Moreover, the age of an in-transit message is unchanged in other sub-rounds.
Formally, for every time $k > 0$, for every $0 \leq \ell \leq 3$, for every pair of processes $s, r \in \nid$, for every message $(m, \textit{age}_m)$ in $\getBuf{\gconf{4k-\ell}}{s}{r}$, there exists  $(m^{\prime}, \textit{age}_{m^{\prime}})$ in $\getBuf{\gconf{4k-3}}{s}{r}$
such that $m = m^{\prime}$ and $\textit{age}_m = \textit{age}_{m^{\prime}}$.

Finally, message ages are not delivered to processes in sub-round Receive.
Instead of predicate $\predrcv$, our formalization now uses the following predicate $\predrcv'$.
\begin{align*}
\predrcv(\gconf{}, \gnextconf{}, i, S_1, \ldots, S_N)   \triangleq \; & 
\;\;\;\; \forall s \in \nid \colon S_s \cap \getBuf{\gnextconf{}}{s}{i} = \emptyset \\
& \land \forall s \in \nid \colon \getBuf{\gnextconf{}}{s}{i} \cup S_s = \getBuf{\gconf{}}{s}{i} \\
& \land \getProc{\gconf{}}{i} \temprcv{}{g(S_1)}{g(S_N)} \getProc{\gnextconf{}}{i} 
\end{align*}
where function $g \colon \set(\msg \times \mathbb{N}) \rightarrow \set(\msg)$ is to detag message ages in a set $S$.
Formally, we have two following constraints:
\begin{enumerate}
  \item For every $(m, age_m) \in S$, it holds $m \in g(S)$.
  
  \item For every $m \in g(S)$, there exists $age_m \in \mathbb{N}$ such that $(m, age_m) \in S$.
\end{enumerate}

\paragraph{Partial synchrony constraints.} We here focus on the case of unknown bounds.
Recall that Constraints~\ref{TC1} and \ref{TC2} hold in this case.
Given an admissible sequence $\pi = \gconf{0} \gconf{1} \ldots $ of global configurations in $\mathcal{G}_N$, Constraints~\ref{TC1} and \ref{TC2} on $\pi$
can be formalized as follows, respectively: 
\begin{enumerate}[label=(PS\arabic*),leftmargin=1.3cm]
	\item For every process $r \in \nid$, for every time $k > 0$, if $\CanRun(\kappa_{4k - 2},r,\locrcv)$,  then for every process $s \in \nid$, there exists no message $(m, \textit{age}_m)$ in $\getBuf{\gconf{4k - 1}}{s}{r}$ such that $\textit{age}_m \geq \Delta$. \label{PS1}

	\item For every process $i \in \nid$, for every time interval $[k, k + \Phi ]$, if 
  we have that 
  $\getPC{\getProc{\gconf{\ell}}{i}} \neq \loccrash$ for every configuration index in the interval 
  $\left[ 4k-3, 4(k + \Phi) \right]$,
	then there exist a configuration index $t$ in $\left[ 4k-3, 4(k + \Phi) \right]$ and a set $L$ of locations such that  $\CanRun(\kappa_{t},i,L)$ where $L$ is one of $\locsnd, \locrcv,$ and $\loccomp$. \label{PS2}
\end{enumerate}

Constraint~\ref{PS1} requires that if process $r$ takes a step from $\gconf{4k - 2}$ in the sub-round Receive, then there exists no in-transit messages (sent to process $r$) whose ages are at least $\Delta$ time-units in $\gconf{4k - 1}$, that is, older messages must have been received before that.
In principle, partial synchrony allows messages to be older than $\Delta$ time-units as long as the receiver does not take a step after the message reaches age Delta. 
Consistent to~\ref{TC1}, whenever a receiver takes a step after a message is older than $\Delta$ time-units, the reception step removes it from the buffer. 
This limitation is to enabled processes. 
Constraint~\ref{PS2} ensures that for every time interval $[k, k + \Phi ]$ with configurations  $\gconf{4k-3}, \ldots, \gconf{4(k + \Phi)}$, for every process $i \in 1..N$, if process $i$ is correct in this time interval, there exist a configuration $\gconf{\ell_0} \in \{ \gconf{4k-3}, \ldots, \gconf{4(k + \Phi)} \}$ and a set $L$ of locations such that the location of process $i$ at $\gconf{\ell_0}$ is in $L$ and process $i$ takes a step from $\gconf{\ell_0}$.

\paragraph{Admissible sequences under partial synchrony.} Let $\pi = \gconf{0} \gconf{1} \ldots $ be an admissible sequence of global configurations in $\mathcal{G}_N$. 
We say that $\pi$ is under partial synchrony if Constraints~\ref{PS1} and~\ref{PS2} hold in $\pi$.
Notice that admissible sequences under partial synchrony allow multiple processes to make a transition in a time unit.

\section{Cutoff Results in the Model of the Global Distributed Systems} \label{sec:netys20-cutoff-res}

Let $\mathcal{A}$ be a~\classname{} algorithm.
In this section, we show cutoff results for the number of processes in the algorithm $\mathcal{A}$ in the unrestricted model.
These results are Theorems \ref{thm:NCutoffOneIndexNetysA} and \ref{thm:NCutoffTwoIndexesNetysA},
and the detailed proofs are provided in Section~\ref{sec:netys20-proofs}.
With these cutoff results,
one can verify two properties Strong Completeness and Eventually Strong Accuracy of the failure detector of \cite{CT96} by model checking two instances of sizes 1 and 2 in case of synchrony.

\newcommand{\thmNCutoffOneIndexNetysA}{
  Let $\mathcal{A}$ be a~\classname{} algorithm under the unrestricted model. Let $\gsys{1}$ and $\gsys{N}$ be instances of 1 and $N$ processes respectively for some $N \geq 1$.
  Let $\mathit{Path}_1$ and $\mathit{Path}_N$ be sets of all admissible sequences of configurations in $\gsys{1}$ and in $\gsys{N}$, respectively.
  Let $\omega_{\{i\}}$ be a \ltlx{} formula  in which every predicate takes one of the forms: $P_1(i)$ or $P_2(i, i)$ where $i$ is an index in $\nid$. 
  Then, it follows that:
  \[
  \Big(\forall \pi_N \in \mathit{Path}_N \colon \gsys{N}, \pi_N \models \bigwedge_{i \in \nid}  \omega_{\{i\}} \Big) \; \Leftrightarrow \; \Big(\forall \pi_1 \in \mathit{Path}_1 \colon \gsys{1}, \pi_1 \models  \omega_{\{1\}}\Big)
  \]
}

\begin{thm} \label{thm:NCutoffOneIndexNetysA}
  \thmNCutoffOneIndexNetysA
\end{thm}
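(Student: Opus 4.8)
The plan is to prove each direction by relating, for a fixed index $i$, the admissible sequences of $\gsys{N}$ \emph{restricted to process $i$} with the admissible sequences of $\gsys{1}$. First I would use symmetry to reduce to $i = 1$: since every transposition $\swapij{i}{1}$ on process indexes preserves the structure of the global transition system, it maps $\mathit{Path}_N$ onto itself and renames $\omega_{\{i\}}$ into $\omega_{\{1\}}$, so the single-index case immediately yields the conjunction $\bigwedge_{i \in \nid} \omega_{\{i\}}$. The central object is the \emph{slice} $\rest{\gconf{}}{1}$ of a global configuration of $\gsys{N}$: the tuple consisting of process $1$'s program counter, its self-box $\getBox{\getProc{\gconf{}}{1}}{1}$, its self-datum $\getDatum{\getProc{\gconf{}}{1}}{1}$, and the self-buffer $\getBuf{\gconf{}}{1}{1}$ --- which is exactly the data a configuration of $\gsys{1}$ carries.

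For the implication from right to left (correctness of $\gsys{1}$ implies correctness of $\gsys{N}$), given $\pi_N \in \mathit{Path}_N$ I would show that the slice $\rest{\pi_N}{1}$, with each global round in which process $1$ merely stutters collapsed to a genuine stuttering step, is an admissible sequence $\pi_1 \in \mathit{Path}_1$. The key is three local facts that hold because the process template uses the $N$-independent functions $\nextLoc, \genMsg, \nextVar$: (i) process $1$'s program counter evolves by $\nextLoc$, independently of the other indexes; (ii) in a send step ($\csnd$) or computation step ($\comp$) the datum $d_1$ is updated by $\nextVar(\cdot, \emptyset, d_1)$ and in a receive step ($\crcv$) by $\nextVar(\cdot, S_1, d_1)$, so only messages received \emph{from process $1$} influence $d_1$; and (iii) the buffer $\getBuf{\gconf{}}{1}{1}$ is written only by process $1$'s own Send sub-round and emptied only by its own Receive sub-round or its own crash. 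Hence each $\transsched$, $\transsnd$, $\transrcv$, $\transcomp$ transition of $\gsys{N}$ projects to the matching sub-round transition of $\gsys{1}$ or to stuttering. Since $\omega_{\{1\}}$ is an \ltlx{} formula it is stutter-insensitive, and because its atoms $P_1(1)$ and $P_2(1,1)$ are evaluated on process $1$'s slice, $\gsys{1}, \pi_1 \models \omega_{\{1\}}$ transfers back to $\gsys{N}, \pi_N \models \omega_{\{1\}}$.

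For the implication from left to right, given $\pi_1 \in \mathit{Path}_1$ I would lift it to some $\pi_N \in \mathit{Path}_N$ by letting processes $2, \ldots, N$ perform a clean crash already in the first Schedule sub-round (so they never send any message), while process $1$ replays exactly the moves it makes in $\pi_1$, taking the sets received from indexes $2, \ldots, N$ to be empty (legal, since those buffers remain empty) and the extra box/datum components to be whatever the transition relation forces from $\emptyset$ and $d_0$. One checks directly that $\pi_N$ meets all four sub-round conditions, is admissible, and has $\rest{\pi_N}{1} = \pi_1$. Applying the hypothesis to the conjunct for $i = 1$ gives $\gsys{N}, \pi_N \models \omega_{\{1\}}$, hence $\gsys{1}, \pi_1 \models \omega_{\{1\}}$ as before.

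I expect the bookkeeping in the slice lemma to be the main obstacle: aligning the rigid four-sub-round rhythm of $\gsys{N}$ with that of $\gsys{1}$, correctly inserting and deleting stuttering steps for the rounds in which process $1$ is inactive (this is exactly where the extra $\gconf{} \leadsto \gconf{}$ stuttering transition is needed), and verifying that the frozen-process predicates $\SFrozen$, $\RFrozen$ and the action predicates $\predsnd$, $\predrcv$ restrict correctly to the single-process system. The symmetry reduction, the \ltlx{} stutter-insensitivity step, and the crash-the-others lifting are straightforward by comparison.
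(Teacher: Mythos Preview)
Your approach is essentially the paper's: reduce to $i=1$ by the transposition $\swapij{i}{1}$ and Lemma~\ref{lem:SymGlobalSys}, then establish trace equivalence of $\gsys{1}$ and $\gsys{N}$ under $AP_{\{1\}}$ (Lemma~\ref{lem:TraceEquivOne}) via the projection (your ``slice,'' the paper's Construction~\ref{const:GlobalConfProjOne}, which also carries the $\activeFncName$ component you omitted) and the lift that crashes processes $2,\dots,N$ at the first Schedule. The bookkeeping you flag as the main obstacle is lighter than you fear: because each sub-round transition already allows $\gconf{}=\gnextconf{}$, the direct configuration-by-configuration projection of $\pi_N$ keeps the four-sub-round rhythm intact and is admissible in $\gsys{1}$ as is, so no collapsing of inactive rounds is needed (Proposition~\ref{prop:ProjGlobalInternalTransitions} and its one-index analogue).
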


\newcommand{\thmNCutoffTwoIndexesNetysA}{
  Let $\mathcal{A}$ be a~\classname{} algorithm under the unrestricted model. Let $\gsys{2}$ and $\gsys{N}$ be instances of 2 and $N$ processes respectively for some $N \geq 2$.
  Let $\mathit{Path}_2$ and $\mathit{Path}_N$ be sets of all admissible sequences of configurations in $\gsys{2}$ and in $\gsys{N}$, respectively.
  Let $\psi_{\{i, j\}}$ be an \ltlx{} formula in which every predicate takes one of the forms: $P_1(i)$, or $P_2(j)$, or $P_3(i, j)$, or $P_4(j, i)$ where $i$ and $j$ are different indexes in $\nid$.
  It follows that:
  \[
  \big(  \forall \pi_N \in \mathit{Path}_N \colon \gsys{N}, \pi_N \models  \bigwedge^{i \neq j}_{i, j \in \nid} \psi_{\{i, j\}} \big)  \Leftrightarrow  \big(  \forall \pi_2 \in \mathit{Path}_2 \colon \gsys{2}, \pi_2 \models    \psi_{\{1, 2 \}} \big) 
  \]
}

\begin{thm} \label{thm:NCutoffTwoIndexesNetysA}
  \thmNCutoffTwoIndexesNetysA
\end{thm}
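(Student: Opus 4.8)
The plan is to reduce the $N$-process case to the $2$-process case by exploiting the symmetry of the global transition system noted earlier in Section~\ref{sec:netys20-comp-model}, namely that every transposition on process indexes preserves both the process template and the structure of $\gsys{N}$, and to combine this with a projection argument. First I would observe the easy direction ($\Rightarrow$): if the property holds for all admissible sequences of $\gsys{N}$, then in particular it holds for the subclass of sequences in which processes $3, \ldots, N$ crash in the very first $\transsched$ sub-round and never take a step afterwards; since predicates in $\psi_{\{1,2\}}$ mention only indexes $1$ and $2$, and crashed processes contribute nothing to buffers addressed to $1$ or $2$ (by Constraint~(1e) of the sub-round \emph{Schedule}), such a sequence is ``essentially'' an admissible sequence of $\gsys{2}$; a small lemma establishing a stuttering-preserving bijection between $\mathit{Path}_2$ and this sub-family of $\mathit{Path}_N$, together with the fact that the truth of an \ltlx{} formula is invariant under stuttering, closes this direction.

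For the harder direction ($\Leftarrow$), suppose toward a contradiction that some $\pi_N \in \mathit{Path}_N$ and some pair $i \neq j$ with $\gsys{N}, \pi_N \not\models \psi_{\{i,j\}}$. By the symmetry of $\gsys{N}$ under the transposition $\swapij{i}{1}\swapij{j}{2}$ (a permutation of indexes), we may assume $\{i,j\} = \{1,2\}$. The key step is then a \emph{projection lemma}: from $\pi_N$ one constructs an admissible sequence $\pi_2 \in \mathit{Path}_2$ that agrees with $\pi_N$ on everything ``visible'' to processes $1$ and $2$, i.e.\ on $\getProc{\cdot}{1}$, $\getProc{\cdot}{2}$, the four buffers $\getBuf{\cdot}{s}{r}$ with $s,r \in \{1,2\}$, and $\getActive{\cdot}{1}, \getActive{\cdot}{2}$. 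The crucial structural facts making this work are those emphasized in Section~\ref{sec:temp-model}: a sending transition $\tempsnd{}{m}$ writes the \emph{same} message $m$ into \emph{all} outgoing buffers and updates each $\getDatum{\cdot}{k}$ independently via $\nextVar$ applied only to the current location and (for receives) to $S_k$; and a receiving transition $\temprcv{}{S_1}{S_N}$ updates the $k$-th component using only $S_k$. Consequently the behaviour of process $1$ depends on the incoming buffers only through $\getBuf{\cdot}{1}{1}$ and $\getBuf{\cdot}{2}{1}$, so dropping processes $3, \ldots, N$ and the buffers touching them does not change what processes $1$ and $2$ do; one lets $\pi_2$ make exactly the sub-round moves for $1$ and $2$ that $\pi_N$ makes, and stutter otherwise, checking that each of the four sub-round predicates ($\transsched,\transsnd,\transrcv,\transcomp$) and the admissibility constraint are preserved under this projection. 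Since every predicate of $\psi_{\{1,2\}}$ has one of the forms $P_1(1), P_2(2), P_3(1,2), P_4(2,1)$ and hence evaluates identically along $\pi_N$ and $\pi_2$, we get $\gsys{2}, \pi_2 \not\models \psi_{\{1,2\}}$, contradicting the right-hand side.

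The main obstacle I expect is the projection lemma: verifying in full that the projected sequence $\pi_2$ is genuinely admissible in $\gsys{2}$ — in particular that the atomicity requirement of the \emph{Send} sub-round (predicate $\predsnd$ demands $m \notin \getBuf{\gconf{}}{i}{r}$ for \emph{all} $r$ before the step) and the ``nothing else delivered'' clauses of $\predrcv$ are not accidentally broken when we delete buffers, and that the interleaving of stuttering steps still yields a sequence stuttering-equivalent to one in proper $\transsched\transsnd\transrcv\transcomp$ block form. This is where the careful bookkeeping of the model pays off, and I would isolate it as a separate lemma (analogous to the stuttering-equivalence lemma Lemma~\ref{lem:TraceEquivTwo} referenced in the text) before assembling the two directions into the theorem. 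The symmetry reduction to $\{1,2\}$ is routine given the stated invariance of $\gsys{N}$ under index permutations, and the $\Rightarrow$ direction is a special case of the same projection machinery.
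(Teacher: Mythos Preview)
Your proposal is correct and follows essentially the same route as the paper: symmetry of $\gsys{N}$ under index transpositions (Lemma~\ref{lem:SymGlobalSys}) to reduce an arbitrary pair $\{i,j\}$ to $\{1,2\}$, a projection lemma from $\gsys{N}$ to $\gsys{2}$ (the paper's Lemma~\ref{lem:IDProjPath}, built from Constructions~\ref{const:TempConfProj} and~\ref{const:GlobalConfProj}), and an embedding from $\gsys{2}$ into $\gsys{N}$ by crashing processes $3,\ldots,N$ at the first Schedule step (Lemma~\ref{lem:IDRefilledPath}); these are packaged as trace equivalence under $AP_{\{1,2\}}$ (Lemma~\ref{lem:TraceEquivTwo}). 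The only organisational difference is that the paper first pulls the conjunction $\bigwedge_{i\neq j}$ outside the path quantifier via Proposition~\ref{prop:MoveQuantifiersTwo} and then argues both directions uniformly through trace equivalence, whereas you argue the two implications separately by contrapositive; and the paper's projection is a configuration-by-configuration map preserving the full sub-round structure (so no extra stuttering bookkeeping is needed). One small slip: Constraint~(1e) empties buffers \emph{to} a crashed process, not \emph{from} it; the reason buffers from processes $3,\ldots,N$ stay empty in your embedded run is simply that those processes crash before ever sending.
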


Since the proof of Theorem~\ref{thm:NCutoffOneIndexNetysA} is similar to the one of Theorem~\ref{thm:NCutoffTwoIndexesNetysA}, 
we focus on Theorem~\ref{thm:NCutoffTwoIndexesNetysA} here.
Its proof is based on the symmetric characteristics in the system model (the network topology and the three functions $\nextLoc{}, \genMsg{}, \text{ and } \nextVar{}$) and correctness properties, and on the following lemmas. 

\begin{itemize}
  \item Lemma \ref{lem:SymProcTemp} says that every transposition on a set of process indexes $\nid$ preserves the structure of the process template $\mytemp{N}$.
  
  \item Lemma \ref{lem:SymGlobalSys} says that every transposition on a set of process indexes $\nid$ preserves the structure of the global transition system $\gsys{N}$ for every $N \geq 1$.
    
  \item Lemma \ref{lem:TraceEquivTwo} says that $\gsys{2}$ and $\gsys{N}$ are trace equivalent under a set $AP_{\{1, 2\}}$ of predicates that take one of the forms: $P_1(i)$, or $P_2(j)$, or $P_3(i, j)$, or $P_4(j, i)$.

\end{itemize}

In the following, we present definitions and constructions to prove these lemmas.

\subsection{Index Transpositions and Symmetric Point-to-point Systems}

We first recall the definition of transposition.
Given a set $\nid$ of indexes, we call a bijection $\transposition \colon \nid \rightarrow \nid$ a transposition between two indexes $i, j  \in \nid$ if the following properties hold: $\switchIndex{\transposition}{i} = j$, and
$\transposition(j) = i$, and
$\forall k \in \nid \colon (k \neq i \land k \neq j) \Rightarrow \transposition(k) = k$.
We let $\swapij{i}{j}$ denote a transposition between two indexes $i$ and $j$.

The application of a transposition to a template state is given in Definition~\ref{def:TransposAndTempConf}.
Informally, applying a transposition $\transposition =\swapij{i}{j}$ to a template state $\tempconf{}$ generates a new template state
by switching only the evaluation of $\boxFncName$ and $\datumFncName$ at indexes $i$ and $j$. 
The application of a transposition to a global configuration is provided in Definition~\ref{def:TransposAndGlobalConf}.
In addition to process configurations, we need to change message buffers and control components.
We override notation by writing $\switchTempConf{\tempconf{}}{\transposition}$ and $\switchGlobalConf{\gconf{}}{\transposition}$ to refer the application of a transposition $\transposition$ to a  state $\tempconf{}$ and to a configuration $\gconf{}$, respectively.
These functions $\transposition_Q$ and $\transposition_C$ are named a local transposition  and a global transposition, respectively.

\begin{defi}[Local Transposition] \label{def:TransposAndTempConf}
Let $\mytemp{N}$ be a process template with process indexes $\nid$, 
and $\tempconf{} = \left(\ell, S_1, \ldots, S_N, d_1, \ldots, d_N \right)$ be a state in $\mytemp{N}$.
Let $\transposition =\swapij{i}{j}$ be a transposition on $\nid$.
The application of $\transposition$ to $\tempconf{}$, denoted as $\switchTempConf{\tempconf{}}{\transposition}$, generates a tuple
$ \left(\ell^{\prime}, S^{\prime}_1, \ldots, S^{\prime}_N, d^{\prime}_1, \ldots, d^{\prime}_N \right)$
such that 
\begin{enumerate}
\item $\ell = \ell^{\prime}$, and $S_i = S^{\prime}_j$, and $S_j = S^{\prime}_i$, and
 $d_i = d^{\prime}_j$ and $d_j = d^{\prime}_i$, and
\item $\forall k \in \nid \colon (k \neq i \land k \neq j) \Rightarrow (S_k = S^{\prime}_k \land d_k = d^{\prime}_k)$
\end{enumerate}
\end{defi}

\begin{defi}[Global Transposition] \label{def:TransposAndGlobalConf}
  Let $\gsys{N}$ be a global system with process indexes $\nid$, 
  and $\gconf{}$ be a  configuration in $\gsys{N}$.
  Let $\transposition =\swapij{i}{j}$ be a transposition on $\nid$.
  The application of $\transposition$ to $\gconf{}$, denoted as $\switchGlobalConf{\gconf{}}{\transposition}$, generates a configuration in $\gsys{N}$ which satisfies following properties:
  \begin{enumerate}
    \item $\forall i \in \nid \colon \getProc{\switchGlobalConf{\gconf{}}{\transposition}}{\switchIndex{\transposition}{i}}   = \switchTempConf{\getProc{\gconf{}}{i}}{\transposition}$.
    \item $\forall s, r \in \nid \colon \getBuf{\switchGlobalConf{\gconf{}}{\transposition}}{\switchIndex{\transposition}{s}}{\switchIndex{\transposition}{r}} = \getBuf{\gconf{}}{s}{r} $
    \item $\forall i \in \nid \colon \getActive{\switchGlobalConf{\gconf{}}{\transposition}}{\switchIndex{\transposition}{i}} = \getActive{\gconf{}}{i}$
  \end{enumerate}
\end{defi}

Since the content of every message in $\msg$ does not have indexes of the receiver and sender, no transposition affects the messages.
We define the application of a transposition to one of send, compute, crash, and stutter template transitions return the same transition.
We extend the application of a transposition to a receive template transition as in Definition~\ref{def:TransposeAndRcv}.

\begin{defi}[Receive-transition Transposition] \label{def:TransposeAndRcv}
Let $\mytemp{N}$ be a process template with indexes $\nid$, and $\transposition =\swapij{i}{j}$ be a transposition on $\nid$. 
Let $\crcv(S_1, \ldots, S_N)$ be a transition in $\mytemp{N}$ which refers to a task with a receive action.
We let $\transposition_R(\crcv(S_1, \ldots, S_N))$  denote
the application of $\transposition$ to $\crcv(S_1, \ldots, S_N)$, 
and this application returns a new transition $ \crcv(S^{\prime}_1, \ldots, S^{\prime}_N)$ in $\mytemp{N}$ such that:
\begin{enumerate}
\item $S_i = S^{\prime}_j$, and $S_j = S^{\prime}_i$, and
\item $\forall k \in \nid \colon (k \neq i \land k \neq j) \Rightarrow (S_k = S^{\prime}_k \land d_k = d^{\prime}_k)$
\end{enumerate}
\end{defi}

We let $\switchTemp{\transposition}{\mytemp{N}}$ and $\switchGlobalSys{\transposition}{\gsys{N}}$ denote the application of a transposition $\transposition$ to a process template $\mytemp{N}$ and a global transition system $\gsys{N}$, respectively.
Since these definitions are straightforward, we skip them in this chapter.
We prove later that $\switchTempConf{\mytemp{N}}{\transposition} = \mytemp{N}$ and $\switchGlobalConf{\gsys{N}}{\transposition} = \gsys{N}$ (see Lemmas \ref{lem:SymProcTemp} and \ref{lem:SymGlobalSys}).

\newcommand{\lemSymProcTemp}{%
  Let $\mytemp{N} = (\tempconfs{N}, \temptr{N}, \temprel{N}, \tempinitconf{N})$ be a process template with indexes  $\nid$.
  Let $\transposition =\swapij{i}{j}$ be a transposition on $\nid$, and $\transposition_Q$ be a local transposition based on $\transposition$ (from Definition~\ref{def:TransposAndTempConf}).
  The following properties hold:
  \begin{enumerate}
  \item $\transposition_Q$ is a bijection from $\tempconfs{N}$ to itself.
  
  \item The initial state is preserved under $\transposition_Q$, i.e. $\switchTempConf{\tempinitconf{N}}{\transposition} = \tempinitconf{N}$.
  
  \item Let $\tempconf{}, \tempnextconf{} \in \mytemp{N}$ be states such that $\tempconf{} \temprcv{}{S_1}{S_N} \tempnextconf{}$ for some sets of messages $S_1, \ldots, S_N$ in \emph{$\set(\msg)$}. 
  It follows 
  $\switchTempConf{\tempconf{}}{\transposition} \switchRcv{S_1}{S_N}{\transposition} \switchTempConf{\tempnextconf{}}{\transposition} $.
  
  \item Let $\tempconf{}, \tempnextconf{}$ be states in $\mytemp{N}$, and $\mathit{tr} \in \temptr{N}$ be one of send, local computation, crash and stutter transitions such that $\tempconf{} \temptrans{} \tempnextconf{}$.
  Then,  $\switchTempConf{\tempconf{}}{\transposition} \temptrans{} \switchTempConf{\tempnextconf{}}{\transposition} $.

  \end{enumerate}
}

\begin{lem}[Symmetric Process Template] \label{lem:SymProcTemp}
  \lemSymProcTemp 
\end{lem}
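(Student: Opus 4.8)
The plan is to verify the four claims of Lemma~\ref{lem:SymProcTemp} one at a time, each time unwinding the relevant definition and using the basic fact that $\transposition = \swapij{i}{j}$ is an involution (so $\transposition \circ \transposition = \mathit{id}$).

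\textbf{Part (1): $\transposition_Q$ is a bijection on $\tempconfs{N}$.} First I would observe from Definition~\ref{def:TransposAndTempConf} that $\switchTempConf{\tempconf{}}{\transposition}$ is again a tuple of the form $(\ell, S'_1, \ldots, S'_N, d'_1, \ldots, d'_N)$ with $\ell \in \loc$, each $S'_k \in \set(\msg)$, and each $d'_k \in \tempdatatype$, so $\transposition_Q$ maps $\tempconfs{N}$ into itself. Then I would show $\transposition_Q$ is its own inverse: applying the definition twice swaps $S_i \leftrightarrow S_j$ and $d_i \leftrightarrow d_j$ back to their original positions while leaving all other components fixed, so $\switchTempConf{(\switchTempConf{\tempconf{}}{\transposition})}{\transposition} = \tempconf{}$. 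An involution on a set is a bijection, which gives (1).

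\textbf{Part (2): the initial state is preserved.} Recall $\tempinitconf{N} = (\ell_0, \emptyset, \ldots, \emptyset, d_0, \ldots, d_0)$. Applying $\transposition_Q$ leaves $\ell_0$ unchanged and permutes the $S$-components and the $d$-components; but all $S$-components are $\emptyset$ and all $d$-components are $d_0$, so any permutation of them is the identity. Hence $\switchTempConf{\tempinitconf{N}}{\transposition} = \tempinitconf{N}$.

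\textbf{Parts (3) and (4): the transition relation is preserved.} For (3), assume $\tempconf{} \temprcv{}{S_1}{S_N} \tempnextconf{}$, so constraints (a)--(c) of the $\crcv$ rule hold. Writing $\bar\rho = \switchTempConf{\tempconf{}}{\transposition}$, $\bar\rho' = \switchTempConf{\tempnextconf{}}{\transposition}$, and $\crcv(S'_1, \ldots, S'_N) = \transposition_R(\crcv(S_1, \ldots, S_N))$ (Definition~\ref{def:TransposeAndRcv}), I would check the three constraints of the $\crcv$ rule for $\bar\rho \switchRcv{S_1}{S_N}{\transposition} \bar\rho'$. Constraint (a) only mentions the program counter, which $\transposition_Q$ leaves untouched, so it carries over verbatim. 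For constraints (b) and (c), the key point is that $\transposition_Q$ on a template state and $\transposition_R$ on a receive transition permute the $S$-index, the $\boxFncName$-index, and the $\datumFncName$-index \emph{consistently}: for the swapped indices $i, j$ we have $\getBox{\bar\rho}{j} = \getBox{\tempconf{}}{i}$, $S'_j = S_i$, $\getDatum{\bar\rho}{j} = \getDatum{\tempconf{}}{i}$, and likewise for $j \to i$ and for fixed indices $k$; so the instance of (b)/(c) at index $\transposition(k)$ in the transposed system is exactly the instance at index $k$ in the original, which holds by assumption. Since $\transposition$ is a bijection on $\nid$, ranging over $\transposition(k)$ is the same as ranging over all of $\nid$, so all instances hold. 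Part (4) is the same argument but easier: for send, the age-free content of the sent message $m$ is unaffected by $\transposition$ (messages carry no indices), $\nextLoc$ and $\genMsg$ depend only on the pc, and $\nextVar$ is applied with the empty set at every index; for $\comp$, $\crash$, and $\stutter$ there is no per-index data flow at all beyond the uniform application of $\nextVar(\cdot, \emptyset, \cdot)$ or the pc update or the equality $\tempconf{} = \tempnextconf{}$ — in each case permuting the $S$- and $d$-components on both sides of the relation preserves the constraints.

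I expect the only mildly delicate step to be the bookkeeping in Part (3): one has to be careful that the transposition is applied to \emph{three} different kinds of index simultaneously (the received-set argument $S_k$, the received-message box $\boxFncName$, and the local variable $\datumFncName$) and that Definitions~\ref{def:TransposAndTempConf} and~\ref{def:TransposeAndRcv} permute all three the same way, so that the rule instance at index $k$ is transported to the rule instance at index $\transposition(k)$. Once that alignment is made explicit, the rest is routine substitution, and the involution property from Part (1) is not even needed beyond (1) itself.
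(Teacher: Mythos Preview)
Your proposal is correct and follows essentially the same approach as the paper, which proves the four points via separate propositions (Propositions~\ref{prop:TranspositionAsBijectionInTemp}--\ref{prop:TranspositionOtherTempActions}) each carrying out exactly the kind of constraint-by-constraint verification you describe. The only minor stylistic difference is in Part~(1): the paper proves surjectivity by explicitly constructing the preimage and injectivity via $\transposition_Q \circ \transposition_Q = \mathit{id}$, whereas you more directly invoke ``involution implies bijection''; these are equivalent, and your formulation is slightly cleaner.
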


\newcommand{\lemSymGlobalSys}{%
Let 
  $\gsys{N} = \left( \gconfs{N}, \gtr{N}, \grel{N}, \ginitconf{N} \right)$ be a global transition system.
  Let $\transposition$ be a transposition on $\nid$,  and $\transposition_C$ be a global transposition  based on $\transposition$ (from Definition~\ref{def:TransposAndGlobalConf}).
  The following properties hold:
  \begin{enumerate}
  \item $\transposition_C$  is a bijection from $\gconfs{N}$ to itself.
  
  \item The initial configuration is preserved under $\transposition_C$, i.e. $\switchGlobalConf{\ginitconf{N}}{\transposition} = \ginitconf{N}$.

  \item Let $\gconf{}$ and $\gnextconf{}$ be configurations  in $\gsys{N}$, and $\mathit{tr} \in \gtr{N}$ be either a internal transition such that $\gconf{} \xrightarrow{\mathit{tr}} \gnextconf{}$.
  It follows  $\switchGlobalConf{\gconf{}}{\transposition} \xrightarrow{\mathit{tr}} \switchGlobalConf{\gnextconf{}}{\transposition} $.
    
  \item Let $\gconf{}$ and $\gnextconf{}$ be configurations  in $\gsys{N}$. If  $\gconf{} \leadsto \gnextconf{}$, then  $\switchGlobalConf{\gconf{}}{\transposition} \leadsto \switchGlobalConf{\gnextconf{}}{\transposition} $.
  \end{enumerate}
}

\begin{lem}[Symmetric Global System] \label{lem:SymGlobalSys}
  \lemSymGlobalSys 
\end{lem}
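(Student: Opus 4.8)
The plan is to lift Lemma~\ref{lem:SymProcTemp} to the global level, establishing the four claims in order. For claim~(1), I would first check well-definedness: since $\transposition$ is a bijection on $\nid$, every $j \in \nid$ equals $\transposition(i)$ for a unique $i$, so Definition~\ref{def:TransposAndGlobalConf} pins down each of the $N$ process components, $N^2$ buffer components and $N$ control components exactly once; each process component is $\switchTempConf{\getProc{\gconf{}}{i}}{\transposition} \in \tempconfs{N}$ by Lemma~\ref{lem:SymProcTemp}(1), each buffer component is a relabelled element of $\set(\msg)$, and each control component lies in $\Boolean$, so $\switchGlobalConf{\gconf{}}{\transposition} \in \gconfs{N}$. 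Bijectivity then follows by observing that $\transposition$ is an involution and, by inspection of Definition~\ref{def:TransposAndTempConf}, so is $\transposition_Q$; composing Definition~\ref{def:TransposAndGlobalConf} with itself and using $\transposition(\transposition(i)) = i$ shows $\transposition_C \circ \transposition_C$ is the identity, hence $\transposition_C$ is its own inverse. Claim~(2) is immediate: in $\ginitconf{N}$ all process components equal $\tempinitconf{N}$, all buffers are $\emptyset$, all control bits are $\texttt{false}$, so $\transposition_C$ only permutes identical entries, and $\switchTempConf{\tempinitconf{N}}{\transposition} = \tempinitconf{N}$ by Lemma~\ref{lem:SymProcTemp}(2).

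For claim~(3) the key observations are that program counters and control bits are merely relabelled --- $\getPC{\getProc{\switchGlobalConf{\gconf{}}{\transposition}}{\transposition(i)}} = \getPC{\getProc{\gconf{}}{i}}$ (Definition~\ref{def:TransposAndTempConf} keeps $\ell$ fixed), $\getActive{\switchGlobalConf{\gconf{}}{\transposition}}{\transposition(i)} = \getActive{\gconf{}}{i}$, hence $\CanRun(\gconf{}, i, L) \Leftrightarrow \CanRun(\switchGlobalConf{\gconf{}}{\transposition}, \transposition(i), L)$ for every $L$ --- and that $\getBuf{\switchGlobalConf{\gconf{}}{\transposition}}{\transposition(s)}{\transposition(r)} = \getBuf{\gconf{}}{s}{r}$. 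I would then case-split on $\mathit{tr} \in \{\transsched, \transsnd, \transrcv, \transcomp\}$: each defining clause is a clause quantified over process indices built from these relabelled quantities together with a process-level transition of $\getProc{\gconf{}}{i}$, and since $\transposition$ permutes $\nid$ the quantifiers are preserved under the substitutions $i \mapsto \transposition(i)$, $s \mapsto \transposition(s)$, $r \mapsto \transposition(r)$; the process-level transitions are transported by Lemma~\ref{lem:SymProcTemp}(4) for the send, compute, crash and stutter clauses (this also covers the Schedule sub-round, including the clearing of incoming buffers of crashed processes) and by Lemma~\ref{lem:SymProcTemp}(3) for the receive clause. The only spot needing care is the Receive sub-round: if $\predrcv(\gconf{}, \gnextconf{}, i, S_1, \ldots, S_N)$ holds then Lemma~\ref{lem:SymProcTemp}(3) gives a $\crcv$-step of the permuted process state carrying the relabelled sets $S'$ with $S'_{\transposition(k)} = S_k$, and one checks that the two buffer conditions of $\predrcv(\switchGlobalConf{\gconf{}}{\transposition}, \switchGlobalConf{\gnextconf{}}{\transposition}, \transposition(i), S'_1, \ldots, S'_N)$, reindexed by $s \mapsto \transposition(s)$, are precisely those of $\predrcv(\gconf{}, \gnextconf{}, i, S_1, \ldots, S_N)$.

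Claim~(4) then follows at once: the stuttering case $\gconf{} \leadsto \gconf{}$ is trivial, and if $\gconf{} \transsched \gconf{1} \transsnd \gconf{2} \transrcv \gconf{3} \transcomp \gnextconf{}$ then applying claim~(3) to each of the four internal steps yields $\switchGlobalConf{\gconf{}}{\transposition} \transsched \switchGlobalConf{\gconf{1}}{\transposition} \transsnd \switchGlobalConf{\gconf{2}}{\transposition} \transrcv \switchGlobalConf{\gconf{3}}{\transposition} \transcomp \switchGlobalConf{\gnextconf{}}{\transposition}$, so $\switchGlobalConf{\gconf{}}{\transposition} \leadsto \switchGlobalConf{\gnextconf{}}{\transposition}$. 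I expect the main obstacle to be purely the bookkeeping in claim~(3): there is nothing conceptually deep, but each of the four sub-rounds contributes several clauses that mix buffer updates, control-bit updates, and process transitions, and the Receive sub-round forces the explicit set-relabelling through $\transposition_R$ described above.
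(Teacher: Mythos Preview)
Your proposal is correct and follows essentially the same approach as the paper: the paper also proves the four points via separate propositions that use the involution argument for bijectivity, the uniformity of $\ginitconf{N}$ for fixed-point preservation, a case-split over the four sub-rounds (lifting Lemma~\ref{lem:SymProcTemp} and handling the Receive sub-round with the relabelled message sets exactly as you describe), and then chains the internal-transition result to obtain the round-transition claim. The only cosmetic difference is that the paper factors the argument into four standalone propositions rather than presenting it inline.
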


\subsection{Trace Equivalence of $\gsys{2}$ and $\gsys{N}$ under $AP_{\{1, 2\}}$} \label{subsec:traceequiv}

Let $\gsys{2}$ and $\gsys{N}$ be two global transition systems whose processes follow the same~\classname{} algorithm.
In the following, our goal is to prove Lemma~\ref{lem:TraceEquivTwo} that says $\gsys{2}$ and $\gsys{N}$ are trace equivalent under a set $AP_{\{1, 2\}}$ of predicates which take one of the forms: $Q_1(1), Q_2(2), Q_3(1, 2)$, or $Q_4(2, 1)$.
To do that, we first present two construction techniques: Construction~\ref{const:TempConfProj} to construct a state in $\mytemp{2}$ from a state in $\mytemp{N}$,
and Construction~\ref{const:GlobalConfProj} to construct a global configuration in $\gsys{2}$ from a given global configuration in $\gsys{N}$.
Second, we define trace equivalence under a set $\textit{AP}_{\{1, 2\}}$ of predicates in which every predicate takes one of the forms: $P_1(i)$, or $P_2(j)$, or $P_3(i, j)$, or $P_4(j, i)$.
Our definition of trace equivalence under $\textit{AP}_{\{1, 2\}}$ is extended from the definition of trace equivalence in~\cite{hoare1980model}.
Next, we present two Lemmas~\ref{lem:IDProjPath} and~\ref{lem:IDRefilledPath}. These lemmas are required in the proof of Lemma~\ref{lem:TraceEquivTwo}.

To keep the presentation simple, when the context is clear, we simply write $\mytemp{N}$, instead of fully $\mytemp{N} = \left(\tempconfs{N}, \temptr{N}, \temprel{N}, \tempinitconf{N}\right)$.
We also write $\gsys{N}$, instead of fully $\gsys{N} = \left( \gconfs{N}, \gtr{N}, \grel{N}, \ginitconf{N} \right)$.

\begin{construction}[State Projection] \label{const:TempConfProj}
  Let $\mathcal{A}$ be an arbitrary~\classname{} algorithm.
  Let $\mytemp{N}$ be a process template of $\mathcal{A}$ for some $N \geq 2$, and
  $\tempconf{}^N$ be a process configuration of $\mytemp{N}$.
  We construct a tuple $\tempconf{}^2 = (pc_1, rcvd_1, rcvd_2, v_1, v_2)$ based on $\tempconf{}^N$ and a set $\{1, 2\}$ of process indexes in the following way:
  \begin{enumerate}
    \item $pc_1  = \getPC{\tempconf{}^N}$.
    
    \item For every $i \in \{1, 2\}$, it follows  $rcvd_i = \getBox{\tempconf{}^N}{i}$.
    
    \item For every $i \in \{1, 2\}$, it follows  $v_i = \getDatum{\tempconf{}^N}{i}$.   
  \end{enumerate}  
\end{construction}

\begin{construction}[Configuration Projection] \label{const:GlobalConfProj}
  Let $\mathcal{A}$ be a~\classname{} algorithm.
  Let $\gsys{2}$ and $\gsys{N}$ be two global transition systems of two instances of $\mathcal{A}$ for some $N \geq 2$, and $\gconf{}^N \in \gconfs{N}$ be a global configuration in $\gsys{N}$.
  A tuple 
  \[
  \gconf{}^2 = (s_1, s_2, buf^1_1, buf^2_1, buf^1_2, buf^2_2, act_1, act_2)
  \] 
  is constructed based on the configuration $\gconf{}^N$ and a set $\{1, 2\}$ of indexes  in the following way: 
  \begin{enumerate}
  \item For every  $i \in \{1, 2\}$, a component $s_i$ is constructed from  $\getProc{\gconf{}^N}{i}$ with Construction~\ref{const:TempConfProj} and indexes $\{1, 2\}$.      
  
  \item For every $s, r \in \{1, 2\}$, it follows $buf^r_s = \getBuf{\gconf{}^N}{s}{r}$.

  \item For every process $i \in \{1, 2\}$, it follows $act_i = \getActive{\gconf{}^N}{i} $.
  \end{enumerate}  
\end{construction}

Note that a tuple $\tempconf{}^2$ constructed  with Construction~\ref{const:TempConfProj} is a  state in $\mytemp{2}$, and a tuple $\gconf{}^2$ constructed  with Construction \ref{const:GlobalConfProj} is a configuration in $\gsys{2}$.
We call  $\tempconf{}^2$ (and $\gconf{}^2$) the \emph{index projection} of $\tempconf{}^N$ (and $\gconf{}^N$) on indexes $\{1, 2\}$.   
The following Lemma \ref{lem:IDProjPath} says that Construction~\ref{const:GlobalConfProj} allows us to construct an admissible sequence of global configurations in $\gsys{2}$ based on a given admissible sequence in $\gsys{N}$.
Intuitively, the index projection throws away processes $3..N$ as well as their corresponding messages and buffers.
Moreover, for every $i,j \in \{1, 2\}$, the index projection preserves (i) when process $i$ takes a step, and (ii) what action process $i$ takes at time $t \geq 0$, and (iii) messages between process $i$ and process $j$.
For example, Figure~\ref{fig:3to2} demonstrates an execution in $\mathcal{G}_2$ that is constructed based on a given execution in $\mathcal{G}_3$ with the index projection.

\begin{figure}[tb]
	\centering
	\resizebox{.85\textwidth}{!}{
		\begin{tikzpicture}[x=1cm,y=0.4cm]
		\draw[-stealth] (-3,0)--(5,0) node[right] {};
		
		\draw[-stealth] (-3,1.2)--(5,1.2) node[right] {};
		
		\draw[-stealth] (-3,2.4)--(5,2.4) node[right] {};
		
		\node[circle,fill=black,inner sep=0pt,minimum size=3pt] (a1) at (-2, 2.4) {};
		
		\node[circle,fill=black,inner sep=0pt,minimum size=3pt] (a2) at (-1, 1.2) {};
		
		\node[circle,fill=black,inner sep=0pt,minimum size=3pt] (a3) at (-2.5,1.2) {};
		
		\node[circle,fill=black,inner sep=0pt,minimum size=3pt] (a4) at (-1.5, 0) {};

		\draw[thick,->] (a3) to (a1);
		
		\draw[thick,->] (a3) to (a2);
		
		\draw[thick,->] (a3) to (a4);

		\node[circle,fill=black,inner sep=0pt,minimum size=3pt] (b1) at (-0.5, 2.4) {};
		
		\node[circle,fill=black,inner sep=0pt,minimum size=3pt] (b2) at (0.5, 2.4) {};
		
		\node[circle,fill=black,inner sep=0pt,minimum size=3pt] (b3) at (1.5, 1.2) {};
		
		\node[circle,fill=black,inner sep=0pt,minimum size=3pt] (b4) at (1, 0) {};

		\draw[thick,->] (b1) to (b2);
		
		\draw[thick,->] (b1) to (b3);
		
		\draw[thick,->] (b1) to (b4);
		
		\node[circle,fill=black,inner sep=0pt,minimum size=3pt] (c1) at (3.25, 2.4) {};
		
		\node[circle,fill=black,inner sep=0pt,minimum size=3pt] (c2) at (4, 1.2) {};
		
		\node[circle,fill=black,inner sep=0pt,minimum size=3pt] (c3) at (3.5, 0) {};
		
		\node[circle,fill=black,inner sep=0pt,minimum size=3pt] (c4) at (2.5, 0) {};

		\draw[thick,->] (c4) to (c1);
		
		\draw[thick,->] (c4) to (c2);
		
		\draw[thick,->] (c4) to (c3);
		
		\node[rectangle,inner sep=0pt, draw=gray!75,fill=gray!20,minimum size=4mm] at (-3.8,0) {$p_1$};
		
		\node[rectangle,inner sep=0pt, draw=gray!75,fill=gray!20,minimum size=4mm] at (-3.8,1.2) {$p_2$};
		
		\node[rectangle,inner sep=0pt, draw=gray!75,fill=gray!20,minimum size=4mm] at (-3.8,2.4) {$p_3$};
		
		\node[draw, circle] at (-5,1.2) {$\mathcal{G}_3$};
		
		\draw[dashed, fill=blue!30] (-6,-1)--(5.5,-1) node[right] {};

		\draw[-stealth] (-3,-3.2)--(5,-3.2) node[right] {};
		
		\draw[-stealth] (-3,-2)--(5,-2) node[right] {};

		\node[circle,fill=black,inner sep=0pt,minimum size=3pt] (d2) at (-1, -2) {};
		
		\node[circle,fill=black,inner sep=0pt,minimum size=3pt] (d3) at (-2.5,-2) {};
		
		\node[circle,fill=black,inner sep=0pt,minimum size=3pt] (d4) at (-1.5, -3.2) {};

		\draw[thick,->] (d3) to (d2);
		
		\draw[thick,->] (d3) to (d4);
		
		\node[circle,fill=black,inner sep=0pt,minimum size=3pt] (e2) at (4, -2) {};
		
		\node[circle,fill=black,inner sep=0pt,minimum size=3pt] (e3) at (3.5, -3.2) {};
		
		\node[circle,fill=black,inner sep=0pt,minimum size=3pt] (e4) at (2.5, -3.2) {};

		\draw[thick,->] (e4) to (e2);
		
		\draw[thick,->] (e4) to (e3);

		\node[rectangle,inner sep=0pt, draw=gray!75,fill=gray!20,minimum size=4mm] at (-3.8,-3.2) {$p_1$};
		
		\node[rectangle,inner sep=0pt, draw=gray!75,fill=gray!20,minimum size=4mm] at (-3.8,-2) {$p_2$};

		\node[circle,draw] at (-5,-2.6) {$\mathcal{G}_2$};

		\end{tikzpicture}
	}
	\caption{Given execution in $\mathcal{G}_3$, construct an execution in $\mathcal{G}_2$ by index projection.}  \label{fig:3to2}
\end{figure}

\newcommand{\lemIDProjPath}{%
  Let $\mathcal{A}$ be a~\classname{} algorithm.
  Let $\gsys{2}$ and $\gsys{N}$ be two transition systems such that all processes in $\gsys{2}$ and $\gsys{N}$ follow $\mathcal{A}$, and $N \geq 2$.
  Let $\pi^N = \gconf{0}^N\gconf{1}^N \ldots$ be an admissible sequence of configurations in $\gsys{N}$.
  Let $\pi^2 = \gconf{0}^2\gconf{1}^2 \ldots$ be a sequence of configurations in $\gsys{2}$  such that $\gconf{k}^2$ is the index projection of  $\gconf{k}^N$ on indexes $\{1, 2\}$ for every $k \geq 0$.
  Then, $\pi^2$ is admissible in $\gsys{2}$.
}

\begin{lem} \label{lem:IDProjPath}
  \lemIDProjPath
\end{lem}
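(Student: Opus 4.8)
The proof is a direct unwinding of the definition of \emph{admissible sequence}. Recall that $\pi^N$ admissible means $\gconf{0}^N = \ginitconf{N}$ and $\pi^N$ is stuttering equivalent to some $\hat\pi^N = \hat\kappa_0^N\hat\kappa_1^N\ldots$ with $\hat\kappa_{4k}^N \transsched \hat\kappa_{4k+1}^N \transsnd \hat\kappa_{4k+2}^N \transrcv \hat\kappa_{4k+3}^N \transcomp \hat\kappa_{4k+4}^N$ for all $k \ge 0$. Since $\pi^2$ is defined by index-projecting $\pi^N$ pointwise, I would first observe that the index-projection operation commutes with stuttering: if $\kappa = \kappa'$ then their projections coincide, so $\pi^2$ is stuttering equivalent to the pointwise projection $\hat\pi^2$ of $\hat\pi^N$. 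Hence it suffices to check that (i) $\gconf{0}^2 = \ginitconf{2}$ and (ii) $\hat\kappa_{4k}^2 \transsched \hat\kappa_{4k+1}^2 \transsnd \hat\kappa_{4k+2}^2 \transrcv \hat\kappa_{4k+3}^2 \transcomp \hat\kappa_{4k+4}^2$ for every $k$, where the superscript-$2$ configurations are the projections of the corresponding superscript-$N$ ones.

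For (i): Construction~\ref{const:GlobalConfProj} applied to $\ginitconf{N}$ produces a tuple whose process components are the projections of $\tempinitconf{N}$ (which, by Construction~\ref{const:TempConfProj}, is $(\ell_0,\emptyset,\emptyset,d_0,d_0) = \tempinitconf{2}$), whose buffer components $buf^r_s = \getBuf{\ginitconf{N}}{s}{r} = \emptyset$ for $s,r\in\{1,2\}$, and whose control components $act_i = \getActive{\ginitconf{N}}{i} = \false$. This matches exactly the two defining constraints of $\ginitconf{2}$, so $\gconf{0}^2 = \ginitconf{2}$.

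For (ii): I would check each of the four sub-round transitions separately by going through the clauses of its definition in Section~\ref{sec:global-model} and verifying that the projected configurations satisfy them. The key point making this go through is that each sub-round's constraints are \emph{per-process} and \emph{per-(sender, receiver)-pair}: the \textit{Schedule} constraints (a)--(e), the predicates $\SFrozen,\predsnd$ used in \textit{Send}, the predicates $\RFrozen,\predrcv$ used in \textit{Receive}, and the \textit{Computation} constraints are all conjunctions of the form $\forall i \in \nid \colon (\ldots)$ or $\forall s,r\in\nid\colon(\ldots)$. Restricting a universally quantified conjunction over $\nid$ to its instances for $i\in\{1,2\}$ (resp.\ $s,r\in\{1,2\}$) still holds. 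The only subtlety is in \textit{Receive}: the receive template transition $\getProc{\kappa}{i}\temprcv{}{S_1}{S_N}\getProc{\kappa'}{i}$ must be ``truncated'' to $\getProc{\kappa}{i}\temprcv{}{S_1}{S_2}\getProc{\kappa'}{i}$; here I would invoke the fact that the template transition relation is determined by $\nextLoc,\genMsg,\nextVar$ component-wise (constraints (a)--(c) of the $\crcv$ definition are themselves $\forall i\in\nid$ statements), so the restricted sets $S_1, S_2$ together with the projected states satisfy the $2$-process $\crcv$ relation. The analogous remark handles the $\csnd$ transition inside \textit{Send} and the $\comp$/$\crash$/$\stutter$ transitions.

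The main obstacle — really the only place needing care rather than bookkeeping — is making the truncation argument for the template transitions fully rigorous, i.e.\ arguing that a length-$N$ instance of $\crcv$ (or $\csnd$, etc.) restricts cleanly to a length-$2$ instance. This is where I would lean explicitly on the ``determinism via $\nextLoc,\genMsg,\nextVar$'' assumption on $\temprel{N}$ stated in Section~\ref{sec:temp-model}: because those three functions are shared across all instances of the same \classname{} algorithm $\mathcal{A}$ and act independently on each index, the $2$-process template transition relation is literally the $N$-process one read off on coordinates $1$ and $2$. Once that lemma-like observation is in hand, every sub-round clause is immediate, and admissibility of $\pi^2$ follows. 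The argument is essentially identical in spirit to (and a prerequisite for) the symmetry Lemmas~\ref{lem:SymProcTemp} and~\ref{lem:SymGlobalSys}, just with projection in place of transposition.
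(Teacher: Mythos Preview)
Your proposal is correct and follows essentially the same approach as the paper: the paper factors the argument into explicit propositions (projection preserves the initial template state, the template transition relation, the initial global configuration, and the global internal/round transitions), while you give the same argument as an inline case analysis over the four sub-rounds. Your observation that projection commutes with stuttering, together with the component-wise truncation of $\crcv$ via the shared functions $\nextLoc,\genMsg,\nextVar$, is exactly what the paper's Propositions~\ref{prop:ProjTempReceive}--\ref{prop:ProjGlobalObservableTransitions} formalize.
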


\begin{proof}[Sketch of proof]
The proof of Lemma \ref{lem:IDProjPath} is based on the following observations:
\begin{enumerate}
	\item The application of Construction \ref{const:TempConfProj} to an initial template state  of $\mytemp{N}$ constructs an initial template state of $\mytemp{2}$. 
	
	\item Construction \ref{const:TempConfProj} preserves the template transition relation.
	
	\item The application of Construction \ref{const:GlobalConfProj} to an initial global configuration of $\gsys{N}$ constructs an initial global configuration of $\gsys{2}$. 
	
	\item Construction \ref{const:GlobalConfProj} preserves the global transition relation. \qedhere
\end{enumerate}
\end{proof}

Moreover, Lemma~\ref{lem:IDRefilledPath} says that given an admissible sequence $\pi^2 = \gconf{0}^2 \gconf{1}^2 \ldots $ in $\gsys{2}$, there exists an admissible sequence $\pi^N = \gconf{0}^N \gconf{1}^N \ldots $ in $\gsys{N}$ such that $\gconf{i}^2$ is the index projection of $\gconf{i}^N$ on indexes $\{1, 2\}$ for every $0 \leq i$.

\begin{defi}[Trace Equivalence under $\textit{AP}_{\{1, 2\}}$]  \label{def:traceEquivalence}
  Let $\mathcal{A}$ be an arbitrary~\classname{} algorithm.
  Let $\gsys{2} = (\tempconfs{2}, \gtr{2}, \temprel{2}, \tempinitconf{2})$ and $\gsys{N} = (\tempconfs{N}, \gtr{N}, \temprel{N}, \tempinitconf{N}) $ be global transition systems of $\mathcal{A}$ for some $N \geq 2$.  
  Let $\textit{AP}_{\{1, 2\}}$ be a set of predicates that take one of the forms: $P_1(i)$, or $P_2(j)$, or $P_3(i, j)$, or $P_4(j, i)$.
  Let $L \colon \tempconfs{2} \cup \tempconfs{N} \rightarrow 2^{\textit{AP}}$ be an evaluation function. 
  We say that $\gsys{2}$ and $\gsys{N}$ are trace equivalent under $\textit{AP}_{\{1, 2\}}$ if the following constraints hold:
  
  \begin{enumerate}
    \item For every admissible sequence $\pi^2 = \gconf{0}^2 \gconf{1}^2 \ldots $ of configurations in $\gsys{2}$, there exists an admissible sequence of configurations $\pi^N = \gconf{0}^N \gconf{1}^N \ldots $in $\gsys{N}$ such that $L(\gconf{i}^2) = L(\gconf{i}^N)$ for every $i \geq 0$.
    
    \item For every admissible sequence admissible sequence of configurations $\pi^N = \gconf{0}^N \gconf{1}^N \ldots $in $\gsys{N}$, there exists an admissible sequence $\pi^2 = \gconf{0}^2 \gconf{1}^2 \ldots $ of configurations in $\gsys{2}$ such that $L(\gconf{i}^2) = L(\gconf{i}^N)$ for every $i \geq 0$.
    
  \end{enumerate}
 
\end{defi}

\newcommand{\lemIDRefilledPath}{
  Let $\mathcal{A}$ be an arbitrary~\classname{} algorithm.
  Let $\gsys{2}$ and $\gsys{N}$ be global transition systems of $\mathcal{A}$ for some $N \geq 2$.  
  Let $\pi^2 = \gconf{0}^2 \gconf{1}^2 \ldots $ be an admissible sequence of configurations in $\gsys{2}$.
  There exists an admissible sequence $\pi^N = \gconf{0}^N \gconf{1}^N \ldots  $ of configurations in $\gsys{N}$ such that 
  $\gconf{i}^2$ is the index projection of $\gconf{i}^N$ on indexes $\{1, 2\}$ for every $i \geq 0$.  
}

\begin{lem} \label{lem:IDRefilledPath}
\lemIDRefilledPath
\end{lem}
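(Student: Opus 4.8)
The plan is to establish the converse of Lemma~\ref{lem:IDProjPath}: rather than \emph{throwing away} processes $3, \ldots, N$ as the index projection does, I will \emph{refill} them. Because $\pi^2$ is admissible, it is stuttering equivalent to a sequence $\hat\pi^2 = \hat\gconf{0}^2 \hat\gconf{1}^2 \ldots$ whose configurations obey the canonical pattern $\hat\gconf{4k}^2 \transsched \hat\gconf{4k+1}^2 \transsnd \hat\gconf{4k+2}^2 \transrcv \hat\gconf{4k+3}^2 \transcomp \hat\gconf{4k+4}^2$. I would first lift $\hat\pi^2$ to an analogously structured admissible sequence $\hat\pi^N$ in $\gsys{N}$ whose configurations project, index by index, onto those of $\hat\pi^2$, and then reconcile stuttering: since $\pi^2$ differs from $\hat\pi^2$ only by a pattern of repeated configurations, and a genuine stuttering step $\gconf{} \leadsto \gconf{}$ of $\gsys{N}$ (or a per-sub-round stutter, both allowed) projects onto a repeated configuration of $\gsys{2}$, the same pattern can be copied into $\hat\pi^N$ to obtain the required $\pi^N$; this reconciliation step is routine bookkeeping.

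The heart of the argument is an \emph{inflation} construction dual to Constructions~\ref{const:TempConfProj} and~\ref{const:GlobalConfProj}, building a configuration of $\gsys{N}$ from one of $\gsys{2}$ plus a small amount of side information. In it, processes $1$ and $2$ of $\gsys{N}$ literally replay their $\gsys{2}$ behaviour (same program counters, same local variables at indices $1,2$, same active flags, same crash decisions, same messages sent, same delivered sets from indices $1$ and $2$), while each new process $j \in \{3, \ldots, N\}$ is a \emph{clone of process~$1$}: it is scheduled, crashes, sends, and receives in lockstep with process~$1$, and the sets it receives are chosen to maintain the following \emph{mirror invariant} along $\hat\pi^N$: every buffer $\getBuf{\cdot}{s}{r}$ with $s, r \in \{1, 3, \ldots, N\}$ equals $\getBuf{\cdot}{1}{1}$; every buffer from an index in $\{1, 3, \ldots, N\}$ to index $2$ equals $\getBuf{\cdot}{1}{2}$; every buffer from index $2$ to an index in $\{1, 3, \ldots, N\}$ equals $\getBuf{\cdot}{2}{1}$; and $\getBuf{\cdot}{2}{2}$ is as in $\gsys{2}$. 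Maintaining this requires one extra ingredient: whenever process~$1$ performs a receive in $\gsys{N}$, it additionally removes from each $\getBuf{\cdot}{j}{1}$ ($j \geq 3$) the \emph{same} set $S_1$ it removes from $\getBuf{\cdot}{1}{1}$ (legal precisely because, by the invariant, these buffers coincide). Crucially, by constraints (b) and (c) of the $\crcv$ transition this extra draining perturbs only the message boxes and local variables of process~$1$ at indices $\geq 3$, which Construction~\ref{const:TempConfProj} discards — so the index projection is unaffected.

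I would then verify the lift sub-round by sub-round, inductively maintaining the mirror invariant, using the explicit predicates $\SFrozen$, $\RFrozen$, $\predsnd$, $\predrcv$ and the Schedule constraints: (i) $\ginitconf{N}$ projects onto $\ginitconf{2}$ and the invariant holds at the start since all buffers are empty and every process starts in $\tempinitconf{N}$; (ii) a $\transsched$ step lifts because a crash of process~$1$ is mirrored by a crash of every clone, and emptying the incoming buffers of a crashed index is consistent with the invariant; (iii) a $\transsnd$ step lifts, the key point being that the enabledness precondition ``$m \notin \getBuf{\gconf{}}{j}{r}$ for all $r$'' for a clone $j$ holds because by the invariant every $\getBuf{\gconf{}}{j}{r}$ is one of $\getBuf{\gconf{}}{1}{1}, \getBuf{\gconf{}}{1}{2}$, for which the same precondition is exactly what enables process~$1$ to send $m$ in $\gsys{2}$; (iv) a $\transrcv$ step lifts by the above choice of drained sets, and the update of the boxes and variables at indices $1,2$ of processes $1,2$ depends only on the sets delivered from indices $1,2$, hence matches $\gsys{2}$; (v) a $\transcomp$ step lifts trivially. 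In each case the projection of the produced configuration is checked directly against Constructions~\ref{const:TempConfProj} and~\ref{const:GlobalConfProj}. The main obstacle is item (iii): pinning down the mirror invariant precisely enough that the clones' send steps remain enabled through the Send and Receive sub-rounds — equivalently, tracking which buffers of $\gsys{N}$ the four buffer classes of $\gsys{2}$ get ``spread'' over, and which components of a process state the index projection sees versus ignores. Once that invariant is fixed, the remaining reasoning parallels the proof sketch of Lemma~\ref{lem:IDProjPath}.
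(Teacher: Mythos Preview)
Your construction is correct, but it takes a genuinely different and more elaborate route than the paper's. The paper uses the simplest possible refill: in the very first Schedule sub-round of $\hat\pi^N$ it crashes all processes $3,\ldots,N$, and keeps them crashed forever (this is Propositions~\ref{prop:IDRefilledConf} and~\ref{prop:IDRefilledTrans}). After that single step, the extra processes contribute nothing: their template states are frozen, their outgoing buffers stay empty, and their incoming buffers are emptied by the Schedule rule for crashed receivers. No mirror invariant on buffers is needed, and the Send-enabledness issue you flag as ``the main obstacle'' simply does not arise, because crashed processes are never enabled.

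Your cloning construction, by contrast, keeps processes $3,\ldots,N$ alive and in lockstep with process~$1$, which forces you to maintain the four-class buffer invariant and to argue enabledness for each clone at every Send. This works (the check you sketch in item~(iii) is exactly right, and the extra draining in item~(iv) is harmless because the projection discards boxes and local variables at indices $\geq 3$), and it even extends to the partial-synchrony Lemma~\ref{lem:ps2}, since clones take steps whenever process~$1$ does. The one thing your approach buys is that the lifted execution has no artificial crashes: if $\pi^2$ is crash-free, so is your $\pi^N$. But for the purposes of this lemma and its later uses, that is not needed, and the paper's ``crash the extras'' trick is substantially shorter.
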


\begin{proof}[Sketch of proof]
We construct an execution $\pi_N$ in $\mathcal{G}_N$ based on $\pi_2$ such that all processes $3..N$ crash from the beginning, and $\pi_2$ is an index projection of $\pi_N$. 
For instance, Figure~\ref{fig:2to3} demonstrates an execution in $\mathcal{G}_3$ that is constructed based on one in $\mathcal{G}_2$.
We have that $\pi_2$ is admissible in $\mathcal{G}_2$. 
\end{proof}

\begin{figure}[tb]
	\centering
	\resizebox{.85\textwidth}{!}{
		\begin{tikzpicture}[x=1cm,y=0.4cm]

		\draw[dashed, fill=blue!30] (-6,5)--(5.5,5) node[right] {};

		\draw[-stealth] (-3,6)--(5,6) node[right] {};
		
		\draw[-stealth] (-3,7.2)--(5,7.2) node[right] {};

		\node[circle,fill=black,inner sep=0pt,minimum size=3pt] (d2) at (-1, 7.2) {};
		
		\node[circle,fill=black,inner sep=0pt,minimum size=3pt] (d3) at (-2.5, 7.2) {};
		
		\node[circle,fill=black,inner sep=0pt,minimum size=3pt] (d4) at (-1.5, 6) {};

		\draw[thick,->] (d3) to (d2);
		
		\draw[thick,->] (d3) to (d4);
		
		\node[circle,fill=black,inner sep=0pt,minimum size=3pt] (e2) at (4, 7.2) {};
		
		\node[circle,fill=black,inner sep=0pt,minimum size=3pt] (e3) at (3.5, 6) {};
		
		\node[circle,fill=black,inner sep=0pt,minimum size=3pt] (e4) at (2.5, 6) {};

		\draw[thick,->] (e4) to (e2);
		
		\draw[thick,->] (e4) to (e3);

		\node[rectangle,inner sep=0pt, draw=gray!75,fill=gray!20,minimum size=4mm] at (-3.8, 6) {$p_1$};
		
		\node[rectangle,inner sep=0pt, draw=gray!75,fill=gray!20,minimum size=4mm] at (-3.8,7.2) {$p_2$};

		\node[circle,draw] at (-5,6.6) {$\mathcal{G}_2$};

		\draw[-stealth] (-3,1.4)--(5,1.4) node[right] {};
		
		\draw[-stealth] (-3,2.6)--(5,2.6) node[right] {};
		
		\draw[-stealth] (-3,3.8)--(5,3.8) node[right] {};
		
		\node[inner sep=0pt,minimum size=3pt] (a1) at (-1.8, 3.8) {};
		
		\node[circle,fill=black,inner sep=0pt,minimum size=3pt] (a2) at (-1, 2.6) {};
		
		\node[circle,fill=black,inner sep=0pt,minimum size=3pt] (a3) at (-2.5,2.6) {};
		
		\node[circle,fill=black,inner sep=0pt,minimum size=3pt] (a4) at (-1.5, 1.4) {};

		\draw[dashed,->] (a3) to (a1);
		
		\draw[thick,->] (a3) to (a2);
		
		\draw[thick,->] (a3) to (a4);

		\node[inner sep=0pt,minimum size=3pt] (c1) at (3.1, 3.8) {};
		
		\node[circle,fill=black,inner sep=0pt,minimum size=3pt] (c2) at (4, 2.6) {};
		
		\node[circle,fill=black,inner sep=0pt,minimum size=3pt] (c3) at (3.5, 1.4) {};
		
		\node[circle,fill=black,inner sep=0pt,minimum size=3pt] (c4) at (2.5, 1.4) {};

		\draw[dashed,->] (c4) to (c1);
		
		\draw[thick,->] (c4) to (c2);
		
		\draw[thick,->] (c4) to (c3);
		
		\node[rectangle,inner sep=0pt, draw=gray!75,fill=gray!20,minimum size=4mm] at (-3.8,1.4) {$p_1$};
		
		\node[rectangle,inner sep=0pt, draw=gray!75,fill=gray!20,minimum size=4mm] at (-3.8,2.6) {$p_2$};
		
		\node[rectangle,inner sep=0pt, draw=gray!75,fill=gray!20,minimum size=4mm] at (-3.8,3.8) {$p_3$};
		
		\node[circle,draw] at (-5,2.6) {$\mathcal{G}_3$};
		
		\node[anchor=south west,inner sep=0] at (-3.2,3.6) {\includegraphics[width=0.2cm]{./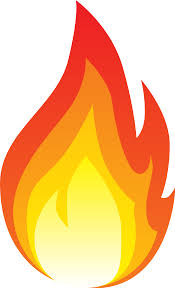}};

		\node[anchor=south west,inner sep=0] at (-1.8,3.6) {\includegraphics[width=0.2cm]{./img/fire.jpeg}};
		
		\node[anchor=south west,inner sep=0] at (3.1,3.6) {\includegraphics[width=0.2cm]{./img/fire.jpeg}};

		\end{tikzpicture}
	}
	\caption{Construct an execution in $\mathcal{G}_3$ based on a given execution in $\mathcal{G}_2$.}
	\label{fig:2to3}
\end{figure}

\newcommand{\lemTraceEquivTwo}{
  Let $\mathcal{A}$ be a~\classname{} algorithm. Let $\gsys{2}$ and $\gsys{N}$ be its instances for some $N \geq 2$.
  Let $\textit{AP}_{\{1, 2\}}$ be a set of predicates that take one of the forms: $P_1(1)$, $P_2(2)$,  $P_3(1, 2)$ or $P_4(2, 1)$.
  It follows that $\gsys{2}$ and $\gsys{N}$ are trace equivalent under $\textit{AP}_{\{1, 2\}}$.
}

\begin{lem} \label{lem:TraceEquivTwo}
  \lemTraceEquivTwo
\end{lem}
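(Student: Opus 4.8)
The plan is to verify the two conditions of Definition~\ref{def:traceEquivalence} directly, using the two constructions and the two path lemmas already established. Fix an evaluation function $L$ as in Definition~\ref{def:traceEquivalence}; since every predicate in $\textit{AP}_{\{1,2\}}$ has one of the forms $P_1(1)$, $P_2(2)$, $P_3(1,2)$, $P_4(2,1)$, the value $L(\gconf{})$ depends only on $\getProc{\gconf{}}{1}$, $\getProc{\gconf{}}{2}$, $\getBuf{\gconf{}}{1}{1}$, $\getBuf{\gconf{}}{1}{2}$, $\getBuf{\gconf{}}{2}{1}$, $\getBuf{\gconf{}}{2}{2}$, and the active flags $\getActive{\gconf{}}{1},\getActive{\gconf{}}{2}$. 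The key observation I would isolate first is that the index projection of Construction~\ref{const:GlobalConfProj} preserves exactly these components: by the defining clauses of that construction, $s_i$ (built via Construction~\ref{const:TempConfProj}) retains the program counter and the $\{1,2\}$-indexed boxes and local variables of $\getProc{\gconf{}^N}{i}$, the buffers $buf^r_s$ equal $\getBuf{\gconf{}^N}{s}{r}$ for $s,r\in\{1,2\}$, and $act_i = \getActive{\gconf{}^N}{i}$. Hence if $\gconf{}^2$ is the index projection of $\gconf{}^N$ on $\{1,2\}$, then $L(\gconf{}^2) = L(\gconf{}^N)$, because $L$ reads nothing beyond what the projection copies verbatim. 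This is the one small lemma I would state and prove before invoking the path lemmas.

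With that in hand, both directions are short. For the first condition of Definition~\ref{def:traceEquivalence}, let $\pi^2 = \gconf{0}^2\gconf{1}^2\ldots$ be an admissible sequence in $\gsys{2}$. By Lemma~\ref{lem:IDRefilledPath} there is an admissible sequence $\pi^N = \gconf{0}^N\gconf{1}^N\ldots$ in $\gsys{N}$ with $\gconf{i}^2$ the index projection of $\gconf{i}^N$ on $\{1,2\}$ for every $i\geq 0$; by the observation above, $L(\gconf{i}^2) = L(\gconf{i}^N)$ for all $i$, which is exactly what is required. For the second condition, let $\pi^N = \gconf{0}^N\gconf{1}^N\ldots$ be an admissible sequence in $\gsys{N}$, and define $\pi^2 = \gconf{0}^2\gconf{1}^2\ldots$ by letting $\gconf{i}^2$ be the index projection of $\gconf{i}^N$ on $\{1,2\}$ via Construction~\ref{const:GlobalConfProj}; by Lemma~\ref{lem:IDProjPath}, $\pi^2$ is admissible in $\gsys{2}$, and again $L(\gconf{i}^2) = L(\gconf{i}^N)$ for all $i$ by the observation. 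Both conditions hold, so $\gsys{2}$ and $\gsys{N}$ are trace equivalent under $\textit{AP}_{\{1,2\}}$.

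The statement of this lemma specializes the indices to the concrete pair $\{1,2\}$, whereas Lemmas~\ref{lem:IDProjPath} and~\ref{lem:IDRefilledPath} are already phrased for that pair, so no symmetry argument is needed here; the role of Lemmas~\ref{lem:SymProcTemp} and~\ref{lem:SymGlobalSys} is deferred to the theorem that consumes this lemma. Consequently, the main obstacle is not conceptual but bookkeeping: one must check carefully that the components $L$ reads are precisely the ones Construction~\ref{const:GlobalConfProj} (through Construction~\ref{const:TempConfProj}) preserves, and in particular that the local-variable slot $v_i$ and the received-message slot $rcvd_i$ in a $\gsys{2}$-state coincide with $\getDatum{\cdot}{i}$ and $\getBox{\cdot}{i}$ of the corresponding $\gsys{N}$-state for $i\in\{1,2\}$ — which is immediate from the clauses of Construction~\ref{const:TempConfProj}, but is the point where the restriction of predicate arities to indices in $\{1,2\}$ is actually used. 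I would therefore keep the proof to the three short paragraphs sketched above, citing Lemmas~\ref{lem:IDProjPath} and~\ref{lem:IDRefilledPath} as black boxes.
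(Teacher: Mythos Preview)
Your proposal is correct and follows essentially the same route as the paper, which simply states that the lemma ``immediately follows by Definition~\ref{def:traceEquivalence}, Lemma~\ref{lem:IDProjPath} and Lemma~\ref{lem:IDRefilledPath}.'' You have spelled out the one step the paper leaves implicit, namely that the index projection preserves $L$ because every predicate in $\textit{AP}_{\{1,2\}}$ reads only components indexed by $\{1,2\}$; this is exactly the bookkeeping the paper sweeps under ``immediately.''
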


\begin{proof}[Sketch of proof]
  The proof of Lemma~\ref{lem:TraceEquivTwo} is based on Definition \ref{def:traceEquivalence}, Lemma~\ref{lem:IDProjPath}, and Lemma~\ref{lem:IDRefilledPath}. 
\end{proof}

\section{Detailed Proofs for Cutoff Results in the Model of the Global Distributed Systems}
\label{sec:netys20-proofs}
In this section, we present the detailed proofs for Theorems~\ref{thm:NCutoffOneIndexNetysA} and \ref{thm:NCutoffTwoIndexesNetysA}.
In Sections \ref{app:SymProcTemp} and \ref{app:SymGlobalSys} we prove that every transposition on a set of process indexes $\nid$ preserves the
structure of the process template $\mytemp{N}$ and the
structure of the global transition system $\gsys{N}$ for every $N \geq 1$, respectively.
In Section~\ref{app:TraceEquivalenceTwo} we show that $\gsys{2}$ and $\gsys{N}$ are trace equivalent under $AP_{\{1, 2\}}$.
Next, we prove that $\gsys{1}$ and $\gsys{N}$ are trace equivalent under $AP_{\{1\}}$ in Section~\ref{app:TraceEquivalenceOne}.
Then, the detailed proofs for Theorems~\ref{thm:NCutoffOneIndexNetysA} and \ref{thm:NCutoffTwoIndexesNetysA} is presented in Section~\ref{app:CutoffOne}.
Finally, we discuss why we can verify the strong completeness property of the failure detector of~\cite{CT96} under synchrony by model checking instances of size 2 by applying our cutoff results.

\subsection{Transpositions and Process Templates} \label{app:SymProcTemp}

The proof of Lemma~\ref{lem:SymProcTemp} requires the following propositions: 
\begin{itemize}
  \item Given a transposition $\transposition$, Proposition~\ref{prop:TranspositionAsBijectionInTemp} says that a function $\transposition_Q$, which refers to the application of $\transposition$ to a state in $\tempconfs{N}$,  is a bijection from $\tempconfs{N}$ to itself.
  
  \item Proposition~\ref{prop:SymInitTempConf} says that $\transposition_Q$ has no effect on the initial template state $\tempinitconf{N}$.
  
  \item Propositions \ref{prop:TranspositionTempReceive} and \ref{prop:TranspositionOtherTempActions} describe the relationship between transpositions and template transitions.
\end{itemize}

\recalllemma{lem:SymProcTemp}{\lemSymProcTemp{}}

\begin{proof} We have: point 1 holds by Proposition~\ref{prop:TranspositionAsBijectionInTemp},  
  and point 2 holds by Proposition~\ref{prop:SymInitTempConf}, 
  and point 3 holds by Proposition~\ref{prop:TranspositionTempReceive}, 
  and point 4 holds by Proposition~\ref{prop:TranspositionOtherTempActions}.
\end{proof}

\newcommand{\propTranspositionAsBijectionInTemp}{
  Let $\mytemp{N} = (\tempconfs{N}, \temptr{N}, \temprel{N}, \tempinitconf{N})$ be a process template with indexes  $\nid$.
  Let $\transposition =\swapij{i}{j}$ be a transposition on $\nid$,  and $\transposition_Q$ be a local transposition based on $\transposition$ (from Definition~\ref{def:TransposAndTempConf}). 
  Then, $\transposition_Q$  is a bijection from $\tempconfs{N}$ to itself.
}

\begin{prop} \label{prop:TranspositionAsBijectionInTemp}  
  \propTranspositionAsBijectionInTemp
\end{prop}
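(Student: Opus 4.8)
The plan is to show directly that $\transposition_Q$ is a well-defined map from $\tempconfs{N}$ to itself, that it is injective, and that it is surjective; in fact the cleanest route is to establish that $\transposition_Q$ is an involution, i.e.\ $\transposition_Q \circ \transposition_Q = \mathrm{id}$, which immediately gives bijectivity. First I would unfold Definition~\ref{def:TransposAndTempConf}: given a template state $\tempconf{} = (\ell, S_1, \ldots, S_N, d_1, \ldots, d_N)$, the state $\switchTempConf{\tempconf{}}{\transposition}$ is the tuple $(\ell, S'_1, \ldots, S'_N, d'_1, \ldots, d'_N)$ where $S'_i = S_j$, $S'_j = S_i$, $d'_i = d_j$, $d'_j = d_i$, and $S'_k = S_k$, $d'_k = d_k$ for all $k \notin \{i,j\}$. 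Since $S'_m \in \set(\msg)$ and $d'_m \in \tempdatatype$ for every $m$ (they are literally a permutation of the original components), and $\ell \in \loc$ is unchanged, the resulting tuple lies in $\loc \times \set(\msg)^N \times \tempdatatype^N = \tempconfs{N}$; this shows $\transposition_Q$ maps $\tempconfs{N}$ into itself. It is also clearly a function (the output tuple is uniquely determined), so no case analysis is needed there.

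Next I would verify $\transposition_Q(\transposition_Q(\tempconf{})) = \tempconf{}$. Write $\switchTempConf{\tempconf{}}{\transposition} = (\ell, S'_1, \ldots, S'_N, d'_1, \ldots, d'_N)$ as above and apply $\transposition_Q$ again to obtain $(\ell, S''_1, \ldots, S''_N, d''_1, \ldots, d''_N)$. Then $S''_i = S'_j = S_i$ and $S''_j = S'_i = S_j$ by the two swap clauses, and for $k \notin \{i,j\}$ we have $S''_k = S'_k = S_k$; the analogous three-line computation works for the $d$-components, and $\ell$ is untouched throughout. Hence the double application is the identity on $\tempconfs{N}$. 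An involution on a set is automatically a bijection: it is its own inverse, so it is both injective and surjective. That concludes the argument.

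This proposition is genuinely routine — it is pure bookkeeping about swapping two coordinates of a tuple — so there is no real ``hard part.'' The only thing to be mildly careful about is the edge case $i = j$ (if transpositions with equal indices are permitted, $\transposition$ is the identity and every clause becomes $S'_k = S_k$, so $\transposition_Q = \mathrm{id}$, which is still a bijection); and one should make sure the indexing conventions from the tuple-access functions $\boxFncName$ and $\datumFncName$ are respected, i.e.\ that ``swapping at indices $i$ and $j$'' refers consistently to the $(i{+}1)$-th/$(j{+}1)$-th and $(1{+}N{+}i)$-th/$(1{+}N{+}j)$-th components. Given those conventions are fixed in the earlier sections, the proof is just the two short displayed computations sketched above.
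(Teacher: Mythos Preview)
Your proposal is correct and takes essentially the same approach as the paper: both rely on the observation that $\transposition_Q$ is an involution, i.e.\ $\transposition_Q(\transposition_Q(\tempconf{})) = \tempconf{}$. The only cosmetic difference is that the paper separately constructs an explicit preimage for surjectivity and then invokes the involution identity for injectivity, whereas you derive both from the involution at once; your organization is slightly cleaner but the underlying argument is the same.
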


\begin{proof}
Since two transpositions $\swapij{i}{j}$ and $\swapij{j}{i}$ are equivalent, we assume  $i < j$.
To show that  $\transposition_Q$ is a bijection from $\tempconfs{N}$ to itself, we prove that the following properties hold:
\begin{enumerate}[nosep]
\item For every template state $\tempnextconf{} \in \tempconfs{N}$, there exists a template configuration $\tempconf{} \in \tempconfs{N}$ such that $\switchTempConf{\tempconf{}}{\transposition} = \tempnextconf{}$.
\item For every pair of states $\tempconf{1}, \tempconf{2} \in \tempconfs{N}$, if $\switchTempConf{\tempconf{1}}{\transposition} = \switchTempConf{\tempconf{2}}{\transposition}$, then $\tempconf{1} = \tempconf{2}$.
\end{enumerate}

We first show that Point 1 holds. Assume that $\tempnextconf{}$ is a following tuple 
\[ \tempnextconf{} = (\ell, S_1, \ldots, S_i, \ldots, S_j, \ldots, S_N, d_1, \ldots, d_i, \ldots, d_j, \ldots, d_N) \]
where $\ell \in \loc, S_i \in \set(\msg), d_i \in \tempdatatype$ for every $i \in \nid$.
Let  $\tempconf{}$ be the following tuple
\[ \tempconf{} = (\ell, S_1, \ldots, S_j, \ldots, S_i, \ldots, S_N, d_1, \ldots, d_j, \ldots, d_i, \ldots, d_N) \]
where $S_k = S^{\prime}_k \land d_k = d^{\prime}_k$ for every $k \in \nid \setminus \{i, j\}$.
By Definition~\ref{def:TransposAndTempConf}, we have $\switchTempConf{\tempconf{}}{\transposition} = \tempnextconf{}$.
Moreover, by the definition of a process template in Section~\ref{sec:temp-model}, it follows $\tempconf{} \in \tempconfs{N}$.

We now focus on Point 2. By definition of the application of a process--index transposition to a template state, it is easy to check that $\transposition_Q( (\switchTempConf{\tempconf{}}{\transposition})) = \tempconf{}$ for every $\tempconf{} \in \tempconfs{N}$. 
It follows that 
$
\tempconf{1} = \transposition_Q((\switchTempConf{\tempconf{1}}{\transposition})) = \transposition_Q((\switchTempConf{\tempconf{2}}{\transposition})) = \tempconf{2} 
$ since $\switchTempConf{\tempconf{1}}{\transposition} = \switchTempConf{\tempconf{2}}{\transposition}$.

Therefore, Proposition~\ref{prop:TranspositionAsBijectionInTemp} holds.
\end{proof}

\newcommand{\propSymInitTempConf}{
  Let $\mytemp{N} = (\tempconfs{N}, \temptr{N}, \temprel{N}, \tempinitconf{N})$ be a process template.
  Let $\transposition =\swapij{i}{j}$ be a transposition on $\nid$, and
  $\transposition_Q$ be a local transposition based on $\transposition$ (from Definition~\ref{def:TransposAndTempConf}).
  It follows that $\switchTempConf{\tempinitconf{N}}{\transposition} = \tempinitconf{N}$.
}

\begin{prop} \label{prop:SymInitTempConf}
  \propSymInitTempConf
\end{prop}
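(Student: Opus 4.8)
The plan is to unfold the two definitions involved and observe that $\tempinitconf{N}$ is a fixed point of any index transposition simply because all of its index-indexed components carry the same value. First I would recall from Section~\ref{sec:temp-model} that the initial template state has the shape $\tempinitconf{N} = (\ell_0, \emptyset, \ldots, \emptyset, d_0, \ldots, d_0)$; concretely, $\getPC{\tempinitconf{N}} = \ell_0$, and $\getBox{\tempinitconf{N}}{k} = \emptyset$ for every $k \in \nid$, and $\getDatum{\tempinitconf{N}}{k} = d_0$ for every $k \in \nid$.

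Next I would set $\tempconf{} \bydef \switchTempConf{\tempinitconf{N}}{\transposition}$ for $\transposition = \swapij{i}{j}$ and apply Definition~\ref{def:TransposAndTempConf}. Its point~1 yields $\getPC{\tempconf{}} = \ell_0$, together with $\getBox{\tempconf{}}{i} = \getBox{\tempinitconf{N}}{j} = \emptyset = \getBox{\tempinitconf{N}}{i}$, $\getBox{\tempconf{}}{j} = \getBox{\tempinitconf{N}}{i} = \emptyset = \getBox{\tempinitconf{N}}{j}$, $\getDatum{\tempconf{}}{i} = \getDatum{\tempinitconf{N}}{j} = d_0 = \getDatum{\tempinitconf{N}}{i}$ and $\getDatum{\tempconf{}}{j} = \getDatum{\tempinitconf{N}}{i} = d_0 = \getDatum{\tempinitconf{N}}{j}$; its point~2 yields $\getBox{\tempconf{}}{k} = \getBox{\tempinitconf{N}}{k}$ and $\getDatum{\tempconf{}}{k} = \getDatum{\tempinitconf{N}}{k}$ for every $k \in \nid \setminus \{i, j\}$. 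Hence $\tempconf{}$ agrees with $\tempinitconf{N}$ componentwise, so $\switchTempConf{\tempinitconf{N}}{\transposition} = \tempinitconf{N}$, as claimed. (By Proposition~\ref{prop:TranspositionAsBijectionInTemp} the map $\transposition_Q$ is already known to be a well-defined bijection of $\tempconfs{N}$, so the computation above is just the instantiation of its defining clauses at the constant tuple $\tempinitconf{N}$.)

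I do not expect any real obstacle: the statement holds purely because $\tempinitconf{N}$ assigns the same set $\emptyset$ to every message slot and the same constant $d_0$ to every local-variable slot, so permuting those slots acts as the identity, and the location component is untouched by the transposition by construction. The only point worth stating explicitly — to keep the bookkeeping honest — is that $\getPC{}$ is preserved, which is part~1 of Definition~\ref{def:TransposAndTempConf}.
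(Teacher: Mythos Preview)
Your proof is correct and follows essentially the same approach as the paper: both observe that the index-indexed components of $\tempinitconf{N}$ are all equal (every box is $\emptyset$, every datum is $d_0$), so swapping the $i$-th and $j$-th slots leaves the tuple unchanged. The paper's version is just the one-line summary of your componentwise check.
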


\begin{proof}
By definition of $\tempinitconf{N}$ in Section~\ref{sec:temp-model}, we have $\getBox{\tempinitconf{N}}{i} = \getBox{\tempinitconf{N}}{j}$ and $\getDatum{\tempinitconf{N}}{i} = \getDatum{\tempinitconf{N}}{j}$.
It immediately follows  $\switchTempConf{\tempinitconf{N}}{\transposition} = \tempinitconf{N}$. 
\end{proof}

\newcommand{\propTranspositionTempReceive}{
  Let $\mytemp{N} = (\tempconfs{N}, \temptr{N}, \temprel{N}, \tempinitconf{N})$ be a process template with indexes $\nid$.
  Let $\tempconf{0}$ and $\tempconf{1}$ be states in $\mytemp{N}$ such that $\tempconf{0} \temprcv{}{S_1}{S_N} \tempconf{1}$ for some sets of messages: $S_1, \ldots, S_N \subseteq \emph{\set(\msg)}$. 
  Let $\transposition = \swapij{i}{j}$ be a transposition on $\nid$, and $\transposition_R$ be a receive-transition transposition based on $\transposition$ (from Definition~\ref{def:TransposeAndRcv}).
  It follows $\switchTempConf{\tempconf{0}}{\transposition} 
  \switchRcv{S_1}{S_N}{\transposition}
  \switchTempConf{\tempconf{1}}{\transposition}
  $.  
}

\begin{prop} \label{prop:TranspositionTempReceive}
  \propTranspositionTempReceive
\end{prop}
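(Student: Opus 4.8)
The plan is to unfold, on the one hand, the three defining conditions (a)--(c) of the template receive transition $\temprcv{}{S_1}{S_N}$ and, on the other hand, the Definitions of the local transposition $\transposition_Q$ (Definition~\ref{def:TransposAndTempConf}) and of the receive-transition transposition $\transposition_R$ (Definition~\ref{def:TransposeAndRcv}), and then to check directly that the transposed states and the transposed label satisfy (a)--(c). The single fact that drives everything is a reindexing observation that I would record first: since $\transposition = \swapij{i}{j}$ is an involution, for every state $\tempconf{}$ and every $k \in \nid$ we have $\getPC{\switchTempConf{\tempconf{}}{\transposition}} = \getPC{\tempconf{}}$, $\getBox{\switchTempConf{\tempconf{}}{\transposition}}{k} = \getBox{\tempconf{}}{\switchIndex{\transposition}{k}}$, and $\getDatum{\switchTempConf{\tempconf{}}{\transposition}}{k} = \getDatum{\tempconf{}}{\switchIndex{\transposition}{k}}$; and, writing $\crcv(S'_1, \ldots, S'_N)$ for the label $\transposition_R(\crcv(S_1, \ldots, S_N))$ appearing in $\switchRcv{S_1}{S_N}{\transposition}$, Definition~\ref{def:TransposeAndRcv} gives $S'_k = S_{\switchIndex{\transposition}{k}}$ for every $k \in \nid$. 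All of these are immediate from the cited Definitions, so this step is just a matter of stating them cleanly.

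Next I would verify the three conditions witnessing $\switchTempConf{\tempconf{0}}{\transposition} \switchRcv{S_1}{S_N}{\transposition} \switchTempConf{\tempconf{1}}{\transposition}$. Condition (a) mentions only program counters and the functions $\nextLoc, \genMsg$, so since $\getPC{}$ is untouched by $\transposition_Q$ it transfers verbatim from the hypothesis $\tempconf{0} \temprcv{}{S_1}{S_N} \tempconf{1}$. For condition (b), I would fix an arbitrary $k \in \nid$ and compute $\getBox{\switchTempConf{\tempconf{1}}{\transposition}}{k} = \getBox{\tempconf{1}}{\switchIndex{\transposition}{k}} = \getBox{\tempconf{0}}{\switchIndex{\transposition}{k}} \cup S_{\switchIndex{\transposition}{k}} = \getBox{\switchTempConf{\tempconf{0}}{\transposition}}{k} \cup S'_k$, where the middle equality is condition (b) of the original transition instantiated at index $\switchIndex{\transposition}{k}$. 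Condition (c) is handled the same way, using condition (c) of the original transition at index $\switchIndex{\transposition}{k}$ together with $\getPC{\switchTempConf{\tempconf{0}}{\transposition}} = \getPC{\tempconf{0}}$. Because $\transposition$ is a bijection on $\nid$, ranging $k$ over $\nid$ is the same as ranging $\switchIndex{\transposition}{k}$ over $\nid$, so all $N$ instances of (b) and (c) are covered.

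I do not expect a genuine obstacle here; the argument is pure bookkeeping. The one point that needs care is the direction of the reindexing: one must use that $\transposition$ is its own inverse, so that $\transposition_Q$ reindexes the $\boxFncName$- and $\datumFncName$-components by $\transposition$ itself and $\transposition_R$ permutes the received sets $S_1, \ldots, S_N$ by the same $\transposition$. Keeping these two permutations consistent is exactly what makes conditions (b) and (c) line up index by index.
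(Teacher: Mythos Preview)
Your proposal is correct and follows essentially the same approach as the paper: verify conditions (a)--(c) of the receive transition directly by unfolding the definitions of $\transposition_Q$ and $\transposition_R$. The only cosmetic difference is that the paper splits the verification of (b) and (c) into the three cases $k \notin \{i,j\}$, $k=i$, $k=j$, whereas you package these into the single reindexing identity $\getBox{\switchTempConf{\tempconf{}}{\transposition}}{k} = \getBox{\tempconf{}}{\switchIndex{\transposition}{k}}$ (and likewise for $\datumFncName$ and the permuted sets $S'_k$), which is a slightly cleaner way to do the same bookkeeping.
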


\begin{proof}
  We prove that all Constraints (a)--(c) between two states $\switchTempConf{\tempconf{0}}{\transposition}$ and $\switchTempConf{\tempconf{1}}{\transposition}$  in the transition $\csnd$ defined in Section~\ref{sec:temp-model} hold.
  First, we focus on Constraint (a). We have $\getPC{\switchTempConf{\tempconf{1}}{\transposition}} 
   =   \getPC{\tempconf{1}}$ by Definition~\ref{def:TransposAndTempConf}.
  We have
  $\getPC{\tempconf{1}}  = \nextLoc(\getPC{\tempconf{0}})$ by the semantics of $\crcv(S_1, \ldots, S_N)$ in Section~\ref{sec:temp-model}.
  We have $\nextLoc(\getPC{\tempconf{0}}) = \nextLoc(\getPC{\switchTempConf{\tempconf{0}}{\transposition}})$ by Definition~\ref{def:TransposAndTempConf}.
  It follows $
  \getPC{\switchTempConf{\tempconf{1}}{\transposition}} = \nextLoc(\getPC{\switchTempConf{\tempconf{0}}{\transposition}})$.
  Moreover, we have
  \begin{alignat*}{3}
  \{m\} & = \genMsg(\getPC{\tempconf{0}}) && \quad \text{(by the semantics of $\crcv$ in Section~\ref{sec:temp-model})} \\
  & = \genMsg(\getPC{\switchTempConf{\tempconf{0}}{\transposition}}) && \quad \text{(by Definition~\ref{def:TransposAndTempConf}})               
  \end{alignat*}
  Hence, Constraint (a) holds.  
  
  Now we focus on Constraint~(b). 
  By Definition~\ref{def:TransposAndTempConf},  we have  
  $\getBox{\switchTempConf{\tempconf{0}}{\transposition}}{k} = \getBox{\tempconf{0}}{k}$ and
  $\getBox{\switchTempConf{\tempconf{1}}{\transposition}}{k} = \getBox{\tempconf{1}}{k}$ for every $k \in \nid \setminus \{i, j\}$.
  We have
  $\getBox{\tempconf{1}}{k} = S_k \cup \getBox{\tempconf{0}}{k}$ by the semantics of $\crcv(S_1, \ldots, S_N)$ in Section~\ref{sec:temp-model}.
  It follows $\getBox{\switchTempConf{\tempconf{1}}{\transposition}}{k} = S_k \cup \getBox{\switchTempConf{\tempconf{0}}{\transposition}}{k}$ for every $k \in \nid \setminus \{i, j\}$.
  Now we focus on $\getBox{\switchTempConf{\tempconf{1}}{\transposition}}{i}$. 
  We have $\getBox{\switchTempConf{\tempconf{1}}{\transposition}}{i} = \getBox{\tempconf{1}}{j}$ and $\getBox{\switchTempConf{\tempconf{0}}{\transposition}}{i} = \getBox{\tempconf{0}}{j}$ by Definition~\ref{def:TransposAndTempConf}.
  Since $\getBox{\tempconf{1}}{j} = \getBox{\tempconf{0}}{j} \cup S_j$, it follows $\getBox{\switchTempConf{\tempconf{1}}{\transposition}}{i} = \getBox{\switchTempConf{\tempconf{0}}{\transposition}}{i} \cup S_j$.
  By similar arguments, we have $\getBox{\switchTempConf{\tempconf{1}}{\transposition}}{j} = \getBox{\switchTempConf{\tempconf{0}}{\transposition}}{j} \cup S_i$.
  Hence, Constraint (b) holds.
  
  Now we focus on Constraint (c).  
  By similar arguments in the proof of Constraint (b), for every $k \in \nid \setminus \{i, j\}$,  we have    
  \[
  \getDatum{ \switchTempConf{\tempconf{1}}{\transposition} }{k} =  
  \nextVar \big( \getPC{\switchTempConf{\tempconf{0}}{\transposition}},  
  S_k,
  \getDatum{\switchTempConf{\tempconf{0}}{\transposition}}{k}\big) 
  \]

  Now we focus on $ \getDatum{ \switchTempConf{\tempconf{1}}{\transposition} }{i}$.
  We have $\getDatum{ \switchTempConf{\tempconf{1}}{\transposition} }{i} 
  = \getDatum{ \tempconf{1} }{j}$ by Definition~\ref{def:TransposAndTempConf}.
  By the semantics of $\crcv(S_1, \ldots, S_N)$ in Section~\ref{sec:temp-model}, it follows that
  $\getDatum{ \tempconf{1} }{j} =  \nextVar \big( \getPC{\tempconf{0}}, 
  S_j,
  \getDatum{\tempconf{0}}{j} \big) $.
  By Definition~\ref{def:TransposAndTempConf}, we have 
  
  $\begin{aligned}
  & \getDatum{ \switchTempConf{\tempconf{1}}{\transposition} }{i} \\
    \qquad\: \quad = \;   & \getDatum{ \tempconf{1} }{j} \\
  \qquad\: \quad = \;  & \nextVar \big( \getPC{\tempconf{0}},  
  S_j,
  \getDatum{\tempconf{0}}{j} \big) \\
  \qquad\: \quad = \; & \nextVar \big( \getPC{\switchTempConf{\tempconf{0}}{\transposition}}, 
  S_j,
  \getDatum{\switchTempConf{\tempconf{0}}{\transposition}}{i}\big) \\
  \end{aligned}$
  
  Moreover, by similar arguments, we have 
  \[
  \getDatum{ \switchTempConf{\tempconf{1}}{\transposition} }{j} = \nextVar \big( \getPC{\switchTempConf{\tempconf{0}}{\transposition}},  
  S_i,
  \getDatum{\switchTempConf{\tempconf{0}}{\transposition}}{i}\big)
  \]
  
  Constraint (c) holds. 
  Hence, we have
 $\switchTempConf{\tempconf{0}}{\transposition} \switchRcv{S_1}{S_N}{\transposition} \switchTempConf{\tempconf{1}}{\transposition}$. 
\end{proof}

\newcommand{\propTranspositionOtherTempActions}{
  Let $\mytemp{N} = (\tempconfs{N}, \temptr{N}, \temprel{N}, \tempinitconf{N})$ be a process template with indexes $\nid$.
  Let $\tempconf{}$ and $\tempnextconf{}$ be states in $\mytemp{N}$, and $\mathit{tr} \in \temptr{N}$ be a transition such that $\tempconf{} \temptrans{} \tempnextconf{}$ and $\mathit{tr}$ refers to a task without a receive action.
  Let $\transposition = \swapij{i}{j}$ be a transposition on $\nid$, and $\transposition_Q$ be a local transposition based on $\transposition$ (from Definition~\ref{def:TransposAndTempConf}).
  It follows  $\switchTempConf{\tempconf{}}{\transposition} \temptrans{} \switchTempConf{\tempnextconf{}}{\transposition} $.
}

\begin{prop} \label{prop:TranspositionOtherTempActions}
  \propTranspositionOtherTempActions
\end{prop}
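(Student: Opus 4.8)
The plan is to prove the statement by a case analysis on the type of the transition $\mathit{tr}$, which, since $\mathit{tr}$ refers to a task without a receive action, is one of $\csnd(m)$ for some $m \in \msg$, $\comp$, $\crash$, or $\stutter$. The guiding idea is that, unlike the receive transition treated in Proposition~\ref{prop:TranspositionTempReceive}, none of these transitions moves information between the $i$-th and $j$-th message-box or data slots, so checking that the defining constraints survive the application of $\transposition_Q$ is a strictly easier instance of the bookkeeping already carried out there.

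First I would isolate two observations that drive every case. (1) By clause~1 of Definition~\ref{def:TransposAndTempConf}, $\transposition_Q$ leaves the program-counter component untouched, so $\getPC{\switchTempConf{\tempconf{}}{\transposition}} = \getPC{\tempconf{}}$ and $\getPC{\switchTempConf{\tempnextconf{}}{\transposition}} = \getPC{\tempnextconf{}}$; hence every quantity computed from the program counter alone — in particular $\nextLoc(\getPC{\tempconf{}})$ and $\genMsg(\getPC{\tempconf{}})$, and the tests $\getPC{\tempconf{}} \in \locsnd$, $\getPC{\tempconf{}} \in \loccomp$, $\getPC{\tempconf{}} \neq \loccrash$ — is preserved under $\transposition_Q$. (2) By clauses~1 and~2 of Definition~\ref{def:TransposAndTempConf}, $\transposition_Q$ acts on the box and data tuples by swapping the entries at indices $i$ and $j$ and fixing all others, so a universally quantified constraint $\forall k \in \nid \colon \mathcal{R}(\getBox{\tempconf{}}{k}, \getBox{\tempnextconf{}}{k}, \getDatum{\tempconf{}}{k}, \getDatum{\tempnextconf{}}{k})$ that holds between $\tempconf{}$ and $\tempnextconf{}$ also holds between $\switchTempConf{\tempconf{}}{\transposition}$ and $\switchTempConf{\tempnextconf{}}{\transposition}$, because its instance at $k$ on the transposed states coincides with its instance at $\transposition(k)$ on the original states.

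With (1) and (2) in hand the cases are mechanical. For $\mathit{tr} = \csnd(m)$, constraint~(a) of $\csnd$, namely $\getPC{\tempconf{}} \in \locsnd \land \getPC{\tempnextconf{}} = \nextLoc(\getPC{\tempconf{}}) \land \{m\} = \genMsg(\getPC{\tempconf{}})$, transfers by~(1), while constraints~(b) ($\getBox{\tempnextconf{}}{k} = \getBox{\tempconf{}}{k}$ for all $k$) and~(c) ($\getDatum{\tempnextconf{}}{k} = \nextVar(\getPC{\tempconf{}}, \emptyset, \getDatum{\tempconf{}}{k})$ for all $k$) transfer by~(1) and~(2); this is the specialisation of the $\crcv$ argument in Proposition~\ref{prop:TranspositionTempReceive} with $\emptyset$ in place of each $S_k$. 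The case $\mathit{tr} = \comp$ is identical, reading $\loccomp$ and $\genMsg(\getPC{\tempconf{}}) = \emptyset$ for $\locsnd$ and $\{m\}$. For $\mathit{tr} = \crash$, constraint~(a) ($\getPC{\tempconf{}} \neq \loccrash$ and $\getPC{\tempnextconf{}} = \loccrash$) transfers by~(1) and constraint~(b) ($\getBox{\tempconf{}}{k} = \getBox{\tempnextconf{}}{k} \land \getDatum{\tempconf{}}{k} = \getDatum{\tempnextconf{}}{k}$ for all $k$) by~(2), so $\switchTempConf{\tempconf{}}{\transposition} \tempcrash{} \switchTempConf{\tempnextconf{}}{\transposition}$. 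For $\mathit{tr} = \stutter$, we have $\tempconf{} = \tempnextconf{}$, and since $\transposition_Q$ is a function — indeed a bijection, by Proposition~\ref{prop:TranspositionAsBijectionInTemp} — it follows that $\switchTempConf{\tempconf{}}{\transposition} = \switchTempConf{\tempnextconf{}}{\transposition}$, whence $\switchTempConf{\tempconf{}}{\transposition} \tempstutter{} \switchTempConf{\tempnextconf{}}{\transposition}$.

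I do not expect a genuine obstacle here: the entire content of the proposition is that the send, compute, crash, and stutter transition relations are invariant under index permutations, and the only labor is pushing $\transposition$ through universally quantified constraints, exactly as was done in full for the receive transition in Proposition~\ref{prop:TranspositionTempReceive}. The only mild risk of error is keeping the swapped-index bookkeeping in constraints~(b) and~(c) straight, and since that is already spelled out there, the write-up can be kept brief by reference to it.
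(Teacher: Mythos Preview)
Your proposal is correct and follows essentially the same approach as the paper: case distinction on the transition type, with the program-counter constraints transferring because $\transposition_Q$ fixes $\pcFncName$, and the per-index box/data constraints transferring because $\transposition_Q$ acts by swapping indices $i$ and $j$. Your isolation of observations~(1) and~(2) up front makes the subsequent case checks slightly tidier than the paper's more inline treatment, and your handling of the $\stutter$ case (via $\transposition_Q$ being a function) is more direct than the paper's ``similar to $\csnd$'', but the substance is the same.
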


\begin{proof}
  We prove Proposition~\ref{prop:TranspositionOtherTempActions} by case distinction.  
  \begin{itemize}
  \item \emph{Case $\tempconf{0} \tempsnd{}{m} \tempconf{1}$.} 
  By similar arguments in the proof of Proposition~\ref{prop:TranspositionTempReceive}, it follows  $\getPC{\switchTempConf{\tempconf{1}}{\transposition}}  =  \nextLoc(\getPC{\switchTempConf{\tempconf{0}}{\transposition}})$ and $\{m\} = \genMsg(\getPC{\switchTempConf{\tempconf{0}}{\transposition}})$.
  Constraint (a) holds.
  By Definition~\ref{def:TransposAndTempConf},  for every $k \in \nid \setminus \{i, j\}$,  we have  
  $\getBox{\switchTempConf{\tempconf{0}}{\transposition}}{k} = \getBox{\tempconf{0}}{k}$ and
  $\getBox{\switchTempConf{\tempconf{1}}{\transposition}}{k} = \getBox{\tempconf{1}}{k}$.
  Hence, it follows $\getBox{\switchTempConf{\tempconf{1}}{\transposition}}{k} =  \getBox{\switchTempConf{\tempconf{0}}{\transposition}}{k}$ for every $k \in \nid \setminus \{i, j\}$.  
  Now we focus on $\getBox{\switchTempConf{\tempconf{1}}{\transposition}}{i}$. 
  We have $\getBox{\switchTempConf{\tempconf{1}}{\transposition}}{i} = \getBox{\tempconf{1}}{j}$ and $\getBox{\switchTempConf{\tempconf{0}}{\transposition}}{i} = \getBox{\tempconf{0}}{j}$ by Definition~\ref{def:TransposAndTempConf}.
  Since $\getBox{\tempconf{1}}{j} = \getBox{\tempconf{0}}{j}$, it follows that $\getBox{\switchTempConf{\tempconf{1}}{\transposition}}{i} = \getBox{\switchTempConf{\tempconf{0}}{\transposition}}{i}$.  
  By similar arguments, we have $\getBox{\switchTempConf{\tempconf{1}}{\transposition}}{j} = \getBox{\switchTempConf{\tempconf{0}}{\transposition}}{j}$.
  Constraint (b) holds.
  By similar arguments in the proof of Proposition~\ref{prop:TranspositionTempReceive}, for every $k \in \nid$, we have
  \[
   \getDatum{ \switchTempConf{\tempconf{1}}{\transposition} }{k}  = \nextVar \big( \getPC{\switchTempConf{\tempconf{0}}{\transposition}},  
  \emptyset,
   \getDatum{\switchTempConf{\tempconf{0}}{\transposition}}{k}\big)
  \]
  Constraint (c) holds.
  It follows $\switchTempConf{\tempconf{1}}{\transposition} \tempsnd{}{m} \switchTempConf{\tempnextconf{1}}{\transposition}$.
  
  \item \emph{Case $\tempconf{0} \tempcomp{} \tempconf{1}$.} Similar to the case of $\csnd$.
  
  \item \emph{Case $\tempconf{0} \tempcrash{} \tempconf{1}$.} We have $\getPC{\switchTempConf{\tempconf{1}}{\transposition}} = \getPC{\tempconf{1}}$ and $\getPC{\switchTempConf{\tempconf{0}}{\transposition}} = \getPC{\tempconf{0}}$ by Definition~\ref{def:TransposAndTempConf}. 
  By the transitions' assumptions, we have $\getPC{\switchTempConf{\tempconf{1}}{\transposition}} = \loccrash$ and $\getPC{\switchTempConf{\tempconf{0}}{\transposition}}  \neq \loccrash$.
  By similar arguments in the case of $\csnd$, it follows $\forall k \in \nid \colon \getBox{\switchTempConf{\tempconf{1}}{\transposition}}{k} = \getBox{\switchTempConf{\tempconf{0}}{\transposition}}{k} \land \getDatum{\switchTempConf{\tempconf{1}}{\transposition}}{k} = \getDatum{\switchTempConf{\tempconf{0}}{\transposition}}{k}$.
  Hence, it holds
  $\switchTempConf{\tempconf{0}}{\transposition} \tempcrash{} \switchTempConf{\tempconf{1}}{\transposition}$.

  \item \emph{Case $\tempconf{} \tempstutter{} \tempnextconf{}$.} Similar to the case of $\csnd$. 
  \end{itemize}
  Hence, Proposition~\ref{prop:TranspositionOtherTempActions} holds. 
\end{proof}  

\subsection{Transpositions and Global Systems} \label{app:SymGlobalSys}
The proof strategy of Lemma~\ref{lem:SymGlobalSys} is similar to the one of Lemma~\ref{lem:SymProcTemp}, and the proof of Lemma~\ref{lem:SymGlobalSys} requires the following Propositions~\ref{prop:TranspositionAsBijectionInGlobal}, \ref{prop:SymInitGlobalConf}, \ref{prop:SymGlobalSmallActions}
and \ref{prop:SymGlobalTransitions}.

\recalllemma{lem:SymGlobalSys}{\lemSymGlobalSys}

\begin{proof} We have: point 1 holds by Proposition~\ref{prop:TranspositionAsBijectionInGlobal},  
  and point 2 holds by Proposition~\ref{prop:SymInitGlobalConf}, 
  and point 3 holds by Proposition~\ref{prop:SymGlobalSmallActions},
  and point 4 holds by Proposition~\ref{prop:SymGlobalTransitions}.
\end{proof}

\newcommand{\propTranspositionAsBijectionInGlobal}{
  Let $\gsys{N} = \left( \gconfs{N}, \gtr{N}, \grel{N}, \ginitconf{N} \right)$ be a global transition system with indexes $\nid$.
  Let $\transposition$ be a process--index transposition on $\nid$, and $\transposition_C$ be a global transposition  based on $\transposition$ (from Definition~\ref{def:TransposAndGlobalConf}).
  Then, $\transposition_C$ is a bijection from $\gconfs{N}$ to itself.
}

\begin{prop} \label{prop:TranspositionAsBijectionInGlobal}  
  \propTranspositionAsBijectionInGlobal
\end{prop}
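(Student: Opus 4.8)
The plan is to show that $\transposition_C$ is an \emph{involution}, i.e.\ that $\transposition_C(\transposition_C(\gconf{})) = \gconf{}$ for every $\gconf{} \in \gconfs{N}$; since an involution on a set is automatically a bijection onto that set, this immediately gives the claim. As in the proof of Proposition~\ref{prop:TranspositionAsBijectionInTemp}, write $\transposition = \swapij{i}{j}$ and assume $i < j$, since $\swapij{i}{j}$ and $\swapij{j}{i}$ denote the same permutation.

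First I would verify that $\transposition_C$ is a well-defined map $\gconfs{N} \to \gconfs{N}$. Given $\gconf{} \in \gconfs{N}$, the configuration $\switchGlobalConf{\gconf{}}{\transposition}$ of Definition~\ref{def:TransposAndGlobalConf} has, as its process components, the states $\switchTempConf{\getProc{\gconf{}}{i}}{\transposition}$, which lie in $\tempconfs{N}$ by Proposition~\ref{prop:TranspositionAsBijectionInTemp}(1); its $N^2$ buffer components and its $N$ control components are merely a reindexing, by $\transposition$, of the corresponding components of $\gconf{}$, hence still lie in $\set(\msg)^{N\cdot N}$ and $\Boolean^N$ respectively. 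Therefore $\switchGlobalConf{\gconf{}}{\transposition} \in (\tempconfs{N})^N \times \set(\msg)^{N\cdot N} \times \Boolean^N = \gconfs{N}$.

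The main step is to check $\transposition_C \circ \transposition_C = \mathrm{id}$ by applying the three clauses of Definition~\ref{def:TransposAndGlobalConf} twice. For the process states, clause~(1) applied twice gives $\getProc{\transposition_C(\transposition_C(\gconf{}))}{\transposition(\transposition(i))} = \switchTempConf{\switchTempConf{\getProc{\gconf{}}{i}}{\transposition}}{\transposition}$; now $\transposition(\transposition(i)) = i$ since $\transposition$ is a transposition, and $\switchTempConf{\switchTempConf{q}{\transposition}}{\transposition} = q$ for every $q \in \tempconfs{N}$ — this involution property of $\transposition_Q$ is exactly what was established inside the proof of Proposition~\ref{prop:TranspositionAsBijectionInTemp}, where it is shown that $\transposition_Q(\switchTempConf{q}{\transposition}) = q$. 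Hence $\getProc{\transposition_C(\transposition_C(\gconf{}))}{i} = \getProc{\gconf{}}{i}$ for all $i \in \nid$. For the buffers, clause~(2) applied twice gives $\getBuf{\transposition_C(\transposition_C(\gconf{}))}{\transposition(\transposition(s))}{\transposition(\transposition(r))} = \getBuf{\gconf{}}{s}{r}$, i.e.\ $\getBuf{\transposition_C(\transposition_C(\gconf{}))}{s}{r} = \getBuf{\gconf{}}{s}{r}$ using $\transposition(\transposition(s)) = s$ and $\transposition(\transposition(r)) = r$; clause~(3) handles the control components $\getActive{\cdot}{\cdot}$ in the same way. Thus $\transposition_C(\transposition_C(\gconf{})) = \gconf{}$, so $\transposition_C$ is a bijection from $\gconfs{N}$ to itself.

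I do not expect any genuine obstacle: this is the global analogue of Proposition~\ref{prop:TranspositionAsBijectionInTemp}, and the only thing requiring care is the index bookkeeping for the $N^2$ message buffers (the $s\cdot N + r$ addressing scheme) together with stating the ``apply-twice cancels'' step uniformly across process, buffer, and control components. If one prefers to mirror the template proof verbatim, one can instead prove surjectivity (given a target configuration, reindex its components by $\transposition$ to exhibit a preimage in $\gconfs{N}$) and injectivity (from $\transposition_C(\gconf{1}) = \transposition_C(\gconf{2})$, apply $\transposition_C$ once more) separately; both routes rest on the single fact that a transposition equals its own inverse, compounded with the already-known involutivity of $\transposition_Q$.
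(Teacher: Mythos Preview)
Your proof is correct and essentially mirrors the paper's approach: the paper simply writes ``By applying similar arguments in the proof of Proposition~\ref{prop:TranspositionAsBijectionInTemp}'', and your argument is precisely such an adaptation, with the minor streamlining that you package both injectivity and surjectivity into the single involution identity $\transposition_C \circ \transposition_C = \mathrm{id}$ rather than proving them separately as in Proposition~\ref{prop:TranspositionAsBijectionInTemp}. You already note this alternative at the end of your proposal, so there is nothing to add.
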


\begin{proof}
  By applying similar arguments in the proof of Proposition~\ref{prop:TranspositionAsBijectionInTemp}.
\end{proof}

\newcommand{\propSymInitGlobalConf}{
  Let $\gsys{N} = \left( \gconfs{N}, \gtr{N}, \grel{N}, \ginitconf{N} \right)$ be a global transition system with indexes $\nid$.
  Let $\transposition$ be a process--index transposition on $\nid$, and
  $\transposition_C$ be a global transposition based on $\transposition$ (from Definition~\ref{def:TransposAndGlobalConf}).
  It follows that $\switchGlobalConf{\ginitconf{N}}{\transposition} = \ginitconf{N}$.
}

\begin{prop} \label{prop:SymInitGlobalConf}
  \propSymInitGlobalConf
\end{prop}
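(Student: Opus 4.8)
The plan is to mirror the proof of Proposition~\ref{prop:SymInitTempConf} at the global level, combining the defining constraints of $\ginitconf{N}$, Definition~\ref{def:TransposAndGlobalConf}, and the fact that the transposition $\transposition$ is a bijection on $\nid$. Recall that $\ginitconf{N}$ is the \emph{unique} configuration satisfying (i)~$\lnot\getActive{\ginitconf{N}}{i}$ and $\getProc{\ginitconf{N}}{i}=\tempinitconf{N}$ for every $i\in\nid$, and (ii)~$\getBuf{\ginitconf{N}}{s}{r}=\emptyset$ for every $s,r\in\nid$. Since these constraints pin down every component of the configuration tuple, it suffices to check that $\switchGlobalConf{\ginitconf{N}}{\transposition}$ also satisfies (i) and (ii); uniqueness then yields $\switchGlobalConf{\ginitconf{N}}{\transposition}=\ginitconf{N}$.

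First I would treat the process components and the control components. By Definition~\ref{def:TransposAndGlobalConf}(1), for every $i\in\nid$ we have $\getProc{\switchGlobalConf{\ginitconf{N}}{\transposition}}{\switchIndex{\transposition}{i}} = \switchTempConf{\getProc{\ginitconf{N}}{i}}{\transposition} = \switchTempConf{\tempinitconf{N}}{\transposition}$, which equals $\tempinitconf{N}$ by Proposition~\ref{prop:SymInitTempConf}. Because $\transposition$ is a bijection on $\nid$, every index $k\in\nid$ is of the form $\switchIndex{\transposition}{i}$ for some $i$, so $\getProc{\switchGlobalConf{\ginitconf{N}}{\transposition}}{k}=\tempinitconf{N}$ for all $k\in\nid$, matching constraint (i). The same bijectivity argument applied to Definition~\ref{def:TransposAndGlobalConf}(3) and constraint (i) gives $\getActive{\switchGlobalConf{\ginitconf{N}}{\transposition}}{k}=\getActive{\ginitconf{N}}{i}=\false$ where $k=\switchIndex{\transposition}{i}$, i.e. all control components of $\switchGlobalConf{\ginitconf{N}}{\transposition}$ are false.

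Next I would treat the message buffers. By Definition~\ref{def:TransposAndGlobalConf}(2), for every $s,r\in\nid$ we have $\getBuf{\switchGlobalConf{\ginitconf{N}}{\transposition}}{\switchIndex{\transposition}{s}}{\switchIndex{\transposition}{r}} = \getBuf{\ginitconf{N}}{s}{r} = \emptyset$; since $\transposition$ is a bijection, every pair $(s',r')\in\nid\times\nid$ arises as $(\switchIndex{\transposition}{s},\switchIndex{\transposition}{r})$ for some $s,r$, so every buffer of $\switchGlobalConf{\ginitconf{N}}{\transposition}$ is empty, matching constraint (ii). Hence $\switchGlobalConf{\ginitconf{N}}{\transposition}$ satisfies (i) and (ii), and by uniqueness of the initial configuration $\switchGlobalConf{\ginitconf{N}}{\transposition}=\ginitconf{N}$.

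There is essentially no hard part: the argument is a routine unfolding of the definitions. The only point worth stating explicitly is the use of the bijectivity of $\transposition$ on $\nid$ to turn statements quantified over permuted indices $\switchIndex{\transposition}{i}$ into statements quantified over all of $\nid$ — exactly as in the proof of Proposition~\ref{prop:SymInitTempConf}. The only external input beyond the definitions is Proposition~\ref{prop:SymInitTempConf}.
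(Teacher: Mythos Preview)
Your proof is correct and takes essentially the same approach as the paper, which simply says ``by applying similar arguments in the proof of Proposition~\ref{prop:SymInitTempConf}''; you have spelled out those arguments in full, using Definition~\ref{def:TransposAndGlobalConf}, the bijectivity of $\transposition$, and Proposition~\ref{prop:SymInitTempConf} for the process-state components.
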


\begin{proof}
  By applying similar arguments in the proof of Proposition~\ref{prop:SymInitTempConf}.
\end{proof}

\newcommand{\propSymGlobalSmallActions}{
  Let $\gsys{N} = \left( \gconfs{N}, \gtr{N}, \grel{N}, \ginitconf{N} \right)$ be a global transition system with indexes $\nid$ and a process template $\mytemp{N} = (\tempconfs{N}, \temptr{N}, \temprel{N}, \tempinitconf{N})$.
  Let $\transposition$ be a process--index transposition on $\nid$, and
  $\transposition_C$ be a global transposition based on $\transposition$ (from Definition~\ref{def:TransposAndGlobalConf}).
  Let $\gconf{}$ and $\gnextconf{}$  be configurations in $\gsys{N}$, and $\mathit{tr} \in \gtr{N}$ be  an internal transition such that $\gconf{} \xrightarrow{\mathit{tr}} \gnextconf{}$.
  Then, $\switchGlobalConf{\gconf{}}{\transposition} \xrightarrow{\mathit{tr}} \switchGlobalConf{\gnextconf{}}{\transposition} $. 
}

\begin{prop} \label{prop:SymGlobalSmallActions}
  \propSymGlobalSmallActions
\end{prop}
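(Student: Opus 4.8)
The plan is to prove the claim by a case distinction on which of the four sub-rounds the internal transition $\mathit{tr}$ belongs to: $\transsched$, $\transsnd$, $\transrcv$, or $\transcomp$. In every case the defining constraints of the sub-round are (universally or existentially) quantified over process indices, or over pairs of indices, together with references to $\getProc{\cdot}{\cdot}$, $\getBuf{\cdot}{\cdot}{\cdot}$, $\getActive{\cdot}{\cdot}$, $\getPC{\cdot}$, and the template transition relation. Since $\transposition = \swapij{i}{j}$ is a bijection on $\nid$ (and hence so is the induced reindexing of ordered pairs), every such quantified statement is logically equivalent to the one obtained by renaming each index $k$ to $\transposition(k)$; so the proof reduces to checking that each individual conjunct is transported correctly along $\transposition_C$.

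Before the case distinction I would record three auxiliary facts that do the bulk of the work. First, by Definition~\ref{def:TransposAndGlobalConf} together with Definition~\ref{def:TransposAndTempConf} (which leaves the location component untouched) we get $\getPC{\getProc{\switchGlobalConf{\gconf{}}{\transposition}}{\transposition(k)}} = \getPC{\getProc{\gconf{}}{k}}$ and $\getActive{\switchGlobalConf{\gconf{}}{\transposition}}{\transposition(k)} = \getActive{\gconf{}}{k}$; consequently $\CanRun(\switchGlobalConf{\gconf{}}{\transposition}, \transposition(k), L) \Leftrightarrow \CanRun(\gconf{}, k, L)$ for every $k \in \nid$ and every set of locations $L$. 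Second, again by Definition~\ref{def:TransposAndGlobalConf}, $\getBuf{\switchGlobalConf{\gconf{}}{\transposition}}{\transposition(s)}{\transposition(r)} = \getBuf{\gconf{}}{s}{r}$ for all $s, r \in \nid$, so any (in)equality, union, emptiness, or membership assertion between buffers is preserved after renaming. Third, by Lemma~\ref{lem:SymProcTemp} (concretely Propositions~\ref{prop:TranspositionTempReceive} and~\ref{prop:TranspositionOtherTempActions}) the template transition relation is preserved under $\transposition_Q$ and $\transposition_R$: if $\getProc{\gconf{}}{k}$ makes a stutter, crash, send, or computation step to $\getProc{\gnextconf{}}{k}$, then their images make the same kind of step, and likewise for receive steps once the receive transition is rewritten by $\transposition_R$. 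Combining these, the predicates $\SFrozen$, $\RFrozen$, $\predsnd(\cdot,\cdot,i,m)$, and $\predrcv(\cdot,\cdot,i,S_1,\dots,S_N)$ hold at $(\switchGlobalConf{\gconf{}}{\transposition}, \switchGlobalConf{\gnextconf{}}{\transposition}, \transposition(i), \dots)$ exactly when the originals hold at $(\gconf{}, \gnextconf{}, i, \dots)$ --- with the delivered sets permuted by $\transposition_R$ in the receive case, which is harmless since they are existentially quantified.

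With these in hand each sub-round is routine. For $\transsched$: constraint (1a) transports because $\getActive{\cdot}{\cdot}$ is permuted; (1b) because stutter and crash template steps are preserved; (1c)--(1e) because they only speak about $\getPC{\cdot}$, $\getActive{\cdot}{\cdot}$, and buffer equalities/emptiness, all of which commute with $\transposition_C$. For $\transsnd$ and $\transrcv$: rewrite the ``$\forall i \in \nid$'' over constraints (a)--(c) by substituting $i \mapsto \transposition(i)$; the antecedents become $\CanRun(\switchGlobalConf{\gconf{}}{\transposition}, \transposition(i), \locsnd)$ (resp. $\locrcv$), which by the first auxiliary fact equals $\CanRun(\gconf{}, i, \locsnd)$, while the consequents become $\SFrozen$/$\predsnd$ (resp. $\RFrozen$/$\predrcv$) evaluated at the images, which by the third auxiliary fact match the originals; the existential witness $m$ (resp. $S_1,\dots,S_N$, reordered by $\transposition$) carries over. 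For $\transcomp$: identical in spirit but simpler, using only $\CanRun$, a computation/stutter template step, buffer equalities, and $\getActive{\cdot}{\cdot}$. In each case this shows the image pair satisfies the defining constraints of the same sub-round, i.e. $\switchGlobalConf{\gconf{}}{\transposition} \xrightarrow{\mathit{tr}} \switchGlobalConf{\gnextconf{}}{\transposition}$. I expect the main obstacle to be purely clerical: keeping the index bookkeeping straight in the Send and Receive sub-rounds, where each defining conjunct is itself a universally quantified statement about buffers indexed by the moving process, so one must ensure that the inner ``$\forall r \in \nid$'' of $\predsnd$ (resp. ``$\forall s \in \nid$'' of $\predrcv$) is reindexed consistently with the outer ``$\forall i \in \nid$''. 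There is no genuine mathematical difficulty once the three auxiliary facts are in place.
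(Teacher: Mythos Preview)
Your proposal is correct and follows essentially the same approach as the paper: a case distinction over the four sub-rounds, verifying in each case that the defining constraints transport along $\transposition_C$ using Definitions~\ref{def:TransposAndTempConf} and~\ref{def:TransposAndGlobalConf} together with Lemma~\ref{lem:SymProcTemp}. Your write-up is somewhat more economical in that you isolate the three auxiliary facts ($\CanRun$ invariance, buffer-assertion invariance, template-transition invariance) once and then apply them uniformly, whereas the paper re-derives these observations inline within each sub-round; but the underlying argument is the same.
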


\begin{proof}
  We prove Proposition~\ref{prop:SymGlobalSmallActions} by case distinction.
  
  \begin{enumerate}
  \item \emph{Sub-round Schedule.}   
  We prove that all Constraints (a)--(e) for the sub-round Schedule hold as follows.
  
  First, we focus on Constraint (a).
  By Proposition~\ref{prop:TranspositionAsBijectionInGlobal}, both $\switchGlobalConf{\gconf{}}{\transposition}$ and $\switchGlobalConf{\gnextconf{}}{\transposition}$ are configurations in $\gsys{N}$. 
  By Definition~\ref{def:TransposAndGlobalConf}, for every $i \in \nid$, we have 
  $ \getActive{\switchGlobalConf{\gconf{}}{\transposition}}{\switchIndex{\transposition}{i}} =\getActive{\gconf{}}{i}$.
  We have $\lnot \getActive{\gconf{}}{i}$ by the semantics of the sub-round Schedule in Section~\ref{sec:global-model}.
  It follows $\lnot \getActive{\switchGlobalConf{\gconf{}}{\transposition}}{\switchIndex{\transposition}{i}}$.
  Hence, the sub-round Schedule can start with a configuration $\switchGlobalConf{\gconf{0}}{\transposition}$. Constraint(a) holds.
    
  Now we focus on Constraint (b) by examining process transitions.     
  For every $i \in \nid$,  by Lemma~\ref{lem:SymProcTemp}, we have
    \begin{align*}
    \getProc{\gconf{}}{i} \tempstutter \getProc{\gnextconf{}}{i} 
    & \Rightarrow
    \switchTempConf{\getProc{\gconf{}}{i}}{\transposition} \tempstutter \switchTempConf{\getProc{\gnextconf{}}{i} }{\transposition} 
    \end{align*}
    By Definition~\ref{def:TransposAndGlobalConf}, it follows  $\switchTempConf{\getProc{\gconf{}}{i}}{\transposition} = \getProc{\switchGlobalConf{\gconf{}}{\transposition}}{\switchIndex{\transposition}{i}}$ and $\switchTempConf{\getProc{\gnextconf{}}{i} }{\transposition} = \getProc{\switchGlobalConf{\gnextconf{}}{\transposition}}{\switchIndex{\transposition}{i}}$.
    Hence, it follows
    \begin{align*}
    \getProc{\gconf{}}{i} & \tempstutter \getProc{\gnextconf{}}{i} \\
    & \Rightarrow \getProc{\switchGlobalConf{\gconf{}}{\transposition}}{\switchIndex{\transposition}{i}} 
    \tempstutter \getProc{\switchGlobalConf{\gnextconf{}}{\transposition}}{\switchIndex{\transposition}{i}} 
    \end{align*}
    By similar arguments, we have
    \begin{align*}
    \getProc{\gconf{}}{i} & \tempcrash{} \getProc{\gnextconf{}}{k} \\ 
    & \Rightarrow \getProc{\switchGlobalConf{\gconf{}}{\transposition}}{\switchIndex{\transposition}{i}} 
    \tempcrash{} \getProc{\switchGlobalConf{\gnextconf{}}{\transposition}}{\switchIndex{\transposition}{k}}
    \end{align*}
    Hence, every process makes either a crash transition or a stuttering step from a configuration $\switchGlobalConf{\gconf{}}{\transposition}$ to a configuration $\switchGlobalConf{\gnextconf{}}{\transposition}$. 
    Constraint (b) holds.
    
    We now focus on Constraint (c) by examining control components of crashed processes.
    Assume that $\getPC{\getProc{\gnextconf{}}{r}} = \loccrash$ for some $i \in \nid$.
    By the semantics of the sub-round Schedule in Section~\ref{sec:global-model}, we have $\lnot \getActive{\gnextconf{}}{i} $.
    By Definition~\ref{def:TransposAndGlobalConf}, it follows 
    $\lnot  \getActive{\switchGlobalConf{\gconf{}}{\transposition}}{\switchIndex{\transposition}{i}} \land \lnot  \getActive{\gnextconf{}}{i}$. 
    Constraint (c) hodls.
    
    Now we focus on Constraint (d) by examining incoming message buffers to correct processes.
    By Definition~\ref{def:TransposAndGlobalConf}, we have $\getBuf{\switchGlobalConf{\gnextconf{}}{\transposition}}{\switchIndex{\transposition}{s}}{\switchIndex{\transposition}{r}}  = \getBuf{\gnextconf{}}{s}{r}$ for every $s, r \in \nid$.     
    By the semantic of the sub-round Schedule in Section~\ref{sec:global-model}, if $\getPC{\getProc{\gnextconf{}}{r}} \neq \loccrash$, 
    , then $\getBuf{\gnextconf{}}{s}{r} = \getBuf{\gconf{}}{s}{r}$.
    By Definition~\ref{def:TransposAndGlobalConf}, we have
    $ \getBuf{\gconf{}}{s}{r} = \getBuf{\switchGlobalConf{\gconf{}}{\transposition}}{\switchIndex{\transposition}{s}}{\switchIndex{\transposition}{r}}$. 
    Hence, Constraint (d) holds since  $
    \getBuf{\switchGlobalConf{\gnextconf{}}{\transposition}}{\switchIndex{\transposition}{s}}{\switchIndex{\transposition}{r}} = \getBuf{\switchGlobalConf{\gconf{}}{\transposition}}{\switchIndex{\transposition}{s}}{\switchIndex{\transposition}{r}}  
    $
    
    Now we focus on Constraint (e) by 
    examining 
    incoming message buffers to a crashed process.
    Let $r$ be an index in $\nid$ such that $\getPC{\getProc{\gnextconf{}}{r}} = \loccrash$.
    By similar arguments in the above case of $\getPC{\getProc{\gnextconf{}}{r}} \neq \loccrash$,
    if $\getPC{\getProc{\gnextconf{}}{r}} = \loccrash$, then
    $   \getBuf{\switchGlobalConf{\gnextconf{}}{\transposition}}{\switchIndex{\transposition}{s}}{\switchIndex{\transposition}{r}}  = \getBuf{\gnextconf{}}{s}{r}
    $. 
    By the semantics of the sub-round Schedule in Section~\ref{sec:global-model}, we have
    $\getBuf{\gnextconf{}}{s}{r}  = \emptyset$.
    It follows 
    $\getBuf{\switchGlobalConf{\gnextconf{}}{\transposition}}{\switchIndex{\transposition}{s}}{\switchIndex{\transposition}{r}}  = \emptyset$. 
    Therefore, Constraint (e) holds.
        
    It implies $\switchGlobalConf{\gconf{}}{\transposition} \transsched \switchGlobalConf{\gnextconf{}}{\transposition}$.
    
    \item \emph{Sub-round Send.}
    We prove that all Constraints (a)--(c) for the sub-round Send  hold as follows. 
    By Definition~\ref{def:TransposAndGlobalConf}, we have 
    \begin{alignat*}{2}
     \getActive{\switchGlobalConf{\gconf{}}{\transposition}}{\switchIndex{\transposition}{i}} & = \getActive{\gconf{}}{i} \\
     \getPC{\getProc{\switchGlobalConf{\gconf{}}{\transposition}}{\switchIndex{\transposition}{i}}}  & = \getPC{\switchTempConf{\getProc{\gconf{}}{i}}{\transposition}} 
    \end{alignat*}
    By Definition~\ref{def:TransposAndTempConf}, we have
    $\getPC{\switchTempConf{\getProc{\gconf{}}{i}}{\transposition}} = \getPC{\getProc{\gconf{}}{i}}$
    for every $i \in \nid$.
    Hence, we have $\getPC{\getProc{\switchGlobalConf{\gconf{}}{\transposition}}{\switchIndex{\transposition}{i}}}  = \getPC{\getProc{\gconf{}}{i}}$.
    For every $i \in \nid$, we have $ \CanRun(\gconf{}, i, \locsnd) \Leftrightarrow \CanRun(\switchGlobalConf{\gconf{}}{\transposition}, \switchIndex{\transposition}{i}, \locsnd)$ 
	by the definition of $\CanRun$ in Section~\ref{sec:global-model}. 
    It implies that a process $\getProc{\gconf{}}{s}$ is enabled in this sub-round if and only if a process $\getProc{\switchGlobalConf{\gconf{}}{\transposition}}{\switchIndex{\transposition}{s}}$ is enabled in this sub-round for every $s \in \nid$.

    Now we focus on Constraint (a) by examining processes which are not enabled  in this sub-round Send.
    Let $i$ be an arbitrary index in $\nid$ such that $\lnot \CanRun(\gconf{}, i, \locsnd)$.
    By the semantics of the sub-round Send in Section~\ref{sec:global-model}, it follows that $\getProc{\gconf{}}{i} \transstutter \getProc{\gnextconf{}}{i}$.
    By Definition~\ref{def:TransposAndGlobalConf}, we have 
    \[
    \CanRun(\switchGlobalConf{\gconf{}}{\transposition}, \switchIndex{\transposition}{i}, \locsnd) 
    =  \CanRun(\gconf{}, i, \locsnd)
    \]
    Since $\lnot \CanRun(\gconf{}, i, \locsnd)$  (we are examining inactive processes in this sub-round Send), we have $\lnot  \CanRun(\switchGlobalConf{\gconf{}}{\transposition}, \switchIndex{\transposition}{i}, \locsnd)$.
    We prove that 
    \[
    \SFrozen(\switchGlobalConf{\gconf{}}{\transposition}, \switchGlobalConf{\gnextconf{}}{\transposition}, \switchIndex{\transposition}{i})
    \]
    as follows.
    By Definition~\ref{def:TransposAndGlobalConf}, we have 
    \begin{align*}
    \switchTempConf{\getProc{\gconf{}}{i}}{\transposition} & = \getProc{\switchGlobalConf{\gconf{}}{\transposition}}{\switchIndex{\transposition}{i}} \\
    \switchTempConf{\getProc{\gnextconf{}}{i}}{\transposition} & = \getProc{\switchGlobalConf{\gnextconf{}}{\transposition}}{\switchIndex{\transposition}{i}}
    \end{align*}
    
    By Proposition~\ref{prop:TranspositionOtherTempActions}, it follows that
    $
    \switchTempConf{\getProc{\gconf{}}{i}}{\transposition}
    \transstutter    
    \switchTempConf{\getProc{\gnextconf{}}{i}}{\transposition}
    $.    
    It follows
    $
    \getProc{\switchGlobalConf{\gconf{}}{\transposition}}{\switchIndex{\transposition}{i}} \transstutter \getProc{\switchGlobalConf{\gnextconf{}}{\transposition}}{\switchIndex{\transposition}{i}}
    $.
    We now examine the control component for a process $p_i$.
    By Definition~\ref{def:TransposAndGlobalConf}, we have
    \begin{align*}
    \getActive{\gconf{}}{i} & = \getActive{\switchGlobalConf{\gconf{}}{\transposition}}{\switchIndex{\transposition}{i}} \\
    \getActive{\gnextconf{}}{i} & = \getActive{\switchGlobalConf{\gnextconf{}}{\transposition}}{\switchIndex{\transposition}{i}}
    \end{align*}
    By definition of $\SFrozen$ in Section~\ref{sec:global-model}, we have $\getActive{\gconf{}}{i} = \getActive{\gnextconf{}}{i}$.
    It follows $\getActive{\switchGlobalConf{\gconf{}}{\transposition}}{\switchIndex{\transposition}{i}} 
    = \getActive{\switchGlobalConf{\gnextconf{}}{\transposition}}{\switchIndex{\transposition}{i}}$.
    We now show that  each outgoing message buffer from a process $p_i$ 
    is unchanged from $\switchGlobalConf{\gconf{}}{\transposition}$ to $\switchGlobalConf{\gnextconf{}}{\transposition}$.       
    By Definition~\ref{def:TransposAndGlobalConf}, we have $\getBuf{\switchGlobalConf{\gnextconf{}}{\transposition}}{\switchIndex{\transposition}{i}}{\switchIndex{\transposition}{\ell}} = \getBuf{\gnextconf{}}{i}{\ell}$.
    Since $\getBuf{\gnextconf{}}{i}{\ell}  = \getBuf{\gconf{}}{i}{\ell}$,  (we are examining inactive processes in this sub-round Send), it follows
    $\getBuf{\switchGlobalConf{\gnextconf{}}{\transposition}}{\switchIndex{\transposition}{i}}{\switchIndex{\transposition}{\ell}} = \getBuf{\gconf{}}{i}{\ell}$.
    By Definition~\ref{def:TransposAndGlobalConf}, we have 
    $\getBuf{\gconf{}}{i}{\ell} = \getBuf{\switchGlobalConf{\gconf{}}{\transposition}}{\switchIndex{\transposition}{i}}{\switchIndex{\transposition}{\ell}}$. 
    It follows that
    \begin{align*}
    \getBuf{\switchGlobalConf{\gnextconf{}}{\transposition}}{\switchIndex{\transposition}{i}}{\switchIndex{\transposition}{\ell}} = \getBuf{\switchGlobalConf{\gconf{}}{\transposition}}{\switchIndex{\transposition}{i}}{\switchIndex{\transposition}{\ell}}
    \end{align*}
    Therefore, it follows $\SFrozen(\switchGlobalConf{\gconf{}}{\transposition}, \switchGlobalConf{\gnextconf{}}{\transposition}, \switchIndex{\transposition}{i})$. Constraint (a) holds.

    Now we focus on Constraint (b) by examining processes which are enabled in this sub-round Send. 
    Let $s \in \nid$ be an arbitrary index such that $\getActive{\gconf{}}{i}$.
    By the semantics of the sub-round Send in Section~\ref{sec:global-model}, it follows $\getProc{\gconf{}}{s} \tempsnd{}{m} \getProc{\gnextconf{}}{s}$.
    We have $\getProc{\gconf{}}{s} = \getProc{\switchGlobalConf{\gconf{}}{\transposition}}{\switchIndex{\transposition}{s}}$ and $\getProc{\gnextconf{}}{s} = \getProc{\switchGlobalConf{\gnextconf{}}{\transposition}}{\switchIndex{\transposition}{s}}$ by Definition~\ref{def:TransposAndGlobalConf}.
    By Proposition~\ref{prop:TranspositionOtherTempActions}, it follows 
    \[
    \getProc{\switchGlobalConf{\gconf{}}{\transposition}}{\switchIndex{\transposition}{s}} \tempsnd{}{m} \getProc{\switchGlobalConf{\gnextconf{}}{\transposition}}{\switchIndex{\transposition}{s}}
    \]
    We show that $m$ is new in buffers $\getBuf{\switchGlobalConf{\gconf{}}{\transposition}}{\switchIndex{\transposition}{s}}{1}, \ldots,$ $\getBuf{\switchGlobalConf{\gconf{}}{\transposition}}{\switchIndex{\transposition}{s}}{1}$.
    By Definition~\ref{def:TransposAndGlobalConf}, for every $s, r \in \nid$, we have
    \begin{align*}
    \getBuf{\switchGlobalConf{\gconf{}}{\transposition}}{\switchIndex{\transposition}{s}}{\switchIndex{\transposition}{r}} & = \getBuf{\gconf{}}{s}{r} \\ \getBuf{\switchGlobalConf{\gnextconf{}}{\transposition}}{\switchIndex{\transposition}{s}}{\switchIndex{\transposition}{r}}  & = \getBuf{\gnextconf{}}{s}{r}     
    \end{align*}
    
    We have $m \notin \getBuf{\gconf{}}{s}{r}$ and $m \in \getBuf{\gnextconf{}}{s}{r}$ by the semantics of the sub-round Send in Section~\ref{sec:global-model}.
    It follows 
    \[
    m \notin \getBuf{\switchGlobalConf{\gconf{}}{\transposition}}{\switchIndex{\transposition}{s}}{\switchIndex{\transposition}{r}} \quad \text{and} \quad m \in \getBuf{\switchGlobalConf{\gconf{}}{\transposition}}{\switchIndex{\transposition}{s}}{\switchIndex{\transposition}{r}}
    \]
    In other words, the message $m$ is new in the buffer $\getBuf{\switchGlobalConf{\gconf{}}{\transposition}}{\switchIndex{\transposition}{s}}{\switchIndex{\transposition}{r}}$.     
    As a result, Constraint (b) holds.
    
    Now we focus on Constraint (c). 
    Let $s \in \nid$ be an arbitrary index such that $\CanRun(\gconf{}, i, \locsnd) = \True$.
    By arguments at the beginning of the proof of Proposition~\ref{prop:ProjGlobalInternalTransitions}, we have $\CanRun(\switchGlobalConf{\gconf{}}{\transposition}, \switchIndex{\transposition}{i}, \locsnd)$.
    By the semantics of the sub-round Send in Section~\ref{sec:global-model}, we have $\lnot \getActive{\gnextconf{}}{i}$.
    By Definition~\ref{def:TransposAndGlobalConf}, we have $\getActive{\gnextconf{}}{i} = \getActive{\switchGlobalConf{\gnextconf{}}{\transposition}}{\switchIndex{\transposition}{i}}$.
    It follows $\lnot \getActive{\switchGlobalConf{\gnextconf{}}{\transposition}}{\switchIndex{\transposition}{i}}$.
    Constraint (c) holds.
    
    It implies that $\switchGlobalConf{\gconf{}}{\transposition} \transsnd \switchGlobalConf{\gnextconf{}}{\transposition}$.
    
    \item \emph{Sub-round Receive.} By similar arguments in the case of the sub-round Send, we have that Constraints (a) and (c) in the sub-round Receive holds.
    In the following, we focus on Constraint (b).
    By similar arguments in the case of the sub-round Send, we have that a process $\getProc{\gconf{}}{s}$ is enabled in this sub-round Receive if and only if a process $\getProc{\switchGlobalConf{\gconf{}}{\transposition}}{\switchIndex{\transposition}{s}}$ is enabled in this sub-round Receive for every $s \in \nid$.
    Hence, we focus on  processes which are enabled in this sub-round Receive. Let $r \in \nid$ be an index such that $\CanRun(\gconf{}, i, \locrcv)$.
    By the semantics of the sub-round Receive in Section~\ref{sec:global-model}, we have
    $\getProc{\gconf{}}{r} \temprcv{}{S_1}{S_N} \getProc{\gnextconf{}}{r} $ for some sets  $S_1, \ldots, S_N \subseteq \set(\msg)$ of messages.
    By Definition~\ref{def:TransposAndTempConf}, we have 
    \begin{align*}
      \switchTempConf{\getProc{\gconf{}}{r}}{\transposition} & = \getProc{\switchGlobalConf{\gconf{}}{\transposition}}{\switchIndex{\transposition}{r}} \\
      \switchTempConf{\getProc{\gnextconf{}}{r}}{\transposition} & = \getProc{\switchGlobalConf{\gnextconf{}}{\transposition}}{\switchIndex{\transposition}{r}}
    \end{align*}
    By Proposition~\ref{prop:TranspositionTempReceive}, it follows
    \[
    \getProc{\switchGlobalConf{\gconf{}}{\transposition}}{\switchIndex{\transposition}{r}} \switchRcv{S_1}{S_N}{\transposition} \getProc{\switchGlobalConf{\gnextconf{}}{\transposition}}{\switchIndex{\transposition}{r}}
    \]
    Now we focus on the update of message buffers. 
    By the semantics of the sub-round Receive in Section~\ref{sec:global-model}, we have $S_s \subseteq \getBuf{\gconf{}}{s}{r}$ for every $s \in \nid$.
    By Definition~\ref{def:TransposAndGlobalConf}, we have $\getBuf{\gconf{}}{s}{r} = \getBuf{\switchGlobalConf{\gconf{}}{\transposition}}{\switchIndex{\transposition}{s}}{\switchIndex{\transposition}{r}}$.
    It follows that $S_s \subseteq \getBuf{\switchGlobalConf{\gconf{}}{\transposition}}{\switchIndex{\transposition}{s}}{\switchIndex{\transposition}{r}}$ for every $s \in \nid$.
    We now prove that for every $s \in \nid$, $S_s$ is removed from the message buffer $\getBuf{\switchGlobalConf{\gnextconf{}}{\transposition}}{\switchIndex{\transposition}{s}}{\switchIndex{\transposition}{r}}$.
    By Definition~\ref{def:TransposAndGlobalConf}, we have
    \begin{align*}
    \getBuf{\switchGlobalConf{\gnextconf{}}{\transposition}}{\switchIndex{\transposition}{s}}{\switchIndex{\transposition}{r}} & = \getBuf{\gnextconf{}}{s}{r}  \\
    \getBuf{\switchGlobalConf{\gconf{}}{\transposition}}{\switchIndex{\transposition}{s}}{\switchIndex{\transposition}{r}}  & = \getBuf{\gconf{}}{s}{r}  
    \end{align*}
    By the semantics of the sub-round Receive in Section~\ref{sec:global-model}, we have
    \begin{align*}
    S_s & \cap \getBuf{\gnextconf{}}{s}{r} = \emptyset \\
    \getBuf{\gconf{}}{s}{r} & = \getBuf{\gnextconf{}}{s}{r} \cup S_s
    \end{align*}
    It follows that 
    \begin{align*}
    S_s & \cap \getBuf{\switchGlobalConf{\gnextconf{}}{\transposition}}{\switchIndex{\transposition}{s}}{\switchIndex{\transposition}{r}} = \emptyset \\
    \getBuf{\switchGlobalConf{\gconf{}}{\transposition}}{\switchIndex{\transposition}{s}}{\switchIndex{\transposition}{r}} & = \getBuf{\switchGlobalConf{\gnextconf{}}{\transposition}}{\switchIndex{\transposition}{s}}{\switchIndex{\transposition}{r}} \cup S_s
    \end{align*}
    Constraint (b) holds. 
    It implies that 
    $\switchGlobalConf{\gnextconf{}}{\transposition}  \transrcv \switchGlobalConf{\gconf{}}{\transposition}$.

    \item \emph{Sub-round Computation.} By applying similar arguments in above sub-rounds.

  \end{enumerate}
  Therefore, Proposition~\ref{prop:SymGlobalSmallActions} holds. 
\end{proof}

\newcommand{\propSymGlobalTransitions}{
  Let $\gsys{N} = \left( \gconfs{N}, \gtr{N}, \grel{N}, \ginitconf{N} \right)$ be a global transition system with indexes $\nid$ and a process template $\mytemp{N} = (\tempconfs{N}, \temptr{N}, \temprel{N}, \tempinitconf{N})$.
  Let $\gconf{}$ and $\gnextconf{}$  be configurations in $\gsys{N}$ such that $\gconf{} \leadsto \gnextconf{}$.
  Let $\transposition$ be a transposition on $\nid$, and
  $\transposition_C$ be a global transposition based on $\transposition$ (from Definition~\ref{def:TransposAndGlobalConf}).  
  It follows  $\switchGlobalConf{\gconf{}}{\transposition} \leadsto \switchGlobalConf{\gnextconf{}}{\transposition} $. 
}

\begin{prop} \label{prop:SymGlobalTransitions}
  \propSymGlobalTransitions
\end{prop}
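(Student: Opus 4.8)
The plan is to derive this statement immediately from Proposition~\ref{prop:SymGlobalSmallActions}, which already shows that each of the four internal sub-round transitions is preserved by a global transposition, together with the fact that a global round transition is by definition a composition of four such internal transitions. First I would unfold $\gconf{} \leadsto \gnextconf{}$ according to the definition of global round transitions in Section~\ref{sec:global-model}: there exist intermediate configurations $\gconf{1}, \gconf{2}, \gconf{3} \in \gconfs{N}$ with $\gconf{} \transsched \gconf{1} \transsnd \gconf{2} \transrcv \gconf{3} \transcomp \gnextconf{}$.

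Next I would apply Proposition~\ref{prop:SymGlobalSmallActions} to each of these four internal transitions separately. Since that proposition covers all of the sub-rounds Schedule, Send, Receive, and Computation, it yields $\switchGlobalConf{\gconf{}}{\transposition} \transsched \switchGlobalConf{\gconf{1}}{\transposition}$, $\switchGlobalConf{\gconf{1}}{\transposition} \transsnd \switchGlobalConf{\gconf{2}}{\transposition}$, $\switchGlobalConf{\gconf{2}}{\transposition} \transrcv \switchGlobalConf{\gconf{3}}{\transposition}$, and $\switchGlobalConf{\gconf{3}}{\transposition} \transcomp \switchGlobalConf{\gnextconf{}}{\transposition}$. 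By Proposition~\ref{prop:TranspositionAsBijectionInGlobal}, the images $\switchGlobalConf{\gconf{1}}{\transposition}, \switchGlobalConf{\gconf{2}}{\transposition}, \switchGlobalConf{\gconf{3}}{\transposition}$ are again configurations in $\gconfs{N}$, so they form a legitimate witness sequence. Composing the four steps gives $\switchGlobalConf{\gconf{}}{\transposition} \leadsto \switchGlobalConf{\gnextconf{}}{\transposition}$, which is the claim.

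I expect no real obstacle here, since the substantive symmetry argument is carried out in Proposition~\ref{prop:SymGlobalSmallActions}; this proposition is merely the remark that a composition of symmetry-preserving internal transitions is itself symmetry preserving. The only thing to double-check is consistency: the same global transposition $\transposition_C$ must be applied at every intermediate configuration, which is automatic because $\transposition_C$ is a single well-defined (indeed bijective) map on $\gconfs{N}$. Clean crashes need no separate treatment, as they can occur only in the Schedule sub-round and that case is already handled within Proposition~\ref{prop:SymGlobalSmallActions}.
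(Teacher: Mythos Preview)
Your proposal is correct and follows essentially the same approach as the paper: both derive the result directly from Proposition~\ref{prop:SymGlobalSmallActions} applied to each of the four internal sub-round transitions that compose a global round transition. The paper's proof is terser (it just cites Proposition~\ref{prop:SymGlobalSmallActions} together with the preservation of the $\activeFncName$ components under $\transposition_C$), whereas you spell out the decomposition and the use of Proposition~\ref{prop:TranspositionAsBijectionInGlobal} explicitly, but the underlying argument is identical.
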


\begin{proof}
It immediately follows by Proposition~\ref{prop:SymGlobalSmallActions} and the fact that for all $i \in \nid$, we have $ \getActive{\switchGlobalConf{\gconf{}}{\transposition}}{\switchIndex{\transposition}{i}} = \getActive{\gconf{}}{i}$. 
\end{proof}

\subsection{Trace Equivalence of $\gsys{2}$ and $\gsys{N}$ under $AP_{\{1, 2\}}$} \label{app:TraceEquivalenceTwo}

Recall that  $\gsys{2}$ and $\gsys{N}$ are two global transition systems of 2 and $N$ processes, respectively, such that every correct process runs the same arbitrary~\classname{} algorithm, and a set $AP_{\{1, 2\}}$ contains predicates 
that takes one of the forms: $P_{1}(1)$, $P_{2}(2)$, $P_{3}(1, 2)$, or $P_{4}(2, 1)$ where $1$ and $2$ are process indexes.

\begin{prop} \label{prop:IndexTempConfProjTypeOK}
  Let $\mathcal{A}$ be an arbitrary~\classname{} algorithm.
  Let $\mytemp{2}$ and $\mytemp{N}$ be two process templates of $\mathcal{A}$ for some $N \geq 2$, and 
  $\tempconf{}^N \in \tempconfs{N}$ be a state of $\mytemp{N}$.
  Let $\tempconf{}^2$ be a tuple that is the application of Construction \ref{const:TempConfProj} to $\tempconf{}^N$ and indexes $\{1, 2\}$.
  Then, $\tempconf{}^2$ is a template state of $\mytemp{2}$.
\end{prop}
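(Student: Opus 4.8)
The plan is to verify directly that the tuple produced by Construction~\ref{const:TempConfProj} has the shape required of a state of $\mytemp{2}$. Recall that, by the definition of a process template in Section~\ref{sec:temp-model}, the state set of $\mytemp{2}$ is $\tempconfs{2} = \loc \times \set(\msg) \times \set(\msg) \times \tempdatatype \times \tempdatatype$, i.e.\ the $5$-tuples $(\ell, S_1, S_2, d_1, d_2)$ with $\ell \in \loc$, $S_1, S_2 \in \set(\msg)$, and $d_1, d_2 \in \tempdatatype$. Since $\tempconf{}^2 = (pc_1, rcvd_1, rcvd_2, v_1, v_2)$ is already a $5$-tuple by construction, it suffices to check that each of its five components lies in the appropriate component set, and every such fact follows from the hypothesis $\tempconf{}^N \in \tempconfs{N}$ together with the definitions of the accessor functions $\pcFncName$, $\boxFncName$, and $\datumFncName$.

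First I would note that $N \geq 2$ is part of the hypothesis, so $1$ and $2$ are legitimate elements of $\nid$ and all the components of $\tempconf{}^N$ referenced by Construction~\ref{const:TempConfProj} actually exist. Then I would argue component by component. We have $pc_1 = \getPC{\tempconf{}^N} = \tempconf{}^N(1)$, which lies in $\loc$ because $\tempconf{}^N$ is a state of $\mytemp{N}$. For $i \in \{1, 2\}$, $rcvd_i = \getBox{\tempconf{}^N}{i} = \tempconf{}^N(1 + i) \in \set(\msg)$, again by the definition of $\tempconfs{N}$; and $v_i = \getDatum{\tempconf{}^N}{i} = \tempconf{}^N(1 + N + i) \in \tempdatatype$ for the same reason. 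Combining these observations, $\tempconf{}^2 \in \loc \times \set(\msg) \times \set(\msg) \times \tempdatatype \times \tempdatatype = \tempconfs{2}$, which is exactly the claim.

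There is no genuine obstacle here: the statement is a pure type-correctness check. The only point worth spelling out is the indexing bookkeeping — that the index projection onto $\{1, 2\}$ keeps exactly the program-counter slot, the first and second message-box slots, and the first and second local-variable slots, so that the resulting tuple has precisely the $2 \cdot 2 + 1 = 5$ components in the order demanded by the definition of $\tempconfs{2}$. Once that correspondence is made explicit the proof is immediate, and an analogous argument serves for the global-configuration counterpart in Construction~\ref{const:GlobalConfProj}.
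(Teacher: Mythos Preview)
Your proposal is correct and matches the paper's approach: the paper dismisses this proposition with a one-line ``It immediately follows by Construction~\ref{const:TempConfProj},'' and your argument is exactly the type-correctness check that makes that line precise. You have simply unpacked what the paper leaves implicit.
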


\begin{proof}
It immediately follows by Construction~\ref{const:TempConfProj}.
\end{proof}

\begin{prop} \label{prop:IndexGlobalConfProjTypeOK}
  Let $\mathcal{A}$ be an arbitrary~\classname{} algorithm.
  Let $\gsys{2}$ and $\gsys{N}$ be two global transition systems of two instances of $\mathcal{A}$ for some $N \geq 2$, and 
  $\gconf{}^N \in \gconfs{N}$ be a global configuration in $\gsys{N}$.
  Let $\gconf{}^2$ be a tuple that is the application of Construction \ref{const:GlobalConfProj} to $\gconf{}^N$ and indexes $\{1, 2\}$.
  Then, $\gconf{}^2$ is a global configuration of $\gsys{2}$.
\end{prop}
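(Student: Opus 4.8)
The plan is to verify that the eight-component tuple $\gconf{}^2 = (s_1, s_2, buf^1_1, buf^2_1, buf^1_2, buf^2_2, act_1, act_2)$ produced by Construction~\ref{const:GlobalConfProj} lies in the set $\gconfs{2} = (\tempconfs{2})^2 \times \set(\msg)^{2 \cdot 2} \times \Boolean^2$ of global configurations of $\gsys{2}$; that is, I would check that the first two components are template states of $\mytemp{2}$, the following four are sets of messages, and the last two are Booleans, and that the components sit in the positions prescribed by the access functions $\procFncName$, $\bufFncName$, $\activeFncName$ for $\gsys{2}$ from Section~\ref{sec:global-model}.

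First I would handle the process components: by item~1 of Construction~\ref{const:GlobalConfProj}, each $s_i$ (for $i \in \{1, 2\}$) is obtained by applying Construction~\ref{const:TempConfProj} to the template state $\getProc{\gconf{}^N}{i}$ of $\mytemp{N}$, and hence $s_i \in \tempconfs{2}$ by Proposition~\ref{prop:IndexTempConfProjTypeOK}. Second, for the message buffers: by item~2 of Construction~\ref{const:GlobalConfProj}, $buf^r_s = \getBuf{\gconf{}^N}{s}{r}$ for $s, r \in \{1, 2\}$, and since $\gconf{}^N \in \gconfs{N}$, each such buffer belongs to $\set(\msg)$ by the definition of $\gconfs{N}$. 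Third, for the control components: by item~3 of Construction~\ref{const:GlobalConfProj}, $act_i = \getActive{\gconf{}^N}{i} \in \Boolean$ for $i \in \{1, 2\}$. Assembling these three observations gives $\gconf{}^2 \in \gconfs{2}$, i.e. $\gconf{}^2$ is a global configuration of $\gsys{2}$.

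I expect no genuine obstacle here: the statement is a direct type-checking argument, and the only non-trivial ingredient---that the projected process states are well-typed---has already been isolated as Proposition~\ref{prop:IndexTempConfProjTypeOK}, so the proof should be as short as that of Proposition~\ref{prop:IndexTempConfProjTypeOK}. The one point deserving slight care is the indexing of the buffer components: one must confirm that the four buffers $buf^1_1, buf^2_1, buf^1_2, buf^2_2$ occupy exactly the positions dictated by $\getBuf{\gconf{}^2}{s}{r} = \gconf{}^2(s \cdot 2 + r)$, so that $\getBuf{\gconf{}^2}{s}{r} = \getBuf{\gconf{}^N}{s}{r}$ for all $s, r \in \{1, 2\}$; this is immediate from the order in which Construction~\ref{const:GlobalConfProj} lists the components, and likewise for the process and control components.
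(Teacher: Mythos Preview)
Your proposal is correct and follows exactly the same approach as the paper, which dispatches the proposition in a single line (``It immediately follows by Construction~\ref{const:GlobalConfProj}''). Your expanded type-checking argument---invoking Proposition~\ref{prop:IndexTempConfProjTypeOK} for the process components and reading off the buffer and control components from the definition of $\gconfs{N}$---is simply a more explicit version of that one-line justification.
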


\begin{proof}
  It immediately follows by Construction~\ref{const:GlobalConfProj}.
\end{proof}

\recalllemma{lem:IDProjPath}{\lemIDProjPath}

The proof of Lemma~\ref{lem:IDProjPath}  requires the following propositions:
\begin{enumerate}
  \item Proposition~\ref{prop:ProjInitTempConf} says that the application of Construction \ref{const:TempConfProj} to an initial template state of $\gsys{N}$ constructs an initial template state of $\gsys{2}$. 
  
  \item Lemmas \ref{prop:ProjTempReceive} and \ref{prop:ProjTempOtherActions} say that Construction \ref{const:TempConfProj} preserves the process transition relation.
  
  \item Proposition~\ref{prop:ProjInitGlobalConf} says that the application of Construction \ref{const:GlobalConfProj} to an initial global configuration of $\gsys{N}$ constructs an initial global configuration of $\gsys{2}$. 
  
  \item Propositions~\ref{prop:ProjGlobalInternalTransitions} and \ref{prop:ProjGlobalObservableTransitions} 
  say Construction \ref{const:GlobalConfProj} preserves the global transition relation.
  Proposition~\ref{prop:ProjGlobalInternalTransitions} captures internal transitions.
  Proposition~\ref{prop:ProjGlobalObservableTransitions} captures round transitions.
\end{enumerate}

{
  \emph{Proof of Lemma~\ref{lem:IDProjPath}.} It is easy to check that Lemma~\ref{lem:IDProjPath} holds by Propositions \ref{prop:ProjInitTempConf}, \ref{prop:ProjTempOtherActions},  \ref{prop:ProjInitGlobalConf}, \ref{prop:ProjGlobalInternalTransitions}, and \ref{prop:ProjGlobalObservableTransitions}. 
  The detailed proofs of these propositions are given below. $\qedhere$  
}

\newcommand{\propProjInitTempConf}{
  Let $\mathcal{A}$ be an arbitrary~\classname{} algorithm.
  Let $\mytemp{N} = (\tempconfs{N}, \temptr{N}, \temprel{N}, \tempinitconf{N}), \mytemp{2} = (\tempconfs{2}, \temptr{2}, \temprel{2}, \tempinitconf{2})$ be two process templates  of $\mathcal{A}$ for some $N \geq 2$.
  It follows that $\tempinitconf{2}$ is the index projection of $\tempinitconf{N}$ on indexes $\{1, 2\}$.
}

\begin{prop} \label{prop:ProjInitTempConf}
  \propProjInitTempConf
\end{prop}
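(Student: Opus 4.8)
The statement is a direct unfolding of the definition of the initial template state (Section~\ref{sec:temp-model}) together with the definition of the index projection (Construction~\ref{const:TempConfProj}). First I would observe that, since both $\mytemp{2}$ and $\mytemp{N}$ are process templates of the \emph{same} algorithm $\mathcal{A}$, they share the same initialization data $\ell_0 \in \loc$ and $d_0 \in \tempdatatype$; by the definition of the initial state in Section~\ref{sec:temp-model} this gives
\[
\tempinitconf{N} = \big(\ell_0, \underbrace{\emptyset, \ldots, \emptyset}_{N}, \underbrace{d_0, \ldots, d_0}_{N}\big), \qquad \tempinitconf{2} = (\ell_0, \emptyset, \emptyset, d_0, d_0).
\]

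Next I would apply Construction~\ref{const:TempConfProj} to $\tempconf{}^N \coloneq \tempinitconf{N}$ with the index set $\{1,2\}$, obtaining a tuple $(pc_1, rcvd_1, rcvd_2, v_1, v_2)$, and evaluate each component against the form of $\tempinitconf{N}$ above: $pc_1 = \getPC{\tempinitconf{N}} = \ell_0$; for every $i \in \{1,2\}$ we get $rcvd_i = \getBox{\tempinitconf{N}}{i} = \emptyset$; and for every $i \in \{1,2\}$ we get $v_i = \getDatum{\tempinitconf{N}}{i} = d_0$. Thus the projected tuple is $(\ell_0, \emptyset, \emptyset, d_0, d_0)$, which by Proposition~\ref{prop:IndexTempConfProjTypeOK} is a state of $\mytemp{2}$ and, comparing componentwise, is exactly $\tempinitconf{2}$. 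Hence $\tempinitconf{2}$ is the index projection of $\tempinitconf{N}$ on indexes $\{1,2\}$.

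There is essentially no obstacle: the only point worth stating explicitly is that being an instance of the \emph{same} algorithm $\mathcal{A}$ forces the two templates to agree on $\ell_0$ and on $d_0$, which is what makes the projected tuple land precisely on $\tempinitconf{2}$ rather than on some other (valid) state of $\mytemp{2}$.
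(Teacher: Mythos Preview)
Your proposal is correct and follows exactly the approach the paper takes: the paper's proof is the one-line remark that the result ``follows by Construction~\ref{const:TempConfProj} and the definition of $\tempinitconf{2}$ in Section~\ref{sec:temp-model},'' and you have simply made that unfolding explicit. The extra observation about $\ell_0$ and $d_0$ being shared across templates of the same algorithm is a reasonable clarification, but not something the paper spells out.
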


\begin{proof}
  It follows by Construction \ref{const:TempConfProj} and the definition of $\tempinitconf{2}$ in Section~\ref{sec:temp-model}. 
\end{proof}

\newcommand{\propProjTempReceive}{
  Let $\mathcal{A}$ be an arbitrary~\classname{} algorithm, and $\mytemp{2}$ and $\mytemp{N}$ be process templates of $\mathcal{A}$.  
  Let  $\tempconf{0}$ and $\tempconf{1}$  be template states in $\mytemp{N}$ such that $\tempconf{0} \temprcv{}{S_1}{S_N} \tempconf{1}$ for some sets $S_1, \ldots, S_N$ of messages . 
  Let $\tempnextconf{0}$ and $\tempnextconf{1}$ be template states of $\mytemp{2}$ such that they are constructed with Construction \ref{const:GlobalConfProj} and based on configurations $\tempconf{0}$ and $\tempconf{1}$ and indexes $\{1, 2\}$.
  It follows  $\tempnextconf{0} \xrightarrow{\mathit{rcv}(S_1, S_2)} \tempnextconf{1}$.  
}

\begin{prop} \label{prop:ProjTempReceive}
  \propProjTempReceive
\end{prop}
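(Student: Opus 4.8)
The plan is to verify directly that the triple $\big(\tempnextconf{0},\, \mathit{rcv}(S_1, S_2),\, \tempnextconf{1}\big)$ meets the three defining constraints (a)--(c) of a receive transition in $\mytemp{2}$, exactly as listed in Section~\ref{sec:temp-model}. The one structural fact I would invoke throughout is that $\mytemp{2}$ and $\mytemp{N}$ are templates of the \emph{same} algorithm $\mathcal{A}$, so they share the location set $\loc = \locsnd \cup \locrcv \cup \loccomp \cup \{\loccrash\}$ and the three functions $\nextLoc$, $\genMsg$, $\nextVar$; only the arity of the message-set and local-variable components differs. Construction~\ref{const:TempConfProj} then supplies the bridge I need: for $k \in \{0, 1\}$ I have $\getPC{\tempnextconf{k}} = \getPC{\tempconf{k}}$, and for every $i \in \{1, 2\}$ I have $\getBox{\tempnextconf{k}}{i} = \getBox{\tempconf{k}}{i}$ and $\getDatum{\tempnextconf{k}}{i} = \getDatum{\tempconf{k}}{i}$.

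For constraint (a), I would combine these with the assumption $\tempconf{0} \temprcv{}{S_1}{S_N} \tempconf{1}$: $\getPC{\tempnextconf{0}} = \getPC{\tempconf{0}} \in \locrcv$, then $\getPC{\tempnextconf{1}} = \getPC{\tempconf{1}} = \nextLoc(\getPC{\tempconf{0}}) = \nextLoc(\getPC{\tempnextconf{0}})$, and finally $\genMsg(\getPC{\tempnextconf{0}}) = \genMsg(\getPC{\tempconf{0}}) = \emptyset$. For constraint (b), for each $i \in \{1, 2\}$ the corresponding constraint of the receive transition in $\mytemp{N}$ (which holds for all indices in $\nid$, hence in particular for $1$ and $2$) gives $\getBox{\tempnextconf{1}}{i} = \getBox{\tempconf{1}}{i} = \getBox{\tempconf{0}}{i} \cup S_i = \getBox{\tempnextconf{0}}{i} \cup S_i$. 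Constraint (c) is handled the same way: restricting the universally quantified constraint (c) of $\mytemp{N}$ to $\{1, 2\}$ yields $\getDatum{\tempnextconf{1}}{i} = \getDatum{\tempconf{1}}{i} = \nextVar\big(\getPC{\tempconf{0}}, S_i, \getDatum{\tempconf{0}}{i}\big) = \nextVar\big(\getPC{\tempnextconf{0}}, S_i, \getDatum{\tempnextconf{0}}{i}\big)$ for $i \in \{1, 2\}$. With (a)--(c) established, the definition of the receive transition relation of $\mytemp{2}$ gives $\tempnextconf{0} \xrightarrow{\mathit{rcv}(S_1, S_2)} \tempnextconf{1}$.

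I do not anticipate a genuine obstacle here: the proof is a mechanical unfolding of Construction~\ref{const:TempConfProj} together with the per-index constraints of the receive transition. The only point worth stating with care is that restricting the universally quantified constraints of the $\mytemp{N}$-transition to the indices $1$ and $2$ is legitimate precisely because $\nextLoc$, $\genMsg$, $\nextVar$ and the location sets are literally the same objects in the two templates (both instantiate $\mathcal{A}$). The companion proposition for the remaining transition kinds (send, local computation, crash, stutter) will follow the identical template, with the sets $S_i$ replaced by $\emptyset$ wherever a receive-specific component appears.
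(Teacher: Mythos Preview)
Your proposal is correct and follows essentially the same approach as the paper: both verify constraints (a)--(c) of the receive transition in $\mytemp{2}$ directly, using Construction~\ref{const:TempConfProj} to identify the program counter, the per-index boxes, and the per-index local variables of $\tempnextconf{k}$ with those of $\tempconf{k}$, and then restricting the universally quantified constraints of the $\mytemp{N}$-transition to the indices $1,2$. Your version is in fact slightly more explicit in naming the key structural fact that $\nextLoc$, $\genMsg$, $\nextVar$ are shared between the two templates, which the paper relies on tacitly.
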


\begin{proof}
  We prove that all Constraints (a)--(c) for the transition $\csnd$ defined in Section~\ref{sec:temp-model} hold.
  First, we focus on Constraint (a). We have $\getPC{\switchTempConf{\tempconf{1}}{\transposition}} 
  =   \getPC{\tempconf{1}}$ by Definition~\ref{def:TransposAndTempConf}.
  We have
  $\getPC{\tempconf{1}}  = \nextLoc(\getPC{\tempconf{0}})$ by the semantics of $\crcv(S_1, \ldots, S_N)$ in Section~\ref{sec:temp-model}.
  We have $\nextLoc(\getPC{\tempconf{0}}) = \nextLoc(\getPC{\tempnextconf{0}})$ by Definition~\ref{def:TransposAndTempConf}.
  Hence, it follows $
  \getPC{\switchTempConf{\tempconf{1}}{\transposition}} = \nextLoc(\getPC{\switchTempConf{\tempconf{0}}{\transposition}})$.
  Moreover, we have $\emptyset  = \genMsg(\getPC{\tempconf{0}})$ by the semantics of $\crcv$ in Section~\ref{sec:temp-model}. 
  By Construction~\ref{const:TempConfProj}, we have $\genMsg(\getPC{\tempconf{0}}) = \genMsg(\getPC{\tempnextconf{0}})$. 
  It follows that
  $\genMsg(\getPC{\tempnextconf{0}}) = \emptyset$.
  Constraint (a) holds.

  We now check components related to received messages (Constraint (b)). 
  Let $i$ be an arbitrary index in $\tid$.
  It follows
  \begin{alignat*}{3}
  & \getBox{\tempnextconf{1}}{i} && \\
  = \; & \getBox{\tempconf{1}}{i} && \quad \text{(by Construction \ref{const:TempConfProj})}  \\
  = \; & \getBox{\tempconf{0}}{i} \cup S_{i} && \quad \text{(by the semantics of $\crcv(S_1, \ldots, S_N)$ in Section~\ref{sec:temp-model})}              \\
  = \; & \getBox{\tempnextconf{0}}{i} \cup S_{i}  && \quad \text{(by Construction \ref{const:TempConfProj})}  
  \end{alignat*}
  Hence, we have $\forall i \in \tid \colon \getBox{\tempnextconf{1}}{i} = \getBox{\tempnextconf{0}}{i} \cup S_{i}$.
  Constraint (b) holds.
  Moreover, by similar arguments in the proof of Proposition~\ref{prop:TranspositionTempReceive}, we have  
  \[
  \forall i \in \tid \colon \getDatum{\tempnextconf{1}}{i} =  \nextVar \big(  \getPC{\tempnextconf{0}}, 
  S_i,
  \getDatum{\tempnextconf{0}}{i} \big)
  \]
  
  Constraint (c) holds. 
  It follows $\tempnextconf{0} \xrightarrow{\mathit{rcv}(S_1, S_2)} \tempnextconf{1}$. 
\end{proof}

\newcommand{\propProjTempOtherActions}{
  Let $\mathcal{A}$ be an arbitrary~\classname{} algorithm, and $\mytemp{2}$ and $\mytemp{N}$ be process templates of $\mathcal{A}$.
  Let $\mathit{tr} \in \temptr{N}$ be a transition such that it is one of send, computation, crash or stuttering transitions.
  Let $\tempconf{0}$ and $\tempconf{1}$ be template states in $\mytemp{N}$ such that $\tempconf{0} \xrightarrow{\mathit{tr}} \tempconf{1}$. 
  Let $\tempnextconf{0}$ and $\tempnextconf{1}$ be states of $\mytemp{2}$ such that they are the index projection of $\tempconf{0}$ and $\tempconf{1}$ on indexes $\{1, 2\}$, respectively.
  Then, $\tempnextconf{0}\xrightarrow{\mathit{tr}} \tempnextconf{1}$.  
}

\begin{prop} \label{prop:ProjTempOtherActions}
  \propProjTempOtherActions
\end{prop}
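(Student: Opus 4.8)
The plan is to follow the same scheme as in the proof of Proposition~\ref{prop:ProjTempReceive}: do a case distinction on the kind of $\mathit{tr}$ (send, computation, crash, stuttering) and, in each case, verify that the pair of projected states $\tempnextconf{0},\tempnextconf{1}$ in $\mytemp{2}$ satisfies the defining constraints of that template transition as given in Section~\ref{sec:temp-model}. Three facts are used throughout. First, by Construction~\ref{const:TempConfProj} the index projection preserves the program counter, that is $\getPC{\tempnextconf{0}} = \getPC{\tempconf{0}}$ and $\getPC{\tempnextconf{1}} = \getPC{\tempconf{1}}$, as well as the received-message sets and local variables at the retained indices, that is $\getBox{\tempnextconf{k}}{i} = \getBox{\tempconf{k}}{i}$ and $\getDatum{\tempnextconf{k}}{i} = \getDatum{\tempconf{k}}{i}$ for $k \in \{0,1\}$ and $i \in \{1,2\}$. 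Second, $\mytemp{2}$ and $\mytemp{N}$ use the same deterministic algorithm $\mathcal{A}$, hence the same location set $\loc$ and the same functions $\nextLoc$, $\genMsg$, $\nextVar$, whose signatures do not mention $N$; in particular $\nextLoc$ and $\genMsg$ depend only on the program counter, and membership tests such as ``$\getPC{\cdot} \in \locsnd$'' transfer verbatim. Third, $\nextVar$ depends only on the current location, a message set, and a single datum, and in all four cases here the message set passed to it is $\emptyset$.

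For the case $\tempconf{0} \tempsnd{}{m} \tempconf{1}$, I would check constraints (a)--(c) of the send transition for $\tempnextconf{0},\tempnextconf{1}$. For (a): $\getPC{\tempnextconf{0}} = \getPC{\tempconf{0}} \in \locsnd$; $\getPC{\tempnextconf{1}} = \getPC{\tempconf{1}} = \nextLoc(\getPC{\tempconf{0}}) = \nextLoc(\getPC{\tempnextconf{0}})$; and $\{m\} = \genMsg(\getPC{\tempconf{0}}) = \genMsg(\getPC{\tempnextconf{0}})$. For (b): for $i \in \{1,2\}$ we have $\getBox{\tempnextconf{1}}{i} = \getBox{\tempconf{1}}{i} = \getBox{\tempconf{0}}{i} = \getBox{\tempnextconf{0}}{i}$. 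For (c): for $i \in \{1,2\}$, $\getDatum{\tempnextconf{1}}{i} = \getDatum{\tempconf{1}}{i}$, which equals $\nextVar(\getPC{\tempconf{0}}, \emptyset, \getDatum{\tempconf{0}}{i}) = \nextVar(\getPC{\tempnextconf{0}}, \emptyset, \getDatum{\tempnextconf{0}}{i})$. Hence $\tempnextconf{0} \tempsnd{}{m} \tempnextconf{1}$. The computation case is identical with $\loccomp$ in place of $\locsnd$ and $\genMsg(\getPC{\tempconf{0}}) = \emptyset$. For the crash case only the program-counter condition changes ($\getPC{\tempnextconf{0}} \neq \loccrash$ and $\getPC{\tempnextconf{1}} = \loccrash$, both inherited from $\tempconf{0},\tempconf{1}$), while the boxes and data at indices $1$ and $2$ are unchanged by the first fact above. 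For stuttering, $\tempconf{0} = \tempconf{1}$ and the projection is a function, so $\tempnextconf{0} = \tempnextconf{1}$, which is exactly the stuttering transition in $\mytemp{2}$.

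I do not expect a genuine obstacle here: none of these four transitions reads or writes the discarded components at indices $3..N$ in a way that could influence indices $1$ and $2$, and the projection keeps the program counter intact, so every defining constraint simply transfers along the projection. The only point that requires a little care — and it is precisely the bookkeeping already carried out in the proof of Proposition~\ref{prop:ProjTempReceive} — is to record explicitly that $\genMsg$ and $\nextVar$ are evaluated at the same program counter $\getPC{\tempconf{0}} = \getPC{\tempnextconf{0}}$ before and after projection, which is what makes constraints (a) and (c) go through in each case.
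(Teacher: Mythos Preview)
Your proposal is correct and follows exactly the approach the paper takes: the paper's own proof of this proposition is a single line, ``By similar arguments in the proof of Lemma~\ref{prop:ProjTempReceive},'' and you have simply written out those similar arguments in full, doing the case split on send, computation, crash, and stuttering and checking each defining constraint via Construction~\ref{const:TempConfProj}.
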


\begin{proof}
  By similar arguments in the proof of Lemma~\ref{prop:ProjTempReceive}.
\end{proof}

Now we turn to properties of global configurations under~Constructions \ref{const:GlobalConfProj}.

\newcommand{\propProjInitGlobalConf}{
  Let $\mathcal{A}$ be an arbitrary~\classname{} algorithm.
  Let $\gsys{N} = (\gconfs{N}, \gtr{N}, \grel{N}, \ginitconf{N})$ and $\gsys{2} = (\gconfs{2}, \gtr{2}, \grel{2}, \ginitconf{2})$ be two transition systems of two instances of $\mathcal{A}$ for some $N \geq 2$.
  It follows that $\ginitconf{2}$ is the index projection of $\ginitconf{N}$ on indexes $\{1, 2\}$.
}

\begin{prop} \label{prop:ProjInitGlobalConf}
  \propProjInitGlobalConf
\end{prop}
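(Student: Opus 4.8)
The plan is to check directly that the index projection of $\ginitconf{N}$ on $\{1,2\}$ satisfies the two constraints that characterize the (unique) initial configuration $\ginitconf{2}$ of $\gsys{2}$, and then to invoke that uniqueness. Write $\gconf{}^2$ for the tuple obtained by applying Construction~\ref{const:GlobalConfProj} to $\ginitconf{N}$ and the index set $\{1,2\}$; by Proposition~\ref{prop:IndexGlobalConfProjTypeOK} this $\gconf{}^2$ is already a global configuration of $\gsys{2}$, so only the two defining constraints remain to be established.

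For the process components, I would use that $\getProc{\ginitconf{N}}{i} = \tempinitconf{N}$ for each $i \in \{1,2\}$ by the definition of $\ginitconf{N}$ in Section~\ref{sec:global-model}; clause~1 of Construction~\ref{const:GlobalConfProj} then makes the $i$-th component of $\gconf{}^2$ the state projection of $\tempinitconf{N}$ on $\{1,2\}$, which equals $\tempinitconf{2}$ by Proposition~\ref{prop:ProjInitTempConf}. For the control components, clause~3 of Construction~\ref{const:GlobalConfProj} gives $\getActive{\gconf{}^2}{i} = \getActive{\ginitconf{N}}{i}$, and $\lnot\getActive{\ginitconf{N}}{i}$ holds for all $i \in \nid$. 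For the buffers, clause~2 of Construction~\ref{const:GlobalConfProj} gives $\getBuf{\gconf{}^2}{s}{r} = \getBuf{\ginitconf{N}}{s}{r} = \emptyset$ for all $s,r \in \{1,2\}$. Hence $\gconf{}^2$ meets both constraints in the definition of the initial configuration of $\gsys{2}$, and since $\gsys{2}$ has exactly one initial configuration we conclude $\gconf{}^2 = \ginitconf{2}$, i.e., $\ginitconf{2}$ is the index projection of $\ginitconf{N}$ on $\{1,2\}$.

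The proof is pure bookkeeping against the definitions, so I do not expect any real obstacle; the only things to be careful about are citing Proposition~\ref{prop:ProjInitTempConf} for the process slots instead of redoing the (already proved) symmetry of initial template states by hand, and noting that the quantifier over $\nid$ in the definition of $\ginitconf{N}$ indeed covers $\{1,2\}$ since $N \geq 2$.
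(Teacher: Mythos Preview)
Your proposal is correct and follows essentially the same approach as the paper, which dispatches the proposition in one line: ``It immediately follows by Construction~\ref{const:GlobalConfProj} and the definition of $\ginitconf{N}$.'' You have simply unpacked that sentence into its component checks (process states via Proposition~\ref{prop:ProjInitTempConf}, buffers, control bits), which is exactly the bookkeeping the paper leaves implicit.
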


\begin{proof}
  It immediately follows by Construction \ref{const:GlobalConfProj} and the definition of $\ginitconf{N}$. 
\end{proof}

\newcommand{\propProjGlobalInternalTransitions}{
  Let $\mathcal{A}$ be an arbitrary~\classname{} algorithm.
  Let $\gsys{2} = (\gconfs{2}, \gtr{2}, \grel{2}, \ginitconf{2})$ and $\gsys{N} = (\gconfs{N}, \gtr{N}, \grel{N}, \ginitconf{N})$ be global transition systems such that all processes in $\gsys{2}$ and $\gsys{N}$ follow the same algorithm $\mathcal{A}$ for some $N \geq 2$.
  Let $\gconf{0}$ and $\gconf{1}$ be global configurations in $\gsys{N}$ such that $\gconf{0} \xrightarrow{\mathit{tr}} \gconf{1}$ where $\mathit{tr}$ is an internal transition. 
  Let $\gnextconf{0}$ and $\gnextconf{1}$ be the index projection of $\gconf{0}$ and $\gconf{1}$ on a set $\{1, 2\}$ of indexes, respectively.
  It follows that $\gnextconf{0} \xrightarrow{\mathit{tr}} \gnextconf{1}$.
}

\begin{prop} \label{prop:ProjGlobalInternalTransitions}
  \propProjGlobalInternalTransitions
\end{prop}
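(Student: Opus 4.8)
The plan is to prove the proposition by a case distinction on which of the four internal transitions $\mathit{tr}$ is, i.e., on whether $\gconf{0} \transsched \gconf{1}$, $\gconf{0} \transsnd \gconf{1}$, $\gconf{0} \transrcv \gconf{1}$, or $\gconf{0} \transcomp \gconf{1}$; in each case I would verify that the defining constraints of the corresponding sub-round from Section~\ref{sec:global-model} hold between the projected configurations $\gnextconf{0}$ and $\gnextconf{1}$ of $\gsys{2}$. Three ingredients are used throughout. First, Construction~\ref{const:GlobalConfProj} gives, for $k \in \{0,1\}$ and all $i, s, r \in \{1,2\}$: $\getProc{\gnextconf{k}}{i}$ is the index projection (via Construction~\ref{const:TempConfProj}) of $\getProc{\gconf{k}}{i}$; $\getBuf{\gnextconf{k}}{s}{r} = \getBuf{\gconf{k}}{s}{r}$; and $\getActive{\gnextconf{k}}{i} = \getActive{\gconf{k}}{i}$. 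Second, since Construction~\ref{const:TempConfProj} leaves the program counter untouched, $\getPC{\getProc{\gnextconf{k}}{i}} = \getPC{\getProc{\gconf{k}}{i}}$ for $i \in \{1,2\}$, so $\CanRun(\gnextconf{0}, i, L) \Leftrightarrow \CanRun(\gconf{0}, i, L)$ for $i \in \{1,2\}$ and every location set $L$. Third, Propositions~\ref{prop:ProjTempReceive} and~\ref{prop:ProjTempOtherActions} state that the state projection preserves all five kinds of template transitions (receive, and send/computation/crash/stutter, respectively).

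With these in hand each case is a bounded unfolding of definitions. For the sub-round Schedule, restricting the $\nid$-quantifiers in constraints (a)--(e) to $\{1,2\}$ is sound because the projected components literally are the $\{1,2\}$-components of the originals: (a) follows from $\getActive{\gnextconf{0}}{i} = \getActive{\gconf{0}}{i}$; (b) follows by applying Proposition~\ref{prop:ProjTempOtherActions} to each process's stutter or crash template transition; and (c)--(e) follow from the program-counter and buffer identities, noting that ``all incoming buffers to a crashed $r$'' projects to ``the $\{1,2\}$-indexed incoming buffers to $r$'', which are still empty. For Send, Receive, and Computation I would split every $i \in \{1,2\}$ into the enabled and the non-enabled case via the $\CanRun$-equivalence: on enabled processes the witness conditions ($\predsnd$ for Send, $\predrcv$ for Receive, a $\comp$ template step for Computation) transfer componentwise using Propositions~\ref{prop:ProjTempReceive}/\ref{prop:ProjTempOtherActions} for the process part and the buffer identities for the messages; on non-enabled processes the $\SFrozen$/$\RFrozen$ clauses (which force the full $\mytemp{N}$ state to be unchanged, hence its projection too) transfer directly; and the ``becomes inactive'' conclusions transfer via $\getActive{\gnextconf{1}}{i} = \getActive{\gconf{1}}{i}$. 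In the Send case the sent message $m$ is still fresh in, and then added to, the two retained outgoing buffers of the sender; in the Receive case the witnesses for $\predrcv$ in $\gsys{2}$ are simply the first two components $S_1, S_2$ of the witnesses $S_1, \ldots, S_N$ in $\gsys{N}$.

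I do not expect a genuine obstacle; the argument is essentially a longer analogue of Proposition~\ref{prop:SymGlobalSmallActions}. The points needing the most care are the biconditionals in the Send and Receive sub-rounds --- each direction must be checked, which is precisely why the non-enabled case must be handled through $\SFrozen$/$\RFrozen$ rather than through $\predsnd$/$\predrcv$ --- and the Receive sub-round itself, where one must confirm both that taking only $S_1, S_2$ as the delivered sets yields a legitimate receive transition in $\mytemp{2}$ (this is exactly Proposition~\ref{prop:ProjTempReceive}) and that the two buffer conditions of $\predrcv$, namely $S_s \cap \getBuf{\gnextconf{1}}{s}{i} = \emptyset$ and $\getBuf{\gnextconf{1}}{s}{i} \cup S_s = \getBuf{\gnextconf{0}}{s}{i}$ for $s \in \{1,2\}$, descend from the $\gsys{N}$ versions via $\getBuf{\gnextconf{k}}{s}{i} = \getBuf{\gconf{k}}{s}{i}$.
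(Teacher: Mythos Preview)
Your proposal is correct and follows essentially the same approach as the paper: a case distinction over the four sub-rounds, transferring each constraint to $\gsys{2}$ via the component identities from Construction~\ref{const:GlobalConfProj} and lifting the per-process template steps using Propositions~\ref{prop:ProjTempReceive} and~\ref{prop:ProjTempOtherActions}. If anything, your plan is more explicit than the paper's proof, which gives details only for Schedule and Send (and even there handles the non-enabled case in Send by a forward reference) and dismisses Receive and Computation as ``similar''; your discussion of the biconditionals and of selecting $S_1,S_2$ as the witnesses in the Receive case fills in exactly what the paper elides.
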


\begin{proof}
  First, by Proposition~\ref{prop:IndexGlobalConfProjTypeOK}, both $\tempnextconf{0}$ and  $\tempnextconf{1}$ are configurations in $\gsys{2}$.
  We prove Proposition~\ref{prop:ProjGlobalInternalTransitions} by case distinction.
  Here we provide detailed proofs only of two sub-rounds Schedule and Send. The proofs of other sub-rounds are similar.
  \begin{itemize}
    \item \emph{Sub-round Schedule.} 
    We prove that all Constraints (a)--(c) hold in $\gnextconf{0}$ and $\gnextconf{1}$.
    We now focus on Constraint (a).
    We have $  \getActive{\gnextconf{0}}{1} = \getActive{\gconf{0}}{1}$ and $\getActive{\gconf{0}}{2} = \getActive{\gconf{0}}{2}$ by Construction~\ref{const:GlobalConfProj}.
    By the semantics of the sub-round Schedule in Section~\ref{sec:global-model}, we have
    $\lnot \getActive{\gconf{0}}{1}  \land \lnot \getActive{\gconf{0}}{2}$. Hence, it follows
    $\lnot  \getActive{\gnextconf{0}}{1} \land \lnot  \getActive{\gnextconf{0}}{2}$.   
    Hence, the sub-round Schedule can start with a configuration $\gnextconf{0}$.
    Constraint (a) holds.
    By Proposition~\ref{prop:ProjTempOtherActions}, Constraint (b) holds.
    Constraint (c) holds by Construction~\ref{const:GlobalConfProj}.    
    Now we focus on incoming message buffers to correct processes to prove Constraint (d).
    Let $r$ be an index in $\nid$ such that $\getPC{\getProc{\gnextconf{1}}{r}} \neq \loccrash$.
    By Construction \ref{const:GlobalConfProj}, for every $ s \in \tid $, we have
    $\getBuf{\gnextconf{1}}{s}{r}  = \getBuf{\gconf{1}}{s}{r}$ and  $\getBuf{\gconf{0}}{s}{r} = \getBuf{\gnextconf{0}}{s}{r}$.
    By the semantics of the sub-round Schedule in Section~\ref{sec:global-model}, we have $\getBuf{\gconf{1}}{s}{r}  = \getBuf{\gconf{0}}{s}{r}$ for every $s \in \tid$.
    It follows $\getBuf{\gnextconf{1}}{s}{r} = \getBuf{\gnextconf{0}}{s}{r}$ for every $ s \in \tid $.        
    Constraint (d) holds.
    Now we focus on message buffers to crashed processes to prove Constraint (d).
    Let $r$ be an index in $\nid$ such that $\getPC{\getProc{\gnextconf{1}}{r}} = \loccrash$.
    By Construction \ref{const:GlobalConfProj}, for every $ s \in \tid $, we have
    $\getBuf{\gnextconf{1}}{s}{r}  = \getBuf{\gconf{1}}{s}{r}$.
    By the semantics of the sub-round Schedule in Section~\ref{sec:global-model}, we have $\getBuf{\gconf{1}}{s}{r}  = \emptyset$ for every $s \in \tid$,
    It follows $\getBuf{\gnextconf{1}}{s}{r} = \emptyset$ for every $ s \in \tid $.        
    Constraint~(e) holds.   
    It implies that $\gnextconf{0} \transsched \gnextconf{1}$.
    
    \item \emph{Sub-round Send.} 
    We have 
    $  \getActive{\gnextconf{0}}{1} = \getActive{\gconf{0}}{1}$ and $\getActive{\gconf{0}}{2} = \getActive{\gconf{0}}{2}$ 
    for every $k \in \tid$. 
    Hence, if a process $p^N_i$ in $\gsys{N}$ is enabled in this 
    sub-round, a corresponding process $p^2_i$ in $\gsys{N}$ is also for every $i \in \tid$.
    We prove that all Constraints (a)--(c) between $\gnextconf{0}$ and $\gnextconf{1}$ for the sub-round Send defined in Section~\ref{sec:global-model} hold.
    By similar arguments in the proof of Proposition~\ref{prop:ProjGlobalInternalTransitions}, Constraint (a) holds.    
    Now we focus on enable processes to prove Constraints (b) and (c).
    
    Assume that a process  $p^N_i$ in $\gsys{N}$ has sent a message $m$ in this sub-round, we show that  a process  $p^2_i$ in $\gsys{2}$ has also sent the message $m$ in this sub-round where $i \in \tid$.
    By Proposition~\ref{prop:ProjTempOtherActions}, it follows that $\getProc{\gnextconf{0}}{i} \tempsnd{}{m} \getProc{\gnextconf{1}}{i}$.
    Now we show that the message $m$ is new in buffers $\getBuf{\gnextconf{1}}{i}{1}$ and $\getBuf{\gnextconf{1}}{i}{2}$.
    By Construction \ref{const:GlobalConfProj}, we have $\getBuf{\gnextconf{1}}{i}{\ell}  = \getBuf{\gconf{1}}{i}{\ell}$ and $\getBuf{\gnextconf{0}}{i}{\ell}  = \getBuf{\gconf{0}}{i}{\ell}$ for every $\ell \in \tid$.
    By the semantics of the sub-round Send in Section~\ref{sec:global-model}, we have $\getBuf{\gconf{1}}{i}{\ell}  = \{ m \} \cup \getBuf{\gconf{0}}{i}{\ell}$.
    It follows
    $
    \getBuf{\gnextconf{1}}{i}{\ell}  
    = \{ m \} \cup \getBuf{\gnextconf{0}}{i}{\ell}
    $
    for every $\ell \in \tid$. 
    Moreover, since $m \notin \getBuf{\gconf{0}}{i}{\ell}$ for every $\ell \in \tid$, we have $m \notin \getBuf{\gnextconf{0}}{i}{\ell}$. 
    Therefore, the message $m$ is new in a buffer $\getBuf{\gnextconf{1}}{i}{\ell}$ for every $\ell \in \tid$.
    Constraint (b) holds.
    Moreover, by Construction~\ref{const:GlobalConfProj}, we have $\getActive{\gconf{1}}{i} = \getActive{\gnextconf{1}}{i}$.
    We have $\lnot \getActive{\gconf{1}}{i}$ by the semantics of the sub-round Send in Section~\ref{sec:global-model}.
    It follows $\lnot \getActive{\gnextconf{1}}{i}$.
    Constraint (e) holds.
    It follows that $\gnextconf{0} \transsnd \gnextconf{1}$.
    
    \item \emph{Sub-rounds Receive and Computation.} Similar.
  \end{itemize}
  Therefore, Proposition~\ref{prop:ProjGlobalInternalTransitions} holds. 
\end{proof}

\newcommand{\propProjGlobalObservableTransitions}{
  Let $\mathcal{A}$ be an arbitrary~\classname{} algorithm.
  Let $\gsys{2}$ and $\gsys{N}$ be global transition systems of $\mathcal{A}$ for some $N \geq 2$.
  Let $\gconf{0}$ and $\gconf{1}$ be global configurations of $\gconfs{N}$ such that $\gconf{0} \leadsto \gconf{1}$.
  Let $\gnextconf{0}$ and $\gnextconf{1}$ be the index projection of $\gconf{0}$ and $\gconf{1}$ on indexes $\{1, 2\}$. 
  Then, $\gnextconf{0} \leadsto \gnextconf{1}$.
}

\begin{prop} \label{prop:ProjGlobalObservableTransitions}
  \propProjGlobalObservableTransitions
\end{prop}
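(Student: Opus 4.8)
The plan is to recognize that Proposition~\ref{prop:ProjGlobalObservableTransitions} is simply the "closure under composition" companion of Proposition~\ref{prop:ProjGlobalInternalTransitions}: a global round transition is, by the definition in Section~\ref{sec:global-model}, a chain of exactly four internal transitions (Schedule, Send, Receive, Computation), so it suffices to project each of the four links and glue them back together. Concretely, from $\gconf{0} \leadsto \gconf{1}$ in $\gsys{N}$ I would first obtain, by unfolding the definition of $\leadsto$, configurations $\mu_1, \mu_2, \mu_3 \in \gconfs{N}$ with
\[
\gconf{0} \transsched \mu_1 \transsnd \mu_2 \transrcv \mu_3 \transcomp \gconf{1}.
\]
I would then let $\mu_1^2, \mu_2^2, \mu_3^2$ be the index projections of $\mu_1, \mu_2, \mu_3$ on $\{1,2\}$, which are genuine configurations of $\gsys{2}$ by Proposition~\ref{prop:IndexGlobalConfProjTypeOK} (equivalently, by Construction~\ref{const:GlobalConfProj} together with Construction~\ref{const:TempConfProj}).

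Next I would apply Proposition~\ref{prop:ProjGlobalInternalTransitions} four times — once to each of $\gconf{0} \transsched \mu_1$, $\mu_1 \transsnd \mu_2$, $\mu_2 \transrcv \mu_3$, and $\mu_3 \transcomp \gconf{1}$, each of which is an internal transition — to conclude
\[
\gnextconf{0} \transsched \mu_1^2 \transsnd \mu_2^2 \transrcv \mu_3^2 \transcomp \gnextconf{1},
\]
where $\gnextconf{0}$ and $\gnextconf{1}$ denote the index projections of $\gconf{0}$ and $\gconf{1}$ on $\{1,2\}$. Since by the definition of global round transitions the existence of such an intermediate chain is exactly what $\gnextconf{0} \leadsto \gnextconf{1}$ asserts, this finishes the argument.

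The only point that needs a word of care — and it is bookkeeping rather than a real obstacle — is that the four invocations of Proposition~\ref{prop:ProjGlobalInternalTransitions} must use the \emph{same} projected configuration at each junction $\mu_1^2, \mu_2^2, \mu_3^2$. This is immediate because the index projection is a function (Construction~\ref{const:GlobalConfProj} deterministically reads off components $1$ and $2$ of a given configuration), so the endpoint of the $i$-th projected internal transition is literally the start point of the $(i{+}1)$-st. No new case analysis over the sub-rounds is required here, since all of that work has already been done inside Proposition~\ref{prop:ProjGlobalInternalTransitions}.
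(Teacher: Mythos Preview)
Your approach is correct and is precisely the unfolding of the paper's one-line proof (``It immediately follows by Propositions~\ref{prop:ProjTempOtherActions} and~\ref{prop:ProjGlobalInternalTransitions}''): you take the four-link chain witnessing $\gconf{0}\leadsto\gconf{1}$, project each intermediate configuration via Construction~\ref{const:GlobalConfProj}, and invoke Proposition~\ref{prop:ProjGlobalInternalTransitions} on each internal transition. The only small omission is the stuttering clause of $\leadsto$ (the paper explicitly extends $\leadsto$ with $\gconf{}\leadsto\gconf{}$ for every $\gconf{}$), but that case is immediate since index projection is a function and hence $\gnextconf{0}=\gnextconf{1}$.
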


\begin{proof}
  It immediately follows by Propositions~\ref{prop:ProjTempOtherActions} and~\ref{prop:ProjGlobalInternalTransitions}.
\end{proof}

Now we present how to construct an admissible path of $\gsys{N}$ from a given admissible path of $\gsys{2}$ with Lemma~\ref{lem:IDRefilledPath} below.
The main argument is that from an admissible sequence of configurations in $\gsys{2}$, we can get an admissible sequence of configurations in $\gsys{N}$ by letting processes 3 to $N$ be initially crashed.
The proof of Lemma~\ref{lem:IDRefilledPath} requires the preliminary Propositions~\ref{prop:IDRefilledConf} and \ref{prop:IDRefilledTrans}.

\newcommand{\propIDRefilledConf}{
  Let $\mathcal{A}$ be an arbitrary~\classname{} algorithm.
  Let $\gsys{2}$ and $\gsys{N}$ be global transition systems of $\mathcal{A}$ for some $N \geq 2$.  
  Let $\gconf{}^2$ be a configuration in $\gsys{2}$. 
  There exists a configuration $\gconf{}^N$ in $\gsys{N}$ such that the following properties hold:
  \begin{itemize}
    \item $\gconf{}^2$ is the index projection of $\gconf{}^N$ on indexes $\{1, 2\}$, and
    
    \item $\forall i \in 3..N \colon \getPC{\getProc{\gconf{}^N}{i}}= \loccrash \land \lnot \getActive{\gconf{}^N}{i}$
    
     \item $\forall s \in 3..N, r \in \nid \colon \getBuf{\gconf{}}{s}{r} = \emptyset $
    
    \item $\forall s \in 3..N, r \in 1..2 \colon     \getBox{\getProc{\gconf{}^N}{s}}{r} = \emptyset$
    
    \item $\forall s \in \nid, r \in 3..N \colon \getBuf{\gconf{}}{s}{r} = \emptyset $
    
  \end{itemize}  
}

\begin{prop} \label{prop:IDRefilledConf}
  \propIDRefilledConf
\end{prop}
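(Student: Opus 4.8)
The plan is to exhibit $\gconf{}^N$ explicitly, by ``padding'' the given configuration $\gconf{}^2$ of $\gsys{2}$ with $N-2$ additional processes that are all crashed and have empty buffers, and then to verify that the padded tuple is a configuration of $\gsys{N}$ meeting all five requirements. Intuitively this is the inverse of Construction~\ref{const:GlobalConfProj}, restricted to configurations whose index projection on $\{1,2\}$ must come out to be exactly $\gconf{}^2$.

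Concretely, first I would fix the default constant $d_0 \in \tempdatatype$ appearing in the initial template state $\tempinitconf{N}$, and define the components of $\gconf{}^N$ as follows. For each $i \in \tid$, let $\getProc{\gconf{}^N}{i}$ be the $(2N+1)$-tuple whose location is $\getPC{\getProc{\gconf{}^2}{i}}$, whose received-message sets at indexes $1$ and $2$ are $\getBox{\getProc{\gconf{}^2}{i}}{1}$ and $\getBox{\getProc{\gconf{}^2}{i}}{2}$, whose local-variable components at indexes $1$ and $2$ are $\getDatum{\getProc{\gconf{}^2}{i}}{1}$ and $\getDatum{\getProc{\gconf{}^2}{i}}{2}$, and whose received-message sets and local variables at every index $k \in 3..N$ are $\emptyset$ and $d_0$, respectively. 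For each $i \in 3..N$, let $\getProc{\gconf{}^N}{i} \bydef (\loccrash, \emptyset, \ldots, \emptyset, d_0, \ldots, d_0)$. For the buffers, set $\getBuf{\gconf{}^N}{s}{r} \bydef \getBuf{\gconf{}^2}{s}{r}$ for $s, r \in \tid$, and $\getBuf{\gconf{}^N}{s}{r} \bydef \emptyset$ whenever $s \in 3..N$ or $r \in 3..N$. For the control components, set $\getActive{\gconf{}^N}{i} \bydef \getActive{\gconf{}^2}{i}$ for $i \in \tid$ and $\getActive{\gconf{}^N}{i} \bydef \False$ for $i \in 3..N$.

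Next I would check that $\gconf{}^N$ really is a configuration of $\gsys{N}$, i.e., an element of $\gconfs{N} = (\tempconfs{N})^N \times \set(\msg)^{N\cdot N} \times \Boolean^N$: every $\getProc{\gconf{}^N}{i}$ is a well-typed $(2N+1)$-tuple (its location lies in $\loc$ since $\loccrash \in \loc$, its set-components lie in $\set(\msg)$, and its data-components lie in $\tempdatatype$), every buffer lies in $\set(\msg)$, and every flag lies in $\Boolean$; this is the analogue, in the padding direction, of Propositions~\ref{prop:IndexTempConfProjTypeOK} and~\ref{prop:IndexGlobalConfProjTypeOK}. Note that membership in $\gconfs{N}$ imposes no consistency condition on a configuration, so no further verification is needed (and our construction in fact satisfies the ``crashed $\Rightarrow$ empty incoming buffers'' condition anyway). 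The four emptiness/crash bullets hold directly from the definitions: for $i \in 3..N$ we have $\getPC{\getProc{\gconf{}^N}{i}} = \loccrash$ and $\lnot \getActive{\gconf{}^N}{i}$, all $\getBox{\getProc{\gconf{}^N}{s}}{r}$ with $s \in 3..N$ are $\emptyset$, and all buffers with an endpoint in $3..N$ are $\emptyset$. For the first bullet, I would unfold Construction~\ref{const:GlobalConfProj}: it builds the index projection of $\gconf{}^N$ on $\{1,2\}$ by passing $\getProc{\gconf{}^N}{1}, \getProc{\gconf{}^N}{2}$ through Construction~\ref{const:TempConfProj} (which retains exactly the location, the received-message sets at indexes $1,2$, and the local variables at indexes $1,2$), together with the buffers $\getBuf{\gconf{}^N}{s}{r}$ for $s, r \in \tid$ and the flags $\getActive{\gconf{}^N}{1}, \getActive{\gconf{}^N}{2}$; by the definitions above these coincide componentwise with $\gconf{}^2$, hence the projection equals $\gconf{}^2$.

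There is no genuine obstacle here: the argument is a bookkeeping exercise on the tuple components. The only point that needs care is aligning the index ranges of $\mytemp{2}$ (whose states are $5$-tuples) and $\mytemp{N}$ (whose states are $(2N+1)$-tuples), so that Construction~\ref{const:TempConfProj} inverts the padding on exactly the indexes $\{1,2\}$; this is precisely why the padded received-message sets and local variables of processes $1$ and $2$ at indexes $3..N$ are immaterial and may be chosen as $\emptyset$ and $d_0$.
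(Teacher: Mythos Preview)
Your proposal is correct and follows essentially the same approach as the paper: construct $\gconf{}^N$ by padding $\gconf{}^2$ with $N-2$ crashed processes whose buffers and received-message sets are empty. The paper's own proof is a two-sentence sketch (``our construction simply adds $N-2$ crashed processes that have not sent any messages in the global system\ldots Other components are arbitrary''), so your explicit choice of $\emptyset$ and $d_0$ for the padded components, and your verification that the result lies in $\gconfs{N}$ and projects back to $\gconf{}^2$, are a faithful (and more detailed) execution of the same idea.
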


\begin{proof}
  Proposition~\ref{prop:IDRefilledConf} is true since our construction simply adds $N-2$ crashed processes that have not sent any messages in the global system.
  The last two constraints requires that processes 1 and 2 have not received any messages from crashes processes and the message buffers to crashed processes are empty. 
  Other components are arbitrary. 
\end{proof}

\newcommand{\propIDRefilledTrans}{
  Let $\mathcal{A}$ be an arbitrary~\classname{} algorithm.
  Let $\gsys{2}$ and $\gsys{N}$ be global transition systems of $\mathcal{A}$ for some $N \geq 2$.  
  Let $\gconf{0}^2$ and $\gconf{1}^2$ be configurations in $\gsys{2}$ such that $\gconf{0}^2 \xrightarrow{\mathit{tr}} \gconf{1}^2$ where $\mathit{tr}$ is an internal transition.
  There exists two configurations $\gconf{0}^N$ and $\gconf{1}^N$ in $\gsys{N}$ such that
  \begin{enumerate}[nosep]
    \item $\gconf{0}^2$ and $\gconf{1}^2$ are respectively the index projection of $\gconf{0}^N$ and $\gconf{1}^N$ on a set $\{1, 2\}$ of indexes, and
    
    \item $\gconf{0}^N \xrightarrow{\mathit{tr}} \gconf{1}^N$.
  \end{enumerate}
}

\begin{prop} \label{prop:IDRefilledTrans}
  \propIDRefilledTrans  
\end{prop}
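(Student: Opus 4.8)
Proposition~\ref{prop:IDRefilledTrans} asks for the converse direction of Proposition~\ref{prop:ProjGlobalInternalTransitions}: given an internal transition $\gconf{0}^2 \xrightarrow{\mathit{tr}} \gconf{1}^2$ in $\gsys{2}$, we must lift it to an internal transition in $\gsys{N}$ whose index projection onto $\{1,2\}$ recovers the original step.

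Let me think about what the plan should be.

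The key idea: take $\gconf{0}^2$ and $\gconf{1}^2$, and "refill" them using Proposition~\ref{prop:IDRefilledConf} so that processes $3..N$ are crashed, inactive, have empty receive boxes (for indices 1,2), and all message buffers involving processes $3..N$ are empty. This gives $\gconf{0}^N$ and $\gconf{1}^N$. We need these two refilled configurations to be "compatible" — i.e., the refilled extra components match between $\gconf{0}^N$ and $\gconf{1}^N$ — and then check that for each of the four sub-round transition types, the constraints hold in $\gsys{N}$.

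The main obstacle: for processes $3..N$, which are crashed and inactive, we need to verify that each sub-round's constraints are satisfied. In particular:
- Sub-round Schedule: crashed inactive processes need $\getProc{\gconf{0}^N}{i} \transstutter \getProc{\gconf{1}^N}{i}$ or a crash transition. Since the process is already at $\loccrash$, it can't make a crash transition (constraint (a) of crash requires $\getPC{\tempconf{}} \neq \loccrash$). So it must stutter, meaning $\getProc{\gconf{0}^N}{i} = \getProc{\gconf{1}^N}{i}$. This forces the refilled parts to be identical in both configs. Also need the "incoming buffers to crashed process are empty" constraint — satisfied by the refill. And the "incoming buffers to correct process unchanged" — vacuous for $i \in 3..N$ since they're crashed; for buffers FROM process $3..N$ to process $1,2$, these are empty in both, so unchanged — consistent.
- Sub-rounds Send/Receive/Computation: crashed processes are not active (since $\loccrash \notin \locsnd \cup \locrcv \cup \loccomp$ and they're inactive), so $\CanRun$ is false, forcing the Frozen predicates — which require stuttering and unchanged buffers. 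Again consistent with identical refilled parts.

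So the plan: First I would apply Proposition~\ref{prop:IDRefilledConf} to $\gconf{0}^2$ to obtain $\gconf{0}^N$, making a specific choice for the "arbitrary components" — namely, set the local states, boxes and data of processes $3..N$ to some fixed crashed state (e.g. the state reached from the initial template state by a single $\crash$ transition, or indeed any fixed state with program counter $\loccrash$). Then I would define $\gconf{1}^N$ to agree with $\gconf{0}^N$ on all components indexed by processes in $3..N$ (including the buffers $\getBuf{\cdot}{s}{r}$ for $s \in 3..N$ or $r \in 3..N$, all empty, and the boxes $\getBox{\getProc{\cdot}{s}}{r}$), and to agree with $\gconf{1}^2$ (via the projection) on the components indexed by $\{1,2\}$ — i.e. $\gconf{1}^N$ is the unique refilling of $\gconf{1}^2$ matching $\gconf{0}^N$ on the extra coordinates. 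Point~1 of the proposition then holds by construction of the index projection (Construction~\ref{const:GlobalConfProj}).

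For Point~2, I would do a case distinction on which sub-round $\mathit{tr}$ is, exactly as in the proof of Proposition~\ref{prop:ProjGlobalInternalTransitions}. For processes $i \in \{1,2\}$, all constraints transfer from $\gconf{0}^2 \xrightarrow{\mathit{tr}} \gconf{1}^2$ directly, since the projection preserves their local states, boxes, data, activity flags, and the buffers between them; here Proposition~\ref{prop:ProjTempOtherActions} and Proposition~\ref{prop:ProjTempReceive} supply the template-level transitions. For processes $i \in 3..N$: their program counter is $\loccrash$ in both configurations, so $\CanRun(\gconf{0}^N, i, L) = \false$ for every $L \in \{\locsnd, \locrcv, \loccomp\}$ and $\getActive{\gconf{0}^N}{i} = \false$; thus the relevant Frozen predicate ($\SFrozen$, $\RFrozen$, or the stuttering disjunct in Computation and Schedule) holds because $\getProc{\gconf{0}^N}{i} = \getProc{\gconf{1}^N}{i}$ (stutter), activity flags are both false, and all buffers incident to $i$ are empty hence unchanged. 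For the Schedule sub-round, additionally constraint (e) (incoming buffers to a crashed process are empty) holds by the refill, and constraint (d) for $r \in \{1,2\}$ holds because the only buffers feeding processes $1,2$ that could change are the internal $\getBuf{\cdot}{s}{r}$ with $s \in \{1,2\}$, handled by the $\gsys{2}$ hypothesis, while those from $s \in 3..N$ are empty in both. Thus $\gconf{0}^N \xrightarrow{\mathit{tr}} \gconf{1}^N$ in each case.

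The main obstacle — though it is more a bookkeeping nuisance than a genuine difficulty — is getting the Schedule sub-round right: one must be careful that processes $3..N$ remaining at $\loccrash$ is \emph{forced} (they cannot re-crash since a $\crash$ transition requires leaving a non-$\loccrash$ location), so the only admissible choice is the stuttering disjunct, and one must check that the empty incoming buffers are preserved by constraints (d) and (e) simultaneously. I would also note up front that $\loccrash \notin \locsnd \cup \locrcv \cup \loccomp$ (these four sets partition $\loc$ with $\loccrash$ singled out, per Section~\ref{sec:temp-model}), which is what makes $\CanRun$ trivially false for the crashed processes in the other three sub-rounds.
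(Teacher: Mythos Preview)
Your overall strategy --- refill processes $3..N$ as crashed and inactive, then case-split on the sub-round --- is exactly the paper's approach. However, your construction of $\gconf{1}^N$ has a genuine gap that makes the Send case fail.

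You propose to set every component of $\gconf{1}^N$ involving an index in $3..N$ equal to its value in $\gconf{0}^N$ (in particular, keeping the buffers $\getBuf{\gconf{1}^N}{s}{r}$ with $r\in 3..N$ empty). But suppose $s\in\{1,2\}$ is enabled in the Send sub-round, i.e.\ $\CanRun(\gconf{0}^2,s,\locsnd)$ holds. By the projection this is equivalent to $\CanRun(\gconf{0}^N,s,\locsnd)$, so constraint~(b) of the Send sub-round forces $\predsnd(\gconf{0}^N,\gconf{1}^N,s,m)$ for some $m$, and $\predsnd$ requires $\getBuf{\gconf{1}^N}{s}{r}=\{m\}\cup\getBuf{\gconf{0}^N}{s}{r}$ for \emph{all} $r\in\nid$, including $r\in 3..N$. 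With your construction these buffers are empty, so the transition is not witnessed. The $\SFrozen$ predicate you invoke for processes $i\in 3..N$ only constrains \emph{outgoing} buffers $\getBuf{\cdot}{i}{r}$; it says nothing about the incoming buffers $\getBuf{\cdot}{s}{i}$ from an active sender $s\in\{1,2\}$. The same problem arises for the local-variable slots $\getDatum{\getProc{\gconf{1}^N}{s}}{j}$ with $s\in\{1,2\}$ and $j\in 3..N$: the template transitions $\csnd$, $\crcv$, $\comp$ update every such slot via $\nextVar$, so freezing them at their $\gconf{0}^N$ values does not give a valid template step. The paper's construction handles both issues explicitly (its conditions~6 and~7): it puts the fresh message into $\getBuf{\gconf{1}^N}{s}{r}$ for $r\in 3..N$ when $s$ sends, and it sets $\getDatum{\getProc{\gconf{1}^N}{s}}{j}=\nextVar(\getPC{\getProc{\gconf{0}^N}{s}},\emptyset,\getDatum{\getProc{\gconf{0}^N}{s}}{j})$ for $j\in 3..N$.

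A smaller point: you cite Propositions~\ref{prop:ProjTempReceive} and~\ref{prop:ProjTempOtherActions} to ``supply the template-level transitions'' for processes $1,2$. Those propositions go in the opposite direction (from a transition in $\mytemp{N}$ to one in $\mytemp{2}$); here you need the converse, which is why the paper instead verifies the template-transition constraints directly from the construction of $\gconf{1}^N$.
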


\begin{proof}
  By Proposition~\ref{prop:IDRefilledConf}, there exists a configuration $\gconf{0}^N$ in $\gsys{N}$ such that (i) $\gconf{0}^2$ is the index projection of $\gconf{0}^N$ on indexes $\{1, 2\}$, and (ii) all processes with indexes in $3..N$ are crashed and inactive in $\gconf{0}^N$, and (iii) every process $p_i$ has not received any messages from a process $p_j$ where $i \in \{1, 2\}, j \in 3..N$ (as the configuration construction in the proof of Proposition~\ref{prop:IDRefilledConf}).
  
  We construct $\gconf{1}^N$ as the following.
  Intuitively, this construction keeps process $3..N$ crashed, and two processes 1 and 2 in $\gsys{N}$ make similar transitions with processes 1 and 2 in $\gsys{2}$.´
  \begin{enumerate}
    \item $\forall i \in 3..N \colon \getProc{\gconf{1}^N}{i}= \getProc{\gconf{0}^N}{i} \land \lnot \getActive{\gconf{1}^N}{i}$.
    
    \item $\forall s \in \nid, r \in 3..N \colon \getBuf{\gconf{1}^N}{s}{r}  = \emptyset$
    
    \item $\forall i \in \{1, 2\} \colon\getActive{\gconf{1}^N}{i} = \getActive{\gconf{1}^2}{i} $
    
    \item $\forall s \in 3..N, r \in \nid \colon     \getBox{\getProc{\gconf{1}^N}{s}}{r} = \emptyset$
    
    \item $\forall s, r \in \{1, 2\} \colon \getBuf{\gconf{1}^N}{s}{r} = \getBuf{\gconf{1}^2}{s}{r} $
    
    \item For every $s \in \{1, 2\}$, we have: If $\getProc{\gconf{0}^2}{s} \tempsnd{}{m} \getProc{\gconf{1}^2}{s}$ for some $m \in \msg$, then $\getBuf{\gconf{1}^N}{s}{r} = \{m\} \cup \getBuf{\gconf{0}^N}{s}{r}$ for every $3 \leq r \leq N$.
    Otherwise, $\getBuf{\gconf{1}^N}{s}{r} =  \getBuf{\gconf{0}^N}{s}{r}$.
    
    \item For every $i \in \{1, 2\}$, the configurations of processes with index $i$  is updated as following:      
    \begin{itemize}[nosep]
      \item  $\getPC{\getProc{\gconf{1}^N}{i}}= \getPC{\getProc{\gconf{1}^2}{i}}$

      \item $\forall j \in \{1, 2\} \colon \getBox{\getProc{\gconf{1}^N}{i}}{j}
      = \getBox{\getProc{\gconf{1}^2}{i}}{j}$
      
      \item $\forall j \in \{1, 2\} \colon \getDatum{\getProc{\gconf{1}^N}{i}}{j} = \getDatum{\getProc{\gconf{1}^2}{i}}{j} $

      \item $\forall j \in 3..N \colon \getBox{\getProc{\gconf{1}^N}{i}}{j} = \getBox{\getProc{\gconf{0}^N}{i}}{j} = \emptyset$      
      
      \item $\forall j \in 3..N \colon \getDatum{\getProc{\gconf{1}^N}{i}}{j} 
      = \nextVar \big( \getPC{\getProc{\gconf{0}^N}{i}},  
      \emptyset,
      \getDatum{\getProc{\gconf{0}^N}{i}}{j} \big)  $
    \end{itemize}
  \end{enumerate} 
  By the construction of $\gconf{1}^N$, it follows that $\gconf{1}^N$ is a configuration in $\gsys{N}$.
  Moreover, we have $\getProc{\gconf{0}^N}{i} \tempstutter \getProc{\gconf{1}^N}{i}$ and $\getProc{\gconf{0}^N}{i}$ is crashed for every $i \in 3..N$.
  Moreover, no message from a process $p_i$ has been sent or received for every $i \in 3..N$.
  
  By the above construction, it immediately follows that  $\gconf{1}^2$ is the index projection $\gconf{1}^N$ on indexes $\{1, 2\}$. 
  Hence, point 1 in Proposition~\ref{prop:IDRefilledTrans} holds.
  We prove point 2 in Proposition~\ref{prop:IDRefilledTrans} by case distinction.
  
  \begin{itemize}[nosep]
    \item \emph{Sub-round Schedule.} By similar arguments in the proof of Proposition~\ref{prop:ProjGlobalInternalTransitions} and the construction of $\gconf{1}^N$, we have $\gconf{1}^N \transstutter c_2^N$.
    
    \item \emph{Sub-round Send.}  By construction of $\gconf{0}^N$ and $\gconf{1}^N$, we know that every process $p_i$ is crashed, and its state is not updated, and every outgoing message buffer from $p_i$ is always empty for every $i \in 3..N$.
    Therefore, in the following, we focus on only two processes $p_1$ and $p_2$.
    For every $i \in \tid$, if $\lnot \CanRun(c^N_0, c^N_1, \locsnd)$, it follows $\SFrozen(\gconf{0}^N, \gconf{1}^N, i)$ by the construction of configurations $\gconf{0}^N$ and $\gconf{1}^N$.
    Constraint (a) holds.
    We now focus on enabled processes in this sub-round.
    Let $i$ be an index in $\tid$ such that $\CanRun(c^N_0, c^N_1, \locsnd)$. 
    By the semantics of the sub-round Send in Section~\ref{sec:temp-model}, we have
    $\getProc{\gconf{0}}{i} \tempsnd{}{m} \getProc{\gconf{1}}{i}$.
    We prove that $\getProc{\gconf{0}^N}{i} \tempsnd{}{m} \getProc{\gconf{1}^N}{i}$ as follows.
    By the construction of $\gconf{0}^N$, we have that
    $\getPC{\getProc{\gconf{0}^N}{i}} = \getPC{\getProc{\gconf{0}}{i}}$.
    By the construction of $\gconf{1}^N$, we have that
    $\getPC{\getProc{\gconf{1}^N}{i}} = \getPC{\getProc{\gconf{1}}{i}}$.
    By the semantics of the transition $\csnd(m)$ in Section~\ref{sec:temp-model}, we have
    $\getPC{\getProc{\gconf{1}}{i}} = \nextLoc(\getPC{\getProc{\gconf{0}}{i}})$.
    It follows that 
    \[\getPC{\getProc{\gconf{1}^N}{i}} = \nextLoc(\getPC{\getProc{\gconf{0}^N}{i}})\]
    By similar arguments, it follows $\{m\} = \genMsg(\getPC{\getProc{\gconf{0}^N}{i}})$.
    Now we focus on received messages of a process $p_i$.          
    By the semantics of the transtion $\csnd(m)$ in Section~\ref{sec:temp-model}, we have $\getBox{\getProc{\gconf{1}^2}{i}}{j}  = \getBox{\getProc{\gconf{0}^2}{i}}{j}$.
    For every $j \in \{1, 2\}$, we have
    \begin{alignat*}{3}
    & \getBox{\getProc{\gconf{1}^N}{i}}{j} && \\
    = \, & \getBox{\getProc{\gconf{1}}{i}}{j} && \quad \text{(by the construction of $\gconf{1}^N$)} \\
    = \, & \getBox{\getProc{\gconf{0}}{i}}{j} && \quad \text{(by the semantics of $\csnd(m)$ in Section~\ref{sec:temp-model})} \\
    = \, & \getBox{\getProc{\gconf{0}^N}{i}}{j}  && \quad \text{(by the construction of $\gconf{0}^N$)} 
    \end{alignat*}
    For every $j \in 3..N$, we have    
    \begin{alignat*}{3}
     \getBox{\getProc{\gconf{1}^N}{i}}{j} 
    = \, & \emptyset && \quad \text{(by the construction of $\gconf{1}^N$)} \\
= \, & \getBox{\getProc{\gconf{0}^N}{i}}{j}  && \quad \text{(by the construction of $\gconf{0}^N$)} 
    \end{alignat*}    
    Hence, a process $p_i$ does not receive any message when taking a step from $\gconf{0}^N$ to $\gconf{1}^N$.
    By the construction of $\gconf{1}^N$, it follows 
    \begin{alignat*}{2}
      & \getDatum{\getProc{\gconf{1}^N}{i}}{j} \\
    = & \nextVar \big( \getPC{\getProc{\gconf{0}^N}{i}},  
    \emptyset,
    \getDatum{\getProc{\gconf{0}^N}{i}}{j} \big)
    \end{alignat*}
     for every $j \in \nid$.
    Hence, it follows $\getProc{\gconf{0}^N}{i} \tempsnd{}{m} \getProc{\gconf{1}^N}{i}$.
    Now we focus on outgoing message buffers from $p_i$. 
    It is easy to see that the message $m$ is new in every message buffer $\getBuf{\gconf{1}^2}{i}{j}$.
    By construction of $\gconf{0}^N$ and $\gconf{1}^N$, it follows that $m \notin \getBuf{\gconf{0}^N}{i}{j}$ and $m \in \getBuf{\gconf{1}^N}{i}{j}$.
    By similar arguments, we have $ \getBuf{\gconf{0}^N}{i}{j} = \{m\} \cup \getBuf{\gconf{1}^N}{i}{j}$.
    Hence, Constraint~(b) between $\gconf{0}^N$  and $\gconf{1}^N$ in the sub-round Send in Section~\ref{sec:global-model} hold.
    Moreover, by the construction of $\gconf{1}^N$, we have $\getActive{\gconf{1}^N}{i} = \getActive{\gconf{1}}{i}$.
    By the semantics of the sub-round Send in Section~\ref{sec:global-model}, we have $\lnot \getActive{\gconf{1}}{i}$.
    It follows $\lnot \getActive{\gconf{1}^N}{i}$.
    Constraint~(c) between $\gconf{0}^N$  and $\gconf{1}^N$ in the sub-round Send in Section~\ref{sec:global-model} holds.
    Therefore, it follows $\gconf{0}^N \transsnd \gconf{1}^N$.
    
    \item  \emph{Case $\gconf{0} \transrcv \gconf{1}$.} It follows by similar arguments of the sub-round Send, except that if $\getProc{\gconf{0}}{i}
    \xrightarrow{\mathit{rcv}(S_1, S_2)} \getProc{\gconf{1}}{i}$, 
    then \[\getProc{\gconf{0}^N}{i} \xrightarrow{\mathit{rcv}(S_1, S_2, \emptyset, \ldots, \emptyset)} \getProc{\gconf{1}^N}{i}\]
    
    \item \emph{Case $\gconf{0} \transcomp \gconf{1}$.} By similar arguments in the case of the sub-round Send.
  \end{itemize}

    Then, Proposition~\ref{prop:IDRefilledTrans} holds. 
\end{proof}

Notice that in Proposition~\ref{prop:IDRefilledTrans}, if both $p_1^N$ and $p_2^N$ take a stuttering step from $\gconf{0}^N$ to $\gconf{1}^N$, then $\gconf{0}^N = \gconf{1}^N$.

\recalllemma{lem:IDRefilledPath}{\lemIDRefilledPath}

\begin{proof}
  We prove Lemma~\ref{lem:IDRefilledPath}  by inductively constructing $\gconf{k}^N$.
  
  \emph{Base case.} 
  Since $\pi^2$ and $\pi^N$ are admissible sequences, it follows $\gconf{0}^2 = \ginitconf{2}$ and $\gconf{0}^N = \ginitconf{N}$.¸
  By Proposition~\ref{prop:ProjInitGlobalConf}, we have that $\gconf{0}^2$ is the index projection of $\gconf{0}^N$ on indexes $\{1, 2\}$.
  We construct $\gconf{1}^N$ by scheduling that all processes $3..N$ crash in $\gconf{1}^N$. 
  Formally, we have:
  \begin{enumerate}
    \item $\forall i \in 3..N \colon \getPC{\getProc{\gconf{1}^N}{i}}= \loccrash \land \lnot \getActive{\gconf{1}^N}{i}$.
    
    \item $\forall i \in \{1, 2\} \colon\getActive{\gconf{1}^N}{i} = \getActive{\gconf{1}^2}{i} \land \getPC{\getProc{\gconf{1}^N}{i}} = \getPC{\getProc{\gconf{1}^2}{i}}$
    
    \item $\forall s, r \in \nid \colon \getBuf{\gconf{1}^N}{s}{r} = \getBuf{\gconf{0}^N}{s}{r}$
    
    \item $\forall s, r \in \nid \colon \getBox{\getProc{\gconf{1}^N}{s}}{r} = \getBox{\getProc{\gconf{0}^N}{s}}{r}$
    
    \item $\forall s, r \in \nid \colon \getDatum{\getProc{\gconf{1}^N}{s}}{r} = \getDatum{\getProc{\gconf{0}^N}{s}}{r}$
  \end{enumerate} 
  The above constraints ensure that $\gconf{0}^N \transsched \gconf{1}^N$.
  
  \emph{Induction step.}  It immediately follows by Proposition~\ref{prop:IDRefilledTrans}. 
  
  Hence, Lemma~\ref{lem:IDRefilledPath} holds. 
\end{proof}

\recalllemma{lem:TraceEquivTwo}{\lemTraceEquivTwo{}}

\begin{proof}
  It immediately follows by Definition~\ref{def:traceEquivalence}, Lemma~\ref{lem:IDProjPath} and Lemma~\ref{lem:IDRefilledPath}.
\end{proof}

\subsection{Trace Equivalence of $\gsys{1}$ and $\gsys{N}$ under $AP_{\{1\}}$} \label{app:TraceEquivalenceOne}

Lemma~\ref{lem:TraceEquivOne} says that two global transition systems $\gsys{1}$ and $\gsys{N}$ whose processes follow an arbitrary~\classname{} algorithm are trace equivalent under a set $AP_{\{1\}}$ of predicates which inspect only variables whose index is 1.
The proof of Lemma~\ref{lem:TraceEquivOne} is similar to one of Lemma~\ref{lem:TraceEquivOne}, but applies Constructions~\ref{const:TempConfProjOne} and~\ref{const:GlobalConfProjOne}.
Constructions~\ref{const:TempConfProjOne} and~\ref{const:GlobalConfProjOne} are respectively similar to Constructions~\ref{const:TempConfProj} and~\ref{const:GlobalConfProj}, but focus on only an index 1. 
Constructions~\ref{const:TempConfProjOne} and~\ref{const:GlobalConfProjOne} are used in the proof of Lemma~\ref{lem:TraceEquivOne}.

\begin{construction} \label{const:TempConfProjOne}
  Let $\mathcal{A}$ be an arbitrary~\classname{} algorithm.
  Let $\mytemp{N}$ be a process template of $\mathcal{A}$ for some $N \geq 2$, and
  $\tempconf{}^N$ be a template state of $\mytemp{N}$.
  We construct a tuple $\tempconf{}^1 = (pc_1, rcvd_1, v_1)$ based on $\tempconf{}^N$ and a set $\{1\}$ of process indexes in the following way: $pc_1  = \getPC{\tempconf{}^N}$, $rcvd_1 = \getBox{\tempconf{}^N}{1}$, and $v_1 = \getDatum{\tempconf{}^N}{1}$.   
\end{construction}

\begin{construction} \label{const:GlobalConfProjOne}
  Let $\mathcal{A}$ be a~\classname{} algorithm.
  Let $\gsys{1}$ and $\gsys{N}$ be two global transitions of two instances of $\mathcal{A}$ for some $N \geq 1$, and $\gconf{}^N \in \gconfs{N}$ be a global configuration in $\gsys{N}$.
  We construct a tuple $\gconf{}^2 = (s_1, buf_1^1, act_1)$ based on $\gconf{}^N$ and a set $\{1\}$ in the following way: $s_1$ is constructed from  $\getProc{\gconf{}^N}{1}$ with Construction \ref{const:TempConfProjOne} and an index 1, and $buf^1_1 = \getBuf{\gconf{}^N}{1}{1}$, and $act_1 = \getActive{\gconf{}^N}{1} $.
\end{construction}

\newcommand{\lemTraceEquivOne}{
  Let $\mathcal{A}$ be a~\classname{} algorithm. Let $\gsys{1}$ and $\gsys{N}$ be its instances for some $N \geq 2$.
  Let $AP_{\{1\}}$ be a set of predicates which inspect only variables whose index is 1.
  It follows $\gsys{1}$ and $\gsys{N}$ are trace equivalent under $AP_{\{1\}}$.
}

\begin{lem} \label{lem:TraceEquivOne}
  \lemTraceEquivOne
\end{lem}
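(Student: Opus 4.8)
The plan is to follow exactly the proof architecture of Lemma~\ref{lem:TraceEquivTwo}, replacing the two-index projection by the one-index projection. Concretely, I would fix an evaluation function $L$ that inspects only variables with index $1$, and then verify the two clauses of Definition~\ref{def:traceEquivalence}. To do this I would first establish analogues of Lemmas~\ref{lem:IDProjPath} and~\ref{lem:IDRefilledPath}, but stated for Constructions~\ref{const:TempConfProjOne} and~\ref{const:GlobalConfProjOne} instead of Constructions~\ref{const:TempConfProj} and~\ref{const:GlobalConfProj}.

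For the projection direction (every admissible $\pi^N$ in $\gsys{N}$ has a matching admissible $\pi^1$ in $\gsys{1}$), I would apply Construction~\ref{const:GlobalConfProjOne} pointwise along $\pi^N$ and show the resulting sequence $\pi^1$ is admissible in $\gsys{1}$. The key sub-claims are: (i) Construction~\ref{const:TempConfProjOne} maps $\tempinitconf{N}$ to $\tempinitconf{1}$ and commutes with the template transition relation, the crucial observation being that in every template transition of Section~\ref{sec:temp-model} the index-$1$ components $\getBox{\cdot}{1}$ and $\getDatum{\cdot}{1}$ are updated using only $\getPC{\cdot}$, the set $S_1$, and their own previous value (via $\nextLoc$, $\genMsg$, $\nextVar$), so they evolve independently of the components at indices $2..N$; and (ii) Construction~\ref{const:GlobalConfProjOne} maps $\ginitconf{N}$ to $\ginitconf{1}$ and commutes with each of the four sub-round transitions $\transsched$, $\transsnd$, $\transrcv$, $\transcomp$, hence with $\leadsto$, because every constraint defining those sub-rounds is a conjunction over process and buffer indices whose restriction to index $1$ and buffer $(1,1)$ is preserved. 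Since $L$ reads only index-$1$ variables, $L(\gconf{k}^1) = L(\gconf{k}^N)$ for all $k$.

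For the refilling direction, given an admissible $\pi^1$ in $\gsys{1}$, I would build an admissible $\pi^N$ in $\gsys{N}$ in which all processes $2..N$ crash in the very first Schedule sub-round and stutter forever after, while process $1$ mimics its moves from $\gsys{1}$. This mirrors Propositions~\ref{prop:IDRefilledConf} and~\ref{prop:IDRefilledTrans} and Lemma~\ref{lem:IDRefilledPath}: the crashed processes never send, so the buffers into process $1$ from $2..N$ stay empty; the components of process $1$'s state at indices $2..N$ are driven deterministically by $\nextVar(\cdot,\emptyset,\cdot)$ and never influence the index-$1$ observables; and one checks inductively that $\gconf{k}^1$ is the index projection of the constructed $\gconf{k}^N$ while the four-sub-round, clean-crash admissibility conditions hold. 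Again $L(\gconf{k}^1) = L(\gconf{k}^N)$ for all $k$. Combining both directions via Definition~\ref{def:traceEquivalence} yields trace equivalence of $\gsys{1}$ and $\gsys{N}$ under $AP_{\{1\}}$.

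I expect the only mildly delicate point to be the independence observation used in (i): that nothing in the index-$1$ part of a process's behavior depends on the dropped components $S_2,\ldots,S_N,d_2,\ldots,d_N$. Once this is spelled out, the remainder is a routine, index-restricted replay of the arguments already given for the $\gsys{2}$-versus-$\gsys{N}$ case, so no genuinely new obstacle arises.
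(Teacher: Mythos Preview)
Your proposal is correct and matches the paper's own proof, which is a one-line appeal to ``similar arguments in the proof of Lemma~\ref{lem:TraceEquivTwo} with Constructions~\ref{const:TempConfProjOne} and~\ref{const:GlobalConfProjOne}.'' You have simply unpacked that appeal in detail, and the independence observation you flag as ``mildly delicate'' is indeed the only point requiring care, exactly as in the two-index case.
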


\begin{proof}
By applying similar arguments in the proof of Lemma~\ref{lem:TraceEquivTwo} with Constructions~\ref{const:TempConfProjOne} and~\ref{const:GlobalConfProjOne}.  
\end{proof}

\subsection{Cutoff results in the unrestricted model} \label{app:CutoffOne}

In the following, we prove Propositions~\ref{prop:MoveQuantifiersOne} and~\ref{prop:MoveQuantifiersTwo} which allows us to change positions of big conjunctions in specific formulas.
Propositions~\ref{prop:MoveQuantifiersOne} and~\ref{prop:MoveQuantifiersTwo} are used in the proof of our cutoff results, Theorems~\ref{thm:NCutoffOneIndexNetysA} and~\ref{thm:NCutoffTwoIndexesNetysA}, respectively.

\begin{prop} \label{prop:MoveQuantifiersOne}
  Let $\mathcal{A}$ be a~\classname{} algorithm. 
  Let  $\gsys{N}$ be instances of $N$ processes for some $N \geq 1$.
  Let $\mathit{Path}_N$ be sets of all admissible sequences of configurations  in $\gsys{N}$.
  Let $\omega_{\{i\}}$ be a \ltlx{} formula  in which every predicate takes one of the forms: $P_1(i)$ or $P_2(i, i)$ where $i$ is an index in $\nid$. 
  Then,
  \begin{align}
  & \Big(\forall \pi_N \in \mathit{Path}_N \colon \gsys{N}, \pi_N \models \bigwedge_{i \in \nid}  \omega_{\{i\}} \Big) \label{fm:5_1} \\
  \Leftrightarrow \; & \Big( \bigwedge_{i \in \nid} \big( \forall \pi_N \in \mathit{Path}_N \colon \gsys{N}, \pi_N \models   \omega_{\{i\}} \big) \Big) \label{fm:5_2}
  \end{align}
\end{prop}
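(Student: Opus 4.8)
The plan is to prove the equivalence by a short chain of purely semantic rewritings; no structural property of $\gsys{N}$ or of \classname{} algorithms is needed, since the statement becomes a logical tautology once the definition of $\models$ on a conjunction is unfolded. The only ingredients are the standard semantics of LTL and the fact that $\nid = 1..N$ is finite, so that $\bigwedge_{i \in \nid} \omega_{\{i\}}$ is an ordinary finite conjunction.

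First I would recall the inductive clause for Boolean conjunction in the LTL semantics: for any admissible sequence $\pi_N \in \mathit{Path}_N$,
\[
\gsys{N}, \pi_N \models \bigwedge_{i \in \nid} \omega_{\{i\}}
\quad\text{iff}\quad
\forall i \in \nid \colon \gsys{N}, \pi_N \models \omega_{\{i\}}.
\]
Substituting this equivalence under the outer quantification over $\mathit{Path}_N$ turns formula~\eqref{fm:5_1} into
\[
\forall \pi_N \in \mathit{Path}_N \colon \big( \forall i \in \nid \colon \gsys{N}, \pi_N \models \omega_{\{i\}} \big).
\]

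Next I would commute the two universal quantifiers, which range over the fixed, mutually independent sets $\mathit{Path}_N$ and $\nid$, obtaining
\[
\forall i \in \nid \colon \big( \forall \pi_N \in \mathit{Path}_N \colon \gsys{N}, \pi_N \models \omega_{\{i\}} \big),
\]
and finally re-fold the universal quantifier over the finite set $\nid$ back into the big conjunction $\bigwedge_{i \in \nid}$, which is precisely formula~\eqref{fm:5_2}. Every step is an equivalence, so reading the chain backwards establishes the converse implication as well.

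The proof has no genuine obstacle; the one point that deserves a word of care is the very first step, where one must appeal to the inductive definition of $\models$ for $\wedge$ rather than treating $\bigwedge_{i \in \nid} \omega_{\{i\}}$ as an opaque formula, and where one should note that a degenerate empty conjunction is interpreted as \emph{true}, so that the statement stays vacuously valid. Everything else is elementary first-order reasoning, and the proof of Proposition~\ref{prop:MoveQuantifiersTwo} will follow the identical pattern for the four-place predicate family $\psi_{\{i,j\}}$.
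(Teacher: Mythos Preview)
Your proposal is correct and follows essentially the same approach as the paper: both proofs unfold the semantics of the finite conjunction $\bigwedge_{i\in\nid}\omega_{\{i\}}$ and then commute the two universal quantifiers over $\mathit{Path}_N$ and $\nid$. The paper writes out the $(\Rightarrow)$ direction by explicit instantiation and dismisses $(\Leftarrow)$ as symmetric, whereas you present the argument as a single chain of equivalences; the content is identical, and your remark about the empty conjunction is unnecessary here since $N\geq 1$.
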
 

\begin{proof}
  $(\Rightarrow)$ Let $\pi_N$ be an arbitrary admissible sequence of configurations in $\mathit{Path}_N$
  such that $\gsys{N}, \pi_N \models \bigwedge_{i \in \nid}  \omega_{\{i\}}$.
  Let $i_0$ be an arbitrary index in $\nid$, we have $\gsys{N}, \pi_N \models   \omega_{\{i_0\}}$.
  Hence, for every $\pi_N \in \mathit{Path}_N$, for every $i_0 \in \nid$, it follows $\gsys{N}, \pi_N \models   \omega_{\{i_0\}}$.
  Therefore, Formula \ref{fm:5_1} implies: for every $i_0 \in \nid$, for every $\pi_N \in \mathit{Path}_N$,  it follows $\gsys{N}, \pi_N \models   \omega_{\{i_0\}}$.
  It follows that: for every $i_0 \in \nid$, $\forall \pi_N \in \mathit{Path}_N \colon\gsys{N}, \pi_N \models   \omega_{\{i_0\}}$.
  Now, we have that Formula \ref{fm:5_1} implies Formula \ref{fm:5_2}.

  $(\Leftarrow)$ By applying similar arguments. 
\end{proof}

\begin{prop} \label{prop:MoveQuantifiersTwo}
  Let $\mathcal{A}$ be a~\classname{} algorithm. 
  Let $\gsys{N}$ be instances of $N$ processes respectively for some $N \geq 1$.
  Let $\mathit{Path}_N$ be sets of all admissible sequences of configurations in $\gsys{N}$.
  Let $\omega_{\{i\}}$ be a \ltlx{} formula  in which every predicate takes one of the forms: $P_1(i)$ or $P_2(i, i)$ where $i$ is an index in $\nid$. 
  Then, 
  \begin{align}
  & \big(  \forall \pi_N \in \mathit{Path}_N \colon \gsys{N}, \pi_N \models  \bigwedge^{i \neq j}_{i, j \in \nid} \psi_{\{i, j\}} \big) \label{fm:5_3}  \\
  \Leftrightarrow \; & \Big( \bigwedge^{i \neq j}_{i, j \in \nid} \big(  \forall \pi_N \in \mathit{Path}_N \colon \gsys{N}, \pi_N \models   \psi_{\{i, j\}} \big) \Big) \label{fm:5_4}
  \end{align}
\end{prop}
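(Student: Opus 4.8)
The plan is to prove this exactly as Proposition~\ref{prop:MoveQuantifiersOne} is proved, since it is once more a purely logical manipulation: we commute the universal quantifier $\forall \pi_N \in \mathit{Path}_N$ with the finite conjunction $\bigwedge^{i \neq j}_{i, j \in \nid}$. The only ingredient beyond propositional reasoning is the semantics of \ltlx{} satisfaction, namely that for any path $\pi_N$ and any finite family $\{\psi_{\{i,j\}}\}$ one has $\gsys{N}, \pi_N \models \bigwedge^{i \neq j}_{i, j \in \nid} \psi_{\{i,j\}}$ if and only if $\gsys{N}, \pi_N \models \psi_{\{i,j\}}$ for every pair of distinct indexes $i, j \in \nid$. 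This is well-defined because $\nid = 1..N$ is finite, so the big conjunction is a genuine \ltlx{} formula and satisfaction distributes over it.

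For the direction $(\Rightarrow)$, I would assume Formula~\ref{fm:5_3}, fix an arbitrary pair $i_0, j_0 \in \nid$ with $i_0 \neq j_0$ and an arbitrary $\pi_N \in \mathit{Path}_N$. By Formula~\ref{fm:5_3} we have $\gsys{N}, \pi_N \models \bigwedge^{i \neq j}_{i, j \in \nid} \psi_{\{i,j\}}$, and by the \ltlx{} semantics of conjunction this yields $\gsys{N}, \pi_N \models \psi_{\{i_0,j_0\}}$. Since $\pi_N$ was arbitrary, $\forall \pi_N \in \mathit{Path}_N \colon \gsys{N}, \pi_N \models \psi_{\{i_0,j_0\}}$, and since the pair $(i_0,j_0)$ was arbitrary, Formula~\ref{fm:5_4} follows.

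For the direction $(\Leftarrow)$, I would assume Formula~\ref{fm:5_4} and fix an arbitrary $\pi_N \in \mathit{Path}_N$. For every pair of distinct $i, j \in \nid$, Formula~\ref{fm:5_4} gives $\gsys{N}, \pi_N \models \psi_{\{i,j\}}$; by the \ltlx{} semantics of conjunction we conclude $\gsys{N}, \pi_N \models \bigwedge^{i \neq j}_{i, j \in \nid} \psi_{\{i,j\}}$. As $\pi_N$ was arbitrary, Formula~\ref{fm:5_3} follows, completing the equivalence.

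There is no real obstacle here: the statement is an instance of the standard equivalence $\forall x\,(\varphi_1 \wedge \dots \wedge \varphi_m) \equiv (\forall x\,\varphi_1) \wedge \dots \wedge (\forall x\,\varphi_m)$, specialised to quantification over $\mathit{Path}_N$ and to the finite index set of ordered distinct pairs from $\nid$. The only point that deserves a word of care is the finiteness of that index set, which is exactly what makes $\bigwedge^{i \neq j}_{i, j \in \nid}$ a well-formed \ltlx{} formula and lets it commute cleanly with $\models$; for a genuinely infinite index set one would have to be more careful about the meaning of the big conjunction, but in the parameterized setting $N$ is fixed for each instance $\gsys{N}$, so this is a non-issue.
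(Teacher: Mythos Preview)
Your proposal is correct and follows essentially the same approach as the paper: both directions are proved by fixing an arbitrary path and an arbitrary distinct pair $(i_0,j_0)$, invoking the semantics of finite conjunction in \ltlx{}, and then generalising; the paper even dismisses the $(\Leftarrow)$ direction with ``by applying similar arguments,'' whereas you spell it out. Your remark on finiteness of the index set is a welcome clarification but not something the paper makes explicit.
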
 

\begin{proof}
  $(\Rightarrow)$ Let $\pi_N$ be an arbitrary admissible sequence of configurations in $\mathit{Path}_N$
  such that $\gsys{N}, \pi_N \models \bigwedge^{i \neq j}_{i, j \in \nid} \psi_{\{i, j\}} $.
  Let $i_0$ and $j_0$ be arbitrary indexes in $\nid$ such that $i_0 \neq j_0$, we have that $\gsys{N}, \pi_N \models   \psi_{\{i_0, j_0\}}$.
  Hence, for every $\pi_N \in \mathit{Path}_N$, for every $i_0 \in \nid$, for every $j_0 \in \nid$ such that $i_0 \neq j_0$, it follows that $\gsys{N}, \pi_N \models   \psi_{\{i_0, j_0\}}$.
  Therefore, Formula \ref{fm:5_3} implies: for every $i_0 \in \nid$, for every $j_0 \in \nid$ such that $i_0 \neq j_0$, for every $\pi_N \in \mathit{Path}_N$,  it follows $\gsys{N}, \pi_N \models  \psi_{\{i_0, j_0\}}$.
  It follows that: for every $i_0 \in \nid$, for every $j_0 \in \nid$ such that $i_0 \neq j_0$, it holds $\forall \pi_N \in \mathit{Path}_N \colon\gsys{N}, \pi_N \models   \psi_{\{i_0, j_0\}}$.
  Therefore, Formula \ref{fm:5_3} implies Formula \ref{fm:5_4}.
  
  $(\Leftarrow)$ By applying similar arguments. 
\end{proof}

\recalltheorem{thm:NCutoffOneIndexNetysA}{\thmNCutoffOneIndexNetysA}

\begin{proof}
By Proposition~\ref{prop:MoveQuantifiersOne}, we have  
\[
\Big(  \forall \pi_N \in \mathit{Path}_N \colon \gsys{N}, \pi_N \models  \bigwedge_{i \in \nid} \omega_{\{i\}} \Big)  \Leftrightarrow  
\Big(  \bigwedge_{i \in \nid}  \big( \forall \pi_N \in \mathit{Path}_N \colon \gsys{N}, \pi_N \models  \omega_{\{i\}} \big) \Big)
\]
Let $i$ be an  index in a set $\nid$.
Hence, $\transposition = \swapij{i}{1}$  is a transposition on $\nid$ (*). 
By Lemma~\ref{lem:SymGlobalSys}, we have: (i) $\psi_{\{\switchIndex{\transposition}{i}}  = \psi_{\{1\}}$, and (ii)   $\transposition ( \gsys{N}) ) = \gsys{N}$, and (iii)
$\transposition ( \ginitconf{N})   =  \ginitconf{N}$.

Since $\omega_{\{i\}}$ is an \ltlx{} formula, $\All \omega_{\{i\}}$ is a LTL\textbackslash{}X formula where $\All$ is a path operator in LTL\textbackslash{}X (see~\cite{clarke2018model}). 
By the semantics of the operator $\All$, it follows that $\forall \pi_N \in \mathit{Path}_N \colon \gsys{N}, \pi_N \models  \omega_{\{i\}}$ if and only if $\gsys{N}, \ginitconf{N} \models \All \omega_{\{1\}}$.
By point~(*), it follows $\gsys{N}, \ginitconf{N} \models \All \omega_{\{i\}}$ if and only if $\gsys{N}, \ginitconf{N} \models \All \omega_{\{1\}}$.
Since an index $i$ is arbitrary,
we have
$ 
\gsys{N}, \ginitconf{N} \models \bigwedge_{i \in \nid} \All \omega_{\{i\}}$ if and only if $\gsys{N}, \ginitconf{N} \models \All \omega_{\{i\}}
$.

We have that $\gsys{N}, \ginitconf{N} \models \All \omega_{\{i\}}$ if and only if $\forall \pi_N \in \mathit{Path}_N \colon \gsys{N}, \pi_N \models \omega_{\{i\}}$ by the semantics of the operator $\All$.  
It follows $\forall \pi_N \in \mathit{Path}_N \colon \gsys{N}, \pi_N \models \omega_{\{i\}}$ if and only if $\forall \pi_2 \in \mathit{Path}_2 \colon \gsys{2}, \pi_2 \models \omega_{\{1\}}$ 
by Lemma \ref{lem:TraceEquivOne}.
Then, Theorem \ref{thm:NCutoffOneIndexNetysA} holds.
\end{proof}

\recalltheorem{thm:NCutoffTwoIndexesNetysA}{\thmNCutoffTwoIndexesNetysA}

\begin{proof}
By similar arguments in the proof of Theorem~\ref{thm:NCutoffOneIndexNetysA}.
\end{proof}

\subsection{Verification of the Failure Detector of~\cite{CT96} with the Cutoffs} \label{app:Application}

In the following, we present Lemmas~\ref{lem:StrongCompletness} which explains why the cutoff result~\ref{thm:NCutoffTwoIndexesNetysA} allows us to verify the strong completeness property of the failure detector of~\cite{CT96} under synchrony by model checking instances of size 2.

\newcommand{\propMoveQuantifiers}{
  Let $\gsys{N} = (\gconfs{N}, \gtr{N}, \grel{N}, \ginitconf{N})$ be a global transition system of a symmetric point-to-point algorithm under the unrestricted model. 
  Its indexes are $\nid$ for some $N \geq 1$.
  Let $i$ and $j$ be two indexes in the set $\nid$.
  Let $\mu_{\{i, j\}}$ be a first-order formula in which every predicate takes one of the forms: $Q_1(i)$, or $Q_2(j)$, or $Q_3(i, j)$, or $Q_4(j, i)$.
  The following conditions hold:
  \begin{enumerate}[nosep]
  \item $\F \G   \bigwedge_{i, j \in \nid} \mu_{\{i, j\}}$ be an \ltlx{} formula.
  
  \item $\bigwedge_{i, j \in \nid} \F \G  \mu_{\{i, j\}}$ be an \ltlx{} formula.
  
  \item Let $\pi = \gconf{0} \gconf{1} \ldots$ be an admissibe sequence of configurations in $\gsys{N}$. 
  It follows  $\gsys{N}, \pi \models \F \G  \bigwedge_{i, j \in \nid} \mu_{\{i, j\}}$
  if and only if $\gsys{N}, \pi \models \bigwedge_{i, j \in \nid} \F \G  \mu_{\{i, j\}}$.
  \end{enumerate}  
}

\begin{prop} \label{prop:MoveQuantifiers}
\propMoveQuantifiers
\end{prop}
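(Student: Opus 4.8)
### Proof proposal for Proposition~\ref{prop:MoveQuantifiers}

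The plan is to dispatch the three parts of the proposition in order, with the first two being purely syntactic bookkeeping and the third being the only part that requires an argument about LTL semantics. First I would observe that parts~(1) and~(2) follow immediately from the definition of an \ltlx{} formula: each $\mu_{\{i,j\}}$ is a first-order formula over the fixed finite predicate signature $\{Q_1,Q_2,Q_3,Q_4\}$, hence it is (after eliminating the bounded quantifiers over $\nid$, since $N$ is fixed) an atomic-proposition-level Boolean combination; prefixing it by the temporal operators $\F$ and $\G$ — neither of which is $\X$ — keeps it in the \ltlx{} fragment, and finite conjunction is likewise closed in this fragment. So $\bigwedge_{i,j\in\nid}\mu_{\{i,j\}}$, $\F\G\bigwedge_{i,j\in\nid}\mu_{\{i,j\}}$, and $\bigwedge_{i,j\in\nid}\F\G\mu_{\{i,j\}}$ are all \ltlx{} formulas. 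These are one-line justifications once the relevant definitions are unfolded.

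For part~(3) the claim is the equivalence, on a fixed path $\pi=\gconf{0}\gconf{1}\ldots$, of $\F\G\bigwedge_{i,j}\mu_{\{i,j\}}$ and $\bigwedge_{i,j}\F\G\mu_{\{i,j\}}$. The direction ($\Rightarrow$) is the easy one: if there is a time $k$ after which $\bigwedge_{i,j}\mu_{\{i,j\}}$ holds at every position, then for each fixed pair $(i,j)$ the same $k$ witnesses $\F\G\mu_{\{i,j\}}$, and then one takes the finite conjunction over all pairs. For ($\Leftarrow$), for each of the finitely many pairs $(i,j)$ pick a witnessing time $k_{i,j}$ after which $\mu_{\{i,j\}}$ holds forever; let $K=\max_{i,j} k_{i,j}$, which exists because $\nid$ is finite and hence there are only finitely many pairs. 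After position $K$, every conjunct $\mu_{\{i,j\}}$ holds, so $\bigwedge_{i,j}\mu_{\{i,j\}}$ holds from $K$ on, giving $\F\G\bigwedge_{i,j}\mu_{\{i,j\}}$. This is exactly the standard fact that $\G$ commutes with finite conjunction and $\F\G$ therefore distributes over finite conjunction; the only thing to be careful about is that the index set $\nid=1..N$ is genuinely finite, which the paper guarantees since $N$ is a fixed (if arbitrary) parameter.

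The main obstacle, such as it is, is essentially notational rather than mathematical: one must be explicit that the quantification $\bigwedge_{i,j\in\nid}$ is a \emph{finite} conjunction (so that $\max k_{i,j}$ makes sense and so that the \ltlx{} closure arguments of parts~(1)--(2) go through), and one must align the predicate-form restriction ($Q_1(i)$, $Q_2(j)$, $Q_3(i,j)$, $Q_4(j,i)$) with whatever notion of atomic proposition the paper's semantics of \ltlx{} uses — this is the same bookkeeping already done implicitly in Theorems~\ref{thm:NCutoffOneIndexNetysA} and~\ref{thm:NCutoffTwoIndexesNetysA}. There is no deep content; the proposition is a lemma whose role is to license moving the $\F\G$ prefix inside the big conjunction so that the cutoff theorems (which handle one pair of indices at a time) can be applied to each conjunct separately, exactly as Propositions~\ref{prop:MoveQuantifiersOne} and~\ref{prop:MoveQuantifiersTwo} do for the safety-style properties. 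I would therefore keep the write-up short, citing the finiteness of $\nid$ and the standard distributivity of $\G$ over finite conjunction as the two load-bearing observations.
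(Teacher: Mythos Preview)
Your proposal is correct and follows essentially the same route as the paper: points~(1) and~(2) are dispatched by citing the definition of \ltlx{}, and point~(3) is handled by the same two-direction argument (reuse the single witness time for $\Rightarrow$, take the maximum over the finitely many witness times $\ell^{i}_{j}$ for $\Leftarrow$). Your explicit emphasis that finiteness of $\nid$ is what makes the $\max$ well-defined matches the paper's load-bearing observation exactly.
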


\begin{proof} 
Points (1) and (2) hold by the definition of \ltlx{} (see~\cite{clarke2018model}).
We prove Point (3) as follows.

($\Rightarrow$) Since $\gsys{N}, \pi \models \F \G  \bigwedge_{i, j \in \nid} \mu_{\{i, j\}}$, there exists $\ell_0 \geq 0$ such that for every $\ell \geq \ell_0$, we have $\gconf{\ell}  \models   \bigwedge_{i, j \in \nid} \mu_{\{i, j\}}$.
Let $i_0$ and $j_0$ be two indexes in $\nid$.
We have $\gconf{\ell}  \models  \mu_{\{i_0, j_0\}}$ for every $\ell \geq \ell_0$.
Hence, it follows $\gsys{N}, \pi \models \F \G  \mu_{\{i_0, j_0\}}$.
Because $i_0$ and $j_0$ are arbitrary indexes in $\nid$, it follows that $\gsys{N}, \pi \models \bigwedge_{i, j \in \nid} \F \G  \mu_{\{i, j\}}$.

($\Leftarrow$)  Let $i_0$ and $j_0$ be two indexes in $\nid$. 
Since $\gsys{N}, \pi \models \bigwedge_{i, j \in \nid} \F \G  \mu_{\{i, j\}}$, it follows that 
$\gsys{N}, \pi \models \F \G  \mu_{\{i_0, j_0\}}$.
Therefore, there exists $\ell^i_j \geq 0$ such that $\gconf{\ell}  \models  \mu_{\{i_0, j_0\}}$ for every $\ell \geq \ell^{i_0}_{j_0}$.
Let $\ell = \mathit{max}(\{\ell^{i}_{j} \colon i \in \nid \land j \in \nid\})$ where $\mathit{max}$ is a function to pick a maximum number in a finite set of natural numbers.
It follows that $\gconf{\ell}  \models  \mu_{\{i, j\}}$ for every $\ell \geq \ell_0$, for every $i, j \in \nid$.
Therefore, $\gsys{N}, \pi \models \F \G  \bigwedge_{i, j \in \nid} \mu_{\{i, j\}}$.
\end{proof}

\newcommand{\lemStrongCompletness}{
  Let $\gsys{2}$ and $\gsys{N}$ be two global transition systems of a symmetric point-to-point algorithm such that:
  \begin{enumerate}[nosep]
    \item These systems $\gsys{2}$ and $\gsys{N}$ have 2 and $N$ processes respectively where $N \geq 2$.
    \item All processes $\gsys{2}$ and $\gsys{N}$ follow Algorithm~\ref{algo:detector1}.
    \item The model of computation of these systems is 
under the unrestricted model. 
  \end{enumerate}
  Let $\Proc{}$ be a set of indexes, and $\mu(\Pi)$ denote the strong completeness property in which the set of process indexes is $\Pi$, i.e.
  \[
  \mu(\Pi) \triangleq
  \F \G (\A p, q \in \Pi : (\mathit{Correct}(p) \land \lnot\mathit{Correct}(q)) \Rightarrow \mathit{Suspected}(p, q))  
  \]
  Let $\mathit{Path}_2$ and $\mathit{Path}_N$ be sets of all admissible sequences of configurations in $\gsys{2}$ and in $\gsys{N}$, respectively.
  Then, it holds: 
  \begin{align*}
  & \forall \pi_N \in \mathit{Path}_N \colon \gsys{N}, \pi_N \models \mu(\nid)  \\
  \Leftrightarrow & \forall \pi_2 \in \mathit{Path}_2 \colon \gsys{2}, \pi_2  \models \mu(\tid)    
  \end{align*}
  
}

\begin{lem} \label{lem:StrongCompletness}
  \lemStrongCompletness
\end{lem}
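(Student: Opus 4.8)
The plan is to reduce Lemma~\ref{lem:StrongCompletness} to the cutoff result in Theorem~\ref{thm:NCutoffTwoIndexesNetysA} by massaging the strong completeness formula $\mu(\Pi)$ into the syntactic shape required there. The obstacle is that $\mu(\nid)$ has the form $\F\G\,\forall p,q\colon\varphi(p,q)$, i.e. the big conjunction (the universal quantifier over the finite index set) sits \emph{inside} the temporal operators, whereas Theorem~\ref{thm:NCutoffTwoIndexesNetysA} expects a conjunction $\bigwedge_{i\neq j}\psi_{\{i,j\}}$ of \emph{per-pair} \ltlx{} formulas, i.e. the conjunction on the \emph{outside}. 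So the first step is to pull the conjunction out, and the second step is to check that the bodies fit the predicate-shape hypothesis.

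First I would apply Proposition~\ref{prop:MoveQuantifiers}, which is stated precisely for this purpose: taking $\mu_{\{i,j\}}\triangleq (\mathit{Correct}(i)\land\lnot\mathit{Correct}(j))\Rightarrow\mathit{Suspected}(i,j)$, point~(3) of that proposition gives, for every admissible $\pi$ in $\gsys{N}$, that $\gsys{N},\pi\models\F\G\bigwedge_{i,j\in\nid}\mu_{\{i,j\}}$ iff $\gsys{N},\pi\models\bigwedge_{i,j\in\nid}\F\G\,\mu_{\{i,j\}}$. I should note that the universal quantifier $\A p,q\in\Pi$ over a finite set is literally the finite conjunction $\bigwedge_{i,j\in\Pi}$, and that the diagonal terms $i=j$ are trivially valid (since $\mathit{Correct}(i)\land\lnot\mathit{Correct}(i)$ is false), so $\bigwedge_{i,j\in\nid}$ and $\bigwedge^{i\neq j}_{i,j\in\nid}$ are interchangeable here; the same remark applies in $\gsys{2}$ with $\tid$. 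Hence quantifying over all admissible $\pi_N$, the statement $\forall\pi_N\colon\gsys{N},\pi_N\models\mu(\nid)$ is equivalent to $\forall\pi_N\colon\gsys{N},\pi_N\models\bigwedge^{i\neq j}_{i,j\in\nid}\F\G\,\mu_{\{i,j\}}$.

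Next I would set $\psi_{\{i,j\}}\triangleq\F\G\,\mu_{\{i,j\}}$ and verify that each such $\psi_{\{i,j\}}$ is an \ltlx{} formula whose atomic predicates are exactly of the four forms $P_1(i)$, $P_2(j)$, $P_3(i,j)$, $P_4(j,i)$ that Theorem~\ref{thm:NCutoffTwoIndexesNetysA} permits: the predicate $\mathit{Correct}(i)$ is unary in $i$ (form $P_1$), $\mathit{Correct}(j)$ is unary in $j$ (form $P_2$), and $\mathit{Suspected}(i,j)$ is binary in $(i,j)$ (form $P_3$); there is no $P_4$-predicate, which is fine. This is the step that could in principle be delicate, since the excerpt earlier stresses that $\mathit{Suspected}$ depends on $\textit{timeout}$ and on waiting times — but those are themselves components of the local variable $d$ indexed by the pair $(p,q)$ in the symmetric point-to-point encoding, so $\mathit{Suspected}(i,j)$ genuinely inspects only index-$i$-and-$j$ data and qualifies as a legitimate $P_3(i,j)$; this observation is the main thing to spell out carefully, and I expect it to be the crux of the argument.

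With that in hand, Theorem~\ref{thm:NCutoffTwoIndexesNetysA} applied to $\bigwedge^{i\neq j}_{i,j\in\nid}\psi_{\{i,j\}}$ yields directly
\[
\Big(\forall\pi_N\in\mathit{Path}_N\colon\gsys{N},\pi_N\models\textstyle\bigwedge^{i\neq j}_{i,j\in\nid}\psi_{\{i,j\}}\Big)
\;\Leftrightarrow\;
\Big(\forall\pi_2\in\mathit{Path}_2\colon\gsys{2},\pi_2\models\psi_{\{1,2\}}\Big).
\]
Finally I would unwind the right-hand side using Proposition~\ref{prop:MoveQuantifiers} in the reverse direction within $\gsys{2}$ (with index set $\tid=\{1,2\}$): $\gsys{2},\pi_2\models\bigwedge^{i\neq j}_{i,j\in\tid}\F\G\,\mu_{\{i,j\}}$ iff $\gsys{2},\pi_2\models\F\G\bigwedge_{i,j\in\tid}\mu_{\{i,j\}}$, which is exactly $\gsys{2},\pi_2\models\mu(\tid)$, and on the left-hand side use Step~1 backwards to recover $\gsys{N},\pi_N\models\mu(\nid)$. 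Combining the three equivalences gives the claimed chain $\forall\pi_N\colon\gsys{N},\pi_N\models\mu(\nid)\Leftrightarrow\forall\pi_2\colon\gsys{2},\pi_2\models\mu(\tid)$, completing the proof.
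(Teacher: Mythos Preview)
Your proof is correct and follows essentially the same route as the paper's: use Proposition~\ref{prop:MoveQuantifiers} to pull the finite conjunction outside the $\F\G$, discard the diagonal conjuncts $p=q$ because their antecedent $\mathit{Correct}(p)\land\lnot\mathit{Correct}(p)$ is $\False$, invoke Theorem~\ref{thm:NCutoffTwoIndexesNetysA}, and then reverse the two steps inside $\gsys{2}$ to recover $\mu(\tid)$. One small point both you and the paper leave implicit: Theorem~\ref{thm:NCutoffTwoIndexesNetysA} literally yields only $\psi_{\{1,2\}}$ on the $\gsys{2}$ side, whereas you then manipulate $\bigwedge^{i\neq j}_{i,j\in\tid}\psi_{\{i,j\}}=\psi_{\{1,2\}}\land\psi_{\{2,1\}}$; the missing equivalence (under $\forall\pi_2$) follows from the symmetry of $\gsys{2}$ in Lemma~\ref{lem:SymGlobalSys}, or equivalently from applying Theorem~\ref{thm:NCutoffTwoIndexesNetysA} once more with $N=2$.
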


\begin{proof}
  To keep the presentation simple, let $\nu(p, q)$ be a predicate such that $\nu(p, q) \triangleq (\mathit{Correct}(p) \land \lnot\mathit{Correct}(q))$.
  Let $\pi_N$ be an admissible sequence of configurations in $\gsys{N}$.
  We have 
  $\gsys{N}, \pi_N \models \mu(\Pi)$ if and only if 
  $\gsys{N}, \pi_N \models \F \G (\A p, q \in \nid \colon \nu(p, q) \Rightarrow \mathit{Suspected}(p, q))$.
  It follows
  \begin{alignat*}{3}
  &  \gsys{N}, \pi_N \models \mu(\nid) && \\
  \Leftrightarrow \; &  \gsys{N}, \pi_N \models \F \G (\bigwedge_{p, q \in \nid}  \nu(p, q) \Rightarrow \mathit{Suspected}(p, q))  && \\
  \Leftrightarrow \; &  \gsys{N}, \pi_N \models \bigwedge_{p, q \in \nid}  \F \G ( \nu(p, q) \Rightarrow \mathit{Suspected}(p, q)) && \quad \text{(by Proposition~\ref{prop:MoveQuantifiers})} 
  \end{alignat*}
  The last formula is equivalent to
  \begin{alignat*}{3}
  &  \gsys{N}, \pi_N \models \bigwedge^{p \neq q}_{p, q \in \nid}  \F \G ( \nu(p, q) \Rightarrow \mathit{Suspected}(p, q)) && \\
  \land \; &  \gsys{N}, \pi_N \models \bigwedge^{p = q}_{p, q \in \nid}  \F \G ( \nu(p, q) \Rightarrow \mathit{Suspected}(p, q)) && 
  \end{alignat*}
  For every $p, q \in \nid$, if $p = q$, then $(\mathit{Correct}(p) \land \lnot\mathit{Correct}(q)) = \False$ (*).
  Hence, it follows that
  \begin{alignat*}{3}
  &  \gsys{N}, \pi_N \models \bigwedge_{p, q \in \nid}  \F \G ( \nu(p, q) \Rightarrow \mathit{Suspected}(p, q)) && \\
  \Leftrightarrow \; & \gsys{N}, \pi_N \models \bigwedge^{p \neq q}_{p, q \in \nid}  \F \G ( \nu(p, q) \Rightarrow \mathit{Suspected}(p, q)) && 
  \end{alignat*}
  It follows that $\forall \pi_N \in \mathit{Path}_N \colon \gsys{N}, \pi_N \models \mu$ if and only if 
  \[
  \forall \pi_N \in \mathit{Path}_N \colon \gsys{N}, \pi_N \models \bigwedge^{p \neq q}_{p, q \in \nid}  \F \G ( \nu(p, q) \Rightarrow \mathit{Suspected}(p, q))
  \]
  By Theorem~\ref{thm:NCutoffTwoIndexesNetysA}, it follows 
  \begin{align*}
   \forall \pi_N \in \mathit{Path}_N \colon \gsys{N}, \pi_N & \models \mu(\nid) \\
  \Leftrightarrow \;  \forall \pi_2 \in \mathit{Path}_2 \colon \gsys{2}, \pi_2 & \models \bigwedge^{p \neq q}_{p, q \in \tid}  \F \G ( \nu(p, q) \Rightarrow \mathit{Suspected}(p, q))
  \end{align*}
  By point (*), we have
  \begin{align*}
  \forall \pi_N \in \mathit{Path}_N \colon \gsys{N}, \pi_N & \models \mu(\nid) \\
  \Leftrightarrow \; \forall \pi_2 \in \mathit{Path}_2 \colon \gsys{2}, \pi_2 &  \models \bigwedge_{p, q \in \tid}  \F \G ( \nu(p, q) \Rightarrow \mathit{Suspected}(p, q))
  \end{align*}
  By Proposition~\ref{prop:MoveQuantifiers}, it follows
  \begin{align*}
  \forall \pi_N \in \mathit{Path}_N \colon \gsys{N}, \pi_N & \models \mu(\nid) \\
  \Leftrightarrow \; \forall \pi_2 \in \mathit{Path}_2 \colon \gsys{2}, \pi_2 &  \models   \F \G ( \bigwedge_{p, q \in \tid} (\nu(p, q) \Rightarrow \mathit{Suspected}(p, q)))  \\
  \Leftrightarrow \; \forall \pi_2 \in \mathit{Path}_2 \colon \gsys{2}, \pi_2 &  \models \mu(\tid)  
  \end{align*}
  Hence, Lemma~\ref{lem:StrongCompletness} holds. 
\end{proof}

\newcommand{\lemEventuallyStrongAccuracy}{
  Let $\gsys{1}$ and $\gsys{2}$ and $\gsys{N}$ be three global transition systems of a symmetric point-to-point algorithm such that:
  \begin{enumerate}[nosep]
    \item These systems $\gsys{1}$ and $\gsys{2}$ and $\gsys{N}$ have 1, 2 and $N$ processes respectively.
    \item All processes in $\gsys{1}$ and $\gsys{2}$ and $\gsys{N}$ follow Algorithm~\ref{algo:detector1}.
    
    \item Three sets $\mathit{Path}_1$ and $\mathit{Path}_2$ and $\mathit{Path}_N$ be sets of admissible sequences of configurations in $\gsys{1}$ and $\gsys{2}$ and $\gsys{N}$, respectively.
    \item The model of computation of these systems is 
    under the unrestricted model. 
  \end{enumerate}
  Let $\Proc{}$ be a set of process indexes, and $\mu(\Pi)$ denote the eventually strong accuracy property in which the set of process indexes is $\Pi$, i.e.,
  \[
  \mu(\Pi) \triangleq
  \F \G (\A p, q \in \Pi \colon (\mathit{Correct}(p) \land \mathit{Correct}(q)) \Rightarrow \lnot \mathit{Suspected}(p, q))
  \]
  It follows   $\forall \pi_N \in \mathit{Path}_N \colon \gsys{N}, \pi_N \models \mu(\nid)$  
  if and only if both $\forall \pi_2 \in \mathit{Path}_2 \colon \gsys{2}, \pi_2  \models \mu(\tid)$ and $\forall \pi_1 \in \mathit{Path}_1 \colon \gsys{1}, \pi_1  \models \mu(1..1)$.
}

\begin{lem} \label{lem:EventuallyStrongAccuracy}
\lemEventuallyStrongAccuracy
\end{lem}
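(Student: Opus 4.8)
\emph{Proof plan.} The plan is to follow the route of the proof of Lemma~\ref{lem:StrongCompletness}, but with one extra ingredient. For strong completeness the antecedent $\mathit{Correct}(p) \land \lnot\mathit{Correct}(q)$ collapses to $\False$ on the diagonal $p = q$, so only the two-index cutoff (Theorem~\ref{thm:NCutoffTwoIndexesNetysA}) was needed; here the antecedent of $\mu$ is $\mathit{Correct}(p) \land \mathit{Correct}(q)$, which does \emph{not} trivialise when $p = q$, so the diagonal conjuncts survive and must be discharged by the one-index cutoff (Theorem~\ref{thm:NCutoffOneIndexNetysA}). Thus the proof splits the conjunction hidden inside $\mu(\nid)$ into a diagonal part and an off-diagonal part and applies a different cutoff to each.

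First I would put $\nu(p,q) \triangleq \mathit{Correct}(p) \land \mathit{Correct}(q)$ and, for a fixed admissible $\pi_N$, use Proposition~\ref{prop:MoveQuantifiers} to commute $\F \G$ with the finite universal quantifier over the index set, so that
\[
\gsys{N}, \pi_N \models \mu(\nid) \iff \gsys{N}, \pi_N \models \bigwedge_{p,q \in \nid} \F \G\big(\nu(p,q) \Rightarrow \lnot\mathit{Suspected}(p,q)\big).
\]
Reordering this finite conjunction exhibits it as $\bigwedge_{p \in \nid} \omega_{\{p\}} \land \bigwedge^{p \neq q}_{p,q \in \nid} \psi_{\{p,q\}}$, with $\omega_{\{p\}} \triangleq \F \G(\mathit{Correct}(p) \Rightarrow \lnot\mathit{Suspected}(p,p))$ and $\psi_{\{p,q\}} \triangleq \F \G(\nu(p,q) \Rightarrow \lnot\mathit{Suspected}(p,q))$. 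Since universal quantification distributes over conjunction, requiring this over all $\pi_N \in \mathit{Path}_N$ amounts to requiring the diagonal part over all of $\mathit{Path}_N$ and the off-diagonal part over all of $\mathit{Path}_N$ separately.

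Next I would note that $\omega_{\{p\}}$ is an \ltlx{} formula whose predicates all have the form $P_1(p)$ ($= \mathit{Correct}(p)$) or $P_2(p,p)$ ($= \mathit{Suspected}(p,p)$), so Theorem~\ref{thm:NCutoffOneIndexNetysA} makes the diagonal part over $\mathit{Path}_N$ equivalent to $\forall \pi_1\colon \gsys{1}, \pi_1 \models \omega_{\{1\}}$, and $\omega_{\{1\}}$ is literally $\mu(1..1)$; likewise $\psi_{\{p,q\}}$ is an \ltlx{} formula with predicates of the forms $P_1(p)$, $P_2(q)$, $P_3(p,q)$ (from $\mathit{Correct}(p)$, $\mathit{Correct}(q)$, $\mathit{Suspected}(p,q)$), so Theorem~\ref{thm:NCutoffTwoIndexesNetysA} makes the off-diagonal part over $\mathit{Path}_N$ equivalent to $\forall \pi_2\colon \gsys{2}, \pi_2 \models \psi_{\{1,2\}}$. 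Running the same decomposition on $\gsys{2}$ itself --- Proposition~\ref{prop:MoveQuantifiers}, then Theorem~\ref{thm:NCutoffOneIndexNetysA} with parameter $2$ for the two diagonal conjuncts and the symmetry of $\gsys{2}$ (Lemma~\ref{lem:SymGlobalSys}) to collapse the two off-diagonal conjuncts for $(1,2)$ and $(2,1)$ --- shows that $\forall \pi_2\colon \gsys{2}, \pi_2 \models \mu(\tid)$ is equivalent to $(\forall \pi_1\colon \gsys{1}, \pi_1 \models \mu(1..1)) \land (\forall \pi_2\colon \gsys{2}, \pi_2 \models \psi_{\{1,2\}})$. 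Chaining these equivalences gives the claimed statement (the $\gsys{1}$ conjunct on the right being, in fact, already entailed by the $\gsys{2}$ one). I expect the main obstacle to be purely bookkeeping: verifying that $\omega_{\{p\}}$ and $\psi_{\{p,q\}}$ really are \ltlx{} formulas of exactly the predicate shape each cutoff theorem requires, and that the reordering and the diagonal/off-diagonal split genuinely commute with the $\F \G$ prefix --- which is precisely what Proposition~\ref{prop:MoveQuantifiers} supplies.
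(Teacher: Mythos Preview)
Your proposal is correct and follows essentially the same route as the paper: commute $\F\G$ with the finite conjunction via Proposition~\ref{prop:MoveQuantifiers}, split into diagonal and off-diagonal parts, and apply Theorem~\ref{thm:NCutoffOneIndexNetysA} to the former and Theorem~\ref{thm:NCutoffTwoIndexesNetysA} to the latter. You are in fact slightly more careful than the paper in the last mile: the paper ends with the conjunction of $\forall\pi_1\colon\gsys{1},\pi_1\models\F\G\,\nu(1,1)$ and $\forall\pi_2\colon\gsys{2},\pi_2\models\F\G\,\nu(1,2)$ and declares the lemma proved, whereas you explicitly argue why this conjunction coincides with $\big(\forall\pi_1\colon\gsys{1},\pi_1\models\mu(1..1)\big)\land\big(\forall\pi_2\colon\gsys{2},\pi_2\models\mu(\tid)\big)$ by re-running the decomposition on $\gsys{2}$ itself.
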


\begin{proof}
To keep the presentation simple, we define 
\[
\nu(p, q) = (\mathit{Correct}(p) \land \mathit{Correct}(q)) \Rightarrow \lnot \mathit{Suspected}(p, q)
\]
By similar arguments in Proposition~\ref{lem:StrongCompletness}, we have  $\forall \pi_N \in \mathit{Path}_N \colon \gsys{N}, \pi_N \models \mu(\nid)$ is equivalent to the following conjunction
\begin{align*}
& \Big( \forall \pi_N \in \mathit{Path}_N \colon  \gsys{N}, \pi_N \models \bigwedge^{p = q}_{p, q \in \nid} \F \G \nu(p, q) \Big)\\
\; \land \; &
\Big( \forall \pi_N \in \mathit{Path}_N \colon \gsys{N}, \pi_N \models \bigwedge^{p \neq q}_{p, q \in \nid}  \F \G \nu(p, q) \Big)
\end{align*}

By Theorems~\ref{thm:NCutoffOneIndexNetysA} and~\ref{thm:NCutoffTwoIndexesNetysA}, the above conjunction is equivalent to
\begin{align*}
& \Big( \forall \pi_1 \in \mathit{Path}_1 \colon  \gsys{1}, \pi_1 \models \bigwedge^{p = q}_{p, q \in \nid}  \F \G \nu(1, 1) \Big) \\
\; \land \; &
\Big( \forall \pi_2 \in \mathit{Path}_2 \colon \gsys{2}, \pi_2 \models \bigwedge^{p \neq q}_{p, q \in \nid}  \F \G \nu(1, 2) \Big)
\end{align*}

By Proposition~\ref{prop:MoveQuantifiers}, the above conjunction is equivalent to 
\begin{align*}
& \Big( \forall \pi_1 \in \mathit{Path}_1 \colon  \gsys{1}, \pi_1 \models  \F \G  \bigwedge^{p = q}_{p, q \in \nid} \nu(1, 1) \Big) \\
\; \land \; &
\Big( \forall \pi_2 \in \mathit{Path}_2 \colon \gsys{2}, \pi_2 \models   \F \G \bigwedge^{p \neq q}_{p, q \in \nid} \nu(1, 2) \Big)
\end{align*}
Therefore, Lemma~\ref{lem:EventuallyStrongAccuracy} holds.
\end{proof}

\section{Cutoff Results in the Case of Unknown Time Bounds}  \label{sec:forte21-cutoff-res}
In this section, we extend the above cutoff results on a number of processes (see Theorems~\ref{thm:NCutoffOneIndexNetysA} and \ref{thm:NCutoffTwoIndexesNetysA})  for partial synchrony in case of unknown bounds $\Delta$ and $\Phi$.
The extended results are formalized in Theorems~\ref{thm:NCutoffOneIndex} and~\ref{thm:NCutoffTwoIndexes}.
It is straightforward to adapt our approach to other models of partial synchrony in~\cite{DLS88,CT96}.

\newcommand{\thmNCutoffOneIndex}{
	Let $\mathcal{A}$ be a~\classname{} algorithm under partial synchrony with unknown bounds $\Delta$ and $\Phi$. 
	Let $\gsys{1}$ and $\gsys{N}$ be instances of $\mathcal{A}$ with 1 and $N$ processes respectively for some $N \geq 1$.
	Let $\mathit{Path}_1$ and $\mathit{Path}_N$ be sets of all admissible sequences of configurations in $\gsys{1}$ and in $\gsys{N}$ under partial synchrony, respectively.
	Let $\omega_{\{i\}}$ be a \ltlx{} formula  in which every predicate takes one of the forms: $P_1(i)$ or $P_2(i, i)$ where $i$ is an index in $\nid$. 
	It follows that: 
  \[
	\Big(\forall \pi_N \in \mathit{Path}_N \colon \gsys{N}, \pi_N \models \bigwedge_{i \in \nid}  \omega_{\{i\}} \Big) \; \Leftrightarrow \; \Big(\forall \pi_1 \in \mathit{Path}_1 \colon \gsys{1}, \pi_1 \models  \omega_{\{1\}}\Big)
  \]
}

\begin{thm} \label{thm:NCutoffOneIndex}
	\thmNCutoffOneIndex
\end{thm}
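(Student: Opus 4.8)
The plan is to follow the proof of Theorem~\ref{thm:NCutoffOneIndexNetysA} step by step, checking that the extra structure added in Section~\ref{sec:time-constraints} --- message ages and the constraints \ref{PS1} and \ref{PS2} --- is respected throughout. First I would note that the symmetry results need no new argument: a transposition $\transposition$ acts only on process indices and leaves message contents, and hence message ages, untouched, so the age-tagged buffers behave under $\transposition$ exactly as the plain buffers did in Lemmas~\ref{lem:SymProcTemp} and~\ref{lem:SymGlobalSys}. Moreover $\transposition$ carries a configuration satisfying \ref{PS1} (resp. \ref{PS2}) to one satisfying \ref{PS1} (resp. \ref{PS2}), since $\CanRun$, the program counters, and the buffer contents are only relabelled. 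Thus Lemmas~\ref{lem:SymProcTemp} and~\ref{lem:SymGlobalSys} hold verbatim in the partially synchronous model.

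The real work is to re-establish the two direction lemmas for sequences that are admissible under partial synchrony. For the projection direction (Lemma~\ref{lem:IDProjPath}) I would take a sequence $\pi^N$ admissible under partial synchrony in $\gsys{N}$ and let $\pi^1$ be its index projection on $\{1\}$ via Construction~\ref{const:GlobalConfProjOne}; $\pi^1$ is already admissible by Lemma~\ref{lem:IDProjPath}, so only \ref{PS1} and \ref{PS2} remain to be checked. The projection preserves, for process~$1$, its program counter, its incoming buffer from itself together with all message ages in it, and whether it is active; hence if process~$1$ takes a \textit{Receive} step in $\pi^1$ it takes the corresponding step in $\pi^N$, where \ref{PS1} guaranteed no in-transit message of age $\geq \Delta$ in the retained buffer, so the same holds in $\pi^1$. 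Likewise, if process~$1$ is correct throughout a window $[k, k+\Phi]$ in $\pi^1$, it is correct throughout the same window in $\pi^N$, so \ref{PS2} forces it to take a step there, and that step survives the projection. The two-index case is identical with $\{1, 2\}$ in place of $\{1\}$.

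For the refilling direction (Lemma~\ref{lem:IDRefilledPath}) I would reuse the construction behind that lemma: from a sequence $\pi^1$ admissible under partial synchrony in $\gsys{1}$, build $\pi^N$ in which processes $2..N$ crash in the very first \textit{Schedule} sub-round and stay at $\loccrash$, with all their incoming buffers emptied on crash and never refilled, while process~$1$ replays $\pi^1$ and the one retained buffer, with its message ages, is copied over. Then \ref{PS1} is vacuous for every receiver in $2..N$ (a crashed process never performs a \textit{Receive} step) and holds for process~$1$ because its incoming buffers are exactly those of $\pi^1$ and no message is ever added to them from a crashed sender; \ref{PS2} is vacuous for every process in $2..N$ because its program counter equals $\loccrash$ at the configuration of index $4k-3$ for every $k \geq 1$, so the hypothesis of \ref{PS2} fails, and holds for process~$1$ because its steps are inherited from $\pi^1$. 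I would then package the two lemmas into an analogue of Lemma~\ref{lem:TraceEquivOne}: $\gsys{1}$ and $\gsys{N}$ are trace equivalent under $AP_{\{1\}}$ when Definition~\ref{def:traceEquivalence} is read with ``admissible under partial synchrony'' in place of ``admissible''.

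The conclusion then follows as in the proof of Theorem~\ref{thm:NCutoffOneIndexNetysA}: Proposition~\ref{prop:MoveQuantifiersOne} is purely propositional and still lets us push the conjunction through the universal quantifier over paths; for a fixed index $i$ the transposition $\swapij{i}{1}$ together with the now-partially-synchronous symmetry lemmas shows that $\forall \pi_N \in \mathit{Path}_N \colon \gsys{N}, \pi_N \models \omega_{\{i\}}$ is equivalent to $\forall \pi_N \in \mathit{Path}_N \colon \gsys{N}, \pi_N \models \omega_{\{1\}}$ via the semantics of the path operator $\All$; and the partially synchronous trace equivalence of $\gsys{1}$ and $\gsys{N}$ under $AP_{\{1\}}$ closes the gap to $\gsys{1}$. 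The main obstacle I anticipate is the bookkeeping in the refilling direction --- making sure the age-tagged buffers and the interval constraint \ref{PS2} are not broken by the crashed processes that the construction introduces ``for free'' --- which ultimately rests on the two observations that \ref{PS2} only constrains a process that stays correct throughout an entire $\Phi$-window and that \ref{PS1} is triggered only by an actual \textit{Receive} step.
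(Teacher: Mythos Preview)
Your proposal is correct and follows essentially the same approach as the paper. The paper actually proves Theorem~\ref{thm:NCutoffTwoIndexes} in detail and declares Theorem~\ref{thm:NCutoffOneIndex} similar; its argument is exactly the decomposition you outline---reuse the symmetry and trace-equivalence lemmas from the unrestricted model, then add two lemmas (Lemmas~\ref{lem:ps1} and~\ref{lem:ps2}) showing that \ref{PS1} and \ref{PS2} survive the index projection and the crash-refilling construction, respectively, with the same vacuity observations about crashed processes you make.
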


\newcommand{\thmNCutoffTwoIndexes}{
	Let $\mathcal{A}$ be a~\classname{} algorithm under partial synchrony with unknown bounds $\Delta$ and $\Phi$. 
	Let $\gsys{2}$ and $\gsys{N}$ be instances of $\mathcal{A}$ with 2 and $N$ processes respectively for some $N \geq 2$.
	Let $\mathit{Path}_2$ and $\mathit{Path}_N$ be sets of all admissible sequences of configurations in $\gsys{2}$ and in $\gsys{N}$ under partial synchrony, respectively.
	Let $\psi_{\{i, j\}}$ be an \ltlx{} formula in which every predicate takes one of the forms: $Q_1(i)$, or $Q_2(j)$, or $Q_3(i, j)$, or $Q_4(j, i)$ where $i$ and $j$ are different indexes in $\nid$.
	It follows that:
	\[
  \big(  \forall \pi_N \in \mathit{Path}_N \colon \gsys{N}, \pi_N \models  \bigwedge^{i \neq j}_{i, j \in \nid} \psi_{\{i, j\}} \big)  \Leftrightarrow  \big(  \forall \pi_2 \in \mathit{Path}_2 \colon \gsys{2}, \pi_2 \models    \psi_{\{1, 2 \}} \big) 
  \]

}

\begin{thm} \label{thm:NCutoffTwoIndexes}
	\thmNCutoffTwoIndexes  
\end{thm}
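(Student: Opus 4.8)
The plan is to replay, almost verbatim, the proof of Theorem~\ref{thm:NCutoffTwoIndexesNetysA}, replacing ``admissible sequence'' everywhere by ``admissible sequence under partial synchrony'' and checking, at each place a construction is used, that it also respects Constraints~\ref{PS1} and~\ref{PS2} together with the age-bookkeeping conditions~\ref{MsgAge1}--\ref{MsgAge2} and the ``age unchanged outside Schedule'' condition of Section~\ref{sec:time-constraints}. Concretely, the proof rests on time-augmented analogues of three ingredients --- symmetry of the system under index transpositions, projection from $\gsys{N}$ to $\gsys{2}$, and ``refilling'' from $\gsys{2}$ to $\gsys{N}$ --- after which the purely logical assembly carries over unchanged.

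First I would re-establish the analogues of Lemmas~\ref{lem:SymProcTemp} and~\ref{lem:SymGlobalSys} for the model with message buffers over $\msg \times \mathbb{N}$, the predicates $\predsnd'$ and $\predrcv'$, and the Schedule sub-round governed by~\ref{MsgAge1}--\ref{MsgAge2}. The only genuinely new point is that a transposition $\transposition = \swapij{i}{j}$ maps an admissible-under-partial-synchrony sequence to another one: this holds because transpositions leave message contents --- hence age tags --- untouched, so the age dynamics are preserved, and because Constraints~\ref{PS1} and~\ref{PS2} are invariant under renaming of process indices (PS2 quantifies over a single index, PS1 over one receiver and all senders), so applying the global transposition pointwise to such a sequence yields again such a sequence. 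Everything else in those lemmas is the routine re-run of Section~\ref{app:SymGlobalSys}.

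Second I would prove the time-augmented versions of Lemmas~\ref{lem:IDProjPath} and~\ref{lem:IDRefilledPath}. For the projection direction: the index projection of an admissible-under-partial-synchrony $\pi^N$ on $\{1,2\}$ is again admissible under partial synchrony, because Constraint~\ref{PS1} is only weakened when the set of senders is restricted from $\nid$ to $\{1,2\}$, and Constraint~\ref{PS2} for $\gsys{2}$ is a sub-instance of the one for $\gsys{N}$ since the projection preserves, for $i\in\{1,2\}$, both the crash status of $i$ and whether $i$ takes a step at each time. For the refilling direction: given $\pi^2$ admissible under partial synchrony, build $\pi^N$ exactly as in the proof of Lemma~\ref{lem:IDRefilledPath}, with processes $3..N$ clean-crashing in the first Schedule sub-round and staying at $\loccrash$ thereafter with all their incoming and outgoing buffers empty; then~\ref{PS1} holds because every buffer touching an index in $3..N$ is permanently empty and the buffers among $\{1,2\}$ coincide with those of $\pi^2$, while~\ref{PS2} holds for $i\in\{1,2\}$ by hypothesis and holds vacuously for $i\in 3..N$ since the premise of~\ref{PS2} fails (their program counter equals $\loccrash$ throughout every interval). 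From these two lemmas the time-augmented analogue of Lemma~\ref{lem:TraceEquivTwo} follows as before: $\gsys{2}$ and $\gsys{N}$ are trace equivalent under $\textit{AP}_{\{1, 2\}}$ when only admissible sequences under partial synchrony are considered. The final assembly then mirrors the proof of Theorem~\ref{thm:NCutoffTwoIndexesNetysA}: Proposition~\ref{prop:MoveQuantifiersTwo} pushes $\bigwedge^{i\neq j}_{i,j\in\nid}$ outside the $\forall \pi_N$; for each pair $i\neq j$ the composite transposition of $\swapij{i}{1}$ and $\swapij{j}{2}$ (which, by the symmetry lemmas above, fixes $\gsys{N}$ and $\ginitconf{N}$) reduces $\psi_{\{i,j\}}$ to $\psi_{\{1,2\}}$; and the $\All$-operator characterization of $\ltlx{}$ together with the trace equivalence identifies $\forall \pi_N \colon \gsys{N}, \pi_N \models \psi_{\{1,2\}}$ with $\forall \pi_2 \colon \gsys{2}, \pi_2 \models \psi_{\{1,2\}}$. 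Theorem~\ref{thm:NCutoffOneIndex} is obtained by the same argument with Constructions~\ref{const:TempConfProjOne} and~\ref{const:GlobalConfProjOne} and Proposition~\ref{prop:MoveQuantifiersOne} in place of their two-index counterparts.

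The step I expect to be the main obstacle is the preservation of Constraint~\ref{PS1} in the refilling direction: one must argue that none of the $N^2-4$ freshly introduced point-to-point buffers ever acquires an in-transit message whose age reaches $\Delta$, and this is precisely where the clean-crash modeling does the real work (crashes occur only in Schedule, a crashed process's incoming buffers are reset to $\emptyset$, and a crashed process never executes a send, so these buffers stay empty forever). A secondary, purely clerical annoyance is checking that~\ref{MsgAge1}--\ref{MsgAge2} and the ``age unchanged outside Schedule'' condition survive both the index projection and the refilling; this is routine, since both constructions copy buffers verbatim on the retained index pairs and set all new ones to $\emptyset$, so the age of every surviving in-transit message evolves identically in $\gsys{N}$ and $\gsys{2}$.
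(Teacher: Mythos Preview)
Your proposal is correct and follows essentially the same approach as the paper: reuse the symmetry and trace-equivalence machinery from the unrestricted model, and add the two preservation lemmas (the paper's Lemmas~\ref{lem:ps1} and~\ref{lem:ps2}) showing that index projection and the crash-and-refill construction each carry Constraints~\ref{PS1} and~\ref{PS2} across. The paper is terser---it simply asserts that Steps~1 and~2 are already handled by Section~\ref{sec:netys20-cutoff-res} and isolates only the two PS-preservation lemmas---whereas you are more explicit about re-checking the age bookkeeping and the symmetry lemmas in the time-augmented model, but the underlying argument is the same.
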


Since the proofs of these theorems are similar, we here
focus on only Theorem~\ref{thm:NCutoffTwoIndexes}. 
The proof of Theorem~\ref{thm:NCutoffTwoIndexes} follows the approach in~\cite{emerson1995reasoning,tran2020netys}, and is based on the following observations.
Remind that Steps 1 and 2 are already proved in Section~\ref{sec:netys20-cutoff-res}.

\begin{enumerate}
	\item The global transition system and the desired property are symmetric.
	
	\item Let $\gsys{2}$ and $\gsys{N}$ be two instances of a symmetric point-to-point algorithm with 2 and $N$ processes, respectively.
	We have that two instances $\gsys{2}$ and $\gsys{N}$  are trace equivalent under a set of predicates in the desired property.
	
	\item We will now discuss that the constraints maintain partial synchrony. 
	Let $\pi_N$ be an execution in $\mathcal{G}_N$. By applying the index projection to $\pi_N$, we obtain an execution $\pi_2$ in $\mathcal{G}_2$.
	If partial synchrony constraints~\ref{PS1} and \ref{PS2} -- defined in Section~\ref{sec:time-constraints} -- hold on $\pi_N$, these constraints also hold on~$\pi_2$. 
	This result is proved in Lemma~\ref{lem:ps1}.
	
	\item Let $\pi_2$ be an execution in $\mathcal{G}_2$. 
	We construct an execution $\pi_N$ in $\mathcal{G}_N$ based on $\pi_2$ such that all processes $3..N$ crash from the beginning, and $\pi_2$ is an index projection of $\pi_N$ (defined in Section~\ref{subsec:traceequiv}). 
	For instance, see Figure~\ref{fig:2to3}.
	If partial synchrony constraints~\ref{PS1} and \ref{PS2} -- defined in Section~\ref{sec:time-constraints} -- hold on $\pi_2$, these constraints also hold on $\pi_N$. 
	This result is proved in Lemma~\ref{lem:ps2}.
\end{enumerate}

\begin{lem} \label{lem:ps1}
Let $\mathcal{A}$ be a~\classname{} algorithm under partial synchrony with unknown bounds $\Delta$ and $\Phi$. 
Let $\gsys{2}$ and $\gsys{N}$ be instances of $\mathcal{A}$ with 2 and $N$ processes, respectively, for some $N \geq 2$.
Let $\mathit{Path}_2$ and $\mathit{Path}_N$ be sets of all admissible sequences of configurations in $\gsys{2}$ and in $\gsys{N}$ under partial synchrony, respectively.
Let $\pi^N = \gconf{0}^N \gconf{1}^N \ldots$ be an admissible sequences of configurations in $\gsys{N}$.
Let $\pi^2 = \gconf{0}^2 \gconf{1}^2 \ldots$ be a sequence of configurations in $\gsys{2}$ such that $\gconf{k}^2$ be an index projection of $\gconf{k}^N$ on indexes $\{ 1, 2\}$ for every $k \geq 0$.
It follows that
\begin{enumerate}[label=(\alph*)]
\item Constraint~\ref{PS1} on message delay 
holds on $\pi^2$.
\item Constraint~\ref{PS2} on the relative speed of processes 
holds on $\pi^2$.
\end{enumerate}

\end{lem}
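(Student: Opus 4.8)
The plan is to prove both parts by tracking how the index projection (Construction~\ref{const:GlobalConfProj}) interacts with the partial-synchrony constraints \ref{PS1} and \ref{PS2}, relying on the structural facts already established in Lemma~\ref{lem:IDProjPath} (that $\pi^2$ is admissible in $\gsys{2}$) and on the fact that the index projection preserves, for every $i,j\in\{1,2\}$ and every time $t\ge 0$: (i) when process $i$ takes a step, (ii) which action (Send, Receive, Computation, crash, stutter) process $i$ takes, (iii) the program counter of process $i$, and (iv) the contents of the message buffer $\getBuf{\cdot}{i}{j}$ and the tagged ages of the messages therein. These preservation properties follow directly from Propositions~\ref{prop:ProjGlobalInternalTransitions} and \ref{prop:ProjGlobalObservableTransitions} together with Construction~\ref{const:GlobalConfProj}, and in particular the age bookkeeping is unaffected since ages are incremented uniformly in the Schedule sub-round and carried through the buffers verbatim.

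For part (a), I would argue as follows. Fix a process $r\in\{1,2\}$ and a time $k>0$, and suppose $\CanRun(\kappa^2_{4k-2},r,\locrcv)$. By the preservation of step-timing and actions (point (i), (ii)) and of program counters (point (iii)), this implies $\CanRun(\kappa^N_{4k-2},r,\locrcv)$ in $\pi^N$. Since \ref{PS1} holds on $\pi^N$, for every $s\in\nid$ there is no message $(m,\mathit{age}_m)$ in $\getBuf{\gconf{4k-1}^N}{s}{r}$ with $\mathit{age}_m\ge\Delta$; in particular this holds for $s\in\{1,2\}$. By point (iv), $\getBuf{\gconf{4k-1}^2}{s}{r}=\getBuf{\gconf{4k-1}^N}{s}{r}$ for $s\in\{1,2\}$, so no such old message exists in $\pi^2$ either. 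Since $\{1,2\}$ is exactly the index set of $\gsys{2}$, this establishes \ref{PS1} on $\pi^2$.

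For part (b), fix $i\in\{1,2\}$ and a time interval $[k,k+\Phi]$ and suppose $\getPC{\getProc{\gconf{\ell}^2}{i}}\neq\loccrash$ for every configuration index $\ell$ in $[4k-3,\,4(k+\Phi)]$. By preservation of program counters (point (iii)), the same non-crash condition holds for process $i$ at the corresponding indices in $\pi^N$. Applying \ref{PS2} to $\pi^N$ yields a configuration index $t\in[4k-3,\,4(k+\Phi)]$ and a location set $L\in\{\locsnd,\locrcv,\loccomp\}$ with $\CanRun(\kappa^N_t,i,L)$. By points (i)--(iii), this transfers back: $\CanRun(\kappa^2_t,i,L)$ for the same $t$ and $L$, which is precisely what \ref{PS2} on $\pi^2$ requires. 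Since $i$ ranges over the whole index set of $\gsys{2}$, this completes part (b). Together with Lemma~\ref{lem:IDProjPath} this shows $\pi^2$ is an admissible sequence under partial synchrony, establishing Lemma~\ref{lem:ps1}.

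The main obstacle I anticipate is not logical but bookkeeping: one must be scrupulous that the \emph{projected} buffers agree on \emph{ages}, not merely on message content. The ages entered the model only in Section~\ref{sec:time-constraints}, after Construction~\ref{const:GlobalConfProj} was stated, so strictly speaking one needs to observe that the construction extends to the age-tagged buffer type $\bufFncName\colon\gconfs{N}\times\nid\times\nid\to\set(\msg\times\mathbb{N})$ component-wise, and that the age-increment rule \ref{MsgAge1}--\ref{MsgAge2} in sub-round Schedule commutes with projection because it acts identically on every in-transit message regardless of sender/receiver index. Making this explicit is the one place where a careless proof could have a gap; everything else is a direct transfer along the preservation properties inherited from Propositions~\ref{prop:ProjGlobalInternalTransitions} and~\ref{prop:ProjGlobalObservableTransitions}.
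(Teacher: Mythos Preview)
Your proposal is correct and follows essentially the same approach as the paper: both arguments rest on the observation that the index projection preserves, for indices in $\{1,2\}$, the message buffers (with ages), the program counters, and the timing of process steps, so that the partial-synchrony constraints on $\pi^N$ restricted to processes $1$ and $2$ transfer directly to $\pi^2$. The paper phrases part~(a) as a proof by contradiction while you give a direct argument, but the underlying reasoning is identical; your explicit remark about extending Construction~\ref{const:GlobalConfProj} to age-tagged buffers is a point the paper leaves implicit.
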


\begin{proof}
Recall that the index projection is defined in Section~\ref{subsec:traceequiv}.
In the following, we denote $p^2$ and $p^N$ two processes such that they have the same index, and $p^2$ is a process in $\gsys{2}$, and $p^N$ is a process in $\gsys{N}$.
We prove Lemma~\ref{lem:ps1} by contradiction.

(a)
Assume that Constraint~\ref{PS1} does not hold on $\pi^2$.
Hence, there exist a time $\ell > 0$, and two processes $s^2, r^2 \in \tid$ in $\gsys{2}$ such that after $r^2$ executes Receive at a time $\ell$, there exists an old message in a message buffer from process $s^2$ to process $r^2$.
By the definition of the index projection, for every $k \geq 0$, we have that:
\begin{itemize}
	\item Let $p^N, q^N \in \{1, 2\}$ be two processes in $\gsys{N}$, and $p^2, q^2$ be corresponding processes in $\gsys{2}$.
	For every $k \geq 0$, two message buffers from process $p^2$ to process $q^2$ in $\gconf{k}^2$, and from process $p^N$ to process $q^N$ in $\gconf{k}^N$ are the same.
	
	\item Let $p^N \in \{1, 2\}$ be a process in $\gsys{N}$, and $p^2$ be a corresponding process in $\gsys{2}$. 
	For every $k \geq 0$, process $p^2$ takes an action \textit{act} in configuration $\gconf{k}^2$ if and only if process $p^N$ takes the same action in configuration $\gconf{k}^N$.
\end{itemize}
It implies that process $r^N$ in $\gsys{N}$ also executes Receive at a time $\ell$, and there exists an old message in a buffer from process $s^N$ to process $r^N$.
Contradiction. 

(b) By applying similar arguments in case (a). 
\end{proof}

\begin{lem} \label{lem:ps2}
	Let $\mathcal{A}$ be a~\classname{} algorithm under partial synchrony with unknown bounds $\Delta$ and $\Phi$. 
	Let $\gsys{2}$ and $\gsys{N}$ be instances of $\mathcal{A}$ with 2 and $N$ processes, respectively, for some $N \geq 2$.
	Let $\mathit{Path}_2$ and $\mathit{Path}_N$ be sets of all admissible sequences of configurations in $\gsys{2}$ and in $\gsys{N}$ under partial synchrony, respectively.
	Let $\pi^2 = \gconf{0}^2 \gconf{1}^2 \ldots$ be an admissible sequences of configurations in $\gsys{2}$.
	Let $\pi^N = \gconf{0}^N \gconf{1}^N \ldots$ be a sequence of configurations in $\gsys{N}$ such that (i) every process $p \in 3..N$ crashes from the beginning, and (ii) $\gconf{k}^2$ be an index projection of $\gconf{k}^N$ on indexes $\{ 1, 2\}$ for every $k \geq 0$.
	It follows that
	\begin{enumerate}[label=(\alph*)]
		\item Constraint~\ref{PS1} on message delay 
    holds on $\pi^N$.
		\item Constraint~\ref{PS2} on the relative speed of processes 
    holds on $\pi^N$.
	\end{enumerate}
\end{lem}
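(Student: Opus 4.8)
The plan is to mirror the argument of Lemma~\ref{lem:ps1}, but in the opposite direction, exploiting the fact that the sequence $\pi^N$ constructed in Lemma~\ref{lem:IDRefilledPath} (via Propositions~\ref{prop:IDRefilledConf} and~\ref{prop:IDRefilledTrans}) keeps every process in $3..N$ at $\loccrash$ and inactive in every configuration, and keeps every message buffer that touches a process in $3..N$ empty throughout. First I would record the two correspondences supplied by the index projection: for all $i, j \in \{1, 2\}$ and all $k \geq 0$, (i) $\getBuf{\gconf{k}^2}{i}{j} = \getBuf{\gconf{k}^N}{i}{j}$, and (ii) process $i$ takes a given action from $\gconf{k}^2$ if and only if it takes the same action from $\gconf{k}^N$, so that in particular $\CanRun$-truth and $\loccrash$-membership agree between the two executions for the indices $1$ and $2$. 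Both constraints to be checked are $\forall$-statements over processes and times, so I would argue by contradiction exactly as in Lemma~\ref{lem:ps1}: assume the constraint fails for $\pi^N$ at some witness, and push that witness down to $\pi^2$.

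For part (a), Constraint~\ref{PS1}: suppose it fails on $\pi^N$, so at some time $\ell > 0$ some process $r$ executes Receive from $\gconf{4\ell-2}^N$ while an in-transit message of age at least $\Delta$ addressed to $r$ survives in $\gconf{4\ell-1}^N$. I would split on whether $r \in \{1,2\}$ or $r \in 3..N$. In the latter case $r$ is crashed from the start, hence never executes Receive --- an immediate contradiction. In the former case the offending buffer is $\getBuf{\gconf{4\ell-1}^N}{s}{r}$ for some sender $s$; if $s \in 3..N$ this buffer is empty by the construction of $\pi^N$, contradiction; if $s \in \{1,2\}$, then by correspondence (i) the same old message lives in $\getBuf{\gconf{4\ell-1}^2}{s}{r}$ and by correspondence (ii) $r$ executes Receive at time $\ell$ in $\pi^2$, so Constraint~\ref{PS1} fails on $\pi^2$ --- contradiction. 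Part (b) for Constraint~\ref{PS2} is analogous: if it fails on $\pi^N$ for some process $i$ and interval $[4k-3, 4(k+\Phi)]$, then $i$ is not at $\loccrash$ anywhere on that range and takes no step there; an $i$ in $3..N$ is crashed from the outset, contradicting the first part, and an $i$ in $\{1,2\}$ transfers verbatim to $\pi^2$ via the correspondences, contradicting Constraint~\ref{PS2} on $\pi^2$.

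The main obstacle I anticipate is not conceptual but organisational: I must be sure the correspondences I invoke really hold for the specific $\pi^N$ built in Lemma~\ref{lem:IDRefilledPath}, and in particular that every buffer incident to a process in $3..N$ is empty in \emph{every} configuration --- including the intermediate sub-round configurations $\gconf{4k-2}^N$, $\gconf{4k-1}^N$, $\gconf{4k}^N$, not merely the per-round endpoints. This is precisely what Propositions~\ref{prop:IDRefilledConf} and~\ref{prop:IDRefilledTrans} were set up to guarantee (processes $3..N$ only ever stutter, send no messages, and receive no messages), so the proof should reduce to carefully citing those invariants and then repeating the contradiction bookkeeping of Lemma~\ref{lem:ps1}.
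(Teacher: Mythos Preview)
Your proposal is correct and follows essentially the same approach as the paper, which merely says ``by applying similar arguments in the proof of Lemma~\ref{lem:ps1}, and the facts that every process $p \in 3..N$ crashes from the beginning, and that $\gconf{k}^2$ is an index projection of $\gconf{k}^N$.'' Your elaboration---the case split on whether the witness process lies in $\{1,2\}$ or in $3..N$, and the transfer of any $\{1,2\}$-witness back to $\pi^2$ via the projection correspondences---is exactly the content the paper is gesturing at; the only minor overreach is your final remark that \emph{every} buffer incident to $3..N$ stays empty throughout (buffers \emph{to} crashed processes may briefly hold a message between the Send and the next Schedule sub-round), but your actual argument never relies on that, since for~\ref{PS1} you only need that processes in $3..N$ never execute Receive and never send.
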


\begin{proof}
By applying similar arguments in the proof of Lemma~\ref{lem:ps1}, and the facts that every process $p \in 3..N$ crashes from the beginning, and that $\gconf{k}^2$ be an index projection of $\gconf{k}^N$ on indexes $\{ 1, 2\}$ for every $k \geq 0$. 
\end{proof}

\section{Encoding the Chandra and Toueg Failure Detector} \label{sec:forte21-encoding}
In this section, we first discuss why it is sufficient to verify the failure detector by checking a system with only one sender and one receiver by applying the cutoffs presented in Section~\ref{sec:forte21-cutoff-res}.
Next, we introduce two approaches to encoding the message buffer, and an abstraction of in-transit messages that are older than $\Delta$ time-units. 
Finally, we present how to encode the relative speed of processes with counters over natural numbers.
These techniques allow us to tune our models to the strength of the verification tools: \fast{}, IVy, and model checkers for~\tlap{}.

\subsection{The System with One Sender and One Receiver}

The cutoff results in Section~\ref{sec:forte21-cutoff-res} allow us to verify the Chandra and Toueg failure detector under partial synchrony by checking only instances with two processes.
In the following, we discuss the model with two processes, and formalize the properties with two-process indexes.
By process symmetry, it is sufficient to verify Strong Accuracy, Eventually Strong Accuracy, and Strong Completeness by checking the following properties.
\begin{align}
\G ((\mathit{Correct}(1) \land \mathit{Correct}(2)) & \Rightarrow \lnot \mathit{Suspected}(2, 1)) \label{eq:simpleSA} \\
\F \G ((\mathit{Correct}(1) \land \mathit{Correct}(2)) & \Rightarrow \lnot \mathit{Suspected}(2, 1)) \label{eq:simpleESA} \\
\F \G ((\lnot \mathit{Correct}(1) \land \mathit{Correct}(2)) & \Rightarrow \mathit{Suspected}(2, 1)) \label{eq:simpleSC}
\end{align}

We can take a further step towards facilitating verification of the failure detector.
First, every process typically has a local variable to store messages that it needs to send to itself, instead of using a real communication channel.
Hence, we can assume that there is no delay for those messages, and that each correct process never suspects itself.
Second, local variables in Algorithm~\ref{algo:detector1} are arrays whose elements correspond one-to-one with a remote process, e.g., $\mathit{timeout}[2, 1]$
and $\mathit{suspected}[2, 1]$.
Third, communication between processes is point-to-point.
When this is not the case, one can use cryptography to establish one-to-one communication.
Hence, reasoning about Properties~\ref{eq:simpleSA}--\ref{eq:simpleSC} requires no information about messages from process 1 to itself, local variables of process 1, and messages from process 2.

Due to the above characteristics, it is sufficient to consider process 1 as a sender, and process 2 as a receiver. 
In detail, the sender follows Task 1 in Algorithm~\ref{algo:detector1}, but does nothing in Task 2 and Task 3.
The sender does not need the initialization step, and local variables \textit{suspected} and \textit{timeout}.
In contrast, the receiver has local variables corresponding to the sender, and follows only the initialization step, and Task 2, and Task 3 in Algorithm~\ref{algo:detector1}.
The receiver can increase its waiting time in Task 1, but does not send any message.

\subsection{Encoding the Message Buffer}

Algorithm~\ref{algo:detector1} assumes unbounded message buffers between processes that produce an infinite state space.
Moreover, a sent message might be in-transit for a long time before it is delivered.
We first introduce two approaches to encode the message buffer based on a logical predicate, and a counter over natural numbers.
The first approach works for~\tlap{} and IVy, but not for counter automata (\fast{}).
The latter is supported  by all mentioned tools, but it is less efficient as it requires more transitions.
Then, we present an abstraction of in-transit messages that are older than $\Delta$ time-units. 
This technique reduces the state space, and allows us to tune our models to the strength of the verification tools.

\subsubsection{Encoding the message buffer with a predicate.}

In Algorithm~\ref{algo:detector1}, only ``alive'' messages are sent, and the message delivery depends only on the age of in-transit messages.
Moreover, the computation of the receiver does not depend on the contents of its received messages.
Hence, we can encode a message buffer by using a logical predicate $\existsMsgOfAge(x)$.
For every $k \geq 0$, predicate $\existsMsgOfAge(k)$
refers to whether there exists an in-transit message
that is $k$ time-units old.
The number 0 refers to the age of a fresh message in the buffer.

It is convenient to encode the message buffer\textquotesingle{}s behaviors in this approach. For instance, Formulas~\ref{eq:add1} and~\ref{eq:add2} show constraints on the message buffer when a new message is sent:
\begin{align}
& \existsMsgOfAge^{\prime}(0)  \label{eq:add1} \\
& \forall x \in \mathbb{N} \qdot x > 0 \Rightarrow \existsMsgOfAge^{\prime}(x) = \existsMsgOfAge(x) \label{eq:add2}
\end{align}
where $\existsMsgOfAge^{\prime}$ refers to the value of \existsMsgOfAge{} in the next state.
Formula~\ref{eq:add1} implies that a fresh message has been added to the message buffer.
Formula~\ref{eq:add2} ensures that other in-transit messages are unchanged.

Another example is the relation between $\existsMsgOfAge$ and $\existsMsgOfAge^{\prime}$ after the message delivery.
This relation is formalized with Formulas~\ref{eq:msgDel1-noabstraction}--\ref{eq:msgDel4-noabstraction}.
Formula~\ref{eq:msgDel1-noabstraction} requires that there exists an in-transit message in $\existsMsgOfAge$ that can be delivered.
Formula~\ref{eq:msgDel2-noabstraction} ensures that no old messages are in transit after the delivery.
Formula~\ref{eq:msgDel3-noabstraction} guarantees that no message is created out of thin air.
Formula~\ref{eq:msgDel4-noabstraction} implies that at least one message is delivered.\begin{align}
& \exists x \in \mathbb{N} \qdot \existsMsgOfAge(x) \label{eq:msgDel1-noabstraction} \\
& \forall x \in \mathbb{N} \qdot x \geq \Delta \Rightarrow \lnot \existsMsgOfAge^{\prime}(x) \label{eq:msgDel2-noabstraction} \\
& \forall x \in \mathbb{N} \qdot  \existsMsgOfAge^{\prime}(x) \Rightarrow \existsMsgOfAge(x) \label{eq:msgDel3-noabstraction} \\
& \exists x \in \mathbb{N} \qdot  \existsMsgOfAge^{\prime}(x) \neq \existsMsgOfAge(x) \label{eq:msgDel4-noabstraction}
\end{align}

This encoding works for~\tlap{} and IVy, but not for~\fast{}, because the input language of~\fast{} does not support functions.

\subsubsection{Encoding the message buffer with a counter.}
In the following, we present an encoding technique for the buffer that can be applied in all tools~\tlap{}, IVy, and~\fast{}.
This approach encodes the message buffer with a counter \texttt{buf} over natural numbers.
The $k^{\mathit{th}}$ bit refers to whether there exists an in-transit message with $k$ time-units old.

In this approach, message behaviors are formalized with operations in Presburger arithmetic. For example, assume $\Delta > 0$, we write $\texttt{buf}^{\prime} = \texttt{buf} + 1$ to add a fresh message in the buffer.
Notice that the increase of \texttt{buf} by 1 turns on the $0^{th}$ bit, and keeps the other bits unchanged.

To encode the increase of the age of every in-transit message by 1, we simply write $\texttt{buf}^{\prime} = \texttt{buf} \times 2$.
Assume that we use the least significant bit (LSB) first encoding, and the left-most bit is the $0^{th}$ bit. 
By multiplying \texttt{buf} by 2, we have updated $\texttt{buf}^{\prime}$ by shifting to the right every bit in \texttt{buf} by 1.
For example, Figure~\ref{fig:incMsgAge} demonstrates the message buffer after the increase of message ages in case of $\texttt{buf} = 6$.
We have $\texttt{buf}^{\prime} = \texttt{buf} \times 2 = 12$.
It is easy to see that the $1^{st}$ and $2^{nd}$ bits in \texttt{buf} are on, and the $2^{nd}$ and $3^{rd}$ bits in $\texttt{buf}^{\prime}$ are on.

\begin{figure}[tb]

\begin{tikzpicture}
\matrix (array) [matrix of nodes,
nodes={draw, minimum size=4mm,anchor=center},
nodes in empty cells, minimum height = 3mm,
row 1/.style={nodes={draw=none, fill=none}}]
{
	0 & 1 & 2 & 3 & 4 & $\ldots$  \\
	&   &   &  &    \\
};
\node[minimum size=2mm] at (array-2-3) (box) {\includegraphics[width=6mm]{./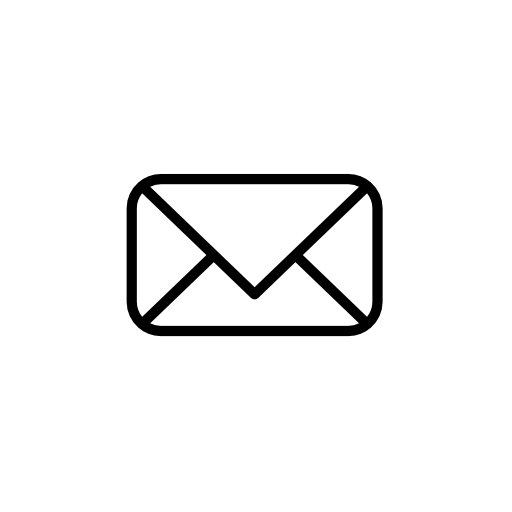}};
\node[minimum size=2mm] at (array-2-2) (box) {\includegraphics[width=6mm]{./fig/email-message-icon.jpg}};
\draw (array-1-1.east)++(0:-5mm) node [left]{Age indexes};
\draw (array-2-1.east)++(0:-5mm) node [left]{Messages in \texttt{buf}};
\end{tikzpicture}
\hspace{5mm}
\begin{tikzpicture}
\matrix (array) [matrix of nodes,
nodes={draw, minimum size=4mm,anchor=center},
nodes in empty cells, minimum height = 3mm,
row 1/.style={nodes={draw=none, fill=none}}]
{
	0 & 1 & 2 & 3 & 4 & $\ldots$  \\
	&   &   &  &    \\
};
\draw (array-1-1.east)++(0:-5mm) node [left]{Age indexes};
\draw (array-2-1.east)++(0:-5mm) node [left]{Messages in $\texttt{buf}^{\prime}$};
\node[minimum size=2mm] at (array-2-4) (box) {\includegraphics[width=6mm]{./fig/email-message-icon.jpg}};
\node[minimum size=2mm] at (array-2-3) (box) {\includegraphics[width=6mm]{./fig/email-message-icon.jpg}};
\end{tikzpicture}
\caption{The message buffer after increasing message ages in case of $\texttt{buf} = 6$} \label{fig:incMsgAge}
\end{figure}

Recall that Presburger arithmetic does not allow one to divide by a variable.
Therefore, to guarantee the constraint in Formula~\ref{eq:msgDel3-noabstraction},
we need to enumerate all constraints on possible  values of \texttt{buf} and $\texttt{buf}^{\prime}$ after the message delivery.
For example, assume $\texttt{buf} = 3$, and $\Delta = 1$. After the message delivery, $\texttt{buf}^{\prime}$ is either 0 or 1.
If $\texttt{buf} = 2$ and $\Delta = 1$, $\texttt{buf}^{\prime}$ must be 0 after the message delivery.
Importantly, the number of transitions for the message delivery depends on the value of $\Delta$.

To avoid the enumeration of all possible cases, Formula~\ref{eq:msgDel3-noabstraction} can be rewritten with bit-vector arithmetic.
However, bit-vector arithmetic are currently not supported in all verification tools~\tlap{}, \fast{}, and IVy.

The advantage of this encoding is that when bound $\Delta$ is fixed,
every constraint in the system behaviors can be rewritten in Presburger arithmetic.
Thus, we can use~\fast{}, which accepts constraints in Presburger arithmetic.
To specify cases with arbitrary $\Delta$, the user can use~\tlap{} or IVy.

\subsubsection{Abstraction of old messages.}
Algorithm~\ref{algo:detector1} assumes underlying unbounded message buffers between processes. 
Moreover, a sent message might be in transit for a long time before it is delivered. 
To reduce the state space, we develop an abstraction of in-transit messages that are older than $\Delta$ time-units; we call such messages ``old''.
This abstraction makes the message buffer between the sender and the receiver bounded.
In detail, the message buffer has a size of $\Delta$.
Importantly, we can apply this abstraction to two above encoding techniques for the message buffer.

In partial synchrony, if process $p$ executes Receive at some time point from the Global Stabilization Time, \emph{every} old message sent to $p$ will be delivered immediately.
Moreover, the computation of a process in Algorithm~\ref{algo:detector1} does not depend on the content of received messages.
Hence, instead of tracking all old messages, our abstraction keeps only one old message that is $\Delta$ time-units old, does not increase its age, and throws away other old messages.

In the following, we discuss how to integrate this abstraction into the  encoding techniques of the message buffer.
We demonstrate our ideas by showing the pseudo code of the increase of message ages.
It is straightforward to adopt this abstraction to the message delivery, and to the sending of a new message.

Figure~\ref{fig:incMsgAge-abstraction}(a) presents the increase of message ages with this abstraction in a case of $\Delta = 2$, and $\texttt{buf} = 6$.
Unlike Figure~\ref{fig:incMsgAge}, there exists no in-transit message that is 3 time-units old in Figure~\ref{fig:incMsgAge-abstraction}(a).
Moreover, the message buffer in Figure~\ref{fig:incMsgAge-abstraction}(a) has a size of 3.
In addition, $\texttt{buf}^{\prime}$ has only one in-transit message that is 2 time-units old.
We have $\texttt{buf}^{\prime} = 4$ in this case.
Figure~\ref{fig:incMsgAge-abstraction}(b) demonstrates another case of $\Delta = 2$, $\texttt{buf} = 5$, and $\texttt{buf}^{\prime} = 6$.

\begin{figure}[tb]
	\centering
	\begin{tikzpicture}
	\matrix (array) [matrix of nodes,
	nodes={draw, minimum size=4mm,anchor=center},
	nodes in empty cells, minimum height = 3mm,
	row 1/.style={nodes={draw=none, fill=none}}]
	{
		0 & 1 & 2   \\
		&   &       \\
	};
	\node[minimum size=2mm] at (array-2-3) (box) {\includegraphics[width=6mm]{./fig/email-message-icon.jpg}};
	\node[minimum size=2mm] at (array-2-2) (box) {\includegraphics[width=6mm]{./fig/email-message-icon.jpg}};
	\draw (array-1-1.east)++(0:-5mm) node [left]{Age indexes};
	\draw (array-2-1.east)++(0:-5mm) node [left]{Messages in \texttt{buf}};
	
	\draw (array-2-1.east)++(0:-35mm) node [left]{(a)};
	\end{tikzpicture}
	\hspace{3mm}
	\begin{tikzpicture}
	\matrix (array) [matrix of nodes,
	nodes={draw, minimum size=4mm,anchor=center},
	nodes in empty cells, minimum height = 3mm,
	row 1/.style={nodes={draw=none, fill=none}}]
	{
		0 & 1 & 2    \\
		&   &       \\
	};
	\draw (array-1-1.east)++(0:-5mm) node [left]{Age indexes};
	\draw (array-2-1.east)++(0:-5mm) node [left]{Messages in $\texttt{buf}^{\prime}$};
	\node[minimum size=2mm] at (array-2-3) (box) {\includegraphics[width=6mm]{./fig/email-message-icon.jpg}};
	\end{tikzpicture}		
	\begin{tikzpicture}
	\draw[line cap=rect] (0,0) -- (0.85\linewidth-\pgflinewidth,0);
	\end{tikzpicture}
	\begin{tikzpicture}
	\matrix (array) [matrix of nodes,
	nodes={draw, minimum size=4mm,anchor=center},
	nodes in empty cells, minimum height = 3mm,
	row 1/.style={nodes={draw=none, fill=none}}]
	{
		0 & 1 & 2  \\
		&   &       \\
	};
	\node[minimum size=2mm] at (array-2-3) (box) {\includegraphics[width=6mm]{./fig/email-message-icon.jpg}};
	\node[minimum size=2mm] at (array-2-1) (box) {\includegraphics[width=6mm]{./fig/email-message-icon.jpg}};
	\draw (array-1-1.east)++(0:-5mm) node [left]{Age indexes};
	\draw (array-2-1.east)++(0:-5mm) node [left]{Messages in \texttt{buf}};
	\draw (array-2-1.east)++(0:-35mm) node [left]{(b)};
	\end{tikzpicture}
	\hspace{3mm}
	\begin{tikzpicture}
	\matrix (array) [matrix of nodes,
	nodes={draw, minimum size=4mm,anchor=center},
	nodes in empty cells, minimum height = 3mm,
	row 1/.style={nodes={draw=none, fill=none}}]
	{
		0 & 1 & 2   \\
		&   &      \\
	};
	\draw (array-1-1.east)++(0:-5mm) node [left]{Age indexes};
	\draw (array-2-1.east)++(0:-5mm) node [left]{Messages in $\texttt{buf}^{\prime}$};
	\node[minimum size=2mm] at (array-2-3) (box) {\includegraphics[width=6mm]{./fig/email-message-icon.jpg}};
	\node[minimum size=2mm] at (array-2-2) (box) {\includegraphics[width=6mm]{./fig/email-message-icon.jpg}};	
	\end{tikzpicture}
	\caption{The increase of message ages with the abstraction of old messages. In the case (a), we have $\Delta = 2$, $\texttt{buf} = 6$, and $\texttt{buf}^{\prime} = 4$. In the case (b), we have $\Delta = 2$, $\texttt{buf} = 5$, and $\texttt{buf}^{\prime} = 6.$} \label{fig:incMsgAge-abstraction}
\end{figure}

Formally, Figure~\ref{fig:incMsgAge2} presents the pseudo code of the increase of message ages that is encoded with a counter \texttt{buf}, and the abstraction of old messages.
There are three cases.
In the first case (Line 1), there exist no old messages in $\texttt{buf}$, and we simply set $\texttt{buf}^{\prime} = \texttt{buf} \times 2$.
In other cases (Lines 3 and 4), $\texttt{buf}$ contains an old message.
Figure~\ref{fig:incMsgAge-abstraction}(a) demonstrates the second case (Line 3).
We subtract $2^{\Delta + 1}$ to remove an old message with $\Delta + 1$ time-units old from the buffer.
Figure~\ref{fig:incMsgAge-abstraction}(b) demonstrates the third case (Line 4).
In the third case, we also need to remove an old message with $\Delta + 1$ time-units old from the buffer.
Moreover, we need to put an old message with $\Delta$ time-units old to the buffer by adding $2^{\Delta}$.

\begin{figure}[tb]
	\centering
	\begin{minipage}{.75\linewidth}
		\begin{algorithmic}[1] 
			\If{$\texttt{buf} < 2^{\Delta}$}  $\texttt{buf}^{\prime} \gets \texttt{buf} \times 2$
			\Else 
			\If{$\texttt{buf} \geq 2^{\Delta} + 2^{\Delta - 1}$} $\texttt{buf}^{\prime} \gets \texttt{buf} \times 2 - 2^{\Delta + 1}$
			\Else $\texttt{ buf}^{\prime} \gets \texttt{buf} \times 2 - 2^{\Delta + 1} + 2^{\Delta}$
			\EndIf
			\EndIf
		\end{algorithmic}
	\caption{Encoding the increase of message ages with a counter \texttt{buf}, and the abstraction of old messages.} 
		\label{fig:incMsgAge2}
	\end{minipage}	
\end{figure}

Now we discuss how to integrate the abstraction of old messages in the encoding of the message buffer with a predicate.
Formulas~\ref{eq:inc1}--\ref{eq:inc4} present the relation between $\existsMsgOfAge$ and $\existsMsgOfAge^{\prime}$ when message ages are increased by 1, and this abstraction is applied.
Formula~\ref{eq:inc1} ensures that no fresh message will be added to $\existsMsgOfAge^{\prime}$.
Formula~\ref{eq:inc2} ensures that the age of every message that is until $(\Delta - 2)$ time-units old will be increased by 1.
Formulas~\ref{eq:inc3}--\ref{eq:inc4} are introduced by this abstraction.
Formula~\ref{eq:inc3} implies that if there exists an old message or a message with $(\Delta - 1)$ time-units old in $\existsMsgOfAge{}$, there will be an old message that is $\Delta$ time-units old in $\existsMsgOfAge^{\prime}$.
Formula~\ref{eq:inc4} ensures that there exists no message that is older than $\Delta$ time-units old.
\begin{align}
& \lnot \existsMsgOfAge^{\prime}(0)  \label{eq:inc1} \\
& \forall x \in \mathbb{N} \qdot ( 0 \leq x \leq \Delta - 2) \nonumber \\
& \qquad \qquad \quad \Rightarrow \existsMsgOfAge^{\prime}(x + 1) = \existsMsgOfAge(x) \label{eq:inc2} \\
& \existsMsgOfAge^{\prime}(\Delta) = \existsMsgOfAge(\Delta) \lor \existsMsgOfAge(\Delta - 1) \label{eq:inc3} \\
& \forall x \in \mathbb{N} \qdot x > \Delta  \Rightarrow \existsMsgOfAge^{\prime}(x) = \False \label{eq:inc4} 
\end{align}

\subsection{Encoding the Relative Speed of Processes}

Recall that we focus on the case of unknown bounds $\Delta$ and $\Phi$. 
In this case, every correct process must take at least one step in every contiguous time interval containing $\Phi$ time-units~\cite{DLS88}.

To maintain this constraint on executions generated by the verification tools, we introduced two additional control variables \sTimer{} and \rTimer{} for the sender and the receiver, respectively.
These variables work as timers to keep track of how long a process has not taken a step, and when a process can take a step.
Since these timers play similar roles, we here focus on $\rTimer{}$.
In our encoding, only the global system can update $\rTimer{}$.
To schedule the receiver, the global systems non-deterministically executes one of two actions in the sub-round Schedule: (i) resets $\rTimer{}$ to 0, and (ii)~if $\rTimer{} < \Phi$, increases $\rTimer{}$ by~1.
In other sub-rounds, the value of $\rTimer{}$ is unchanged.
Moreover, the receiver must take a step whenever $\rTimer{} = 0$.

\section{Reduce Liveness Properties to Safety Properties}
\label{sec:forte21-reduction}
To verify the liveness properties Eventually Strong Accuracy and Strong Completeness with IVy, we first need to reduce them to safety properties.
Intuitively, these liveness properties are bounded; therefore, they become safety ones.
In the following, we explain how to do that.

\subsection{Eventually Strong Accuracy} 

By cutoffs discussed in Section~\ref{sec:forte21-cutoff-res}, it is sufficient to verify Eventually Strong Accuracy
on the Chandra and Toueg failure detector by checking the following property on instances with 2 processes.
\begin{align}
\F \G ((\mathit{Correct}(1) \land \mathit{Correct}(2)) & \Rightarrow \lnot \mathit{Suspected}(2, 1)) \label{eq:esa1} 
\end{align}
where process 1 is the sender and process 2 is the receiver.  

In the following, we present how to reduce Formula~\ref{eq:esa1} to a safety property. 
Our reduction is based on the following observations:
\begin{enumerate}
\item Fairness (Line 5 in Algorithm~\ref{algo:detector1}): correct processes send ``alive'' infinitely often.

\item The reliable communication (Constraint~\ref{TC1}): 
Let $\mathit{rcv\_msg\_from}(2, 1)$ be a predicate that refers to whether process $2$ receives a message from process $1$. 
If processes $p$ and $q$ are always correct, then it holds $\G \F \mathit{rcv\_msg\_from}(2, 1)$.

\item Transition invariant: Let 
$\psi_1(2, 1)$ be a predicate such that
\[ 
\psi_1(2, 1) \triangleq  \mathit{rcv\_msg\_from}(2, 1)  
\land \mathit{Correct}(1) 
\land \mathit{Correct}(2) 
\land \mathit{Suspected}(2,1)
\]
Then, the following property is a transition invariant.
\begin{align}
\G ( \psi_1(p, q) \Rightarrow 
\mathit{timeout}^{\prime}[2, 1]
= \mathit{timeout}[2, 1] + 1)
\end{align}

\end{enumerate}

Points 1--3 implies that if $\mathit{timeout}[2, 1]$ is always less than some constant in an arbitrary execution path, then Formula~\ref{eq:esa1} holds in this path.

Now we discuss why $\mathit{timeout}[2, 1]$ does not keep increasing forever if processes 1 and 2 are correct.
To that end, we find a specific guard $g$ for $\mathit{timeout}[2, 1]$ such that if $\mathit{timeout}[2, 1] \geq g$~\footnote{As the default-value of $\mathit{timeout}[2, 1]$ is a parameter, $\mathit{timeout}[2, 1]$ might be greater than $g$ from the initialization.} and the sender is correct, then the receiver waits for the sender in less than $g$ time-units.
Moreover, the value of $g$ depends only on the values of $\Delta$ and $\Phi$.
Hence, it is sufficient to verify Formula~\ref{eq:esa1} by checking Formula~\ref{eq:esa2}.
\begin{align}
\G \big(\timeout[2, 1] \geq g \Rightarrow ((\mathit{Correct}(1) \land \mathit{Correct}(2)) & \Rightarrow \lnot \mathit{Suspected}(2, 1)) \big) \label{eq:esa2} 
\end{align}

\subsection{Strong Completeness} 

By cutoffs discussed in Section~\ref{sec:forte21-cutoff-res}, it is sufficient to verify Strong Completeness
on the Chandra and Toueg failure detector by checking the following property on instances with 2 processes.
\begin{align}
\F \G ((\lnot \mathit{Correct}(1) \land \mathit{Correct}(2)) & \Rightarrow \mathit{Suspected}(2, 1)) \label{eq:sc1}
\end{align}

Notice that in partial synchrony, every sent message is eventually delivered. 
Hence, after the sender crashes, the receiver eventually receives nothing from the sender.
To reduce Formula~\ref{eq:sc1} to a safety property, we first introduced a ghost variable \howLongFromSCrash{} to measure for how long the sender has crashed. 
\howLongFromSCrash{} is set to 0 when the sender crashes.
After that, \howLongFromSCrash{} is increased by 1 in every global step if the receiver has not suspected the crashed sender.
Let $\psi_2(2, 1)$ denote the constraint:
$ 
\psi_2(2, 1) \triangleq 
\lnot \mathit{Correct}(1) 
\land \mathit{Correct}(2) 
\land \lnot \mathit{Suspected}(2,1)
$.
Then, the following property is a transition invariant.
\begin{align}
\G ( \psi_2(p, q) \Rightarrow \howLongFromSCrash^{\prime} = \howLongFromSCrash + 1) \label{eq:psi2}
\end{align}

By Formula~\ref{eq:psi2}, if  \howLongFromSCrash{}  is always less than some constant in an arbitrary execution path, then Formula~\ref{eq:sc1} holds in this path. 

Now we show that \howLongFromSCrash{} cannot keep increasing forever.
To that end, we find a specific guard $g^{\prime} > 0$ for \howLongFromSCrash{}  such that if $\howLongFromSCrash{} = g^{\prime}$, then the receiver   suspects the sender.
It implies that \howLongFromSCrash{} is unchanged.
Moreover, the value of $g^{\prime}$ depends only on the values of $\Delta$ and $\Phi$.
Hence, it is sufficient to verify Formula~\ref{eq:sc1} by checking Formula~\ref{eq:sc2}.
\begin{align}
\G \big(\howLongFromSCrash = g^{\prime} \Rightarrow ((\lnot \mathit{Correct}(1) \land \mathit{Correct}(2)) & \Rightarrow  \mathit{Suspected}(2, 1)) \big) \label{eq:sc2}
\end{align}

\section{Experiments for Small $\Delta$ and $\Phi$} \label{sec:forte21-experiments-fixed-paras}
In this section, we describe our experiments with~\tlap{} and \fast{}.
We ran the following experiments on a virtual machine with Core i7-6600U CPU and 8GB DDR4. 
Our specifications can be found at~\cite{fdSpec_Zenedo}.

\subsection{Model Checkers for~\tlap{}: TLC and~\apalache{}}
We first use~\tlap{}~\cite{lamport2002specifying} to specify the failure detector with both encoding techniques for the message buffer, and the abstraction in Section~\ref{sec:forte21-encoding}.
Then, we use the model checker TLC in the~\tlap{} Toolbox version 1.7.1~\cite{yu1999model,toolbox} and the model checker \apalache{} version 0.15.0~\cite{konnov2019tla,apalache19} to verify instances with
fixed bounds $\Delta$ and $\Phi$, and the GST $T_0 = 1$.
This approach helps us to search constraints in inductive invariants in case of fixed parameters.
The main reason is that counterexamples and inductive invariants in case of fixed parameters, e.g., $\Delta \leq 1$ and $\Phi \leq 1$, are simpler than in case of arbitrary parameters.
Hence, if a counterexample is found, we can quickly analyze it, and change constraints in an inductive invariant candidate.	
We apply the counterexample-guided approach to find inductive strengthenings.
After obtaining inductive invariants in small cases, we can generalize them for cases of arbitrary bounds, and check with theorem provers, e.g., IVy (Section~\ref{sec:forte21-experiments-unknown-paras}).

\tlap{} offers a rich syntax for sets, functions, tuples, records, sequences, and control structures~\cite{lamport2002specifying}.
Hence, it is straightforward to apply the encoding techniques and the abstraction presented in Section~\ref{sec:forte21-encoding} in~\tlap{}.
For example, Figure~\ref{fig:send-tla} represents a \tlap{} action \emph{SSnd} for sending a new message in case of $\Delta > 0$.
Variables $\mathit{ePC}$ and $\mathit{sPC}$ are program counters for the environment and the sender, respectively.
Line 1 is a precondition, and refers to that the environment is in subround Send.
Lines 2--3 say that if the sender is active in subround Send, the counter $\mathit{buf}^{\prime}$ is increased by 1.
Otherwise, two counters $\mathit{buf}$ and $\mathit{buf}^{\prime}$ are the same (Line 4).
Line 5 implies that the environment is still in the subround Send, but it is now the receiver\textquotesingle{}s turn.
Line 6 guarantees that other variables are unchanged in this action.

Figure~\ref{fig:next-tla} represents the next--state relation in \tlap{}.
Line 1 describes actions in sub-round Schedule. 
The environment schedules the Sender, schedules the Receiver, and then increases message ages.
Lines 2, 3, and 4 describes actions in sub-rounds Send, Receive, and  Computation, respectively.  
The program counter $ePC$ of the environment is used to ensure that every action is repeated periodically and in order.

Figure~\ref{fig:rsched-tla} represents how the environment schedules the Receiver in~\tlap{}.
Line 1 says that the current step is to schedule the Receiver, and Line 2 refers to the next action that is to increase message ages.
Line 3 non-deterministically sets the Receiver active in the current global step.
Lines 4--6 are to update the program counter \textit{rPC} of the Receiver.
The environment schedules the Sender, schedules the Receiver, and then increases message ages.
Lines 7--8 non-deterministically sets the Receiver inactive in the current global step if the Receiver is not frozen in the last $\textit{Phi} - 1$ global steps.
Line 9 is to keep other variables unchanged.

	\begin{figure}[tb]
	\begin{minipage}{\linewidth}
		\input{./forte21/send-tla}
		\caption{Sending a new message in~\tlap{} in case of $\Delta > 0$} \label{fig:send-tla}
	\end{minipage}
	\end{figure}

	\begin{figure}[tb]
  \begin{minipage}{\linewidth}
    \input{./forte21/next-tla}
    \caption{The Next predicate for the next-state relation in~\tlap{}} \label{fig:next-tla}
  \end{minipage}
\end{figure}

	\begin{figure}[tb]
  \begin{minipage}{\linewidth}
    \input{./forte21/rsched-tla}
    \caption{The RSched predicate for scheduling the Receiver in~\tlap{}} \label{fig:rsched-tla}
  \end{minipage}
\end{figure}

Now we present the experiments with TLC and~\apalache{}. 
We used these tools to verify (i) the safety property Strong Accuracy, and (ii) an inductive invariant for Strong Accuracy, and (iii) an inductive invariant for a safety property reduced from the liveness property Strong Completeness in case of fixed bounds, and GST = 1 (initial stabilization).
The structure of the inductive invariants verified here are very close to one in case of arbitrary bounds $\Delta$ and $\Phi$.
While all parameters are assigned specific values in the inductive invariants of small instances, they have arbitrary values in the case of arbitrary bounds.

Table~\ref{tab:exp-sa-inv} shows the results in verification of Strong Accuracy in case of the initial stabilization, and fixed bounds $\Delta$ and $\Phi$.
Table~\ref{tab:exp-sa-inv} shows the experiments with the three tools TLC, \apalache{}, and \fast{}.
The column ``\#states'' shows the number of distinct states explored by TLC.
The column ``\#depth'' shows the maximum execution length reached by TLC and~\apalache{}.
The column ``buf'' shows how to encode the message buffer.
The column ``LOC'' shows the number of lines in the specification of the system behaviors (without comments).
The symbol ``-'' (minus) refers to that the experiments are intentionally missing since~\fast{} does not support the encoding of the message buffer with a predicate.
The abbreviation ``pred'' refers to the encoding of the message buffer with a predicate.
The abbreviation ``cntr'' refers to the encoding of the message buffer with a counter.
The abbreviation ``TO'' means a timeout of 6 hours.
In these experiments, we initially set $\timeout{} = 6 \times \Phi + \Delta$, and Strong Accuracy is satisfied.
The experiments show that TLC finishes its tasks faster than the others, and  \apalache{} prefers the encoding of the message buffer with a predicate.

Table~\ref{tab:exp-sa-inv-bug} summarizes the results in verification of Strong Accuracy with the tools TLC, \apalache{}, and \fast{} in case of the initial stabilization, and small bounds $\Delta$ and $\Phi$, and initially $\timeout{} = \Delta + 1$.
Since \timeout{} is initialized with a too small value, there exists a case in which sent messages are delivered after the timeout expires.
The tools reported an error execution where Strong Accuracy is violated.
In these experiments, \apalache{} is the winner.
The abbreviation ``TO'' means a timeout of 6 hours.
The meaning of other columns and abbreviations is the same as in Table~\ref{tab:exp-sa-inv}.

\begin{table}[tb]
	\centering
	\caption{Showing Strong Accuracy for fixed parameters.} \label{tab:exp-sa-inv} 
	\begin{tabular}{>{\centering\arraybackslash}p{0.4cm} |  >{\centering\arraybackslash}p{0.4cm} | 
			>{\centering\arraybackslash}p{0.4cm} |
			>{\centering\arraybackslash}p{0.8cm} |
			>{\raggedleft\arraybackslash}p{0.7cm}
			>{\raggedleft\arraybackslash}p{1.4cm}
			>{\raggedleft\arraybackslash}p{1.1cm}
			>{\raggedleft\arraybackslash}p{1.1cm} |
			>{\raggedleft\arraybackslash}p{0.7cm}
			>{\raggedleft\arraybackslash}p{1.1cm}
			|
			>{\raggedleft\arraybackslash}p{0.7cm} 
			>{\raggedleft\arraybackslash}p{1cm}} 
		\multirow{2}{*}{\#} & \multirow{2}{*}{$\Delta$} &  \multirow{2}{*}{$\Phi$} & \multirow{2}{*}{buf} & \multicolumn{4}{c|}{TLC} & \multicolumn{2}{c|}{\apalache{}} & \multicolumn{2}{c}{\fast} \\   
		\cline{5-12}
		& &  & & time & \#states & depth & LOC & time & depth & time & LOC \\
		\hline
		1  & \multirow{2}{*}{2} &  \multirow{2}{*}{4} & pred & 3s & 10.2K & 176 & 190 & 8m & 176 & - & -  \\
		2  &  &  & cntr & 3s & 10.2K & 176 & 266 & 9m & 176 & 16m & 387 \\
		\hline
		3  & \multirow{2}{*}{4} &  \multirow{2}{*}{4} & pred & 3s & 16.6K & 183  & 190 & 12m & 183 &  - & -  \\
		4  &  &  & cntr & 3s & 16.6K & 183 & 487 & 35m & 183 &  TO &  2103 \\
		\hline
		5  & \multirow{2}{*}{4} &  \multirow{2}{*}{5} & pred & 3s & 44.7K & 267 & 190 & TO & 222 & - & - \\
		6  &  &  & cntr & 3s & 44.7K &  267 & 487 & TO & 223 & TO & 2103 \\
	\end{tabular}
\end{table}

\begin{table}[tb]
	\centering
	\caption{Violating Strong Accuracy for fixed parameters.} \label{tab:exp-sa-inv-bug}
	\begin{tabular}{>{\centering\arraybackslash}p{0.4cm} |  >{\centering\arraybackslash}p{0.4cm} | 
			>{\centering\arraybackslash}p{0.4cm} |
			>{\centering\arraybackslash}p{0.8cm} |
			>{\raggedleft\arraybackslash}p{0.7cm}
			>{\raggedleft\arraybackslash}p{1.4cm}
			>{\raggedleft\arraybackslash}p{1.1cm} |
			>{\raggedleft\arraybackslash}p{1.3cm}
			>{\raggedleft\arraybackslash}p{1.1cm} |
			>{\raggedleft\arraybackslash}p{1cm}} 
		\multirow{2}{*}{\#} & \multirow{2}{*}{$\Delta$} & \multirow{2}{*}{$\Phi$} & \multirow{2}{*}{buf} & \multicolumn{3}{c|}{TLC} & \multicolumn{2}{c|}{\apalache{}} & \multicolumn{1}{c}{ \fast} \\   
		\cline{5-10}
		& & & & time & \#states & depth & time & depth  & time  \\
		\hline
		1  & \multirow{2}{*}{2} &  \multirow{2}{*}{4} & pred & 1s & 840 & 43  & 11s & 42 & -  \\
		2  &  & & cntr & 1s & 945 & 43 & 12s & 42 & 10m  \\
		\hline
		3  & \multirow{2}{*}{4} &  \multirow{2}{*}{4} & pred & 2s & 1.3K & 48 & 15s & 42  & -  \\
		4  &  &  & cntr & 2s & 2.4K & 56 & 16s & 42 & TO                      \\
		\hline
		5  & 20 & 20 & pred & TO & 22.1K & 77 & 1h15m & 168  & -  \\
	\end{tabular}
\end{table}

\begin{table}[tb]
	\centering
	\caption{Proving inductive invariants  with TLC and~\apalache{}.} \label{tab:exp-sa-iinv}
	\begin{tabular}{>{\centering\arraybackslash}p{0.4cm} |  >{\centering\arraybackslash}p{0.4cm} | 
			>{\centering\arraybackslash}p{0.4cm} |
			>{\centering\arraybackslash}p{3.3cm} |
			>{\raggedleft\arraybackslash}p{0.7cm}
			>{\raggedleft\arraybackslash}p{1.4cm} |
			>{\centering\arraybackslash}p{2.4cm}} 
		\multirow{2}{*}{\#} & \multirow{2}{*}{$\Delta$} & \multirow{2}{*}{$\Phi$} & \multirow{2}{*}{Property} & \multicolumn{2}{c|}{TLC} & \apalache{}   \\  
		\cline{5-7}
		&  &  &  & time & \#states & \multicolumn{1}{c}{time}   \\  
		\hline
		1  & 4   &  40 & Strong Accuracy & 33m & 347.3M & 12s \\    
		\hline
		2  &  4   &  10 & Strong Completeness & 44m & 13.4M & 17s
	\end{tabular}
\end{table}

Table~\ref{tab:exp-sa-iinv} shows the results in verification of inductive invariants for Strong Accuracy and Strong Completeness with TLC and \apalache{} in case of the initial stabilization, and slightly larger but fixed bounds $\Delta$ and $\Phi$, e.g., $\Delta = 20$ and $\Phi = 20$.
The message buffer was encoded with a predicate in these experiments.
In these experiments, inductive invariants hold, and \apalache{} is faster than TLC in verifying them.
In our experiment, we applied the counterexample-guided approach to manually find inductive strengtheningss. 

As one sees from the tables,~\apalache{} is fast at proving inductive invariants, and at finding a counterexample when a desired safety property is violated.
TLC is a better option in cases where a safety property is satisfied.

In order to prove correctness of the failure detector in cases where parameters $\Delta$ and $\Phi$ are arbitrary, the user can use the interactive theorem prover \tlap{} Proof System (\tlaps{})~\cite{chaudhuri2010tla+}.
A shortcoming of \tlaps{} is that it does not provide a counterexample when an inductive invariant candidate is violated.
Moreover, proving the failure detector with \tlaps{} requires more human effort than with IVy.
Therefore, we provide IVy proofs in Section~\ref{sec:forte21-experiments-unknown-paras}.

\subsection{\fast{}}

A shortcoming of the model checkers TLC and~\apalache{} is that parameters $\Delta$ and $\Phi$ must be fixed before running these tools.
\fast{} is a tool designed to reason about safety properties of counter systems, i.e. automata extended with unbounded integer variables~\cite{bardin2006fast}.
If $\Delta$ is fixed, and the message buffer is encoded with a counter, the failure detector becomes a counter system.
We specified the failure detector in~\fast{}, and made experiments with different parameter values to understand the limit of~\fast{}: (i) the initial stabilization, and small bounds $\Delta$ and $\Phi$, and (ii) the initial stabilization, fixed $\Delta$, but unknown $\Phi$.

Figure~\ref{fig:send-fast} represents a \fast{} transition for sending a new message in case of $\Delta > 0$.
Line~2 describes the (symbolic) source state of the transition, and region \texttt{incMsgAge} is a set of configurations in the failure detector that is reachable from a transition for increasing message ages.
Line 3 mentions the (symbolic) destination state of the transition, and region \texttt{sSnd} is a set of configurations in the failure detector that is reachable from a transition named ``SSnd\_Active'' for sending a new message.
Line 4 represents the guard of this transition.
Line 5 is an action.
Every unprimed variable that is not written in Line 5 is unchanged.

\begin{figure}[tb]
\centering
	\begin{minipage}{.7\linewidth}

	\begin{algorithmic}[1] 
		\State \textbf{transition} SSnd\_Active  $\coloneq \{$
		\State $\qquad $\textbf{from} $\coloneq$ incMsgAge;
		\State $\qquad $\textbf{to} $\coloneq$ ssnd;
		\State $\qquad $\textbf{guard} $\coloneq$ sTimer = 0;
		\State $\qquad $\textbf{action} := buf$^{\prime}$ = buf + 1; $\}$;
	\end{algorithmic}
	\caption{Sending a new message in~\fast{} in case of $\Delta > 0$} 
	\label{fig:send-fast}
	\end{minipage}
\end{figure}
The input language of~\fast{} is based on Presburger arithmetics
for both system and properties specification.
Hence, we cannot apply the encoding of the message buffer with a predicate in~\fast{}.

Tables~\ref{tab:exp-sa-inv} and~\ref{tab:exp-sa-inv-bug} described in the previous subsection summarize the experiments with~\fast{}, and other tools where all parameters are fixed.
Moreover, we ran~\fast{} to verify Strong Accuracy in case of the initial stabilization, $\Delta \leq 4$, and arbitrary $\Phi$.
\fast{} is a semi-decision procedure; therefore, it does
not terminate on some inputs. 
Unfortunately, \fast{} could not prove Strong Accuracy in case of arbitrary $\Phi$, and crashed after 30 minutes.

\section{IVy Proofs for Parametric $\Delta$ and $\Phi$} \label{sec:forte21-experiments-unknown-paras}
While TLC, \apalache{}, and~\fast{} can automatically verify some instances of the failure detector with fixed parameters, these tools cannot handle cases with unknown bounds $\Delta$ and $\Phi$.
To overcome this problem, we specify and prove correctness of the failure detector with the interactive theorem prover IVy~\cite{mcmillan2020ivy}.
In the following, we first discuss the encoding of the failure detector, and then present the experiments with IVy.

The encoding of the message buffer with a counter requires that bound $\Delta$ is fixed.
We here focus on cases where bound $\Delta$ is unknown.
Hence, we encode the message buffer with a predicate in our IVy specifications, 

In IVy, we declare
$\texttt{relation } \existsMsgOfAge(X \colon \tnumber)$. 
Type \texttt{num} is interpreted as integers.
Since IVy does not support primed variables, we need an additional relation $\tmpExistsMsgOfAge(X : \tnumber)$.  
Intuitively, we first compute and store the value of~\existsMsgOfAge{} in the next state in~\tmpExistsMsgOfAge{}, then copy the value of~\tmpExistsMsgOfAge{} back to~\existsMsgOfAge{}.
We do not consider the requirement of \tmpExistsMsgOfAge{} as a shortcoming of IVy since it is still straightforward to transform the ideas in Section~\ref{sec:forte21-encoding} to IVy.

Figure~\ref{fig:send-ivy} represents how to add a fresh message in the message buffer in IVy.
Line~1 means that $\tmpExistsMsgOfAge$ is assigned an arbitrary value. 
Line~2 guarantees the appearance of a fresh message.
Line~3 ensures that every in-transit message in $\existsMsgOfAge$ is preserved in $\tmpExistsMsgOfAge$.
Line~4 copies the value of $\tmpExistsMsgOfAge$ back to $\existsMsgOfAge$.

\begin{algorithm}[!htb]
\small
\begin{algorithmic}[1] 
		\State $\tmpExistsMsgOfAge(X) \coloneq *;$
		\State \textbf{assume} $\tmpExistsMsgOfAge(0);$
		\State \textbf{assume forall} $X \colon \tnumber \qdot 0 < X \rightarrow \existsMsgOfAge(X) = \tmpExistsMsgOfAge(X);$
		\State $\existsMsgOfAge(X) \coloneq \tmpExistsMsgOfAge(X);$
\end{algorithmic}
\caption{Adding a fresh message in IVy} 
\label{fig:send-ivy}
\end{algorithm}

Importantly, our specifications are not in decidable theories supported by IVy.
In Formula~\ref{eq:inc2}, the interpreted function $``+"$ (addition) is applied to a universally quantified variable $x$.

The standard way to check whether a safety property \textit{Prop} holds in an IVy specification is to find an inductive invariant \textit{IndInv} with \textit{Prop}, and to (interactively) prove that \textit{Indinv} holds in the specification. 
To verify the liveness properties Eventually Strong Accuracy, and Strong Completeness, we reduced them into safety properties by applying a reduction technique in Section~\ref{sec:forte21-reduction}, and found inductive invariants containing the resulting safety properties reduced from the liveness properties.
These inductive invariants are the generalization of the inductive invariants in case of fixed parameters that were found in the previous experiments.

	\begin{table}[!htb]
		\centering
	\caption{Proving inductive invariants with IVy for arbitrary $\Delta$ and $\Phi$.} \label{tab:exp-iinv-ivy}
	\begin{tabular}{>{\centering\arraybackslash}p{0.3cm} |  			>{\centering\arraybackslash}p{3.7cm} |
			>{\raggedright\arraybackslash}p{2.2cm} 
			>{\raggedleft\arraybackslash}p{0.9cm}
			>{\raggedleft\arraybackslash}p{1cm}
			>{\raggedleft\arraybackslash}p{1.4cm}
			>{\raggedleft\arraybackslash}p{2.6cm}} 
		\multirow{2}{*}{\#} & \multirow{2}{*}{Property} & \multirow{2}{*}{$\;\;\timeout{}_{\text{init}}$} & \multirow{2}{*}{time} & \multirow{2}{*}{LOC} & \multirow{2}{*}{$\#\text{line}_I$} & \#strengthening  \\ 
		& &  & & & & steps \\                 
		\hline
		1  & Strong Accuracy & $= 6 \times \Phi + \Delta$ & 4s & 183 & 30 & 0 \\    
		\hline
		\multirow{2}{*}{2}  & \multirow{2}{*}{\shortstack[c]{Eventually \\ Strong Accuracy}} & \multirow{2}{*}{$= \star$} & \multirow{2}{*}{4s} & \multirow{2}{*}{186} & \multirow{2}{*}{35} & \multirow{2}{*}{0} \\                
		& &  &  & & &\\                      		
		\hline
		3  & \multirow{3}{*}{Strong Completeness} & $= 6 \times \Phi + \Delta $ & 8s & 203 & 111 & 0 \\                
		4  & & $\geq 6 \times \Phi + \Delta$ &  22s & 207 & 124 & 15 \\                
		5  & & $= \star$ & 44s & 207 & 129 & 0\\                
	\end{tabular}
	\end{table}

Table~\ref{tab:exp-iinv-ivy} shows the experiments on verification of the failure detector with IVy in case of unknown $\Delta$ and $\Phi$.
The symbol $\star$ refers to that the initial value of \timeout{} is arbitrary.
The column ``$\#\text{line}_I$'' shows the number of lines of an inductive invariant, and the column ``$\#\text{strengthening steps}$'' shows the number of lines of strengthening steps that we provided for IVy.
The meaning of other columns is the same as in Table~\ref{tab:exp-sa-inv}.
While our specifications are not in the decidable theories supported in IVy, our experiments show that IVy needs no user-given strengthening steps to prove most of our inductive invariants.
Hence, it took us about 4 weeks to learn IVy from scratch, and to prove these inductive invariants.

The most important thing to prove a property satisfied in an IVy specification is to find an inductive invariant.
Our inductive invariants use non-linear integers, quantifiers, and uninterpreted functions.
(The inductive invariants in Table~\ref{tab:exp-iinv-ivy} are given in the repository~\cite{fdSpec_Zenedo}.)

While IVy supports a liveness-to-safety reduction~\cite{padon2017reducing}, 
this technique is not fully automated, and 
IVy still needs user-guided inductive invariant for reduced safety properties that may be different from those in Table~\ref{tab:exp-iinv-ivy}.
Moreover, IVy has not supported reasoning techniques for clocks. 
Therefore, we did not try the liveness-to-safety reduction of IVy.

It is straightforward to generalize the inductive invariants in Table~\ref{tab:exp-iinv-ivy} for partially synchronous models with known time bounds in~\cite{DLS88,CT96}.
To reason about models with $\text{GST} > 0$, we need to find additional inductive strengthenings because the global system is under asynchrony before GST.
Other partially synchronous models in~\cite{attiya1987achievable} consider additional paramaters, e.g., message order or point-to-point transmission that are out of scope of this paper.

\section{Related work} \label{sec:relate-work}
\subsection{Cutoffs} 
Distributed algorithms are typically parameterized in the number of participants, e.g., two-phase commit protocol~\cite{lampson1979crash} and the Chandra and Toueg failure detector in Section~\ref{sec:preliminaries}.
While the general parameterized verification problem is undecidable~\cite{AK86,suzuki1988proving,2015Bloem}, many distributed algorithms such as mutual exclusion and cache coherence enjoy the cutoff property, 
which reduces the parameterized verification problem to verification of a small
number of instances. 
In a nutshell, a cutoff for a parameterized algorithm $\mathcal{A}$ and a property $\phi$ is a number $k$ such that $\phi$ holds for every instance of $\mathcal{A}$ if and only if $\phi$ holds for instances with $k$ processes~\cite{emerson1995reasoning,2015Bloem}.
In the last decades, researchers have proved the cutoff results for various models of computation: ring-based message-passing systems~\cite{emerson1995reasoning,EmersonK04}, purely disjunctive guards and conjunctive guards~\cite{emerson2000reducing,emerson2003exact},  token-based communication~\cite{clarke2004verification}, and quorum-based algorithms~\cite{maric2017cutoff}.
However, we cannot apply these results to the Chandra and Toueg failure detector because it relies on point-to-point communication and timeouts.
Moreover, distributed algorithms discussed in ~\cite{emerson1995reasoning, emerson2000reducing,emerson2003exact,EmersonK04,clarke2004verification,maric2017cutoff} are not in the symmetric point-to-point class.

\subsection{Formal verification for partial synchrony}

Partial synchrony is a well-known model of computation in distributed computing.
To guarantee liveness properties, many practical protocols, e.g., the failure detector in Section~\ref{sec:preliminaries} and proof-of-stake blockchains~\cite{buchman2018latest,yin2019hotstuff}, assume time constraints under partial synchrony.
That is the existence of bounds $\Delta$ on message delay and $\Phi$ on the relative speed of processes after some time point.

While partial synchrony is important for system designers, it
is challenging for verification.
The mentioned constraint makes partially synchronous algorithms parametric in time bounds. 
Moreover, partially synchronous algorithms are typically parameterized in the number of processes. 

Research papers about
partially synchronous algorithms, including papers about failure
detectors~\cite{larrea1999efficient,aguilera2006consensus,aguilera2008implementing}
contain manual proofs and no formal specifications.
Without these details, proving those distributed algorithms with interactive theorem provers~\cite{cousineau2012tla,mcmillan2020ivy} is impossible.

System designers can use timed automata~\cite{alur1994theory} and parametric verification
frameworks~\cite{larsen1997uppaal,andre2012imitator,lime2009romeo}
to specify and verify timed systems.
In the context of timed systems, we are aware of only one paper about verification of failure detectors~\cite{atif2012formal}.
In this paper, the authors used three tools, namely UPPAAL~\cite{larsen1997uppaal}, mCRL2~\cite{bunte2019mcrl2}, and FDR2~\cite{roscoe2010understanding} to verify small instances of a failure detector based on a logical ring arrangement of 
processes.
Their verification approach required that message buffers were bounded, and had restricted behaviors in the specifications.
Moreover, they did not consider the bound $\Phi$ on the relative speed of processes.
In contrast, there are no restrictions on message buffers, and no ring topology in the Chandra and
Toueg failure detector.

In recent years, automatic parameterized verification techniques~\cite{KLVW17:FMSD,stoilkovska2019verifying,druagoi2020programming} have been introduced for distributed systems, but they are designed for synchronous and/or asynchronous models.
Interactive theorem provers have been used to prove correctness of distributed algorithms recently. 
For example, researchers proved safety of Tendermint consensus with IVy~\cite{ivyTendermint}.

\section{Conclusion} \label{sec:conclusion}
We have presented parameterized and parametric verification of both safety and liveness of the Chandra and Toeug failure detector.
To this end, we first introduce and formalize
the class of symmetric point-to-point algorithms that contains the failure detector.
Second, we show that the symmetric point-to-point algorithms have a cutoff, and the cutoff properties hold in three models of computation: synchrony, asyncrony, and partial synchrony.

Next, we develop the encoding techniques to efficiently specify the failure detector, and to tune our models to the strength of the verification tools: model checkers for~\tlap{} (TLC and~\apalache), counter automata (\fast{}), and the  theorem prover Ivy.
We verify safety in case of fixed parameters by running the tools TLC,~\apalache{}, and~\fast{}.
To cope with cases of arbitrary bounds $\Delta$ and $\Phi$, we reduce liveness properties to \textbf{}safety properties, and proved inductive invariants with desired properties in Ivy.
While our specifications are not in the decidable theories supported in Ivy, our experiments show that Ivy needs no additional user assistance to prove most of our inductive invariants.

Modeling the failure detector in~\tlap{} helped us understand and find inductive invariants in case of fixed parameters.
Their structure is simpler but similar to the structure of parameterized inductive invariants.
We found that the~\tlap{} Toolbox~\cite{kuppe2019tla+} has convenient features, e.g., Profiler and Trace Exploration.
A strong point of Ivy is in producing a counterexample quickly when a property is violated, even if all parameters are arbitrary.
In contrast, \fast{} reports no counterexample in any case.
Hence, debugging in~\fast{} is very challenging.

While our specification describes executions
of the Chandra and Toueg failure detector, we conjecture that many time constraints on network behaviors, correct processes, and failures in our inductive invariants can be reused to
prove other algorithms under partial synchrony.
We also conjecture that correctness of other partially synchronous algorithms may be proven by following the presented methodology.
For future work, we would like to extend the above results for cases where GST is arbitrary. 
It is also interesting to investigate how to 
express discrete partial synchrony
in timed automata~\cite{alur1994theory},
e.g., UPPAAL~\cite{larsen1997uppaal}.

\newpage

\bibliographystyle{alphaurl}

\bibliography{lit}

\appendix

\section{Linear temporal logic (LTL)} \label{sec:ltl}
The syntax of LTL formulae is given by the following syntax~\cite{pnueli1977temporal}:
\[
\phi \Coloneqq \top \mid p \mid \lnot \phi \mid \phi \land \phi \mid \X \phi \mid \phi \U \phi
\]
where

\begin{itemize}
  \item $\top$ stands for true,
  
  \item $p$ ranges over a countable set \emph{AP} of atomic predicates,
  
  \item $\lnot$ and $\land$ are Boolean operators negation and conjunction respectively,
  
  \item $\X$ and $\U$ are the temporal operators next and until respectively.

\end{itemize}

Other Boolean operators $\lor, \Rightarrow$, and $\Leftrightarrow$ are defined in the standard way. We also
use $\bot$ (false), $\F$ (eventually), $\G$ (always) to abbreviate
$\lnot \top$, $\top \U \phi$ , and $\lnot (\top \U \lnot \phi)$ respectively.

Let us consider an LTL formula $\phi$ over propositions in \textit{AP} and a word $w \colon \mathbb{N} \rightarrow 2^{\textit{AP}}$.
We define the relation $w \vDash \phi$ (the word $w$ satisfies the formula $\phi$) inductively as follows.

\begin{itemize}

\item $w \vDash \top$ for all $w$

\item $w \vDash p$ if $p \in w[0]$

\item $w \vDash \lnot \phi$ if not $w \vDash \phi$

\item $w \vDash \phi_1 \land \phi_2$ if $w \vDash \phi_1$ and $w \vDash \phi_2$

\item $w \vDash \X \phi$ if $w[1...] \vDash \phi$ where $w[k...]$ denotes the suffix $w[k]w[k+1]\ldots$ 

\item $w \vDash \phi_1 \U \phi_2$ if there exists $j \geq 0$ such that $w[j...] \vDash \phi_2$ and for all $0 \leq i < j, w[i...] \vDash \phi_1$

\end{itemize}

Two words $w$ and $w^{\prime}$ are stuttering equivalent if there are two infinite sequences of numbers
$0 = i_{0} <i_{1} < i_{2} < \ldots $
and 
$0 = j_{0} < j_{1} < j_{2} <\ldots $
such that for every $\ell \geq 0$, it holds 
$ 
  w_{i_{\ell}}
= w_{i_{\ell}+1}
= \ldots 
= w_{i_{\ell + 1} - 1}
= w_{j_{\ell}}^{\prime}
= w_{j_{\ell}+1}^{\prime}
= \ldots 
= w_{j_{\ell+1}-1}^{\prime}
$
~\cite{clarke2018model}.

In this paper, we deal with properties in LTL\textbackslash{}X (LTL minus the next operator).
Importantly, all LTL  properties that are invariants under stuttering equivalence can be expressed without the
next operator $\X$~\cite{peled1997stutter}.

\end{document}